\newif\ifiscameraready
\def\BibTeX{{\rm B\kern-.05em{\sc i\kern-.025em b}\kern-.08em
T\kern-.1667em\lower.7ex\hbox{E}\kern-.125emX}}
\newcounter{mylabelcounter}
\newcommand{\labeltext}[2]{%
    #1\refstepcounter{mylabelcounter}%
    \immediate\write\@auxout{%
        \string\newlabel{#2}{{1}{\thepage}{{\unexpanded{#1}}}{mylabelcounter.\number\value{mylabelcounter}}{}}%
    }%
}
\theoremstyle{plain}
\newtheorem{theorem}{Theorem}
\crefname{theorem}{Thm.}{Thms.}
\newtheorem{proposition}[theorem]{Proposition}
\crefname{proposition}{Prop.}{Props.}
\newtheorem{lemma}[theorem]{Lemma}
\crefname{lemma}{Lem.}{Lems.}
\newtheorem*{sublemma*}{Sublemma}
\crefname{corollary}{Cor.}{Cors.}
\theoremstyle{definition}
\newtheorem{definition}[theorem]{Definition}
\crefname{definition}{Def.}{Defs.}
\newtheorem{notation}[theorem]{Notation}
\theoremstyle{remark}
\newtheorem{example}[theorem]{Example}
\newtheorem{remark}[theorem]{Remark}
\crefname{figure}{Fig.}{Figs.}
\crefname{section}{Sect.}{Sects.}
\crefname{appendix}{Appendix}{}
\definecolor[named]{ACMBlue}{cmyk}{1,0.1,0,0.1}
\definecolor[named]{ACMYellow}{cmyk}{0,0.16,1,0}
\definecolor[named]{ACMOrange}{cmyk}{0,0.42,1,0.01}
\definecolor[named]{ACMRed}{cmyk}{0,0.90,0.86,0}
\definecolor[named]{ACMLightBlue}{cmyk}{0.49,0.01,0,0}
\definecolor[named]{ACMGreen}{cmyk}{0.20,0,1,0.19}
\definecolor[named]{ACMPurple}{cmyk}{0.55,1,0,0.15}
\definecolor[named]{ACMDarkBlue}{cmyk}{1,0.58,0,0.21}
\let\orgdescriptionlabel\descriptionlabel
\renewcommand*{\descriptionlabel}[1]{%
    \let\orglabel\label
    \let\label\@gobble
    \phantomsection
    \edef\@currentlabel{#1\unskip}%
    \let\label\orglabel
    \orgdescriptionlabel{#1}%
}
\patchcmd{\footnotemark}{\stepcounter{footnote}}{\refstepcounter{footnote}}{}{}
\tikzset{
    every edge/.append style = {
            line width = .3pt,
        },
    vert/.style={draw, circle, fill = black!10, inner sep = .15em, font = \footnotesize},
    plab/.style={line width = 0.1pt, fill=#1, inner sep = .025cm, anchor=center, font = \fontsize{6pt}{0}},
    plab/.default= white,
    elab/.style={draw, rectangle, line width = 0.1pt, fill=#1, inner sep = .035cm, anchor=center, font = \footnotesize},
    elab/.default= white,
    tlab/.style={line width = 0.1pt, fill=#1, inner sep = .025cm, anchor=center, font = \fontsize{6pt}{0}\selectfont},
    tlab/.default= white
}
\tikzset{
    png export/.style={
            external/system call/.add={}{; convert -density 300 -transparent white "\image.pdf" "\image.png"},
            /pgf/images/external info,
            /pgf/images/include external/.code={
                    \includegraphics[width=\pgfexternalwidth,height=\pgfexternalheight]{##1.png}
                },
        }
}
\tikzstyle{mynode} = [inner sep = 1.5pt, fill= gray!20]
\tikzstyle{mysmallnode} = [inner sep = 1.pt, fill= gray!20]
\tikzset{earrow/.style={>={{[flex] Latex[length=.1cm, width=2.5pt]}}}}
\tikzset{homoarrow/.style={earrow,  line width = 1pt, dotted, color = green, opacity=0.6}}
\def\widebreve{\mathpalette\wide@breve}
\def\wide@breve#1#2{\sbox\z@{$#1#2$}%
    \mathop{\vbox{\m@th\ialign{##\crcr
                \kern0.08em\brevefill#1{0.8\wd\z@}\crcr\noalign{\nointerlineskip}%
                $\hss#1#2\hss$\crcr}}}\limits}
\def\brevefill#1#2{$\m@th\sbox\tw@{$#1($}%
        \hss\resizebox{#2}{\wd\tw@}{\rotatebox[origin=c]{90}{\upshape(}}\hss$}
\begin{document}

\title{Existential Calculi of Relations with Transitive Closure: Complexity and Edge Saturations\\
    \thanks{This work was supported by JSPS KAKENHI Grant Number JP21K13828.}
}

\author{\IEEEauthorblockN{Yoshiki Nakamura}
    \IEEEauthorblockA{Tokyo Institute of Technology\\
        Email: nakamura.yoshiki.ny@gmail.com
    }
}

\maketitle

\begin{abstract}
We study the decidability and complexity of equational theories of \emph{the existential calculus of relations with transitive closure} (ECoR*) and its fragments,
where ECoR* is the positive calculus of relations with transitive closure extended with complements of term variables and constants.
We give characterizations of these equational theories by using \emph{edge saturations} and we show that the equational theory is
1) coNP-complete for ECoR* without transitive closure;
2) in coNEXP for ECoR* without intersection and PSPACE-complete for two smaller fragments;
3) $\Pi_{1}^{0}$-complete for ECoR*.
The second result gives PSPACE-upper bounds for some extensions of Kleene algebra, including Kleene algebra with top w.r.t.\ binary relations.
 \end{abstract}

\begin{IEEEkeywords}
    relation algebra, Kleene algebra, complexity
\end{IEEEkeywords}

\newcommand{\set}[1]{\{ #1 \}}
\newcommand{\tuple}[1]{\langle #1 \rangle}
\newcommand{\const}[1]{\mathsf{#1}}
\newcommand{\bl}{\_}
\NewDocumentCommand\struc{O{1}}{%
    \ifcase#1
        undefined
    \or \mathfrak{A}
    \or \mathfrak{B}
    \else undefined
        
    \fi
}
\newcommand{\ppstruc}{\ddot{\struc}}

\newcommand{\defeq}{\mathrel{\ensurestackMath{\stackon[1pt]{=}{\scriptscriptstyle\Delta}}}}
\newcommand{\defiff}{\mathrel{\ensurestackMath{\stackon[1pt]{\iff}{\scriptscriptstyle\Delta}}}}

\newcommand{\Term}{\mathrm{Term}}
\NewDocumentCommand\term{O{1}}{%
    \ifcase#1
        undefined
    \or t
    \or s
    \or {u}
    \else undefined
        
    \fi
}
\NewDocumentCommand\tset{O{1}}{%
    \ifcase#1
        undefined
    \or T
    \or S
    \or {U}
    \else undefined
        
    \fi
}
\NewDocumentCommand\aterm{O{1}}{%
    \ifcase#1
        undefined
    \or a
    \or b
    \or c
    \else undefined
        
    \fi
}
\NewDocumentCommand\Ccompo{m m}{#1 \cdot #2}
\NewDocumentCommand\Cdag{m m}{#1 \dagger #2}
\NewDocumentCommand\Ccup{m m}{#1 \cup #2}
\NewDocumentCommand\Ccap{m m}{#1 \cap #2}
\NewDocumentCommand\Ccompl{m}{#1^{-}}

\NewDocumentCommand\Cid{}{\mathbf{1}}

\newcommand{\CoR}{\mathrm{CoR}}
\newcommand{\CoRFull}{\CoR({\Ccompo{}{}}, \Ccup{}{}, \Ccompl{\bullet}, \Cdag{}{})}
\newcommand{\CoRCC}{\CoR({\Ccompo{}{}}, \Ccompl{\bullet})}
\newcommand{\CoRUC}{\CoR({\Ccup{}{}, \Ccompl{\bullet}})}
\newcommand{\CoRCU}{\CoR({\Ccompo{}{}}, \Ccup{}{})}
\newcommand{\CoRCUC}{\CoR({\Ccompo{}{}}, \Ccup{}{}, \Ccompl{\bullet})}
\newcommand{\CoRCDC}{\CoR({\Ccompo{}{}}, \Cdag{}{}, -)}
\newcommand{\CoRCDUC}{\CoR({\Ccompo{}{}}, \Cdag{}{}, \bullet^{(-)})}

\newcommand{\PCoR}{\textrm{PCoR}}
\newcommand{\PCoRTC}{{\PCoR}*}

\newcommand{\ECoR}{\textrm{ECoR}}
\newcommand{\ECoRTC}{{\ECoR}*}

\newcommand{\CoRTC}{{\CoR}*}

\newcommand{\KA}{\textrm{KA}}
\newcommand{\KAC}{\textrm{KAC}}
\newcommand{\ExKA}{\textrm{ExKA}}

\newcommand{\Termset}{\mathcal{T}}
\newcommand{\SIG}{\Sigma}
\newcommand{\sig}{S}

\NewDocumentCommand\model{O{1}}{%
    \ifcase#1
        undefined
    \or \mathfrak{A}
    \or \mathfrak{B}
    \else undefined
        
    \fi
}
\NewDocumentCommand\ppmodel{O{1}}{\ddot{\model[#1]}}

\newcommand{\EqT}{\mathrm{EqT}}
\section{Introduction}\label{section: introduction}
\emph{The calculus of relations} (CoR, for short) \cite{Tarski1941} is an algebraic system on binary relations.
As binary relations appear everywhere in computer science, CoR and relation algebras (including their transitive closure extensions) can be applied to various areas, such as databases and program development and verification.
However, the equational theory of CoR is \emph{undecidable} \cite{Tarski1987}.
The undecidability holds even over the signature $\set{\cdot, \cup, \bl^{-}}$
of composition ($\cdot$), union ($\cup$), and complement ($\bl^{-}$) \cite[Thm.\ 1]{Hirsch2018}, moreover even when the number of term variables is one \cite{Nakamura2019}.

One approach to avoid the undecidability of the equational theory of CoR is to consider its \emph{positive} fragments, by excluding complements.
The terms of \emph{the positive calculus of relations} ($\PCoR$, for short) \cite{Andreka1995, Pous2018}
is the set of terms over the signature $\set{\cdot, \cup, \bl^{\smile}, \bot, \top, \const{I}, \cap}$
of composition ($\cdot$), union ($\cup$), converse ($\bl^{\smile}$), the empty relation ($\bot$), the universal relation ($\top$), the identity relation ($\const{I}$), and intersection ($\cap$).
The equational theory of $\PCoR$ is \emph{decidable} \cite{Andreka1995}.
This decidability result also holds when adding the reflexive transitive operator, thus arriving at Kleene allegory terms \cite{Brunet2015, Brunet2017, Nakamura2017}.

Then, it is natural to ask about the decidability and complexity of $\PCoR$ when the complement operator is added in restricted ways (cf.\ e.g., \emph{antidomain} \cite{Hollenberg1997, Desharnais2009} and \emph{tests} in Kleene algebra with tests (KAT) \cite{Kozen1997a}) to extend the expressive power of $\PCoR$ without drastically increasing the complexity.
In this paper, inspired by the negations of atomic programs in the context of propositional dynamic logic (PDL) \cite{Lutz2002, Lutz2005_neg, Goller2009},
we consider \emph{existential calculi of relations}---positive calculi of relations extended with complements of term variables and constants (i.e., the complement operator only applies to term variables or constants)---and study the decidability and complexity of their equational theories.
We denote by $\PCoRTC$ the $\PCoR$ with the reflexive transitive closure operator ($\bl^{*}$).
We denote by $\ECoR$ (resp.\ $\ECoRTC$) the $\PCoR$ (resp.\ $\PCoRTC$) with complements of term variables and constants.
For example, whereas $(a \cdot a)^{-}$ and $(a^{*})^{-}$ are not $\ECoR$ terms, $\ECoR$ contains terms such as:
\begin{center}
    \begin{tabular}{cr}
        \multirow{2}{*}{$a^{-}$}                                      & \multirow{2}{*}{$\left(\begin{array}{l}
                                                                                                           \mbox{there is \emph{not} an ($a$-labeled) edge} \\
                                                                                                           \mbox{from the source to the target}
                                                                                                       \end{array}\right)$} \\
                                                                      &                                                                                      \\
        \multirow{2}{*}{$ (a \cdot a) \cap a^{-} \cap \const{I}^{-}$} & \multirow{2}{*}{$\left(\begin{array}{l}
                                                                                                           \mbox{the distance from the source} \\
                                                                                                           \mbox{to the target is two}
                                                                                                       \end{array}\right)$}                         \\
                                                                      & 
    \end{tabular}
\end{center}
They are not expressible in $\PCoR$ because they are not preserved under homomorphisms; thus, $\ECoR$ is strictly more expressive than $\PCoR$.
More precisely, w.r.t.\  binary relations,
while $\PCoR$ has the same expressive power as the three-variable fragment of \emph{existential positive} logic with equality \cite[Cor.\ 3.2]{nakamuraExpressivePowerSuccinctness2022},
$\ECoR$ has the same expressive power as the three-variable fragment of \emph{existential} logic with equality \cite[Cor.\ 3.14]{nakamuraExpressivePowerSuccinctness2022}\footnote{
    $\ECoR$ has the same expressive power as the level $\Sigma_1^{\CoR}$ of the \emph{dot-dagger alternation hierarchy} \cite{nakamuraExpressivePowerSuccinctness2020}\cite[Def.\ 3.13]{nakamuraExpressivePowerSuccinctness2022} of $\CoR$, because every $\ECoR$ term is a $\Sigma_1^{\CoR}$ term,
    and conversely, every $\Sigma_1^{\CoR}$ term has an equivalent $\ECoR$ term by pushing complement and projection deeper into the term to the extent possible.
    Here, $\Sigma_1^{\CoR}$ has the same expressive power as the three-variable fragment of existential logic (i.e., the level $\Sigma_1$ of the quantifier alternation hierarchy of first-order logic) \cite[Cor.\ 3.14]{nakamuraExpressivePowerSuccinctness2022}.}.
(The name ``\emph{existential} calculi of relations'' comes from this fact.)

On the equational theory of $\ECoRTC$, while it subsumes that of $\PCoRTC$ (including axioms of Kleene algebra and allegory \cite{Pous2018}),
some non-trivial valid (in)equations are as follows\ifiscameraready :\else{} (see \Cref{section: equations} for the proofs):\fi\\
\begin{minipage}[!t]{0.45\linewidth}\leavevmode
    \begin{align}
        \label{equation: I- top 0}      \top            & = a \cup a^{-}                               \\
        \label{equation: I- top}              a b a^{-} & \le \const{I}^{-} a^{-} \cup a \const{I}^{-}
    \end{align}
\end{minipage} %
\hfill %
\begin{minipage}[!t]{0.45\linewidth}%
    \begin{align}
        \label{equation: I- ID}               a                & \le \const{I}^{-} \cup a a \\
        \label{equation: I- De Morgan}        a^{\smile} a^{-} & \le \const{I}^{-}
    \end{align}
\end{minipage}
\subsection*{Contribution and related work}
We show that the equational theory is
\begin{description}
    \item[1)\label{result: coNP}] coNP-complete for $\ECoR$ (\Cref{theorem: PCoR complexity});
    \item[2)\label{result: coNEXP}] in coNEXP for $\ECoR$ without $\cap$ (\Cref{theorem: KACC complexity});
        PSPACE-complete for $\ECoR$ without $\cap$ such that $\const{I}^{-}$ does not occur (\Cref{theorem: KACC without I-}) 
        and for $\ECoR$ without $\cap$ such that $a^{-}$ does not occur for any term variable $a$ (\Cref{theorem: KACC without a-});
    \item[3)\label{result: PI}] $\Pi_{1}^{0}$-complete for $\ECoRTC$ (\Cref{theorem: PCoRTC complexity}).
\end{description}
\Cref{table: result} summarizes our results and related results.
\ref{result: coNEXP} gives PSPACE-upper bounds for some %
\begin{table*}[tbh]
    \caption{Complexity of the equational theory (w.r.t.\ binary relations) over $\sig \cup \sig'$
        where $\sig \subseteq \set{\cdot, \cup, \bl^{\smile}, \cap, \bl^{*}, \bot, \top, \const{I}} = \sig_{\mathrm{PCoR*}}$ and $\sig' \subseteq \set{a^{-}, \const{I}^{-}, \bl^{-}}$.
        Here, $a^{-}$, $\const{I}^{-}$, $\bl^{-}$ denote the complements of term variables, those of constants, and the (full) complement.}\label{table: result}
    \centering
    \begin{tabular}{|c||c|c|c|c|c|}
        \hline
        \diagbox[height=1.2\line]{$\sig$\quad}{\quad$\sig'$}                                                              & $\emptyset$                                                                      & 
        $\set{\const{I}^{-}}$                                                                                             & $\set{a^{-}}$                                                                    & $\set{a^{-},\const{I}^{-}}$                               & $\set{\bl^{-}}$                                                                                             \\
        \hline
        \hline
        \multirow{2}{*}{$\set{\cdot, \cup} \subseteq \sig \subseteq \sig_{\mathrm{PCoR*}} \setminus \set{{\bl^{*}}}$}     & coNP-c                                                                           & 
        \multicolumn{3}{|c|}{coNP-c}                                                                                      & $\Sigma_{1}^{0}$-c                                                                                                                                                                                                                                         \\
                                                                                                                          & (\hspace{1sp}\cite{Meyer1973}, \cite{Hunt1976}, \cite[Lem.\ 8]{Hirsch2018})      & \multicolumn{3}{|c|}{(\Cref{theorem: PCoR complexity})}   & (\hspace{1sp}\cite{Tarski1941,Tarski1987,Hirsch2018})                                                       \\
        \hline
        \multirow{2}{*}{$\set{\cdot, \cup, \bl^{*}} \subseteq \sig \subseteq \sig_{\mathrm{PCoR*}} \setminus \set{\cap}$} & PSPACE-c                                                                         & PSPACE-c                                                  & PSPACE-c                                              & in coNEXP                         & 
        \\
                                                                                                                          & (\hspace{1sp}\cite{Meyer1973}, \cite{Hunt1976}, \Cref{theorem: KACC without I-}) & (\Cref{theorem: KACC without a-})                         & (\Cref{theorem: KACC without I-})                     & (\Cref{theorem: KACC complexity}) & $\Pi_{1}^{1}$-c
        \\
        \cline{1-5}
        \multirow{2}{*}{$\set{\cdot, \cup, \cap, \bl^{*}} \subseteq \sig  \subseteq \sig_{\mathrm{PCoR*}}$}               & EXPSPACE-c                                                                       & EXPSPACE-hard                                             & \multicolumn{2}{|c|}{$\Pi_1^{0}$-c}                   & 
        (\hspace{1sp}\cite[p.332]{Gradel1999}, \cite{Goller2009}\footnotemark\label{footnote: lowerbound})
        \\
                                                                                                                          & (\hspace{1sp}\cite{Brunet2015, Brunet2017, Nakamura2017})
                                                                                                                          & \cite{Brunet2017, Nakamura2017}                                                  & \multicolumn{2}{|c|}{(\Cref{theorem: PCoRTC complexity})} &                                                                                                             %
        \\
        \hline
    \end{tabular}
\end{table*}
\footnotetext{On the lower bound, the validity problem for PDL with intersection and negation of atomic programs, which is $\Pi_{1}^{1}$-complete \cite{Goller2009}, is recursively reducible to the equational theory over the signature $\set{\cup, \cdot, \bl^*, \bl^-}$ by the standard translation from modal logics to $\CoR$ \cite[p.\ 95]{Orlowska1997} while eliminating the identity using a fresh variable preserving the validity \cite[Lem.\ 9]{Nakamura2019}.\label{footnote: lowerbound}}%
extensions of \emph{Kleene algebra} (KA) w.r.t.\ binary relations, as $\ECoR$ without $\cap$ includes KA terms (i.e., terms over the signature $\set{\cdot, \cup, \bl^{*}, \bot, \const{I}}$).
Notably, it includes KA with converse ($\smile$) \cite{Brunet2014, Brunet2016} and KA with top ($\top$) \cite{Pous2022} w.r.t.\ binary relations; thus, the result gives a positive answer to a question posed by Pous and Wagemaker \cite[p.\ 14]{Pous2022} as follows:
the equational theory of KA with top w.r.t.\ binary relations is still in PSPACE.\footnote{Very recently, this result is also given in \cite{pousCompletenessTheoremsKleene2023}.}
Additionally, \ref{result: PI} negatively answers a question posed by Nakamura \cite[p.\ 12]{Nakamura2017} as follows:
the equational theory of Kleene allegories with complements of term variables (w.r.t.\ binary relations) is undecidable.

To show them, %
we give a \emph{graph theoretical characterization} of the equational theory of $\ECoRTC$ (\Cref{section: graph}).
The characterization using graph languages and graph homomorphisms is based on that for $\PCoRTC$, by Brunet and Pous {\cite[Thm.\ 3.9]{Brunet2017}}\hspace{1sp}{\cite[Thm.\ 16]{Pous2018}},
but the identical characterization fails for $\ECoRTC$.
Nevertheless, we can extend such a characterization even for $\ECoRTC$, by extending their graph languages, using \emph{edge saturations} (cf.\ saturations of graphs \cite{joostenFindingModelsGraph2018, Doumane2021}; in our saturation, the vertex set is fixed).
Using this graph characterization, we can show the upper bounds of \ref{result: coNP} and \ref{result: PI}.

For the upper bound of \ref{result: coNEXP}: the equational theory of $\ECoRTC$ without intersection,
we refine the graph characterization above by using path graphs with some additional information, called \emph{saturable paths} (\Cref{section: automata}).
This notion is inspired by Hintikka-trees for Boolean modal logics \cite{Lutz2002} and PDL with the negation of atomic programs \cite{Lutz2005_neg},
where we consider paths instead of infinite trees and introduce the converse and the difference relation $\const{I}^{-}$.
This characterization gives the coNEXP upper bound for the equational theory of $\ECoRTC$ without intersection.
Moreover, for some fragments, we can give \emph{word automata} using saturable paths, which shows that 
the equational theory of $\ECoRTC$ without intersection is decidable in PSPACE if $\const{I}^{-}$ does not occur (\Cref{theorem: KACC without I-})
or if $a^{-}$ does not occur (for any term variable $a$) (\Cref{theorem: KACC without a-}).
(Our automata construction cannot apply to the full case (\Cref{remark: full KACC}).)

For the lower bounds, \ref{result: coNP} and \ref{result: coNEXP} are immediate from the results known in regular expressions \cite{Meyer1973,Hunt1976}.
For \ref{result: PI}, we give a reduction from the universality problem for context-free grammars (\Cref{section: undecidability}), via KA with hypotheses \cite{cohenHypothesesKleeneAlgebra1994, kozenComplexityReasoningKleene2002, Doumane2019} of the form $\term \le a$ (where $\term$ is a term and $a$ is a term variable) w.r.t.\ binary relations.

\subsection*{Outline}
\Cref{section: preliminaries} briefly gives basic definitions.
\Cref{section: ECoR} defines $\ECoRTC$ (\Cref{section: ECoR def}) and recalls known results w.r.t.\ word and graph languages for smaller fragments of $\ECoRTC$ (\Cref{subsection: languages,subsection: graph languages}).
In \Cref{section: graph}, we give a graph theoretical characterization for $\ECoRTC$, using \emph{edge saturations}.
By using this, we prove \ref{result: coNP} and the upper bound of \ref{result: PI}.
In \Cref{section: automata}, we introduce \emph{saturable paths}, which refine the characterization of graph saturations, for $\ECoRTC$ without intersection.
Moreover, we give automata using saturable paths for two smaller fragments.
By using them, we prove \ref{result: coNEXP}.
In \Cref{section: undecidability}, we prove the lower bound of \ref{result: PI}.
\Cref{section: conclusion} concludes this paper.
\newcommand{\dom}{\mathord{\const{dom}}}
\newcommand\pfun{\mathrel{\ooalign{\hfil$\mapstochar\mkern5mu$\hfil\cr$\to$\cr}}}
\newcommand{\nat}{\mathbb{N}}
\newcommand{\pnat}{\mathbb{N}_{+}}
\newcommand{\znat}{\mathbb{Z}}
\newcommand{\Occ}{\mathop{\mathrm{Occ}}}
\newcommand{\card}{\#}

\newcommand{\jump}[1]{\llbracket #1 \rrbracket}

\newcommand{\Fml}{\mathrm{Fml}}
\NewDocumentCommand\fml{O{1}}{%
    \ifcase#1
        undefined
    \or \varphi
    \or \psi
    \or \rho
    \else undefined
        
    \fi
}
\NewDocumentCommand\afml{O{1}}{%
    \ifcase#1
        undefined
    \or p
    \or q
    \or r
    \else undefined
        
    \fi
}
\NewDocumentCommand\var{O{1}}{%
    \ifcase#1
        undefined
    \or x
    \or y
    \or z
    \else undefined
        
    \fi
}

\newcommand{\len}[1]{\|#1\|}
\newcommand{\atlen}[1]{\|#1\|_{\mathrm{at}}}

\newcommand{\comnf}[1]{\overline{#1}}

\newcommand{\pt}[1]{\mathsf{#1}}
\newcommand{\lv}{\pt{1}}
\newcommand{\rv}{\pt{2}}
\newcommand{\src}{\lv}
\newcommand{\tgt}{\rv}

\NewDocumentCommand\graph{O{1}}{%
    \ifcase#1
        undefined
    \or G
    \or H
    \else undefined
        
    \fi
}
\NewDocumentCommand\glang{O{1}}{%
    \ifcase#1
        undefined
    \or \mathcal{G}
    \or \mathcal{H}
    \else undefined
        
    \fi
}
\newcommand{\homo}{\longrightarrow}

\newcommand{\REL}{\mathrm{REL}}

\section{Preliminaries}\label{section: preliminaries}
We write $\nat$ for the set of all non-negative integers.
For $l, r \in \nat$, we write $[l, r]$ for the set $\{i \in \nat \mid l \le i \le r\}$.
For $n \in \nat$, we write $[n]$ for $[1, n]$.
For a set $A$, we write $\card(A)$ for the cardinality of $A$ and $\wp(A)$ for the power set of $A$.

\subsection{Graphs}\label{section: preliminary graphs}
For $k \in \nat$, a \emph{$k$-pointed graph} $\graph$ over a set $A$ is a tuple $\tuple{|\graph|, \set{a^{\graph}}_{a \in A}, \pt{1}^{\graph}, \dots, k^{\graph}}$, where
\begin{itemize}
    \item $|\graph|$ is a non-empty set of \emph{vertices};
    \item $\aterm^{\graph} \subseteq |\graph|^2$ is a binary relation for each $\aterm \in A$ ($\tuple{x, y} \in \aterm^{\graph}$ denotes that there is an \emph{$\aterm$-labeled edge} from $x$ to $y$);
    \item $\pt{1}^{\graph}, \dots, k^{\graph} \in |\graph|$ are the vertices pointed by $\pt{1}, \dots, k$.
\end{itemize}
We say that $2$-pointed graphs are \emph{graphs}, here; we mainly use them.
For $2$-pointed graph $\graph$, we say that $\lv^{\graph}$ and $\rv^{\graph}$ are the \emph{source} and \emph{target}, respectively.

A \emph{(graph) homomorphism} from a graph $\graph[1]$ to a graph $\graph[2]$ is a map $h$ from $|\graph[1]|$ to $|\graph[2]|$ such that
\begin{itemize}
    \item $\tuple{h(\lv^{\graph[1]}), h(\rv^{\graph[1]})} = \tuple{\lv^{\graph[2]}, \rv^{\graph[2]}}$;
    \item $\tuple{h(x), h(y)} \in a^{H}$ for every $a \in A$ and $\tuple{x, y} \in a^{G}$.
\end{itemize}
In particular, we say that $h$ is \emph{isomorphism} if $h$ is bijective and $\tuple{h(x), h(y)} \not \in a^{\graph[2]}$ for every $a \in A$ and $\tuple{x, y} \not\in a^{\graph[1]}$.
We write $h \colon \graph[1] \homo \graph[2]$ if $h$ is a homomorphism from $\graph[1]$ to $\graph[2]$
and write $\graph[1] \homo \graph[2]$ if $h \colon \graph[1] \homo \graph[2]$ for some $h$.
The relation $\homo$ is a preorder.
We display graphs in a standard way,
where the node having an ingoing (resp.\ outgoing) unlabeled arrow denotes the source (resp.\ the target).
For example, the following are two connected graphs, and dotted arrows induce a homomorphism between them:
$\begin{tikzpicture}[baseline = -2.2ex]
        \graph[grow right = 1.2cm, branch down = 1.2ex, nodes={mynode}]{
        {0/{}[draw, circle], 2/{}[draw, circle, xshift = 1.5em]}-!-{1/{}[draw, circle]}
        };
        \node[left = .5em of 0](l){};
        \node[right = .5em of 1](r){};
        \graph[use existing nodes, edges={color=black, pos = .5, earrow}, edge quotes={fill=white, inner sep=1.pt,font= \scriptsize}]{
        0 ->["$a$"] 1;
        2 ->["$b$", out = 0, in = -135] 1;
        l -> 0; 1 -> r;
        };
        \node[below = 3ex of 0, draw, circle, mynode](0'){};
        \node[below = 3ex of 1, draw, circle, mynode](1'){};
        \node[left = .5em of 0'](l'){};
        \node[right = .5em of 1'](r'){};
        \graph[use existing nodes, edges={color=black, pos = .5, earrow}, edge quotes={fill=white, inner sep=1.pt,font= \scriptsize}]{
        0' ->["$a$", bend left = 15] 1';
        0' ->["$b$", bend right = 15] 1';
        l' -> 0'; 1' -> r';
        };
        \path (0) edge [homoarrow,->] (0');
        \path (2) edge [homoarrow,->, out=-90, in=45] (0');
        \path (1) edge [homoarrow,->] (1');
    \end{tikzpicture}$.

\section{Existential calculi of relations}\label{section: ECoR}
\subsection{The existential calculus of relations with transitive closure (\ECoRTC): syntax and semantics}\label{section: ECoR def}
\subsubsection{Syntax}
We fix a finite set $\SIG$ of \emph{variables}.
The set of \emph{$\ECoRTC$ terms} over $\SIG$ is defined as follows:\footnote{For simplicity, in the term set, the converse only applies to terms of the form $a$ or $a^{-}$ and $\bot^{-}$ and $\top^{-}$ does not occur, but we can give a polynomial-time transformation to the term set \ifiscameraready \else (\Cref{section: connf}) \fi by taking the \emph{converse normal form}
    and that $\models_{\REL} \bot^{-} = \top$ and $\models_{\REL} \top^{-} = \bot$ hold.
    Thus, our complexity upper bounds hold (\Cref{theorem: PCoRTC complexity,theorem: PCoR complexity,theorem: KACC without a-,theorem: KACC without I-}) even if we exclude these restrictions.\label{footnote: term set}}
\begin{align*}
    \ECoRTC \ni \term[1], \term[2], \term[3] & ::= \aterm \mid \aterm^{-} \mid \aterm^{\smile} \mid (\aterm^{-})^{\smile} \mid \const{I} \mid \const{I}^{-} \mid \bot \mid \top   \\
                                             & \mid \term[1] \cdot \term[2] \mid \term[1] \cup \term[2] \mid \term[1] \cap \term[2] \mid \term[1]^{*} \tag*{($\aterm \in \SIG$).}
\end{align*}
We use parentheses in ambiguous situations.
We often abbreviate $t \cdot s$ to $t s$.
For $n \in \nat$, we write $\term^{n}$ for $\begin{cases}
        \term \cdot \term^{n-1} & (n \ge 1) \\
        \const{I}               & (n = 0)
    \end{cases}$.

Let $\sig_{\ECoRTC} \defeq \set{\cdot, \cup, \bl^{\smile}, \cap, \bl^{*}, \const{\bot}, \const{\top}, \const{I}} \cup \set{a^{-}, \const{I}^{-}}$.
Here, ``$\bl^{\smile}$'' only applies to $a$ or $a^{-}$ (for simplicity)
and we use ``$a^{-}$'' to denote the complement of term variables and ``$\const{I}^{-}$'' to denote the complement of constants (the important complemented constant is only $\const{I}^{-}$ in this setting).

For $\sig \subseteq \sig_{\ECoRTC}$,
we write $\Termset_{\sig}$ for the set of all terms $\term$ in $\ECoRTC$ s.t.\ every operator occurring in $\term$ matches one of $\sig$.
We use the following acronyms for some signatures (recall the acronyms in \Cref{section: introduction}; here, \emph{extended KA} ($\ExKA$) terms are used to denote $\ECoRTC$ terms without $\cap$, in this paper):
\begin{table}[h]
    \centering
    \begin{tabular}{c|l}
        $\sig_{\mathop{acronym}}$ & operator set                                                                                                    \\
        \hline \hline
        ${\ECoRTC}$               & ${\set{\cdot, \cup, \bl^{\smile}, \cap, \bl^{*}, \bot, \top, \const{I}, a^{-}, \const{I}^{-}}}$                 \\
        ${\ECoR}$                 & $\sig_{\ECoRTC} \setminus \set{\bl^{*}}$                                                                        \\
        ${\ExKA}$                 & $\sig_{\ECoRTC} \setminus \set{\cap} \quad (= \sig_{\KA} \cup \set{\bl^{\smile}, \top, a^{-}, \const{I}^{-}})$  \\
        \hline
        ${\PCoRTC}$               & $\sig_{\ECoRTC} \setminus \set{a^{-}, \const{I}^{-}}$                                                           \\
        ${\PCoR}$                 & $\sig_{\PCoRTC} \setminus \set{\bl^{*}}$                                                                        \\
        ${\KA}$                   & $\sig_{\PCoRTC} \setminus \set{\bl^{\smile}, \cap, \top} \quad (= \set{\cdot, \cup, \bl^{*}, \bot, \const{I}})$ %
    \end{tabular}
\end{table}

Let $\SIG_{\const{I}} \defeq \SIG \cup \set{\const{I}}$ and let
\begin{align*}
    \SIG^{(-)} & \defeq \set{a, a^{-} \mid a \in \SIG}; & \SIG_{\const{I}}^{(-)} & \defeq \set{a, a^{-} \mid a \in \SIG_{\const{I}}}.
\end{align*}
For each term $\term \in \SIG_{\const{I}}^{(-)} \cup \set{\bot, \top}$, $\comnf{\term}$ denotes the following term, where $a \in \SIG$:
\begin{align*}
    \comnf{a} & \defeq a^{-}; & \comnf{a^{-}} & \defeq a; & \comnf{\const{I}} & \defeq \const{I}^{-}; & \comnf{\const{I}^{-}} & \defeq \const{I}; & \comnf{\bot} & \defeq \top; & \comnf{\top} & \defeq \bot \tag*{.}
\end{align*}

An \emph{equation} $\term[1] = \term[2]$ is a pair of terms.
An \emph{inequation} $\term[1] \le \term[2]$ is an abbreviation of the equation $\term[1] \cup \term[2] = \term[2]$.\ifiscameraready\else\footnote{Note that $\jump{\term[1]}_{\model} \subseteq \jump{\term[2]}_{\model} \Longleftrightarrow \jump{\term[1] \cup \term[2]}_{\model} = \jump{\term[2]}_{\model}$.}\fi

The \emph{size} $\len{\term}$ of a term $\term$ is the number of symbols occurring in $\term$.
Also, let $\len{\term[1] = \term[2]} \defeq \len{\term[1]} + \len{\term[2]}$.

\subsubsection{Relational semantics}
For binary relations $R, Q$ on a set $W$, the \emph{relational converse} $R^{\smile}$, the \emph{relational composition} $R \cdot Q$, the \emph{$n$-th iteration} $R^{n}$ (where $n \in \nat$), and the \emph{reflexive transitive closure} $R^*$ are defined by:
\begin{align*}
    R^{\smile} & \defeq \set{\tuple{y, x} \mid \tuple{x, y} \in R}                                                                 \\
    R \cdot Q  & \defeq \set{\tuple{x, z} \mid \exists y \in W.\ \tuple{x, y} \in R \ \land \ \tuple{y, z} \in Q}\span \span       \\
    R^n        & \defeq \begin{cases}
                            R \cdot R^{n-1}                 & (n \ge 1) \\
                            \set{\tuple{x, x} \mid x \in W} & (n = 0)
                        \end{cases};                                                      & R^* & \defeq \bigcup_{n \in \nat} R^n.
\end{align*}

\begin{definition}
    We say that $\graph$ is a \emph{$k$-pointed structure} if
    $\graph$ is a $k$-pointed graph over $\SIG_{\const{I}}^{(-)}$ such that 
    \begin{itemize}
        \item $\const{I}^{\graph[1]} = \set{\tuple{x, x} \mid x \in |\graph[1]|}$;
        \item $\comnf{a}^{\graph[1]} = |\graph[1]|^2 \setminus a^{\graph[1]}$ for $a \in \SIG_{\const{I}}$.\footnote{$a^{\graph[1]}$ is redundant for each $a \in \SIG_{\const{I}}^{(-)} \setminus \SIG$ because it is determined by the other relations. However, this definition is compatible with the later.}
    \end{itemize}
\end{definition}
We say that $0$-pointed structures are \emph{structures}.
We use $\model$ and $\ppmodel$ to denote $0$- and $2$-pointed structures, respectively.
For a structure $\model$ and two vertices $x, y \in |\model|$,
we write ${\model}[x, y]$ for the $2$-pointed structure $\tuple{|\model|, \set{a^{\model}}_{a \in \SIG_{\const{I}}^{(-)}}, x, y}$.

The \emph{binary relation} $\jump{\term}_{\model} \subseteq |\model|^2$ of an $\ECoRTC$ term $\term$ on a structure $\model$ is defined as follows (where $a \in \SIG$):
\begin{flalign*}
    \jump{a}_{\model}                       & \defeq a^{\model}                                              & 
    \jump{a^{-}}_{\model}                   & \defeq (a^{-})^{\model}                                           \\
    \jump{\bot}_{\model}                    & \defeq \emptyset                                               & 
    \jump{\top}_{\model}                    & \defeq |\model|^2                                                 \\
    \jump{\term[1] \cup \term[2]}_{\model}  & \defeq \jump{\term[1]}_{\model} \cup \jump{\term[2]}_{\model}  & 
    \jump{\term[1] \cap \term[2]}_{\model}  & \defeq \jump{\term[1]}_{\model} \cap \jump{\term[2]}_{\model}     \\
    \jump{\const{I}}_{\model}               & \defeq \const{I}^{\model}                                      & 
    \jump{\const{I}^{-}}_{\model}           & \defeq (\const{I}^{-})^{\model}                                   \\
    \jump{\term[1] \cdot \term[2]}_{\model} & \defeq \jump{\term[1]}_{\model} \cdot \jump{\term[2]}_{\model}    \\
    \jump{\term[1]^{*}}_{\model}            & \defeq \bigcup_{n \in \nat} \jump{\term[1]^{n}}_{\model}       & 
    \jump{\term^{\smile}}_{\model}          & \defeq \jump{\term}_{\model}^{\smile}.
\end{flalign*}
We write $\models_{\REL} \term[1] = \term[2]$ if $\jump{\term[1]}_{\model} = \jump{\term[2]}_{\model}$ for every structure $\model$.
The \emph{equational theory over $\sig$ w.r.t.\ binary relations} is defined as the set of all pairs $\term[1] = \term[2]$ of terms in $\Termset_{\sig}$ s.t.\ $\models_{\REL} \term[1] = \term[2]$.
\begin{notation}\label{notation: models}
    Based on $\jump{\bl}_{\model}$, we define the following notations:
    \begin{align*}
        {\model}[x, y] \models \term         & \defiff \tuple{x, y} \in \jump{\term}_{\model}                                  \\
        \ppmodel \models \term[1] = \term[2] & \defiff (\ppmodel \models \term[1]) \leftrightarrow (\ppmodel \models \term[2]) \\
        \model \models \term[1] = \term[2]   & \defiff \jump{\term[1]}_{\model} = \jump{\term[2]}_{\model}.
    \end{align*}
    (Note that $\model \models \term[1] = \term[2] \iff \forall x, y \in |\model|.\  {\model}[x,y] \models \term[1] = \term[2]$.)
\end{notation}

\newcommand{\lang}[1]{[#1]}
\subsection{KA terms w.r.t.\ binary relations and word languages}\label{subsection: languages}
For a set $X$, we write $X^*$ for the set of all finite sequences (i.e., \emph{words}) over $X$.
We write $\const{I}$ for the empty word.
We write $w v$ for the concatenation of words $w$ and $v$.
For set $L, K \subseteq \SIG^*$, the \emph{composition} $L \cdot K$ is defined by:
\[L \cdot K \defeq \set{w v \mid w \in L \ \land \  v \in K}.\]
The \emph{(word) language} $\lang{\term} \subseteq \SIG^*$ of a KA term $\term$ is defined by:
\begin{flalign*}
    \lang{a}                       & \defeq \set{a}                                 &  &  \\
    \lang{\bot}                    & \defeq \emptyset                               &  &  \\
    \lang{\term[1] \cup \term[2]}  & \defeq \lang{\term[1]} \cup \lang{\term[2]}    &  &  \\
    \lang{\const{I}}               & \defeq \set{\const{I}}                         &  &  \\
    \lang{\term[1] \cdot \term[2]} & \defeq \lang{\term[1]} \cdot \lang{\term[2]}   &  &  \\
    \lang{\term[1]^{*}}            & \defeq \bigcup_{n \in \nat} \lang{\term[1]^n}. &  & 
\end{flalign*}
Interestingly, for KA terms, it is well-known that the equational theory w.r.t.\ binary relations coincides with that under the (single) word language interpretation (see, e.g., \cite[Thm.\ 4]{Pous2018}):
for every KA terms $\term[1], \term[2]$, we have
\begin{equation}
    \label{equation: rel and lang}
    {} \models_{\REL} \term[1] \le \term[2] \quad \iff \quad \lang{\term[1]} \subseteq \lang{\term[2]}. \tag{$\dagger$}
\end{equation}
The following follows from (\ref{equation: rel and lang}) and the known results in regular expressions \cite[p.\ 3]{Meyer1973} (\hspace{1sp}\cite{Hunt1976}, for precise constructions):
\begin{proposition}[\hspace{1sp}{\cite[Thm.\ 2.7]{Hunt1976}, \cite{Meyer1973}}]\label{proposition: PCoR hardness 1}
    The equational theory of $\Termset_{\set{\cdot, \cup}}$ (w.r.t.\ binary relations) is coNP-complete.
\end{proposition}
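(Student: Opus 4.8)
The plan is to obtain both bounds directly from the language characterization (\ref{equation: rel and lang}), which reduces $\models_{\REL} \term[1] = \term[2]$ to the equality $\lang{\term[1]} = \lang{\term[2]}$ of word languages, combined with the classical complexity of (in)equivalence for star-free regular expressions \cite{Meyer1973, Hunt1976}.

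\textbf{Upper bound (coNP).} First I would record the elementary fact that for $\term \in \Termset_{\set{\cdot, \cup}}$ the language $\lang{\term}$ is a finite, nonempty set of words, each of length at most $\len{\term}$ (induction on $\term$: a variable contributes one letter, $\cup$ takes the union, $\cdot$ adds lengths). Consequently $\models_{\REL} \term[1] = \term[2]$ fails iff some word $w$ lies in exactly one of $\lang{\term[1]}, \lang{\term[2]}$, and such a $w$ has polynomial length, hence is a legitimate NP witness provided membership $w \in \lang{\term}$ is decidable in polynomial time. The latter holds by a straightforward bottom-up dynamic program: for each subterm $\term'$ of $\term$ and each infix $u$ of $w$, decide whether $u \in \lang{\term'}$. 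Splitting the equation $\term[1] = \term[2]$ into the two inequations $\term[1] \le \term[2]$ and $\term[2] \le \term[1]$ if desired, this shows that the complement of the equational theory is in NP, i.e.\ the equational theory of $\Termset_{\set{\cdot, \cup}}$ is in coNP.

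\textbf{Lower bound (coNP-hardness).} For the hardness I would give a polynomial-time reduction from the complement of SAT. Fix two term variables $\aterm[1], \aterm[2] \in \SIG$ and read a length-$n$ word over $\set{\aterm[1], \aterm[2]}$ as a truth assignment to variables $v_1, \dots, v_n$. Let $E$ be the $n$-fold composition of $\aterm[1] \cup \aterm[2]$, so $\lang{E} = \set{\aterm[1], \aterm[2]}^n$. For each clause $C$ of a given 3-CNF formula $\fml$, let $F_C \in \Termset_{\set{\cdot, \cup}}$ be the product $\cdots (\aterm[1] \cup \aterm[2]) \cdot \aterm[i] \cdot (\aterm[1] \cup \aterm[2]) \cdots$ in which the three positions appearing in $C$ are fixed to the unique letter falsifying the corresponding literal and all other positions range over $\aterm[1] \cup \aterm[2]$; then $\lang{F_C}$ is exactly the set of assignments falsifying $C$, and $\len{F_C} = O(n)$. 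Put $F \defeq \bigcup_{C} F_C$. Now $\lang{F} \subseteq \lang{E}$ always holds, while $\lang{E} \subseteq \lang{F}$ iff every assignment falsifies some clause iff $\fml$ is unsatisfiable; hence $\models_{\REL} E = F$ iff $\fml \notin \mathrm{SAT}$. Since $\fml \mapsto (E = F)$ is computable in polynomial time, the equational theory is coNP-hard. (Equivalently, one may simply invoke the NP-hardness of inequivalence of $\set{\cdot, \cup}$-regular expressions from \cite{Meyer1973}, in the precise polynomial form of \cite{Hunt1976}, and compose it with (\ref{equation: rel and lang}).)

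\textbf{Main obstacle.} The upper bound is essentially immediate once (\ref{equation: rel and lang}) and the length bound are in place. The only point that needs care is the hardness reduction: one must check that the clause-falsification terms $F_C$ can indeed be written using $\cdot$ and $\cup$ alone (there is no $\const{I}$ or $\bot$ in $\Termset_{\set{\cdot, \cup}}$, which is harmless here since all relevant words have length $n \ge 1$), that their total size stays polynomial, and that $\lang{E} = \lang{F}$ faithfully captures unsatisfiability — which is exactly what the precise constructions of \cite{Hunt1976} supply.
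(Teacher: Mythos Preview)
Your proposal is correct and follows exactly the route the paper takes: the paper simply invokes the language characterization (\ref{equation: rel and lang}) and cites the classical results of \cite{Meyer1973,Hunt1976} for the complexity of (in)equivalence of $\set{\cdot,\cup}$-regular expressions, whereas you spell out those details explicitly (the polynomial-length witness for the upper bound and the Stockmeyer--Meyer style SAT reduction for the lower bound). There is no meaningful difference in approach.
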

\begin{proposition}[\hspace{1sp}{\cite[Prop.\ 2.4]{Hunt1976}, \cite{Meyer1973}}]\label{proposition: PCoR hardness 2}
    The equational theory of $\Termset_{\set{\cdot, \cup, \bl^{*}}}$ (w.r.t.\ binary relations) is PSPACE-complete.
\end{proposition}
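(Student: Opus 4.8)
The plan is to combine the language characterization (\ref{equation: rel and lang}) with the classical complexity of the equivalence problem for regular expressions. Observe first that every term in $\Termset_{\set{\cdot, \cup, \bl^{*}}}$ is syntactically a regular expression over the finite alphabet $\SIG$ — one built without the constants $\bot$ and $\const{I}$ — and that, by (\ref{equation: rel and lang}), for such terms $\term[1], \term[2]$ we have $\models_{\REL} \term[1] = \term[2]$ iff $\lang{\term[1]} = \lang{\term[2]}$. Hence the equational theory of $\Termset_{\set{\cdot, \cup, \bl^{*}}}$ is logspace-interreducible with the language-equivalence problem for this class of expressions, and it suffices to locate that problem in PSPACE and to prove it PSPACE-hard.

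For the upper bound, I would recall that regular-expression equivalence is in PSPACE: from $\term$ one builds in polynomial time a nondeterministic automaton with $O(\len{\term})$ states (e.g.\ the position automaton), and $\lang{\term[1]} \neq \lang{\term[2]}$ is witnessed by guessing a separating word letter by letter while maintaining the pair of reachable state-subsets of the two automata, which fits in polynomial space; thus inequivalence is in NPSPACE $=$ PSPACE (Savitch), and since PSPACE is closed under complement and intersection, deciding an equation $\term[1] = \term[2]$ (i.e.\ both inclusions) stays in PSPACE.

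For the lower bound, I would reduce from the PSPACE-hard universality problem for regular expressions. The point that keeps the reduction inside the restricted signature $\set{\cdot, \cup, \bl^{*}}$ is that $\SIG^{*}$ is expressed by $(\bigcup_{a \in \SIG} a)^{*}$ with no use of $\bot$ or $\const{I}$, and that the regular expressions produced by the standard Meyer–Stockmeyer encoding of space-bounded Turing-machine computations can be rewritten using only $\cup$, $\cdot$, and $\bl^{*}$; this is precisely the refinement made explicit by Hunt \cite[Prop.\ 2.4]{Hunt1976} (the coNP-complete counterpart being \cite[Thm.\ 2.7]{Hunt1976}, used for \Cref{proposition: PCoR hardness 1}). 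Consequently ``$\term = (\bigcup_{a \in \SIG} a)^{*}$?'' is a PSPACE-hard family of instances of the equational theory of $\Termset_{\set{\cdot, \cup, \bl^{*}}}$, and PSPACE-completeness follows.

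The main obstacle — indeed the only step requiring care beyond citing classical results — is this last one: the fragment has neither the empty-relation constant $\bot$ nor the identity constant $\const{I}$, so one must verify both that the PSPACE upper bound is insensitive to their absence (immediate) and that the hardness reduction goes through without them; the latter is handled by appealing to the explicit construction of Hunt rather than to the bare statement of Meyer and Stockmeyer.
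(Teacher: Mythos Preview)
Your proposal is correct and matches the paper's approach: the paper does not give its own proof but simply states that the result follows from the language characterization (\ref{equation: rel and lang}) together with the cited classical results of Meyer--Stockmeyer and Hunt, the latter being invoked precisely because its construction avoids the constants $\bot$ and $\const{I}$. Your write-up is a faithful elaboration of what those citations contain.
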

They can show the lower bounds of \Cref{theorem: PCoR complexity,theorem: KACC without I-,theorem: KACC without a-}.
\begin{remark}\label{remark: rel and lang}
    The equivalence (\ref{equation: rel and lang}) breaks if we add $\cap$, $\bl^{\smile}$, or $\top$ by a standard word language interpretation \cite[Sect.\ 4]{Pous2018}.
    Note that (\ref{equation: rel and lang}) also breaks if we add $a^{-}$ or $\const{I}^{-}$ with $\lang{a^{-}} \defeq \SIG^* \setminus \set{a}$
    and $\lang{\const{I}^{-}} \defeq \SIG^* \setminus \set{\const{I}}$, respectively.
    For example, $\models_{\REL} \top a \top b \top = \top b \top a \top$ holds \cite[p.\ 13]{Pous2018}, but their languages are not the same.
    Here, $\top$ abbreviates the term $c \cup \comnf{c}$ (cf.\ \Cref{equation: I- top 0}) where $c$ is any.
    Conversely, for example, $\lang{a^{-}} = \lang{\const{I} \cup (\SIG \setminus \set{a}) \cup (\SIG \cdot \SIG \cdot \SIG^*)}$ holds,
    but such an equation does not hold in relational semantics (where a finite set $L = \set{\term[1]_1, \dots, \term[1]_n}$ of terms abbreviates the term $\term[1]_1 \cup \dots \cup \term[1]_n$, here).
\end{remark}

\newcommand{\Fill}{\mathcal{S}}
\subsection{{\PCoRTC} and graph languages}\label{subsection: graph languages}
Recall \Cref{section: preliminary graphs}.
Using graph languages rather than word languages, we can give a characterization of the equational theory w.r.t.\ binary relations for more general terms (e.g., \cite{Pous2018}).
We use the following three operations on graphs,
\emph{series-composition} ($\cdot$), \emph{parallel-composition} ($\cap$), and \emph{converse} ($\bl^{\smile}$):
\begin{align*}
    G \cdot H
     & \defeq \begin{tikzpicture}[baseline = -.5ex]
                  \graph[grow right = 1.cm, branch down = 2.5ex, nodes={mynode, font = \scriptsize}]{
                  {s1/{}[draw, circle]}
                  -!- {c/{}[draw, circle]}
                  -!- {t1/{}[draw, circle]}
                  };
                  \node[left = 4pt of s1](s1l){} edge[earrow, ->] (s1);
                  \node[right = 4pt of t1](t1l){}; \path (t1) edge[earrow, ->] (t1l);
                  \graph[use existing nodes, edges={color=black, pos = .5, earrow}, edge quotes={fill=white, inner sep=1pt,font= \scriptsize}]{
                  s1 ->["$G$"] c;
                  c ->["$H$"] t1;
                  };
              \end{tikzpicture}
     & G \cap H
     & \defeq \begin{tikzpicture}[baseline = -.5ex]
                  \graph[grow right = 1.cm, branch down = 2.5ex, nodes={mynode, font = \scriptsize}]{
                  {s1/{}[draw, circle]}
                  -!- {t1/{}[draw, circle]}
                  };
                  \node[left = 4pt of s1](s1l){} edge[earrow, ->] (s1);
                  \node[right = 4pt of t1](t1l){}; \path (t1) edge[earrow, ->] (t1l);
                  \graph[use existing nodes, edges={color=black, pos = .5, earrow}, edge quotes={fill=white, inner sep=1pt,font= \scriptsize}]{
                  s1 ->["$G$", bend left = 25] t1;
                  s1 ->["$H$", bend right = 25] t1;
                  };
              \end{tikzpicture} \\
    G^{\smile}
     & \defeq \begin{tikzpicture}[baseline = -.5ex]
                  \graph[grow right = 1.cm, branch down = 2.5ex, nodes={mynode, font = \scriptsize}]{
                  {s1/{}[draw, circle]}
                  -!- {t1/{}[draw, circle]}
                  };
                  \node[left = 4pt of s1](s1l){} edge[earrow, ->] (s1);
                  \node[right = 4pt of t1](t1l){}; \path (t1) edge[earrow, ->] (t1l);
                  \graph[use existing nodes, edges={color=black, pos = .5, earrow}, edge quotes={fill=white, inner sep=1pt,font= \scriptsize}]{
                  t1 ->["$G$"] s1;
                  };
              \end{tikzpicture}.
\end{align*}
The following is inspired by \cite{Brunet2017}, \cite[Def.\ 15]{Pous2018} for $\PCoRTC$, where we extend the definition of $a^{-}$ and $\const{I}^{-}$ for $\ECoRTC$.
\begin{definition}[cf.\ {\cite{Brunet2017}},{\cite[Def.\ 15]{Pous2018}}]\label{definition: graph language ECoR}
    The \emph{graph language} $\glang(t)$ of an $\ECoRTC$ term $t$ is a set of graphs over $\SIG_{\const{I}}^{(-)}$, defined by:
    \begin{align*}
        \glang(a)                               & \defeq \set{
            \begin{tikzpicture}[baseline = -.5ex]
                \graph[grow right = 1.cm, branch down = 6ex, nodes={mynode}]{
                {0/{}[draw, circle]}-!-{1/{}[draw, circle]}
                };
                \node[left = .5em of 0](l){};
                \node[right = .5em of 1](r){};
                \graph[use existing nodes, edges={color=black, pos = .5, earrow}, edge quotes={fill=white, inner sep=1pt,font= \scriptsize}]{
                0 ->["$a$"] 1;
                l -> 0; 1 -> r;
                };
            \end{tikzpicture}
        }  \mbox{ where $a \in \Sigma^{(-)}$}                   \span\span                                                                                                                 \\
        \glang(\bot)                            & \defeq \emptyset                                                                                                              & 
        \glang(\top)                            & \defeq \set{\begin{tikzpicture}[baseline = -.5ex]
                                                                      \graph[grow right = 1.cm, branch down = 6ex, nodes={mynode}]{
                                                                      {0/{}[draw, circle]}-!-{1/{}[draw, circle]}
                                                                      };
                                                                      \node[left = .5em of 0](l){};
                                                                      \node[right = .5em of 1](r){};
                                                                      \graph[use existing nodes, edges={color=black, pos = .5, earrow}, edge quotes={fill=white, inner sep=1pt,font= \scriptsize}]{
                                                                          l -> 0; 1 -> r;
                                                                      };
                                                                  \end{tikzpicture}}                                 \\
        \glang(\term[1] \cup \term[2])          & \defeq \glang(\term[1]) \cup \glang(\term[2])                                                                                            \\
                                                & \hspace{1.5em} \glang(\term[1] \cap \term[2])  \defeq
        \set{G \cap H \mid G \in \glang(\term[1]) \ \land \  H \in \glang(\term[2])}                                                                                \span \span            \\
        \glang(\const{I})                       & \defeq \set{\begin{tikzpicture}[baseline = -.5ex]
                                                                      \graph[grow right = 1.cm, branch down = 6ex, nodes={mynode}]{
                                                                      {0/{}[draw, circle]}
                                                                      };
                                                                      \node[left = .5em of 0](l){};
                                                                      \node[right = .5em of 0](r){};
                                                                      \graph[use existing nodes, edges={color=black, pos = .5, earrow}, edge quotes={fill=white, inner sep=1pt,font= \scriptsize}]{
                                                                          l -> 0; 0 -> r;
                                                                      };
                                                                  \end{tikzpicture}}                                 & 
        \glang(\const{I}^{-})                   & \defeq \set{\begin{tikzpicture}[baseline = -.5ex]
                                                                      \graph[grow right = 1.cm, branch down = 6ex, nodes={mynode}]{
                                                                      {0/{}[draw, circle]}-!-{1/{}[draw, circle]}
                                                                      };
                                                                      \node[left = .5em of 0](l){};
                                                                      \node[right = .5em of 1](r){};
                                                                      \path (0) edge [draw = white, opacity = 0] node[pos= 0.5, font = \scriptsize, opacity = 1](a1){$\comnf{\const{I}}$}(1);
                                                                      \graph[use existing nodes, edges={color=black, pos = .5, earrow}, edge quotes={fill=white, inner sep=1pt,font= \scriptsize}]{
                                                                          0 -- a1 -> 1;
                                                                          l -> 0; 1 -> r;
                                                                      };
                                                                  \end{tikzpicture}} \\
        \glang(\term[1] \cdot \term[2])         & \defeq \set{G \cdot H \mid G \in \glang(\term[1]) \ \land \  H \in \glang(\term[2])}                        \span \span                  \\
        \glang(\term[1]^{*})                    & \defeq
        \bigcup_{n \in \nat} \glang(\term[1]^n) & 
        \glang(\term[1]^{\smile})               & \defeq \set{ G^{\smile} \mid G \in \glang(\term[1])}.
    \end{align*}
\end{definition}
\begin{definition}\label{definition: semantics for graphs}
    For a graph $G$ over $\SIG_{\const{I}}^{(-)}$, the \emph{binary relation} $\jump{G}_{\model} \subseteq |\model|^2$ on a structure $\model$ is defined by:
    \[\tuple{x, y} \in \jump{G}_{\model} \quad \defiff \quad (G \homo {\model}[x, y]).\]
    ($\graph[1] \homo \graph[2]$ denotes that there is a homomorphism from $\graph[1]$ to $\graph[2]$.)
    For each graph language $\glang$, let
    \[\jump{\glang}_{\model} \defeq \bigcup_{G \in \glang} \jump{G}_{\model}.\]
\end{definition}
Based on $\jump{G}_{\model}$ and $\jump{\glang}_{\model}$, we use $\models$ (\Cref{notation: models}) also for graphs and graph languages.
Note that:
\begin{align*}
    \ppmodel \models G & \iff G \homo \ppmodel; & \ppmodel \models \glang & \iff \exists G \in \glang. G \homo \ppmodel.
\end{align*}
The graph languages above characterize the relational semantics and the equational theory for $\PCoRTC$, as follows
(\Cref{proposition: semantics for graphs} is shown by induction on $\term$ and \Cref{proposition: graph characterization} is shown by using \Cref{proposition: semantics for graphs}; see \ifiscameraready the full version \cite{nakamuraExistentialCalculusRelations2023}\else \Cref{section: proposition: semantics for graphs}\fi, for more details):
\begin{proposition}\label{proposition: semantics for graphs}
    For every structure $\model$ and {\PCoRTC} term $\term$,
    \[\jump{\term}_{\model} = \jump{\glang(\term)}_{\model}.\]
\end{proposition}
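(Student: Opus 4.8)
The plan is a structural induction on the $\PCoRTC$ term, carried out uniformly for all structures $\model$; equivalently, one shows $\ppmodel \models \term \iff \ppmodel \models \glang(\term)$ for every $2$-pointed structure $\ppmodel$ (recall $\tuple{x,y} \in \jump{\graph}_{\model} \iff \graph \homo {\model}[x,y]$ and $\jump{\glang}_{\model} = \bigcup_{\graph \in \glang} \jump{\graph}_{\model}$). The content of the inductive steps is four compatibility facts, for arbitrary graphs $\graph[1], \graph[2]$ over $\SIG_{\const{I}}^{(-)}$ and an arbitrary structure $\model$: (i) $\jump{\glang}_{\model} = \bigcup_{\graph \in \glang} \jump{\graph}_{\model}$ --- the definition of $\jump{\glang}_{\model}$, which makes $\jump{\bullet}_{\model}$ commute with unions of graph languages; (ii) $\jump{\Ccompo{\graph[1]}{\graph[2]}}_{\model} = \Ccompo{\jump{\graph[1]}_{\model}}{\jump{\graph[2]}_{\model}}$; (iii) $\jump{\Ccap{\graph[1]}{\graph[2]}}_{\model} = \Ccap{\jump{\graph[1]}_{\model}}{\jump{\graph[2]}_{\model}}$; (iv) $\jump{\graph[1]^{\smile}}_{\model} = \jump{\graph[1]}_{\model}^{\smile}$. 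Via (i), facts (ii)--(iv) lift to the clauses of \Cref{definition: graph language ECoR} (e.g., $\jump{\glang(\Ccompo{\term[2]}{\term[3]})}_{\model} = \Ccompo{\jump{\glang(\term[2])}_{\model}}{\jump{\glang(\term[3])}_{\model}}$, and similarly for the clauses of $\cap$ and of converse), so that the induction steps for composition, union, intersection, and converse become purely formal.

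\emph{Base cases.} If $\term = a$ or $\term = a^{\smile}$ for some $a \in \SIG$, then $\glang(\term)$ is a single one-edge graph, and a homomorphism from it into ${\model}[x,y]$ exists iff $\tuple{x,y}$ (resp.\ $\tuple{y,x}$) belongs to $a^{\model}$, i.e.\ iff $\tuple{x,y} \in \jump{\term}_{\model}$. If $\term = \const{I}$, then $\glang(\const{I})$ is a single graph whose one vertex is both source and target, so a homomorphism into ${\model}[x,y]$ exists iff $x = y$, matching $\jump{\const{I}}_{\model} = \const{I}^{\model} = \set{\tuple{x,x} \mid x \in |\model|}$; here the defining clause $\const{I}^{\model} = \set{\tuple{x,x} \mid x \in |\model|}$ of a structure is used. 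If $\term = \top$, then $\glang(\top)$ is an edge-free two-vertex graph, which has a homomorphism into every ${\model}[x,y]$, matching $\jump{\top}_{\model} = |\model|^2$; and $\term = \bot$ gives $\emptyset = \emptyset$. (Every graph in $\glang(\term)$ for a $\PCoRTC$ term $\term$ carries only labels from $\SIG$, so the homomorphism conditions on $\const{I}$-, $\const{I}^{-}$-, and $a^{-}$-labelled edges are vacuous here.)

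\emph{Inductive cases.} For $\term = \Ccup{\term[2]}{\term[3]}$, we have $\jump{\Ccup{\term[2]}{\term[3]}}_{\model} = \jump{\term[2]}_{\model} \cup \jump{\term[3]}_{\model}$, which by the induction hypothesis and (i) equals $\jump{\glang(\Ccup{\term[2]}{\term[3]})}_{\model}$. The cases $\term = \Ccompo{\term[2]}{\term[3]}$, $\term = \Ccap{\term[2]}{\term[3]}$, and $\term = \term[2]^{\smile}$ go identically, invoking the language-level forms of (ii), (iii), (iv) respectively. For $\term = \term[2]^{*}$, note $\jump{\term[2]^{*}}_{\model} = \bigcup_{n \in \nat} \jump{\term[2]^{n}}_{\model}$ and, by (i), $\jump{\glang(\term[2]^{*})}_{\model} = \bigcup_{n \in \nat} \jump{\glang(\term[2]^{n})}_{\model}$, so it suffices to prove $\jump{\term[2]^{n}}_{\model} = \jump{\glang(\term[2]^{n})}_{\model}$ for each $n \in \nat$. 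This is a side induction on $n$: the base $\term[2]^{0} = \const{I}$ is the $\const{I}$ case, and the step $\term[2]^{n} = \Ccompo{\term[2]}{\term[2]^{n-1}}$ uses the language-level (ii) together with the main hypothesis on $\term[2]$ and the side hypothesis on $\term[2]^{n-1}$.

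The main obstacle is fact~(ii). Recall $\Ccompo{\graph[1]}{\graph[2]}$ is obtained from disjoint copies of $\graph[1]$ and $\graph[2]$ by identifying $\rv^{\graph[1]}$ with $\lv^{\graph[2]}$, with source $\lv^{\graph[1]}$ and target $\rv^{\graph[2]}$. Given $h \colon \Ccompo{\graph[1]}{\graph[2]} \homo {\model}[x,z]$, put $y \defeq h(\rv^{\graph[1]}) = h(\lv^{\graph[2]})$; since every edge of $\Ccompo{\graph[1]}{\graph[2]}$ lies in the $\graph[1]$-part or the $\graph[2]$-part, the two restrictions of $h$ are homomorphisms $\graph[1] \homo {\model}[x,y]$ and $\graph[2] \homo {\model}[y,z]$. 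Conversely, homomorphisms $\graph[1] \homo {\model}[x,y]$ and $\graph[2] \homo {\model}[y,z]$ agree on the identified vertex (both send it to $y$), hence glue to a homomorphism $\Ccompo{\graph[1]}{\graph[2]} \homo {\model}[x,z]$. Thus $\tuple{x,z} \in \jump{\Ccompo{\graph[1]}{\graph[2]}}_{\model}$ iff there is $y$ with $\tuple{x,y} \in \jump{\graph[1]}_{\model}$ and $\tuple{y,z} \in \jump{\graph[2]}_{\model}$, which is (ii). Facts (iii) and (iv) are analogous but easier: for (iii), $\Ccap{\graph[1]}{\graph[2]}$ identifies $\graph[1]$ and $\graph[2]$ at both pointed vertices, so a homomorphism into ${\model}[x,y]$ is the same as a pair of homomorphisms $\graph[1] \homo {\model}[x,y]$ and $\graph[2] \homo {\model}[x,y]$; for (iv), $\graph[1]^{\smile}$ is $\graph[1]$ with source and target swapped, so a homomorphism $\graph[1]^{\smile} \homo {\model}[x,y]$ is exactly a homomorphism $\graph[1] \homo {\model}[y,x]$. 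Beyond the disjoint-union and vertex-identification bookkeeping in (ii)--(iv), I expect no genuine difficulty; this is the standard ``graphs compute the relational semantics'' induction, specialised to $\PCoRTC$.
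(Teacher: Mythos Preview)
Your proposal is correct and follows essentially the same route as the paper: structural induction on $\term$, resting on the compatibility lemmas $\jump{\graph[1] \cdot \graph[2]}_{\model} = \jump{\graph[1]}_{\model} \cdot \jump{\graph[2]}_{\model}$, $\jump{\graph[1] \cap \graph[2]}_{\model} = \jump{\graph[1]}_{\model} \cap \jump{\graph[2]}_{\model}$, $\jump{\graph[1]^{\smile}}_{\model} = \jump{\graph[1]}_{\model}^{\smile}$ (proved there as a separate proposition), lifted to languages by distributivity. The only cosmetic difference is that for $\term[2]^{*}$ you phrase the argument as a side induction on $n$ whereas the paper unfolds $\jump{\term[2]^n}_{\model}$ as an $n$-fold composition directly; the content is the same.
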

\begin{proposition}[\hspace{1sp}{\cite[Thm.\ 3.9]{Brunet2017}}, {\cite[Thm.\ 16]{Pous2018}}]\label{proposition: graph characterization}
    For every $\PCoRTC$ terms $\term[1], \term[2]$, we have
    \[{} \models_{\REL} \term[1] \le \term[2] \iff \forall G \in \glang(\term[1]). \exists H \in \glang(\term[2]).\  H \homo G. \]
\end{proposition}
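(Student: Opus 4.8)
The plan is to transfer everything to the level of graph languages using \Cref{proposition: semantics for graphs}, which gives $\jump{\term[1]}_{\model} = \jump{\glang(\term[1])}_{\model}$ and $\jump{\term[2]}_{\model} = \jump{\glang(\term[2])}_{\model}$ for every structure $\model$; recall also that $\homo$ is a preorder, so homomorphisms compose. Then ($\Leftarrow$) is easy: assuming the right-hand side, fix a structure $\model$ and $\tuple{x, y} \in \jump{\term[1]}_{\model} = \jump{\glang(\term[1])}_{\model}$, so $G \homo \model[x, y]$ for some $G \in \glang(\term[1])$; by assumption $H \homo G$ for some $H \in \glang(\term[2])$, hence $H \homo \model[x, y]$ by composition, so $\tuple{x, y} \in \jump{\glang(\term[2])}_{\model} = \jump{\term[2]}_{\model}$. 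As $\model, x, y$ were arbitrary, $\models_{\REL} \term[1] \le \term[2]$.

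For ($\Rightarrow$), assume $\models_{\REL} \term[1] \le \term[2]$ and fix $G \in \glang(\term[1])$. The key observation is that, since $\term[1]$ is a $\PCoRTC$ term, every edge of $G$ carries a label in $\SIG$ (no $\bl^{-}$- or $\const{I}^{-}$-labeled edges arise from \Cref{definition: graph language ECoR} applied to $\PCoRTC$ terms). I would then read off from $G$ the canonical structure $\model_{G}$ with $|\model_{G}| \defeq |G|$, $a^{\model_{G}} \defeq a^{G}$ for $a \in \SIG$, $\const{I}^{\model_{G}} \defeq \set{\tuple{v, v} \mid v \in |G|}$, and $\comnf{a}^{\model_{G}} \defeq |G|^2 \setminus a^{\model_{G}}$ for $a \in \SIG_{\const{I}}$; this is a legitimate structure, and because $G$ carries only $\SIG$-labels and $a^{\model_{G}} = a^{G}$ on those, the identity map is a homomorphism $G \homo \model_{G}[\lv^{G}, \rv^{G}]$. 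Hence $\tuple{\lv^{G}, \rv^{G}} \in \jump{G}_{\model_{G}} \subseteq \jump{\glang(\term[1])}_{\model_{G}} = \jump{\term[1]}_{\model_{G}}$, so by hypothesis $\tuple{\lv^{G}, \rv^{G}} \in \jump{\term[2]}_{\model_{G}} = \jump{\glang(\term[2])}_{\model_{G}}$, yielding some $H \in \glang(\term[2])$ and $h \colon H \homo \model_{G}[\lv^{G}, \rv^{G}]$. Since $\term[2]$ is also a $\PCoRTC$ term, $H$ carries only $\SIG$-labels, and $h$ maps each such edge into $a^{\model_{G}} = a^{G}$ and sends the source and target of $H$ to $\lv^{G}, \rv^{G}$; therefore $h \colon H \homo G$, which is what we need.

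The main obstacle is the ($\Rightarrow$) direction, and specifically the construction of $\model_{G}$: one must check that it really is a structure (the forced clauses $\const{I}^{\model_{G}} = \{\tuple{v,v}\}$ and $\comnf{a}^{\model_{G}} = |G|^2 \setminus a^{\model_{G}}$) and, more importantly, that these forced clauses impose no constraint on homomorphisms into or out of $\model_{G}$ — which holds only because graphs of $\PCoRTC$ terms contain no complemented labels. This is precisely the place where the naive argument fails for $\ECoRTC$: graphs of $\ECoRTC$ terms do contain $a^{-}$- and $\const{I}^{-}$-labeled edges, whose interpretation in $\model_{G}$ would be the \emph{full} complement, so the identity need no longer be a homomorphism $G \homo \model_{G}[\lv^{G},\rv^{G}]$. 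This motivates replacing $\glang$ by its edge-saturated variant in \Cref{section: graph}.
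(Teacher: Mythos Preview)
Your proof is correct and follows essentially the same approach as the paper's: both directions use \Cref{proposition: semantics for graphs} and transitivity of $\homo$, and for ($\Rightarrow$) the paper constructs exactly your $\model_G$ (there called $\ppstruc$, with $a^{\ppstruc} = a^{G}$ for $a \in \SIG$ and $\const{I}^{\ppstruc}$ the identity), then uses precisely your observation that graphs of $\PCoRTC$ terms carry only $\SIG$-labels to transfer the homomorphism $H \homo \ppstruc$ back to $H \homo G$. Your closing remark on why this breaks for $\ECoRTC$ is also the paper's motivation for \Cref{section: graph}.
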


\begin{example}
    We can prove ${} \models_{\REL} a \cap b \le a \cap (\top b)$ by the homomorphism $\begin{tikzpicture}[baseline = -1.ex]
            \graph[grow right = 1.2cm, branch down = 1.2ex, nodes={mynode}]{
            {0/{}[draw, circle], 2/{}[draw, circle, xshift = 1.5em]}-!-{1/{}[draw, circle]}
            };
            \node[left = .5em of 0](l){};
            \node[right = .5em of 1](r){};
            \graph[use existing nodes, edges={color=black, pos = .5, earrow}, edge quotes={fill=white, inner sep=1.pt,font= \scriptsize}]{
            0 ->["$a$"] 1;
            2 ->["$b$", out = 0, in = -135] 1;
            l -> 0; 1 -> r;
            };
        \end{tikzpicture} \homo \begin{tikzpicture}[baseline = -.5ex]
            \graph[grow right = 1.cm, branch down = 2ex, nodes={mynode}]{
            {0/{}[draw, circle]}-!-{1/{}[draw, circle]}
            };
            \node[left = .5em of 0](l){};
            \node[right = .5em of 1](r){};
            \graph[use existing nodes, edges={color=black, pos = .5, earrow}, edge quotes={fill=white, inner sep=1.pt,font= \scriptsize}]{
            0 ->["$a$", bend left = 20] 1;
            0 ->["$b$", bend right = 20] 1;
            l -> 0; 1 -> r;
            };
        \end{tikzpicture}$. Here,
    \begin{align*}
        \mathcal{G}(a \cap b)        & = \set{\hspace{-.1em}\begin{tikzpicture}[baseline = -.5ex]
                                                                    \graph[grow right = 1.cm, branch down = 2ex, nodes={mynode}]{
                                                                    {0/{}[draw, circle]}-!-{1/{}[draw, circle]}
                                                                    };
                                                                    \node[left = .5em of 0](l){};
                                                                    \node[right = .5em of 1](r){};
                                                                    \graph[use existing nodes, edges={color=black, pos = .5, earrow}, edge quotes={fill=white, inner sep=1.pt,font= \scriptsize}]{
                                                                    0 ->["$a$", bend left = 20] 1;
                                                                    0 ->["$b$", bend right = 20] 1;
                                                                    l -> 0; 1 -> r;
                                                                    };
                                                                \end{tikzpicture}\hspace{-.1em}}; & \hspace{-.8em}
        \mathcal{G}(a \cap (\top b)) & = \set{\hspace{-.1em}\begin{tikzpicture}[baseline = -1.ex]
                                                                    \graph[grow right = 1.2cm, branch down = 1.2ex, nodes={mynode}]{
                                                                    {0/{}[draw, circle], 2/{}[draw, circle, xshift = 1.5em]}-!-{1/{}[draw, circle]}
                                                                    };
                                                                    \node[left = .5em of 0](l){};
                                                                    \node[right = .5em of 1](r){};
                                                                    \graph[use existing nodes, edges={color=black, pos = .5, earrow}, edge quotes={fill=white, inner sep=1.pt,font= \scriptsize}]{
                                                                    0 ->["$a$"] 1;
                                                                    2 ->["$b$", out = 0, in = -135] 1;
                                                                    l -> 0; 1 -> r;
                                                                    };
                                                                \end{tikzpicture}\hspace{-.1em}}.
    \end{align*}
\end{example}
\section{Graph characterization for \ECoRTC}\label{section: graph}
We consider extending \Cref{proposition: semantics for graphs,proposition: graph characterization} for $\ECoRTC$.
We can straightforwardly extend \Cref{proposition: semantics for graphs}.
\begin{proposition}[cf.\ \Cref{proposition: semantics for graphs}]\label{proposition: semantics for graphs ECoR}
    For every structure $\model$ and $\ECoRTC$ term $\term$,
    we have $\jump{t}_{\model} = \jump{\glang(t)}_{\model}$.
\end{proposition}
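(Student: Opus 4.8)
The plan is to prove $\jump{\term}_{\model} = \jump{\glang(\term)}_{\model}$ by induction on the structure of $\term$, following the proof of \Cref{proposition: semantics for graphs} essentially verbatim and checking that the new constructs require no new idea. The point is that \Cref{definition: graph language ECoR} treats the complemented atoms $a^{-}$ (for $a \in \SIG$) and $\const{I}^{-}$ as ordinary edge labels from $\SIG_{\const{I}}^{(-)}$, and \Cref{definition: semantics for graphs} reads $\jump{G}_{\model}$ off homomorphisms into structures over the same label set; so, at the level of this proposition, $a^{-}$ and $\const{I}^{-}$ behave exactly like fresh variables. In particular the ``complement'' constraint $\comnf{a}^{\model} = |\model|^{2} \setminus a^{\model}$ in the definition of a structure is \emph{not} used here; it becomes essential only for the graph-homomorphism characterization that should replace \Cref{proposition: graph characterization}, and which (as noted in the introduction) fails verbatim for $\ECoRTC$.

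For the base cases I would simply unfold $\jump{G}_{\model}$ for the one-edge, one-vertex, and edgeless two-vertex graphs occurring in \Cref{definition: graph language ECoR}. For $b \in \SIG_{\const{I}}^{(-)}$, a homomorphism from the single $b$-edge graph into $\model[x,y]$ must send its source to $x$ and its target to $y$, and exists exactly when $\tuple{x,y} \in b^{\model}$; hence $\jump{\glang(b)}_{\model} = b^{\model}$, which handles $a$, $a^{-}$, and (directly from $\jump{\const{I}^{-}}_{\model} \defeq (\const{I}^{-})^{\model}$) also $\const{I}^{-}$. For $\const{I}$, the single-vertex graph forces $x = y$, so $\jump{\glang(\const{I})}_{\model} = \set{\tuple{x,x} \mid x \in |\model|} = \const{I}^{\model}$ using the structure condition on $\const{I}^{\model}$. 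The reversed-edge graphs in $\glang(a^{\smile})$ and $\glang((a^{-})^{\smile})$ give the converses of $a^{\model}$ and $(a^{-})^{\model}$ respectively; and $\glang(\bot) = \emptyset$, while the edgeless two-vertex graph maps into every $\model[x,y]$, so the $\bot$ and $\top$ cases are immediate. Since the converse in $\ECoRTC$ applies only to $a$ and $a^{-}$, there is no recursive converse case to treat.

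For the inductive cases I would reuse the standard compatibility facts between the graph operations and the relational operations, exactly as in the $\PCoRTC$ proof: $\graph[1] \cdot \graph[2] \homo \model[x,y]$ iff there is $z \in |\model|$ with $\graph[1] \homo \model[x,z]$ and $\graph[2] \homo \model[z,y]$ (split or glue the homomorphism at the shared middle vertex), so $\cdot$ corresponds to relational composition; $\graph[1] \cap \graph[2] \homo \model[x,y]$ iff $\graph[1] \homo \model[x,y]$ and $\graph[2] \homo \model[x,y]$, so $\cap$ corresponds to intersection; and $\cup$ is trivial from $\jump{\glang[1] \cup \glang[2]}_{\model} = \jump{\glang[1]}_{\model} \cup \jump{\glang[2]}_{\model}$. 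Combining each with the induction hypothesis closes the cases $\term[1] \cdot \term[2]$, $\term[1] \cup \term[2]$, $\term[1] \cap \term[2]$. For $\term[1]^{*}$, I would first show $\jump{\term[1]^{n}}_{\model} = \jump{\glang(\term[1]^{n})}_{\model}$ for every $n \in \nat$ by a side induction on $n$ appealing only to the (already available) $\cdot$ and $\const{I}$ cases, and then take the union over $n$, using $\glang(\term[1]^{*}) = \bigcup_{n} \glang(\term[1]^{n})$ and $\jump{\bigcup_{n}\glang(\term[1]^{n})}_{\model} = \bigcup_{n}\jump{\glang(\term[1]^{n})}_{\model}$.

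I do not expect a genuine obstacle: the only thing that needs care is the bookkeeping of the pointed (source/target) vertices under series composition and iteration, which is identical to the $\PCoRTC$ argument. The conceptual content of $\ECoRTC$ is entirely deferred to the next section, where the analogue of \Cref{proposition: graph characterization} must be repaired using edge saturations.
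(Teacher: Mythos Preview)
Your proposal is correct and follows essentially the same approach as the paper: an induction on $\term$ in which the new constructors $a^{-}$, $\const{I}^{-}$, $(a^{-})^{\smile}$ are handled as additional atomic base cases over the enlarged label set $\SIG_{\const{I}}^{(-)}$, with all inductive cases reused verbatim from the $\PCoRTC$ argument. The only cosmetic difference is that the paper packages the converse cases via the generic clause $\glang(\term[1]^{\smile}) = \set{G^{\smile} \mid G \in \glang(\term[1])}$ together with \ref{proposition: graph operation: smile}, whereas you unfold $a^{\smile}$ and $(a^{-})^{\smile}$ directly as base cases; since the $\ECoRTC$ syntax restricts $\bl^{\smile}$ to atoms anyway, the two are equivalent.
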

\begin{proof}
    Similar to \Cref{proposition: semantics for graphs} (\ifiscameraready see the full version \cite{nakamuraExistentialCalculusRelations2023}\else \Cref{section: proof: proposition: semantics for graphs ECoR}\fi).
\end{proof}
However, we cannot extend \Cref{proposition: graph characterization}, immediately.
\begin{example}\label{example: proposition: graph characterization ECoR counter}
    ${} \models_{\REL} \top \le a \cup \comnf{a}$ holds (cf.\ \Cref{equation: I- top 0}), but the right-hand side formula of \Cref{proposition: graph characterization} fails because
    there does not exist any homomorphism from any graphs in $\glang(a \cup \comnf{a})$:
    \begin{align*}
        \glang(a \cup \comnf{a}) & = \quad \set{\begin{tikzpicture}[baseline = -.5ex]
                                                        \graph[grow right = 1cm, branch down = 6ex, nodes={mynode}]{
                                                        {0/{}[draw, circle]}-!-{1/{}[draw, circle]}
                                                        };
                                                        \node[left = .5em of 0](l){};
                                                        \node[right = .5em of 1](r){};
                                                        \path (0) edge [draw = white, opacity = 0] node[pos= 0.5,inner sep = 1.5pt, font = \scriptsize, opacity = 1](a1){$a$}(1);
                                                        \graph[use existing nodes, edges={color=black, pos = .5, earrow}, edge quotes={fill=white, inner sep=1pt,font= \scriptsize}]{
                                                            0 -- a1 -> 1;
                                                            l -> 0; 1 -> r;
                                                        };
                                                    \end{tikzpicture}, \begin{tikzpicture}[baseline = -.5ex]
                                                                           \graph[grow right = 1cm, branch down = 6ex, nodes={mynode}]{
                                                                           {0/{}[draw, circle]}-!-{1/{}[draw, circle]}
                                                                           };
                                                                           \node[left = .5em of 0](l){};
                                                                           \node[right = .5em of 1](r){};
                                                                           \path (0) edge [draw = white, opacity = 0] node[pos= 0.5, inner sep = 1.5pt,font = \scriptsize, opacity = 1](a1){$\comnf{a}$}(1);
                                                                           \graph[use existing nodes, edges={color=black, pos = .5, earrow}, edge quotes={fill=white, inner sep=1pt,font= \scriptsize}]{
                                                                               0 -- a1 -> 1;
                                                                               l -> 0; 1 -> r;
                                                                           };
                                                                       \end{tikzpicture}} \\
        \glang(\top)             & \ni \qquad \begin{tikzpicture}[baseline = -.5ex]
                                                  \graph[grow right = 1cm, branch down = 6ex, nodes={mynode}]{
                                                  {0/{}[draw, circle]}-!-{1/{}[draw, circle]}
                                                  };
                                                  \node[left = .5em of 0](l){};
                                                  \node[right = .5em of 1](r){};
                                                  \graph[use existing nodes, edges={color=black, pos = .5, earrow}, edge quotes={fill=white, inner sep=1pt,font= \scriptsize}]{
                                                      l -> 0; 1 -> r;
                                                  };
                                              \end{tikzpicture}.
    \end{align*}
    (The same problem occurs even without $\top$; consider ${} \models_{\REL} b \cup \comnf{b} \le a \cup \comnf{a}$ where $b \neq a$.)
\end{example}
To avoid the problem above, we consider modifying the graph languages using \emph{edge saturations}.

\newcommand{\eqcl}[1]{#1^{\mathcal{E}}}
\newcommand{\Quo}{\mathcal{Q}}
\newcommand{\QuoFill}{\mathcal{QS}}
\newcommand{\gquo}[2]{[#1]_{#2}}
\subsection{Edge-saturated graphs and 2-pointed structures}\label{section: saturated graphs}
For a binary relation $R$, we write $\eqcl{R}$ for the \emph{equivalence closure} of $R$ (the minimal equivalence relation subsuming $R$).

For a graph $\graph$ over $\SIG_{\const{I}}^{(-)}$, we write $\graph^{\Quo}$ for the quotient graph{\ifiscameraready\else\footnote{Precisely, $\graph^{\Quo}$ is the graph over $\SIG_{\const{I}}^{(-)}$, defined by
                $|\graph^{\Quo}| = \set{\gquo{x}{\graph} \mid x \in \graph}$;
                $a^{\graph^{\Quo}} = \set{\tuple{X, Y} \in |\graph^{\Quo}|^2 \mid \exists x \in X.\ \exists y \in Y.\   \tuple{x, y} \in a^{\graph}}$ for $a \in \SIG_{\const{I}}^{(-)}$;
                $\tuple{\lv^{\graph^{\Quo}}, \rv^{\graph^{\Quo}}} = \tuple{\gquo{\lv^{\graph}}{\graph}, \gquo{\rv^{\graph}}{\graph}}$.}\fi} of $G$ w.r.t.\ the equivalence relation $\eqcl{(\const{I}^{G})}$;
e.g., if $G = \begin{tikzpicture}[baseline = -1.5ex]
        \graph[grow right = 1.8cm, branch down = 2ex, nodes={mysmallnode, font=\scriptsize}]{
        {0/{$0$}[draw, circle], 2/{$2$}[draw, circle, xshift = 2.2em]}-!-{1/{$1$}[draw, circle]}
        };
        \node[left = .5em of 0](l){};
        \node[right = .5em of 1](r){};
        \path (0) edge [draw = white, opacity = 0] node[pos= 0.5,  inner sep = 1.5pt, font = \scriptsize, opacity = 1](a1){$a$}(1);
        \path (2) edge [bend right= 15, draw = white, opacity = 0] node[pos= 0.4,  inner sep = 1.5pt, font = \scriptsize, opacity = 1](a2){$b$}(1);
        \path (0) edge [bend right = 15, draw = white, opacity = 0] node[pos= 0.6,  inner sep = 1.5pt, font = \scriptsize, opacity = 1](a3){$\const{I}$}(2);
        \graph[use existing nodes, edges={color=black, pos = .5, earrow}, edge quotes={fill=white, inner sep=1pt,font= \scriptsize}]{
        {0,2} -- {a1,a2} -> 1;
        {2} --[bend left = 10] {a3} ->[bend left = 10] 0;
        l -> 0; 1 -> r;
        };
    \end{tikzpicture}$, $G^{\Quo} = \begin{tikzpicture}[baseline = -.5ex]
        \graph[grow right = 1.5cm, branch down = 2ex, nodes={mysmallnode, font=\tiny, minimum width=.7cm}]{
        {0/{$\set{0,2}$}[draw, circle]}-!-{1/{$\set{1}$}[draw, circle]}
        };
        \node[left = .5em of 0](l){};
        \node[right = .5em of 1](r){};
        \path (0) edge [draw = white, opacity = 0, bend left = 25] node[pos= 0.5,  inner sep = 1.5pt, font = \scriptsize, opacity = 1](a1){$a$}(1);
        \path (0) edge [bend right = 25, draw = white, opacity = 0] node[pos= 0.5,  inner sep = 1.5pt, font = \scriptsize, opacity = 1](a2){$b$}(1);
        \node[right = .5em of 0, inner sep = 1.5pt, font = \scriptsize, opacity = 1](I00){$\const{I}$};
        \graph[use existing nodes, edges={color=black, pos = .5, earrow}, edge quotes={fill=white, inner sep=1pt,font= \scriptsize}]{
        0 --[bend left = 10] I00 ->[bend left = 10] 0;
        {0} --[bend left = 10] {a1} ->[bend left = 10] 1;
        {0} --[bend right = 10] {a2} ->[bend right = 10] 1;
        l -> 0; 1 -> r;
        };
    \end{tikzpicture}$.
We use $\gquo{x}{\graph}$ to denote the quotient set of a vertex $x \in |\graph|$ w.r.t.\  the equivalence relation $\eqcl{(\const{I}^{\graph})}$.
\begin{definition}\label{definition: consistent}
    Let $\graph$ be a graph over $\SIG_{\const{I}}^{(-)}$.
    We say that $G$ is \emph{consistent} if for every $a \in \SIG_{\const{I}}$,  the following holds:
    \begin{description}
        \item[($a$-consistent)\label{definition: consistent: consistent}] $(\eqcl{(\const{I}^{G})} \cdot a^{G} \cdot \eqcl{(\const{I}^{G})}) \cap (\eqcl{(\const{I}^{G})} \cdot \comnf{a}^{G} \cdot \eqcl{(\const{I}^{G})}) = \emptyset$.
    \end{description}
    We say that $G$ is \emph{(consistently) edge-saturated} if $G$ is consistent and the following hold:
    \begin{description}
        \item[($a$-saturated)\label{definition: consistent: saturated}]  $a^{G} \cup \comnf{a}^{G} = |G|^2$, for every $a \in \SIG_{\const{I}}$;
        \item[(I-equivalence)\label{definition: consistent: I}] $\const{I}^{G}$ is an equivalence relation.
    \end{description}
\end{definition}
Each edge-saturated graph induces a $2$-pointed structure:
\begin{proposition}\label{proposition: structure}
    If a graph $G$ over $\SIG_{\const{I}}^{(-)}$ is edge-saturated, then $G^{\Quo}$ is a $2$-pointed structure.
\end{proposition}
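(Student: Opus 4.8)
The plan is to verify directly that $G^{\Quo}$ meets the two defining requirements of a $2$-pointed structure, since it is automatically a $2$-pointed graph over $\SIG_{\const{I}}^{(-)}$: its vertex set is non-empty because $|G|$ is, and it carries the pointed vertices $\tuple{\gquo{\lv^{G}}{G},\gquo{\rv^{G}}{G}}$ and the labelled relations $a^{G^{\Quo}}$ by construction. So I only have to check that (i) $\const{I}^{G^{\Quo}}$ is the diagonal and (ii) $\comnf{a}^{G^{\Quo}} = |G^{\Quo}|^{2}\setminus a^{G^{\Quo}}$ for every $a\in\SIG_{\const{I}}$. The key preliminary remark is that (I-equivalence) says $\const{I}^{G}$ is already an equivalence relation, hence $\eqcl{(\const{I}^{G})}=\const{I}^{G}$; I will use this identification throughout, so that $\gquo{x}{G}$ is exactly the $\const{I}^{G}$-class of $x$, $\tuple{x,y}\in\const{I}^{G}$ iff $\gquo{x}{G}=\gquo{y}{G}$, and $\const{I}^{G}$ is in particular reflexive on $|G|$.

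For (i), unfolding the definition of the quotient, $\const{I}^{G^{\Quo}}=\set{\tuple{\gquo{x}{G},\gquo{y}{G}} \mid \tuple{x,y}\in\const{I}^{G}}$, and $\tuple{x,y}\in\const{I}^{G}$ holds precisely when $\gquo{x}{G}=\gquo{y}{G}$; thus $\const{I}^{G^{\Quo}}$ is the diagonal on $|G^{\Quo}|$. For the easy half of (ii), namely $|G^{\Quo}|^{2}\setminus a^{G^{\Quo}}\subseteq\comnf{a}^{G^{\Quo}}$: given classes $X,Y$ with $\tuple{X,Y}\notin a^{G^{\Quo}}$, pick any $x\in X$ and $y\in Y$; by ($a$-saturated) we have $a^{G}\cup\comnf{a}^{G}=|G|^{2}$, so either $\tuple{x,y}\in a^{G}$, which would force $\tuple{X,Y}\in a^{G^{\Quo}}$ and is impossible, or $\tuple{x,y}\in\comnf{a}^{G}$, which gives $\tuple{X,Y}\in\comnf{a}^{G^{\Quo}}$.

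The remaining inclusion $\comnf{a}^{G^{\Quo}}\subseteq|G^{\Quo}|^{2}\setminus a^{G^{\Quo}}$ is where the real content lies, and it is the step I expect to be the main obstacle. Suppose $\tuple{X,Y}\in\comnf{a}^{G^{\Quo}}$, witnessed by $x\in X$, $y\in Y$ with $\tuple{x,y}\in\comnf{a}^{G}$, and suppose toward a contradiction that also $\tuple{X,Y}\in a^{G^{\Quo}}$, witnessed by $x'\in X$, $y'\in Y$ with $\tuple{x',y'}\in a^{G}$. Since $x,x'$ lie in the same class, $\tuple{x,x'}\in\const{I}^{G}=\eqcl{(\const{I}^{G})}$, and likewise $\tuple{y',y}\in\eqcl{(\const{I}^{G})}$, so $\tuple{x,y}\in\eqcl{(\const{I}^{G})}\cdot a^{G}\cdot\eqcl{(\const{I}^{G})}$; on the other hand $\tuple{x,y}\in\comnf{a}^{G}\subseteq\eqcl{(\const{I}^{G})}\cdot\comnf{a}^{G}\cdot\eqcl{(\const{I}^{G})}$ because $\eqcl{(\const{I}^{G})}$ is reflexive. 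This contradicts ($a$-consistent). Hence $\tuple{X,Y}\notin a^{G^{\Quo}}$, which finishes (ii) and the proof. The only delicate points are the collapse $\eqcl{(\const{I}^{G})}=\const{I}^{G}$ supplied by (I-equivalence), and routing the two witnessing pairs through the right $\const{I}^{G}$-edges so that both $a^{G}$ and $\comnf{a}^{G}$ land in the two $\eqcl{(\const{I}^{G})}$-sandwiched relations that ($a$-consistent) declares disjoint; everything else is a routine unfolding of the quotient construction.
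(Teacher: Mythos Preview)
Your proof is correct and follows essentially the same approach as the paper's: both verify that $\const{I}^{G^{\Quo}}$ is the identity via \ref{definition: consistent: I}, and that $\comnf{a}^{G^{\Quo}}=|G^{\Quo}|^{2}\setminus a^{G^{\Quo}}$ by combining \ref{definition: consistent: saturated} (for coverage) with \ref{definition: consistent: consistent} (for disjointness). You simply spell out the quotient witnesses and the reflexivity/same-class reasoning that the paper leaves implicit when it asserts $a^{G^{\Quo}}\cup\comnf{a}^{G^{\Quo}}=|G^{\Quo}|^{2}$ and $a^{G^{\Quo}}\cap\comnf{a}^{G^{\Quo}}=\emptyset$ directly.
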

\begin{proof}
    $\const{I}^{G^{\Quo}}$ is the identity relation because $\const{I}^{G}$ is an equivalence relation \ref{definition: consistent: I}.
    For $a \in \SIG_{\const{I}}$, $\bar{a}^{G^{\Quo}} = |G^{\Quo}|^2 \setminus a^{G^{\Quo}}$ holds because
    $a^{G^{\Quo}} \cup \bar{a}^{G^{\Quo}} = |G^{\Quo}|^2$ \ref{definition: consistent: saturated}
    and $a^{G^{\Quo}} \cap \bar{a}^{G^{\Quo}} = \emptyset$ \ref{definition: consistent: consistent}.
    Thus, $G^{\Quo}$ is a $2$-pointed structure.
\end{proof}

\subsection{Graph characterization via edge saturations}\label{section: graph characterization}
\begin{definition}
    For graphs $\graph[1], \graph[2]$ over $\SIG_{\const{I}}^{(-)}$,
    we say that $\graph[2]$ is an \emph{edge-extension} of $\graph[1]$, if $|\graph[2]| = |\graph[1]|$ and $a^{\graph[2]} \supseteq a^{\graph[1]}$ for every $a \in \SIG_{\const{I}}^{(-)}$.
    We say that $\graph[2]$ is an \emph{(edge-)saturation} of $\graph[1]$ if
    $\graph[2]$ is an edge-extension of $\graph[1]$ and is edge-saturated.
\end{definition}
Let $\Fill(\graph[1])$ be the set of all saturations of $\graph[1]$.

For a graph language $\glang$,
let $\Fill(\glang) \defeq \bigcup_{\graph \in \glang} \Fill(\graph)$ and $\Quo(\glang) \defeq \set{\graph^{\Quo} \mid \graph \in \glang}$.
We abbreviate $\Quo \circ \Fill$ to $\QuoFill$.

\begin{example}\label{example: quofill}
    When $\SIG = \set{a}$,
    $\QuoFill(\begin{tikzpicture}[baseline = -.5ex]
                \graph[grow right = .7cm, branch down = 6ex, nodes={mynode}]{
                {0/{}[draw, circle]}-!-{1/{}[draw, circle]}
                };
                \node[left = .5em of 0](l){};
                \node[right = .5em of 1](r){};
                \graph[use existing nodes, edges={color=black, pos = .5, earrow}, edge quotes={fill=white, inner sep=1pt,font= \scriptsize}]{
                    l -> 0; 1 -> r;
                };
            \end{tikzpicture})$ is the set:
    \[\left\{ \hspace{-.1em}\begin{tikzpicture}[remember picture, baseline = -.5ex]
            \graph[grow right = 1.2cm, branch down = 6ex, nodes={mynode}]{
            {C0/{}[draw, circle]}
            };
            \node[left = .5em of C0](l){};
            \node[right = .5em of C0](r){};
            \node[below left = 2ex and .0em of C0, inner sep = 1.5pt, font = \scriptsize, opacity = 1](a00){$\comnf{a}$};
            \node[below right = 2ex and .0em of C0, inner sep = 1.5pt, font = \scriptsize, opacity = 1](I00){$\const{I}$};
            \graph[use existing nodes, edges={color=black, pos = .5, earrow}, edge quotes={fill=white, inner sep=1pt,font= \scriptsize}]{
            C0 --[bend left = 10] a00 ->[bend left = 10] C0;
            C0 --[bend left = 10] I00 ->[bend left = 10] C0;
            l -> C0; C0 -> r;
            };
        \end{tikzpicture},
        \begin{tikzpicture}[remember picture, baseline = -.5ex]
            \graph[grow right = 1.2cm, branch down = 6ex, nodes={mynode}]{
            {C0/{}[draw, circle]}
            };
            \node[left = .5em of C0](l){};
            \node[right = .5em of C0](r){};
            \node[below left = 2ex and .0em of C0, inner sep = 1.5pt, font = \scriptsize, opacity = 1](a00){$a$};
            \node[below right = 2ex and .0em of C0, inner sep = 1.5pt, font = \scriptsize, opacity = 1](I00){$\const{I}$};
            \graph[use existing nodes, edges={color=black, pos = .5, earrow}, edge quotes={fill=white, inner sep=1pt,font= \scriptsize}]{
            C0 --[bend left = 10] a00 ->[bend left = 10] C0;
            C0 --[bend left = 10] I00 ->[bend left = 10] C0;
            l -> C0; C0 -> r;
            };
        \end{tikzpicture},
        \begin{tikzpicture}[remember picture, baseline = -.5ex]
            \graph[grow right = 1.2cm, branch down = 6ex, nodes={mynode}]{
            {D0/{}[draw, circle]}-!-{D1/{}[draw, circle]}
            };
            \node[left = .5em of D0](l){};
            \node[right = .5em of D1](r){};
            \node[below left = 2ex and .0em of D0, inner sep = .5pt, font = \scriptsize, opacity = 1](a00){$\comnf{a}$};
            \node[below left = 2ex and .0em of D1, inner sep = .5pt, font = \scriptsize, opacity = 1](a11){$\comnf{a}$};
            \path (D0) edge [draw = white, opacity = 0, bend left = 18] node[pos= 0.5, inner sep = 1.5pt, font = \scriptsize, opacity = 1](a01){$\comnf{a}$}(D1);
            \path (D1) edge [draw = white, opacity = 0, bend left = 18] node[pos= 0.5, inner sep = .5pt, font = \scriptsize, opacity = 1](a10){$\comnf{a}$}(D0);
            \node[below right = 2ex and .0em of D0, inner sep = 1.5pt, font = \scriptsize, opacity = 1](I00){$\const{I}$};
            \node[below right = 2ex and .0em of D1, inner sep = 1.5pt, font = \scriptsize, opacity = 1](I11){$\const{I}$};
            \path (D0) edge [draw = white, opacity = 0, bend left = 80] node[pos= 0.5, inner sep = 1.pt, font = \scriptsize, opacity = 1](I01){$\comnf{\const{I}}$}(D1);
            \path (D1) edge [draw = white, opacity = 0, bend left = 80] node[pos= 0.5, inner sep = 1.pt, font = \scriptsize, opacity = 1](I10){$\comnf{\const{I}}$}(D0);
            \graph[use existing nodes, edges={color=black, pos = .5, earrow}, edge quotes={fill=white, inner sep=1pt,font= \scriptsize}]{
            D0 --[bend left = 10] a00 ->[bend left = 10] D0;
            D0 --[bend left = 10] a01 ->[bend left = 10] D1;
            D1 --[bend left = 10] a10 ->[bend left = 10] D0;
            D1 --[bend left = 10] a11 ->[bend left = 10] D1;
            D0 --[bend left = 10] I00 ->[bend left = 10] D0;
            D0 --[bend left = 10] I01 ->[bend left = 10] D1;
            D1 --[bend left = 10] I10 ->[bend left = 10] D0;
            D1 --[bend left = 10] I11 ->[bend left = 10] D1;
            l -> D0; D1 -> r;
            };
        \end{tikzpicture}, \dots, \begin{tikzpicture}[remember picture, baseline = -.5ex]
            \graph[grow right = 1.2cm, branch down = 6ex, nodes={mynode}]{
            {E0/{}[draw, circle]}-!-{E1/{}[draw, circle]}
            };
            \node[left = .5em of E0](l){};
            \node[right = .5em of E1](r){};
            \node[below left = 2ex and .0em of E0, inner sep = .5pt, font = \scriptsize, opacity = 1](a00){$\comnf{a}$};
            \node[below left = 2ex and .0em of E1, inner sep = 1.5pt, font = \scriptsize, opacity = 1](a11){$a$};
            \path (E0) edge [draw = white, opacity = 0, bend left = 18] node[pos= 0.5, inner sep = .5pt, font = \scriptsize, opacity = 1](a01){$a$}(E1);
            \path (E1) edge [draw = white, opacity = 0, bend left = 18] node[pos= 0.5, inner sep = 1.5pt, font = \scriptsize, opacity = 1](a10){$\comnf{a}$}(E0);
            \node[below right = 2ex and .0em of E0, inner sep = 1.5pt, font = \scriptsize, opacity = 1](I00){$\const{I}$};
            \node[below right = 2ex and .0em of E1, inner sep = 1.5pt, font = \scriptsize, opacity = 1](I11){$\const{I}$};
            \path (E0) edge [draw = white, opacity = 0, bend left = 80] node[pos= 0.5, inner sep = 1.pt, font = \scriptsize, opacity = 1](I01){$\comnf{\const{I}}$}(E1);
            \path (E1) edge [draw = white, opacity = 0, bend left = 80] node[pos= 0.5, inner sep = 1.pt, font = \scriptsize, opacity = 1](I10){$\comnf{\const{I}}$}(E0);
            \graph[use existing nodes, edges={color=black, pos = .5, earrow}, edge quotes={fill=white, inner sep=1pt,font= \scriptsize}]{
            E0 --[bend left = 10] a00 ->[bend left = 10] E0;
            E0 --[bend left = 10] a01 ->[bend left = 10] E1;
            E1 --[bend left = 10] a10 ->[bend left = 10] E0;
            E1 --[bend left = 10] a11 ->[bend left = 10] E1;
            E0 --[bend left = 10] I00 ->[bend left = 10] E0;
            E0 --[bend left = 10] I01 ->[bend left = 10] E1;
            E1 --[bend left = 10] I10 ->[bend left = 10] E0;
            E1 --[bend left = 10] I11 ->[bend left = 10] E1;
            l -> E0; E1 -> r;
            };
        \end{tikzpicture}, \dots \right\}.\]
    $\QuoFill(\begin{tikzpicture}[baseline = -.5ex]
                \graph[grow right = .7cm, branch down = 6ex, nodes={mynode}]{
                {0/{}[draw, circle]}-!-{1/{}[draw, circle]}
                };
                \node[left = .5em of 0](l){};
                \node[right = .5em of 1](r){};
                \graph[use existing nodes, edges={color=black, pos = .5, earrow}, edge quotes={fill=white, inner sep=1pt,font= \scriptsize}]{
                    l -> 0; 1 -> r;
                };
            \end{tikzpicture})$ has $18$ graphs up to isomorphism (the sum of patterns ($2^{1 \times 1} = 2$) when the two vertices are connected with $\const{I}$ and patterns ($2^{2 \times 2} = 16$) otherwise).
    For every $G$ in $\QuoFill(\begin{tikzpicture}[baseline = -.5ex]
                \graph[grow right = .7cm, branch down = 6ex, nodes={mynode}]{
                {0/{}[draw, circle]}-!-{1/{}[draw, circle]}
                };
                \node[left = .5em of 0](l){};
                \node[right = .5em of 1](r){};
                \graph[use existing nodes, edges={color=black, pos = .5, earrow}, edge quotes={fill=white, inner sep=1pt,font= \scriptsize}]{
                    l -> 0; 1 -> r;
                };
            \end{tikzpicture})$,
    there exists a homomorphism to $G$ from \begin{tikzpicture}[baseline = -.5ex]
        \graph[grow right = 1cm, branch down = 6ex, nodes={mynode}]{
        {0/{}[draw, circle]}-!-{1/{}[draw, circle]}
        };
        \node[left = .5em of 0](l){};
        \node[right = .5em of 1](r){};
        \path (0) edge [draw = white, opacity = 0] node[pos= 0.5,inner sep = 1.5pt, font = \scriptsize, opacity = 1](a1){$a$}(1);
        \graph[use existing nodes, edges={color=black, pos = .5, earrow}, edge quotes={fill=white, inner sep=1pt,font= \scriptsize}]{
            0 -- a1 -> 1;
            l -> 0; 1 -> r;
        };
    \end{tikzpicture} or \begin{tikzpicture}[baseline = -.5ex]
        \graph[grow right = 1cm, branch down = 6ex, nodes={mynode}]{
        {0/{}[draw, circle]}-!-{1/{}[draw, circle]}
        };
        \node[left = .5em of 0](l){};
        \node[right = .5em of 1](r){};
        \path (0) edge [draw = white, opacity = 0] node[pos= 0.5, inner sep = 1.5pt,font = \scriptsize, opacity = 1](a1){$\comnf{a}$}(1);
        \graph[use existing nodes, edges={color=black, pos = .5, earrow}, edge quotes={fill=white, inner sep=1pt,font= \scriptsize}]{
            0 -- a1 -> 1;
            l -> 0; 1 -> r;
        };
    \end{tikzpicture}, as $G$ is edge-saturated:
    \begin{align*}
        \glang(a \cup \comnf{a}) & = \hspace{.2em} \left\{\begin{tikzpicture}[baseline = -.5ex, remember picture]
                                                              \graph[grow right = 1.cm, branch down = 6ex, nodes={mynode}]{
                                                              {X0/{}[draw, circle]}-!-{X1/{}[draw, circle]}
                                                              };
                                                              \node[left = .5em of X0](l){};
                                                              \node[right = .5em of X1](r){};
                                                              \path (X0) edge [draw = white, opacity = 0] node[pos= 0.5,inner sep = 1.5pt, font = \scriptsize, opacity = 1](a1){$\comnf{a}$}(X1);
                                                              \graph[use existing nodes, edges={color=black, pos = .5, earrow}, edge quotes={fill=white, inner sep=1pt,font= \scriptsize}]{
                                                                  X0 -- a1 -> X1;
                                                                  l -> X0; X1 -> r;
                                                              };
                                                          \end{tikzpicture}\right.
        \hspace{.15em} , \hspace{.15em}
        \left.\begin{tikzpicture}[baseline = -.5ex, remember picture]
                  \graph[grow right = 1.cm, branch down = 6ex, nodes={mynode}]{
                  {Y0/{}[draw, circle]}-!-{Y1/{}[draw, circle]}
                  };
                  \node[left = .5em of Y0](l){};
                  \node[right = .5em of Y1](r){};
                  \path (Y0) edge [draw = white, opacity = 0] node[pos= 0.5, inner sep = 1.5pt,font = \scriptsize, opacity = 1](a1){$a$}(Y1);
                  \graph[use existing nodes, edges={color=black, pos = .5, earrow}, edge quotes={fill=white, inner sep=1pt,font= \scriptsize}]{
                      Y0 -- a1 -> Y1;
                      l -> Y0; Y1 -> r;
                  };
              \end{tikzpicture}\right\}                                                                   \\
        \QuoFill(\glang(\top))   & = \hspace{.2em} \left\{\begin{tikzpicture}[baseline = -.5ex, remember picture]
                                                              \graph[grow right = 1.cm, branch down = 6ex, nodes={mynode}]{
                                                              {C0/{}[draw, circle]}
                                                              };
                                                              \node[left = .5em of C0](l){};
                                                              \node[right = .5em of C0](r){};
                                                              \node[below left = 2ex and .0em of C0, inner sep = 1.5pt, font = \scriptsize, opacity = 1](a00){$\comnf{a}$};
                                                              \node[below right = 2ex and .0em of C0, inner sep = 1.5pt, font = \scriptsize, opacity = 1](I00){$\const{I}$};
                                                              \graph[use existing nodes, edges={color=black, pos = .5, earrow}, edge quotes={fill=white, inner sep=1pt,font= \scriptsize}]{
                                                              C0 --[bend left = 10] a00 ->[bend left = 10] C0;
                                                              C0 --[bend left = 10] I00 ->[bend left = 10] C0;
                                                              l -> C0; C0 -> r;
                                                              };
                                                          \end{tikzpicture},
        \begin{tikzpicture}[baseline = -.5ex, remember picture]
            \graph[grow right = 1.cm, branch down = 6ex, nodes={mynode}]{
            {C20/{}[draw, circle]}
            };
            \node[left = .5em of C20](l){};
            \node[right = .5em of C20](r){};
            \node[below left = 2ex and .0em of C20, inner sep = 1.5pt, font = \scriptsize, opacity = 1](a00){$a$};
            \node[below right = 2ex and .0em of C20, inner sep = 1.5pt, font = \scriptsize, opacity = 1](I00){$\const{I}$};
            \graph[use existing nodes, edges={color=black, pos = .5, earrow}, edge quotes={fill=white, inner sep=1pt,font= \scriptsize}]{
            C20 --[bend left = 10] a00 ->[bend left = 10] C20;
            C20 --[bend left = 10] I00 ->[bend left = 10] C20;
            l -> C20; C20 -> r;
            };
        \end{tikzpicture}
        \dots,
        \begin{tikzpicture}[baseline = -.5ex, remember picture]
            \graph[grow right = 1.cm, branch down = 6ex, nodes={mynode}]{
            {E0/{}[draw, circle]}-!-{E1/{}[draw, circle]}
            };
            \node[left = .5em of E0](l){};
            \node[right = .5em of E1](r){};
            \node[below left = 2ex and .0em of E0, inner sep = .5pt, font = \scriptsize, opacity = 1](a00){$\comnf{a}$};
            \node[below left = 2ex and .0em of E1, inner sep = 1.5pt, font = \scriptsize, opacity = 1](a11){$a$};
            \path (E0) edge [draw = white, opacity = 0, bend left = 18] node[pos= 0.5, inner sep = .5pt, font = \scriptsize, opacity = 1](a01){$a$}(E1);
            \path (E1) edge [draw = white, opacity = 0, bend left = 18] node[pos= 0.5, inner sep = 1.5pt, font = \scriptsize, opacity = 1](a10){$\comnf{a}$}(E0);
            \node[below right = 2ex and .0em of E0, inner sep = 1.5pt, font = \scriptsize, opacity = 1](I00){$\const{I}$};
            \node[below right = 2ex and .0em of E1, inner sep = 1.5pt, font = \scriptsize, opacity = 1](I11){$\const{I}$};
            \path (E0) edge [draw = white, opacity = 0, bend left = 80] node[pos= 0.5, inner sep = 1.pt, font = \scriptsize, opacity = 1](I01){$\comnf{\const{I}}$}(E1);
            \path (E1) edge [draw = white, opacity = 0, bend left = 80] node[pos= 0.5, inner sep = 1.pt, font = \scriptsize, opacity = 1](I10){$\comnf{\const{I}}$}(E0);
            \graph[use existing nodes, edges={color=black, pos = .5, earrow}, edge quotes={fill=white, inner sep=1pt,font= \scriptsize}]{
            E0 --[bend left = 10] a00 ->[bend left = 10] E0;
            E0 --[bend left = 10] a01 ->[bend left = 10] E1;
            E1 --[bend left = 10] a10 ->[bend left = 10] E0;
            E1 --[bend left = 10] a11 ->[bend left = 10] E1;
            E0 --[bend left = 10] I00 ->[bend left = 10] E0;
            E0 --[bend left = 10] I01 ->[bend left = 10] E1;
            E1 --[bend left = 10] I10 ->[bend left = 10] E0;
            E1 --[bend left = 10] I11 ->[bend left = 10] E1;
            l -> E0; E1 -> r;
            };
        \end{tikzpicture}, \dots \right\}. %
        \begin{tikzpicture}[remember picture, overlay]
            \path (X0) edge [homoarrow,->] (C0);
            \path (X1) edge [homoarrow,->] (C0);
            \path (Y0) edge [homoarrow,->] (C20);
            \path (Y1) edge [homoarrow,->] (C20);
            \path (Y0) edge [homoarrow,->] (E0);
            \path (Y1) edge [homoarrow,->] (E1);
        \end{tikzpicture}
    \end{align*}
\end{example}
As above, by using $\QuoFill(\glang(\top))$ instead of $\glang(\top)$, we can avoid the problem in \Cref{example: proposition: graph characterization ECoR counter}.
Using $\QuoFill$, we can strengthen \Cref{proposition: graph characterization} (\hspace{1sp}{\cite[Thm.\ 3.9]{Brunet2017}}, {\cite[Thm.\ 16]{Pous2018}}), for $\ECoRTC$, as \Cref{theorem: graph characterization inequation}.
We first show the following:
\begin{lemma}\label{lemma: graph characterization ECoR QuoFill}
    For any graph $\graph$ over $\SIG_{\const{I}}^{(-)}$,
    $\jump{\graph}_{\model} = \jump{\QuoFill(\graph)}_{\model}$.
\end{lemma}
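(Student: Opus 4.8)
The plan is to prove the two inclusions separately. Each direction is witnessed by an explicit homomorphism, and the only step that needs genuine care is checking that a certain pulled-back graph is edge-saturated, so that it belongs to $\Fill(\graph)$.

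For $\jump{\QuoFill(\graph)}_{\model} \subseteq \jump{\graph}_{\model}$, fix a saturation $\graph' \in \Fill(\graph)$ and put $\graph'' = (\graph')^{\Quo} \in \QuoFill(\graph)$. Since $\graph'$ is an edge-extension of $\graph$ (same vertices, same source and target, more edges), the identity map on vertices is a homomorphism $\graph \homo \graph'$; and the canonical surjection onto the quotient is a homomorphism $\graph' \homo \graph''$; both claims are immediate from the definitions of edge-extension and of $\Quo$. Composing these with an arbitrary homomorphism $\graph'' \homo {\model}[x, y]$ gives a homomorphism $\graph \homo {\model}[x, y]$, so $\jump{\graph''}_{\model} \subseteq \jump{\graph}_{\model}$; taking the union over all $\graph'' \in \QuoFill(\graph)$ yields the inclusion.

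For the converse, suppose $\tuple{x, y} \in \jump{\graph}_{\model}$, witnessed by a homomorphism $h \colon \graph \homo {\model}[x, y]$, and pull the structure $\model$ back along $h$: let $\graph'$ have vertex set $|\graph|$, the same source and target as $\graph$, and $\tuple{u, v} \in a^{\graph'} \iff \tuple{h(u), h(v)} \in a^{\model}$ for every $a \in \SIG_{\const{I}}^{(-)}$. Because $h$ is a homomorphism, $\graph'$ is an edge-extension of $\graph$. It is moreover edge-saturated: $\const{I}^{\model}$ being the identity makes $\const{I}^{\graph'}$ exactly the kernel of $h$, hence an equivalence relation, which gives the I-equivalence condition and also $\eqcl{(\const{I}^{\graph'})} = \const{I}^{\graph'}$; and $\comnf{a}^{\model} = |\model|^2 \setminus a^{\model}$ in a structure, so the $h$-preimages of $a^{\model}$ and $\comnf{a}^{\model}$ partition $|\graph'|^2$, which gives $a$-saturatedness and, together with the previous point, $a$-consistency. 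Hence $\graph' \in \Fill(\graph)$. Finally $h$ factors through the quotient: $\gquo{u}{\graph'} \mapsto h(u)$ is well defined (if $\gquo{u}{\graph'} = \gquo{v}{\graph'}$ then $h(u) = h(v)$) and is a homomorphism $(\graph')^{\Quo} \homo {\model}[x, y]$, so $\tuple{x, y} \in \jump{(\graph')^{\Quo}}_{\model} \subseteq \jump{\QuoFill(\graph)}_{\model}$.

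The step I expect to be the main obstacle — still modest, but the heart of the matter — is precisely this verification that the pulled-back graph $\graph'$ is edge-saturated; everything else is routine bookkeeping with compositions and factorizations of homomorphisms. One degenerate case deserves a remark: if $\graph$ is inconsistent then $\Fill(\graph) = \emptyset$ (inconsistency is preserved under edge-extensions, since edge-extending only enlarges the relations appearing in the consistency clause), so $\QuoFill(\graph) = \emptyset$ and both sides are empty; this is consistent with the above, since no homomorphism $\graph \homo {\model}[x, y]$ can exist when $\graph$ is inconsistent — its pull-back along $h$ would be a consistent edge-extension of $\graph$, which is impossible.
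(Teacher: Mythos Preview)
Your proof is correct and follows essentially the same approach as the paper: the pull-back construction $a^{\graph'} = h^{-1}(a^{\model})$ is exactly the saturation the paper builds, and your verification that it is edge-saturated mirrors the paper's. The only organizational difference is that the paper factors the argument through two separate equalities, $\jump{\graph}_{\model} = \jump{\Fill(\graph)}_{\model}$ and $\jump{\graph}_{\model} = \jump{\graph^{\Quo}}_{\model}$ (the latter via a section of the quotient map rather than the universal property), whereas you compose them in one pass; this is a matter of presentation, not substance.
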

\begin{proof}
    We prove that for any $\graph$, both $\jump{\graph}_{\model} = \jump{\Fill(\graph)}_{\model}$ and $\jump{\graph}_{\model} = \jump{\graph^{\Quo}}_{\model}$, respectively.

    For $\jump{\graph}_{\model} = \jump{\Fill(\graph)}_{\model}$:
    ($\supseteq$):
    If $\graph[2] \in \Fill(\graph)$ and $h \colon \graph[2] \homo {\model}[x_0, y_0]$,
    then $h \colon \graph \homo {\model}[x_0, y_0]$ because $\graph[2]$ is a saturation of $\graph$.
    ($\subseteq$):
    If $h \colon \graph \homo {\model}[x_0, y_0]$,
    let $\graph[2]$ be the saturation of $\graph[1]$ s.t.\ $a^{H} = \set{\tuple{x,y} \mid \tuple{h(x), h(y)} \in a^{\model}}$ for $a \in \SIG_{\const{I}}^{(-)}$.
    Then $h \colon \graph[2] \homo {\model}[x_0,y_0]$.
    Here, $\graph[2]$ is an edge-extension of $\graph[1]$ because $\tuple{x, y} \in a^{\graph[1]} \Longrightarrow \tuple{h(x), h(y)} \in a^{\model} \Longrightarrow \tuple{x, y} \in a^{\graph[2]}$;
    $\graph[2]$ is edge-saturated because $\model$ is edge-saturated; thus, $\graph[2]$ is indeed a saturation of $\graph[1]$.
    For example, from the homomorphism for $\graph[1] \homo {\model}[x_0, y_0]$, by filling non-existing edges in $\graph[1]$ so that they map to edges of ${\model}$,
    we can construct the saturation $\graph[2]$ of $\graph[1]$ s.t.\ $\graph[2] \homo {\model}[x_0, y_0]$, as follows:
    \begin{align*}
         & \graph[1] = \hspace{-.5em} \begin{tikzpicture}[remember picture, baseline = -1.ex]
                                          \graph[grow right = 1cm, branch down = 1.5ex, nodes={mynode}]{
                                          {1X0/{}[draw, circle]}-!-{/,1X2/{}[draw, circle]}-!-{1X1/{}[draw, circle]}
                                          };
                                          \node[left = .5em of 1X0](1Xl){};
                                          \node[right = .5em of 1X1](1Xr){};
                                          \graph[use existing nodes, edges={color=black, pos = .5, earrow}, edge quotes={fill=white, inner sep=1.pt,font= \scriptsize}]{
                                          1X0 ->["$a$"] 1X1;
                                          1X1 ->["$a$"] 1X2;
                                          1X0 ->["$\const{I}$", bend right = 35] 1X2;
                                          1Xl -> 1X0; 1X1 -> 1Xr;
                                          };
                                      \end{tikzpicture} \hspace{4em} \graph[2] =\hspace{-.5em} \begin{tikzpicture}[remember picture, baseline = -1.ex]
                                                                                                   \graph[grow right = 1cm, branch down = 1.5ex, nodes={mynode}]{
                                                                                                   {2X0/{}[draw, circle]}-!-{/,2X2/{}[draw, circle]}-!-{2X1/{}[draw, circle]}
                                                                                                   };
                                                                                                   \node[left = .5em of 2X0](2Xl){};
                                                                                                   \node[right = .5em of 2X1](2Xr){};
                                                                                                   \node[below right = 2ex and .0em of 2X0, inner sep = 1.5pt, font = \scriptsize, opacity = 1](2XI00){$\const{I}$};
                                                                                                   \node[below left = 2ex and .0em of 2X0, inner sep = .5pt, font = \scriptsize, opacity = 1](2Xa00){$a$};
                                                                                                   \node[below right = 2ex and .0em of 2X1, inner sep = 1.5pt, font = \scriptsize, opacity = 1](2XI11){$\const{I}$};
                                                                                                   \node[below left = 2ex and .0em of 2X1, inner sep = .5pt, font = \scriptsize, opacity = 1](2Xa11){$a$};
                                                                                                   \node[below right = 2ex and .0em of 2X2, inner sep = 1.5pt, font = \scriptsize, opacity = 1](2XI22){$\const{I}$};
                                                                                                   \node[below left = 2ex and .0em of 2X2, inner sep = .5pt, font = \scriptsize, opacity = 1](2Xa22){$a$};
                                                                                                   \graph[use existing nodes, edges={color=black, pos = .5, earrow}, edge quotes={fill=white, inner sep=1.pt,font= \scriptsize}]{
                                                                                                   2X0 <->["$a$"] 2X1;
                                                                                                   2X1 <->["$a$"] 2X2;
                                                                                                   2X0 <->["$a$"] 2X2;
                                                                                                   2X0 <->["$\const{I}$", bend left = 20] 2X1;
                                                                                                   2X2 <->["$\const{I}$", bend right = 35] 2X1;
                                                                                                   2X0 <->["$\const{I}$", bend right = 35] 2X2;
                                                                                                   2X0 --[bend left = 10] 2Xa00 ->[bend left = 10] 2X0;
                                                                                                   2X0 --[bend left = 10] 2XI00 ->[bend left = 10] 2X0;
                                                                                                   2X1 --[bend left = 10] 2Xa11 ->[bend left = 10] 2X1;
                                                                                                   2X1 --[bend left = 10] 2XI11 ->[bend left = 10] 2X1;
                                                                                                   2X2 --[bend left = 10] 2Xa22 ->[bend left = 10] 2X2;
                                                                                                   2X2 --[bend left = 10] 2XI22 ->[bend left = 10] 2X2;
                                                                                                   2Xl -> 2X0; 2X1 -> 2Xr;
                                                                                                   };
                                                                                               \end{tikzpicture}     \\[-2ex]
         & \hspace{6em} {\model}[x_0, y_0] = \hspace{-.5em}\begin{tikzpicture}[remember picture, baseline = -.5ex]
                                                               \graph[grow right = 1.2cm, branch down = 6ex, nodes={mynode, inner sep = 1.5pt}]{
                                                               {4X0/{}[draw, circle]}-!-{4X1/{}[draw, circle]}
                                                               };
                                                               \node[above left = .1ex and .5em of 4X0](4Xl){};
                                                               \node[below left = .1ex and .5em of 4X0](4Xr){};
                                                               \node[below left = 2ex and .0em of 4X0, inner sep = .5pt, font = \scriptsize, opacity = 1](4Xa00){$a$};
                                                               \node[below left = 2ex and .0em of 4X1, inner sep = 1.5pt, font = \scriptsize, opacity = 1](4Xa11){$\comnf{a}$};
                                                               \path (4X0) edge [draw = white, opacity = 0, bend left = 18] node[pos= 0.5, inner sep = .5pt, font = \scriptsize, opacity = 1](4Xa01){$a$}(4X1);
                                                               \path (4X1) edge [draw = white, opacity = 0, bend left = 18] node[pos= 0.5, inner sep = 1.5pt, font = \scriptsize, opacity = 1](4Xa10){$\comnf{a}$}(4X0);
                                                               \node[below right = 2ex and .0em of 4X0, inner sep = 1.5pt, font = \scriptsize, opacity = 1](4XI00){$\const{I}$};
                                                               \node[below right = 2ex and .0em of 4X1, inner sep = 1.5pt, font = \scriptsize, opacity = 1](4XI11){$\const{I}$};
                                                               \path (4X0) edge [draw = white, opacity = 0, bend left = 80] node[pos= 0.5, inner sep = 1.pt, font = \scriptsize, opacity = 1](4XI01){$\comnf{\const{I}}$}(4X1);
                                                               \path (4X1) edge [draw = white, opacity = 0, bend left = 80] node[pos= 0.5, inner sep = 1.pt, font = \scriptsize, opacity = 1](4XI10){$\comnf{\const{I}}$}(4X0);
                                                               \graph[use existing nodes, edges={color=black, pos = .5, earrow}, edge quotes={fill=white, inner sep=1pt,font= \scriptsize}]{
                                                               4X0 --[bend left = 10] 4Xa00 ->[bend left = 10] 4X0;
                                                               4X0 --[bend left = 10] 4Xa01 ->[bend left = 10] 4X1;
                                                               4X1 --[bend left = 10] 4Xa10 ->[bend left = 10] 4X0;
                                                               4X1 --[bend left = 10] 4Xa11 ->[bend left = 10] 4X1;
                                                               4X0 --[bend left = 10] 4XI00 ->[bend left = 10] 4X0;
                                                               4X0 --[bend left = 10] 4XI01 ->[bend left = 10] 4X1;
                                                               4X1 --[bend left = 10] 4XI10 ->[bend left = 10] 4X0;
                                                               4X1 --[bend left = 10] 4XI11 ->[bend left = 10] 4X1;
                                                               4Xl -> 4X0; 4X0 -> 4Xr;
                                                               };
                                                           \end{tikzpicture}.
        \begin{tikzpicture}[remember picture, overlay]
            \path (1X0) edge [homoarrow,->] (4X0);
            \path (1X1) edge [homoarrow,->] (4X0);
            \path (1X2) edge [homoarrow,->] (4X0);
            \path (2X0) edge [homoarrow,->] (4X0);
            \path (2X1) edge [homoarrow,->] (4X0);
            \path (2X2) edge [homoarrow,->] (4X0);
        \end{tikzpicture}
    \end{align*}

    For $\jump{\graph}_{\model} = \jump{\graph^{\Quo}}_{\model}$:
    ($\supseteq$):
    If $\graph[1]^{\Quo} \homo {\model}[x_0, y_0]$,
    then $\graph[1] \homo {\model}[x_0,y_0]$ because $\graph[1] \homo \graph[1]^{\Quo}$ by the quotient map (with transitivity of $\homo$).
    ($\subseteq$):
    If $h \colon \graph \homo {\model}[x_0, y_0]$,
    then $h \circ g \colon \graph[1]^{\Quo} \homo {\model}[x_0, y_0]$, where $g$ is a section of the quotient map from $\graph[1]$ to $\graph[1]^{\Quo}$ (a map $g$ s.t.\ $g(X) \in X$ for $X \in |\graph[1]^{\Quo}|$).
    Hence, $\jump{\graph}_{\model} = \jump{\Fill(\graph)}_{\model} = \jump{\QuoFill(\graph)}_{\model}$.
\end{proof}

\begin{theorem}[cf.\ {\Cref{proposition: graph characterization}}]\label{theorem: graph characterization inequation}
    For every $\ECoRTC$ terms $\term[1], \term[2]$,
    \begin{equation*}
        {} \models_{\REL} \term[1] \le \term[2] \iff \forall G \in \QuoFill(\glang(\term[1])). \exists H \in \glang(\term[2]).\ H \homo G.
    \end{equation*}
\end{theorem}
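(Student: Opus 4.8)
The plan is to obtain the equivalence as a short consequence of three facts already established: \Cref{proposition: semantics for graphs ECoR} ($\jump{\term}_{\model} = \jump{\glang(\term)}_{\model}$ for every structure $\model$ and $\ECoRTC$ term $\term$), \Cref{proposition: structure} (the quotient of an edge-saturated graph is a $2$-pointed structure), and \Cref{lemma: graph characterization ECoR QuoFill} (passing from a graph $\graph$ to $\QuoFill(\graph)$ preserves the relational interpretation, i.e.\ $\jump{\graph}_{\model} = \jump{\QuoFill(\graph)}_{\model}$). Throughout I will use that $\models_{\REL} \term[1] \le \term[2]$ unfolds to ``$\jump{\term[1]}_{\model} \subseteq \jump{\term[2]}_{\model}$ for every structure $\model$'', and that $\tuple{x,y} \in \jump{\graph}_{\model}$ means exactly $\graph \homo {\model}[x,y]$ (\Cref{definition: semantics for graphs}).

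For direction ($\Rightarrow$), I would fix $G \in \QuoFill(\glang(\term[1]))$ and unfold the definition $\QuoFill = \Quo \circ \Fill$: there are $\graph' \in \glang(\term[1])$ and a saturation $\graph''$ of $\graph'$ with $G = (\graph'')^{\Quo}$. Since $\graph''$ is edge-saturated, \Cref{proposition: structure} gives that $G$ is a $2$-pointed structure; write it as ${\model}[x_0,y_0]$ with $\model$ its underlying $0$-pointed structure and $x_0 = \lv^{G}$, $y_0 = \rv^{G}$. Composing the identity map $\graph' \homo \graph''$ (a homomorphism because $\graph''$ is an edge-extension of $\graph'$) with the quotient map $\graph'' \homo (\graph'')^{\Quo}$ gives $\graph' \homo G$, hence $\tuple{x_0,y_0} \in \jump{\graph'}_{\model} \subseteq \jump{\glang(\term[1])}_{\model} = \jump{\term[1]}_{\model}$ by \Cref{proposition: semantics for graphs ECoR}. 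Then $\models_{\REL} \term[1] \le \term[2]$ yields $\tuple{x_0,y_0} \in \jump{\term[2]}_{\model} = \jump{\glang(\term[2])}_{\model}$, i.e.\ $H \homo {\model}[x_0,y_0] = G$ for some $H \in \glang(\term[2])$, which is exactly what is wanted.

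For direction ($\Leftarrow$), I would fix an arbitrary structure $\model$ and $\tuple{x,y} \in \jump{\term[1]}_{\model}$. By \Cref{proposition: semantics for graphs ECoR} this gives $\graph' \homo {\model}[x,y]$ for some $\graph' \in \glang(\term[1])$, and then \Cref{lemma: graph characterization ECoR QuoFill} rewrites $\jump{\graph'}_{\model}$ as $\jump{\QuoFill(\graph')}_{\model}$, so $G \homo {\model}[x,y]$ for some $G \in \QuoFill(\graph') \subseteq \QuoFill(\glang(\term[1]))$. The hypothesis supplies $H \in \glang(\term[2])$ with $H \homo G$; composing homomorphisms gives $H \homo {\model}[x,y]$, whence $\tuple{x,y} \in \jump{\glang(\term[2])}_{\model} = \jump{\term[2]}_{\model}$ by \Cref{proposition: semantics for graphs ECoR} again. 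Since $\model$, $x$, $y$ were arbitrary, $\models_{\REL} \term[1] \le \term[2]$ follows.

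I do not expect a genuine obstacle: the real content is already carried by \Cref{lemma: graph characterization ECoR QuoFill} (which introduces the edge-saturations and shows they do not change the relational semantics) together with \Cref{proposition: structure}, and the theorem is a diagram chase on top of them. The only thing that needs care is the bookkeeping --- in particular, recognizing that a member $G$ of $\QuoFill(\glang(\term[1]))$ \emph{is} literally a $2$-pointed structure into which some generating graph of $\glang(\term[1])$ maps homomorphically, and keeping track of which structure each interpretation $\jump{\cdot}_{\model}$ is evaluated over: the ambient arbitrary one in direction ($\Leftarrow$), and the structure carried by $G$ in direction ($\Rightarrow$).
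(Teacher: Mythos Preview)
Your proposal is correct and takes essentially the same approach as the paper: both derive the equivalence from \Cref{proposition: semantics for graphs ECoR}, \Cref{lemma: graph characterization ECoR QuoFill}, and \Cref{proposition: structure}, with the key step in the forward direction being to instantiate the ambient structure as the $2$-pointed structure $G$ itself. The only cosmetic difference is that the paper packages this as a single chain of biconditionals (with the last step justified separately), whereas you unfold the two directions individually.
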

\begin{proof}
    By the following formula transformation:
    \begin{align*}
                           & {} \models_{\REL} \term[1] \le \term[2]  \iff \forall \model.\  \jump{\term[1]}_{\model} \subseteq \jump{\term[2]}_{\model} \tag{Def.\ of $\models_{\REL}$}                                                 \\
                           & \iff \forall \model.\  \jump{\QuoFill(\glang(\term[1]))}_{\model} \subseteq \jump{\glang(\term[2])}_{\model} \tag{\Cref{proposition: semantics for graphs ECoR,lemma: graph characterization ECoR QuoFill}} \\
                           & \iff \forall \model. \forall G \in \QuoFill(\glang(\term[1])).\ \jump{G}_{\model} \subseteq \jump{\glang(\term[2])}_{\model}  \tag{Def.\ of $\jump{}$}                                                      \\
                           & \iff \forall G \in \QuoFill(\glang(\term[1])).\forall \model. \  \jump{G}_{\model} \subseteq \jump{\glang(\term[2])}_{\model}                                                                               \\
                           & \iff \forall G \in \QuoFill(\glang(\term[1])).\forall \ppmodel.                                                                                                                                             \\
                           & \quad (G \homo \ppmodel) \mbox{ implies } (\exists H \in \glang(\term[2]).\  H \homo \ppmodel) \tag{Def.\ of $\jump{}$}                                                                                     \\
        \label{tag: heart} & \iff \forall G \in \QuoFill(\glang(\term[1])). \exists H \in \glang(\term[2]).\ H \homo G. \tag{$\heartsuit$}
    \end{align*}
    Here, for (\ref{tag: heart}),
    $\Longleftarrow$:
    Let $H \in \glang(\term[2])$ be such that $H \homo G$.
    Then for any $2$-pointed structure $\ppmodel$ s.t.\ $G \homo \ppmodel$,
    we have $H \homo \ppmodel$ by transitivity of $\homo$.
    $\Longrightarrow$:
    By letting $\ppmodel = G$.
    Note that $G$ is a $2$-pointed structure since $G \in \QuoFill(\glang(\term[1]))$ (\Cref{proposition: structure}).
\end{proof}

\begin{remark}
    We should use $\QuoFill$, rather than $\Fill$.
    Consider $\models_{\REL} \top \le \const{I} \cup \comnf{\const{I}}$.
    Then there does not exist any homomorphism
    from any graphs in $\glang(\const{I} \cup \comnf{\const{I}}) =  \set{\begin{tikzpicture}[baseline = -.5ex]
                \graph[grow right = 1cm, branch down = 6ex, nodes={mynode}]{
                {0/{}[draw, circle]}
                };
                \node[left = .5em of 0](l){};
                \node[right = .5em of 0](r){};
                \graph[use existing nodes, edges={color=black, pos = .5, earrow}, edge quotes={fill=white, inner sep=1pt,font= \scriptsize}]{
                    l -> 0; 0 -> r;
                };
            \end{tikzpicture}, \begin{tikzpicture}[baseline = -.5ex]
                \graph[grow right = 1cm, branch down = 6ex, nodes={mynode}]{
                {0/{}[draw, circle]}-!-{1/{}[draw, circle]}
                };
                \node[left = .5em of 0](l){};
                \node[right = .5em of 1](r){};
                \path (0) edge [draw = white, opacity = 0] node[pos= 0.5, inner sep = 1.5pt,font = \scriptsize, opacity = 1](a1){$\comnf{\const{I}}$}(1);
                \graph[use existing nodes, edges={color=black, pos = .5, earrow}, edge quotes={fill=white, inner sep=1pt,font= \scriptsize}]{
                    0 -- a1 -> 1;
                    l -> 0; 1 -> r;
                };
            \end{tikzpicture}}$ to the graph $\begin{tikzpicture}[baseline = -.5ex]
            \graph[grow right = 1cm, branch down = 6ex, nodes={mynode}]{
            {0/{}[draw, circle]}-!-{1/{}[draw, circle]}
            };
            \node[left = .5em of 0](l){};
            \node[right = .5em of 1](r){};
            \path (0) edge [draw = white, opacity = 0, bend left] node[pos= 0.5,inner sep = 1.5pt, font = \scriptsize, opacity = 1](a01){$\const{I}$}(1);
            \path (1) edge [draw = white, opacity = 0, bend left] node[pos= 0.5,inner sep = 1.5pt, font = \scriptsize, opacity = 1](a10){$\const{I}$}(0);
            \node[above = 2ex of 0,inner sep = 1.5pt, font = \scriptsize,](a00){$\const{I}$};
            \node[above = 2ex of 1,inner sep = 1.5pt, font = \scriptsize,](a11){$\const{I}$};
            \graph[use existing nodes, edges={color=black, pos = .5, earrow}, edge quotes={fill=white, inner sep=1pt,font= \scriptsize}]{
            0 --[bend left = 15] a00 ->[bend left = 15] 0;
            0 -- a01 -> 1;
            1 -- a10 -> 0;
            1 --[bend left = 15] a11 ->[bend left = 15] 1;
            l -> 0; 1 -> r;
            };
        \end{tikzpicture} \in \Fill(\begin{tikzpicture}[baseline = -.5ex]
                \graph[grow right = .7cm, branch down = 6ex, nodes={mynode}]{
                {0/{}[draw, circle]}-!-{1/{}[draw, circle]}
                };
                \node[left = .5em of 0](l){};
                \node[right = .5em of 1](r){};
                \graph[use existing nodes, edges={color=black, pos = .5, earrow}, edge quotes={fill=white, inner sep=1pt,font= \scriptsize}]{
                    l -> 0; 1 -> r;
                };
            \end{tikzpicture})$.
\end{remark}

\subsection{Bounded model property}
\Cref{theorem: graph characterization inequation} gives an upper bound for the equational theories of existential calculi of relations.
Note that the model checking problem of $\ECoRTC$ is decidable in polynomial time.
\begin{proposition}\label{proposition: model checking}
    The following problem is decidable in $\mathcal{O}(\len{t} \times \card(|\ppstruc|)^{\omega})$ time:
    given a finite $2$-pointed structure $\ppstruc$ and an $\ECoRTC$ term $\term[1]$, does $\ppstruc \models \term[1]$ hold?
    Here, $\omega$ is the matrix multiplication exponent.
\end{proposition}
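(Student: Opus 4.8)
The plan is a straightforward bottom-up evaluation. Observe that the relational semantics of \Cref{section: ECoR def} is defined by structural recursion on $\term[1]$, so it suffices to compute, for each of the $\mathcal{O}(\len{\term[1]})$ subterms $\term[2]$ of $\term[1]$ in increasing order of size, the binary relation $\jump{\term[2]}_{\ppstruc} \subseteq |\ppstruc|^{2}$, represented as a Boolean $\card(|\ppstruc|) \times \card(|\ppstruc|)$ matrix; the answer to the query is then the entry of $\jump{\term[1]}_{\ppstruc}$ indexed by $\tuple{\lv^{\ppstruc}, \rv^{\ppstruc}}$. Correctness is immediate by induction on $\term[2]$ from \Cref{notation: models}, so only the per-step running time needs to be checked.

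For the atomic subterms $a$, $a^{-}$, $a^{\smile}$, $(a^{-})^{\smile}$, $\const{I}$, $\const{I}^{-}$, $\bot$, $\top$ (for $a \in \SIG$), the matrix is read directly off $\ppstruc$ or is the transpose of such a matrix --- using that $\ppstruc$ is a $2$-pointed structure, so e.g.\ $(a^{-})^{\ppstruc}$ is part of the data and $\jump{a^{\smile}}_{\ppstruc} = (a^{\ppstruc})^{\smile}$ --- costing $\mathcal{O}(\card(|\ppstruc|)^{2})$ each. For $\term[2] \cup \term[3]$ and $\term[2] \cap \term[3]$ we take the entrywise disjunction resp.\ conjunction of the two previously computed matrices, again $\mathcal{O}(\card(|\ppstruc|)^{2})$. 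For $\term[2] \cdot \term[3]$, by definition $\jump{\term[2] \cdot \term[3]}_{\ppstruc}$ is exactly the Boolean matrix product of the matrices for $\term[2]$ and $\term[3]$, computable in $\mathcal{O}(\card(|\ppstruc|)^{\omega})$.

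The only step requiring care is $\term[2]^{*}$: here $\jump{\term[2]^{*}}_{\ppstruc}$ is the reflexive--transitive closure of the already computed relation $R \defeq \jump{\term[2]}_{\ppstruc}$, that is, $\const{I}^{\ppstruc}$ together with the transitive closure of $R$. I would invoke the classical fact that the transitive closure of an $n \times n$ Boolean matrix can be computed within the same asymptotic bound as an $n \times n$ Boolean matrix multiplication, which yields $\mathcal{O}(\card(|\ppstruc|)^{\omega})$ for this case as well; a self-contained $\mathcal{O}(\card(|\ppstruc|)^{\omega} \log \card(|\ppstruc|))$ bound, obtained by repeated squaring of $\const{I}^{\ppstruc} \cup R$, already suffices for every complexity application in this paper. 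Summing the $\mathcal{O}(\card(|\ppstruc|)^{\omega})$ per-step cost over the $\mathcal{O}(\len{\term[1]})$ subterms gives the claimed $\mathcal{O}(\len{\term[1]} \times \card(|\ppstruc|)^{\omega})$ total. The main (and essentially only) subtlety is precisely this $*$ case: keeping it at exponent $\omega$ rather than incurring a logarithmic overhead is exactly what the cited transitive-closure reduction buys us.
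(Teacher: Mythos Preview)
Your proposal is correct and matches the paper's own proof essentially line for line: both do a bottom-up dynamic programming over subterms, represent each $\jump{\term[2]}_{\ppstruc}$ as a Boolean matrix, and handle the $\bl^{*}$ case by invoking the classical reduction of Boolean transitive closure to Boolean matrix multiplication (the paper cites Fischer--Meyer for this). There is nothing to add.
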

\begin{proof}[Proof Sketch]
    Let $\ppmodel = {\model}[x, y]$.
    For $\term[2] \in \mathrm{Sub}(\term)$,
    let $f_{\term[2]} \colon |\model| \times |\model| \to \set{\const{true}, \const{false}}$ be such that
    $f_{\term[2]}(x, y) = \const{true}$ iff $\tuple{x, y} \in \jump{\term[2]}_{\model}$.
    Here, $\mathrm{Sub}(\term[1])$ denotes the set of all sub-terms of $\term[1]$.
    Then the function tables of $\set{f_{\term[2]}}_{\term[2] \in \mathrm{Sub}(\term[1])}$ can be calculated by a simple dynamic programming on $\term[2]$,
    where for the case $\term[2] = \term[3]^*$,
    we use the algorithm for the transitive closure of boolean matrix \cite{fischerBooleanMatrixMultiplication1971}.
\end{proof}
\begin{lemma}[bounded model property]\label{lemma: bounded model property for ECoRTC}
    For every $\ECoRTC$ terms $\term[1]$, $\term[2]$,
    we have:
    ${} \not\models_{\REL} \term[1] \le \term[2]$ $\iff$
    there exists a $2$-pointed structure $\ppmodel \in \QuoFill(\glang(\term[1]))$ such that $\ppmodel \not\models \term[1] \le \term[2]$.
\end{lemma}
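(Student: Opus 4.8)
The plan is to obtain this as a near-immediate corollary of \Cref{theorem: graph characterization inequation}, read contrapositively, together with \Cref{proposition: semantics for graphs ECoR} to pass between a term and its graph language and \Cref{proposition: structure} to see that the witnesses produced are genuine $2$-pointed structures. The ($\Leftarrow$) direction uses nothing about $\QuoFill$: if $\ppmodel = {\model}[x,y]$ lies in $\QuoFill(\glang(\term[1]))$ and $\ppmodel \not\models \term[1] \le \term[2]$, then $\tuple{x,y} \in \jump{\term[1]}_{\model}$ and $\tuple{x,y} \notin \jump{\term[2]}_{\model}$, so $\jump{\term[1]}_{\model} \not\subseteq \jump{\term[2]}_{\model}$ and hence ${} \not\models_{\REL} \term[1] \le \term[2]$.

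For the ($\Rightarrow$) direction, assume ${} \not\models_{\REL} \term[1] \le \term[2]$. By \Cref{theorem: graph characterization inequation} there is a graph $G \in \QuoFill(\glang(\term[1]))$ such that $H \not\homo G$ for all $H \in \glang(\term[2])$. I would verify three things about $G$. (i) $G$ is a $2$-pointed structure: by definition $G = G_1^{\Quo}$ for some edge-saturated $G_1$, so \Cref{proposition: structure} applies. (ii) $G \not\models \term[2]$: by \Cref{proposition: semantics for graphs ECoR}, $G \models \term[2]$ iff $G \models \glang(\term[2])$, i.e.\ iff $H \homo G$ for some $H \in \glang(\term[2])$; no such $H$ exists, so $G \not\models \term[2]$. (iii) $G \models \term[1]$: fixing $G_0 \in \glang(\term[1])$ with $G_1$ a saturation of $G_0$, the identity inclusion $G_0 \homo G_1$ (valid as $G_1$ is an edge-extension of $G_0$) composed with the quotient map $G_1 \homo G_1^{\Quo} = G$ gives $G_0 \homo G$, so $G \models \glang(\term[1])$ and hence $G \models \term[1]$ by \Cref{proposition: semantics for graphs ECoR}. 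Combining (ii) and (iii), $G \models \term[1]$ but $G \not\models \term[2]$, i.e.\ $G \not\models \term[1] \le \term[2]$, and $G \in \QuoFill(\glang(\term[1]))$, as required.

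I do not anticipate any real obstacle here, since essentially all the work is already carried by \Cref{theorem: graph characterization inequation}. The one point needing a little care is step (iii): that membership of $G$ in $\QuoFill(\glang(\term[1]))$ by itself forces $G \models \term[1]$. The subtlety is that $\QuoFill = \Quo \circ \Fill$ first saturates and then quotients, and a saturation $G_1$ of a graph $G_0 \in \glang(\term[1])$ need not itself belong to $\glang(\term[1])$; what rescues the argument is that $G_1$, and therefore its quotient $G$, still receives a homomorphism from the original $G_0 \in \glang(\term[1])$, so the required membership is witnessed along the two-step chain $G_0 \homo G_1 \homo G$.
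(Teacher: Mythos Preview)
Your proposal is correct and follows essentially the same approach as the paper: both directions rely on \Cref{theorem: graph characterization inequation}, with \Cref{proposition: structure} ensuring the witness is a $2$-pointed structure and \Cref{proposition: semantics for graphs ECoR} translating between terms and graph languages. The only cosmetic difference is in step (iii): the paper obtains $G \models \term[1]$ by invoking reflexivity of $\homo$ together with \Cref{lemma: graph characterization ECoR QuoFill} (which packages the equality $\jump{\glang(\term[1])}_{\model} = \jump{\QuoFill(\glang(\term[1]))}_{\model}$), whereas you unpack that lemma and exhibit the homomorphism chain $G_0 \homo G_1 \homo G_1^{\Quo}$ explicitly---same content, slightly different bookkeeping.
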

\begin{proof}
    $\Longleftarrow$: Trivial.
    $\Longrightarrow$:
    Let $\ppmodel = {\model}[x, y] \in \QuoFill(\glang(\term[1]))$ be a $2$-pointed structure such that $\forall H \in \glang(\term[2])$, $H \not\homo \ppmodel$ (\Cref{theorem: graph characterization inequation}).
    Then $\tuple{x, y} \in \jump{\QuoFill(\glang(\term[1]))}_{\model}$ because $\QuoFill(\glang(\term[1])) \ni \ppmodel \homo {\model}[x, y]$ ($\homo$ is reflexive).
    $\tuple{x,y} \not\in \jump{\glang(\term[2])}_{\model}$ because $\forall H \in \glang(\term[2]).\ H \not\homo \ppmodel$ (Def.\ of $\jump{}_{\model}$).
    Thus $\tuple{x, y} \in \jump{\term[1]}_{\model}$ and $\tuple{x, y} \not\in \jump{\term[2]}_{\model}$ (\Cref{proposition: semantics for graphs ECoR,lemma: graph characterization ECoR QuoFill}).
    Hence $\ppstruc \not\models \term[1] \le \term[2]$.
\end{proof}
\begin{lemma}\label{lemma: PCoRTC upper bound}
    The equational theory of $\ECoRTC$ is in $\Pi_{1}^{0}$.
\end{lemma}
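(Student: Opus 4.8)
The plan is to exhibit a semi-decision procedure for \emph{non}-validity, i.e.\ to show that the complement of the equational theory is recursively enumerable. Since $\models_{\REL} \term[1] = \term[2]$ iff both $\models_{\REL} \term[1] \le \term[2]$ and $\models_{\REL} \term[2] \le \term[1]$ (and $\le$ abbreviates an equation), and $\Pi_{1}^{0}$ is closed under finite conjunctions, it suffices to show that the set of $\ECoRTC$ inequations $\term[1] \le \term[2]$ with $\not\models_{\REL} \term[1] \le \term[2]$ is $\Sigma_{1}^{0}$.

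The starting point is the bounded model property (\Cref{lemma: bounded model property for ECoRTC}): $\not\models_{\REL} \term[1] \le \term[2]$ holds exactly when there is a $2$-pointed structure $\ppmodel \in \QuoFill(\glang(\term[1]))$ with $\ppmodel \not\models \term[1] \le \term[2]$. Two observations turn this into a genuine search. First, every such $\ppmodel$ is \emph{finite}: it is the $\Quo$-quotient of an edge-saturation of a single graph $G \in \glang(\term[1])$, and by the inductive clauses of \Cref{definition: graph language ECoR} every graph in $\glang(\term[1])$ is finite (each $\bl^{*}$ contributes $\bigcup_{n}\glang(\term^{n})$, an infinite family of \emph{individually finite} graphs). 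Second, by \Cref{notation: models} the condition $\ppmodel \not\models \term[1] \le \term[2]$ unwinds to ``$\ppmodel \models \term[1]$ and $\ppmodel \not\models \term[2]$'', which, for a finite $\ppmodel$, is decidable by the polynomial-time model checker of \Cref{proposition: model checking}.

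It remains to check that $\QuoFill(\glang(\term[1]))$ is effectively enumerable. One enumerates the graphs of $\glang(\term[1])$ by a standard dovetailing that unfolds the Kleene stars in $\term[1]$ to increasing depths -- each such finite unfolding being an $\ECoRTC$ term without $\bl^{*}$, whose graph language is a finite, computable set of finite graphs; for each finite graph $G$ so produced, the set $\Fill(G)$ of its saturations is a \emph{computable finite} set, since a saturation keeps the finite vertex set $|G|$, there are finitely many edge-extensions over the finite label set $\SIG_{\const{I}}^{(-)}$, and being consistently edge-saturated (\Cref{definition: consistent}) is decidable; finally $G^{\Quo}$ is computable. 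The semi-decision procedure for non-validity is then: dovetail this enumeration of $\QuoFill(\glang(\term[1]))$ with the model checks of \Cref{proposition: model checking}, and accept as soon as some enumerated $\ppmodel$ satisfies $\ppmodel \models \term[1]$ and $\ppmodel \not\models \term[2]$. By the previous paragraph this accepts precisely the non-valid inequations, so non-validity is $\Sigma_{1}^{0}$, whence the inequational theory, and therefore the equational theory, of $\ECoRTC$ is in $\Pi_{1}^{0}$.

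The argument is essentially bookkeeping layered on \Cref{lemma: bounded model property for ECoRTC} and \Cref{proposition: model checking}; the only point requiring care is the effectiveness of the search space -- that each candidate counter-structure is finite and that the family $\QuoFill(\glang(\term[1]))$ can be generated stepwise with every individual membership and model-checking test decidable -- which is what turns the existential quantifier over $2$-pointed structures into an honest $\Sigma_{1}^{0}$ procedure rather than a quantification over an a priori uncountable class.
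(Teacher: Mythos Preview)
Your proof is correct and follows essentially the same approach as the paper: reduce equations to the two constituent inequations, then combine the bounded model property (\Cref{lemma: bounded model property for ECoRTC}) with decidable model checking (\Cref{proposition: model checking}) to get a $\Sigma_1^0$ procedure for non-validity. You spell out the effectiveness of enumerating $\QuoFill(\glang(\term[1]))$ in more detail than the paper, which simply remarks that the graphs are finite and ``we can easily enumerate'' them, but the underlying argument is the same.
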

\begin{proof}
    Since $\not\models_{\REL} \term[1] = \term[2]$ if and only if $(\not\models_{\REL} \term[1] \le \term[2]) \lor (\not\models_{\REL} \term[2] \le \term[1])$,
    it suffices to show that the following problem is in $\Sigma_{1}^{0}$:
    given $\ECoRTC$ terms $\term[1], \term[2]$, does $\not\models_{\REL} \term[1] \le \term[2]$ hold?
    This follows from \Cref{lemma: bounded model property for ECoRTC} with \Cref{proposition: model checking}.
    Note that for every $G \in \QuoFill(\glang(\term[1]))$, $\#(|G|)$ is always finite,
    and that we can easily enumerate the graphs in $\QuoFill(\glang(\term[1]))$.
\end{proof}
Particularly for $\ECoR$ (not $\ECoRTC$), graphs of each term $\term[1]$ have a linear number of vertices in the size $\len{\term[1]}$.
\begin{proposition}\label{proposition: graph upper bound}
    For every $\ECoR$ term $\term[1]$ and $G \in \QuoFill(\glang(\term[1]))$, we have $\card(|G|) \le 1 + \len{\term[1]}$.
\end{proposition}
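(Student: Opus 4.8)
The plan is to first establish the sharper statement that \emph{every} graph $G \in \glang(\term[1])$ already has $\card(|G|) \le 1 + \len{\term[1]}$, and then observe that passing to $\QuoFill$ cannot increase the number of vertices. The first part is an induction on the structure of the $\ECoR$ term $\term[1]$; it is crucial here that $\bl^{*}$ does not occur, since otherwise $\glang(\term[1])$ would contain graphs of unbounded size (indeed the final statement, and with it the coNP upper bound of \Cref{theorem: PCoR complexity}, is false for $\ECoRTC$ --- compare \Cref{table: result}). For the base cases: $\glang(\bot) = \emptyset$ (vacuous), $\glang(\const{I})$ consists of a single one-vertex graph, and each of $\glang(a)$, $\glang(a^{-})$, $\glang(a^{\smile})$, $\glang((a^{-})^{\smile})$, $\glang(\const{I}^{-})$, $\glang(\top)$ consists of graphs with exactly two vertices; in all cases $\card(|G|) \le 2 \le 1 + \len{\term[1]}$ since $\len{\term[1]} \ge 1$.

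For the inductive step I would read off the vertex counts of the three graph operations in \Cref{definition: graph language ECoR}. If $\term[1] = \term[2] \cup \term[3]$, then $G \in \glang(\term[2])$ or $G \in \glang(\term[3])$, and the claim follows from the IH since $\len{\term[2]}, \len{\term[3]} < \len{\term[1]}$. If $\term[1] = \term[2] \cdot \term[3]$, then $G = G' \cdot H'$ with $G' \in \glang(\term[2])$ and $H' \in \glang(\term[3])$; series-composition identifies the target of $G'$ with the source of $H'$, so $\card(|G|) = \card(|G'|) + \card(|H'|) - 1 \le (1 + \len{\term[2]}) + (1 + \len{\term[3]}) - 1 \le 1 + \len{\term[1]}$. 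If $\term[1] = \term[2] \cap \term[3]$, parallel-composition identifies both endpoints, so $\card(|G|) = \card(|G'|) + \card(|H'|) - 2$, giving an even slacker bound. Finally, if $\term[1] = \term[2]^{\smile}$, then $G = (G')^{\smile}$ with $G' \in \glang(\term[2])$, and $(G')^{\smile}$ has the same vertex set as $G'$, so the bound is immediate.

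For the last step, let $G \in \QuoFill(\glang(\term[1]))$. Unfolding $\QuoFill = \Quo \circ \Fill$, we have $G = H^{\Quo}$ for some saturation $H$ of some $G' \in \glang(\term[1])$. A saturation is by definition an edge-extension, so $|H| = |G'|$; and the quotient map $H \to H^{\Quo}$ is surjective, so $\card(|H^{\Quo}|) \le \card(|H|)$. Combining with the claim, $\card(|G|) \le \card(|H|) = \card(|G'|) \le 1 + \len{\term[1]}$, as required. I do not expect any real obstacle here: the only points needing a little care are getting the $-1$ and $-2$ right in the $\cdot$ and $\cap$ cases, and keeping track of the fact that both steps (the induction and the $\QuoFill$ collapse) rely on the absence of transitive closure.
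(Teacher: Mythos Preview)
Your proposal is correct and follows exactly the same two-step approach as the paper's proof sketch: first bound $\card(|H|)$ for $H \in \glang(\term[1])$ by induction on $\term[1]$, then observe that $\QuoFill$ can only shrink the vertex set. Your write-up simply fills in the routine details the paper omits; one tiny quibble is that only the induction step (not the $\QuoFill$ collapse) actually needs the absence of $\bl^{*}$.
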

\begin{proof}[Proof Sketch]
    By easy induction on $\term[1]$, we have: for every $H \in \glang(\term[1])$, $\card(|H|) \le 1 + \len{\term[1]}$.
    Also $\#(|G|) \le \#(|H|)$ is clear for every $G \in \QuoFill(H)$.
\end{proof}
\begin{theorem}\label{theorem: PCoR complexity}
    The equational theory of $\ECoR$ is coNP-complete.
\end{theorem}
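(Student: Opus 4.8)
The plan is to prove the two bounds separately. For coNP-hardness I would observe that every term of $\Termset_{\set{\cdot, \cup}}$ is an $\ECoR$ term (since $\set{\cdot,\cup}$ is a subset of the $\ECoR$ signature), so \Cref{proposition: PCoR hardness 1} already witnesses coNP-hardness of the equational theory of $\ECoR$. For the matching upper bound it suffices to show that the non-validity problem --- given $\ECoR$ terms $\term[1], \term[2]$, decide whether $\not\models_{\REL} \term[1] = \term[2]$ --- lies in NP, as the equational theory is its complement. Since $\not\models_{\REL} \term[1] = \term[2]$ iff $\not\models_{\REL} \term[1] \le \term[2]$ or $\not\models_{\REL} \term[2] \le \term[1]$, and NP is closed under binary disjunction, it is enough to place the single-inequation non-validity problem $\not\models_{\REL} \term[1] \le \term[2]$ in NP.

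Here I would invoke the bounded model property: by \Cref{lemma: bounded model property for ECoRTC}, $\not\models_{\REL} \term[1] \le \term[2]$ holds iff there is a $2$-pointed structure $\ppmodel \in \QuoFill(\glang(\term[1]))$ with $\ppmodel \not\models \term[1] \le \term[2]$, i.e.\ $\ppmodel \models \term[1]$ and $\ppmodel \not\models \term[2]$. The NP algorithm guesses such a $\ppmodel$ together with a short certificate of membership $\ppmodel \in \QuoFill(\glang(\term[1]))$, and then verifies $\ppmodel \models \term[1]$ and $\ppmodel \not\models \term[2]$ in polynomial time via \Cref{proposition: model checking} (in fact $\ppmodel \models \term[1]$ is automatic once the certificate is valid, by \Cref{proposition: semantics for graphs ECoR}, since the certificate exhibits a homomorphism into $\ppmodel$ from a graph of $\glang(\term[1])$). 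Concretely, the certificate is a pair $(\graph[1], \graph[2])$ where $\graph[1] \in \glang(\term[1])$ is described by the sequence of left/right choices made at the $\cup$-subterms of $\term[1]$ (for $\ECoR$ terms there is no $\bl^{*}$, and every other clause defining $\glang$ is deterministic given the choices on subterms), $\graph[2]$ is an edge-saturation of $\graph[1]$ presented by listing $a^{\graph[2]}$ for each $a \in \SIG_{\const{I}}^{(-)}$, and $\ppmodel = \graph[2]^{\Quo}$; by \Cref{proposition: structure} this $\ppmodel$ is indeed a $2$-pointed structure.

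The routine parts are the polynomial-time checks that $\graph[1] \in \glang(\term[1])$, that $\graph[2]$ is a saturation of $\graph[1]$ (edge-extension, $a$-consistency, $a$-saturation, and $\const{I}$-equivalence --- one clause per label), and that $\ppmodel = \graph[2]^{\Quo}$. The one point that deserves care --- and the only real obstacle --- is confirming that every witnessing $\ppmodel$ admits a \emph{polynomial-size} certificate. This rests on \Cref{proposition: graph upper bound}: each $\graph[1] \in \glang(\term[1])$ has at most $1 + \len{\term[1]}$ vertices, so, taking $\SIG$ to be the (input-bounded) set of variables occurring in the terms, the tables $a^{\graph[2]}$ and the quotient $\ppmodel$ have size polynomial in $\len{\term[1] = \term[2]}$, while the string of left/right choices has linear length. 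Combining this polynomial-size certificate with \Cref{lemma: bounded model property for ECoRTC} gives soundness and completeness of the guess-and-check procedure, placing non-validity in NP and hence the equational theory of $\ECoR$ in coNP; together with the hardness from \Cref{proposition: PCoR hardness 1} this yields coNP-completeness.
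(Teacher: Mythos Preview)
Your proposal is correct and follows essentially the same approach as the paper: coNP-hardness via \Cref{proposition: PCoR hardness 1}, and NP membership of non-validity by nondeterministically choosing $\graph[1]\in\glang(\term[1])$ and a saturation $\graph[2]\in\Fill(\graph[1])$, then checking $\graph[2]^{\Quo}\not\models\term[1]\le\term[2]$ in polynomial time using \Cref{proposition: model checking} and the size bound of \Cref{proposition: graph upper bound}. Your additional remarks (explicit reduction from $=$ to two $\le$ instances, restricting $\SIG$ to the variables occurring in the input so the saturation tables stay polynomial) are sensible elaborations the paper leaves implicit.
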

\begin{proof}
    For hardness: By \cref{proposition: PCoR hardness 1}, as $\ECoR$ subsumes $\Termset_{\set{\cdot, \cup}}$.
    For upper bound:
    Similarly for \cref{lemma: PCoRTC upper bound},
    we show that the following problem is in NP: given $\ECoR$ terms $\term[1], \term[2]$, does $\not\models_{\REL} \term[1] \le \term[2]$ hold?
    By \Cref{lemma: bounded model property for ECoRTC}, we can give the following algorithm:
    \begin{enumerate}
        \item Take a graph $H \in \glang(\term[1])$ non-deterministically according to the definition of $\glang$;
              then take a graph $G \in \Fill(H)$, non-deterministically ($G$ is a graph in $\Fill(\glang(\term[1]))$).
        \item Return $\const{true}$ if $G^{\Quo} \not\models \term[1] \le \term[2]$; $\const{false}$ otherwise.
    \end{enumerate}
    Then $\not\models_{\REL} \term[1] \le \term[2]$ if some execution returns $\const{true}$; $\models_{\REL}  \term[1] \le \term[2]$ otherwise.
    Here, $G^{\Quo} \not\models \term[1] \le \term[2]$ can be decided in polynomial time by
    \Cref{proposition: model checking} with $\#(|G^{\Quo}|) \le 1 + \len{\term[1]}$ (\Cref{proposition: graph upper bound}).
\end{proof}
\newcommand{\fl}{\mathrm{cl}}
\NewDocumentCommand\automaton{O{1}}{%
    \ifcase#1
        undefined
    \or \mathcal{A}
    \or \mathcal{B}
    \else undefined

    \fi
}
\newcommand{\empt}{\const{I}}
\section{saturable paths: saturations from a path graph for intersection-free fragments}\label{section: automata}
In this section, we refine the graph characterization of edge-saturations (in the previous section) for $\ECoRTC$ without intersection ($\ExKA$, for short; \Cref{section: ECoR def}) by using \emph{saturable paths}.
Using this characterization, we can show the decidability of the equational theory (\Cref{theorem: Hintikka}) and give an automata construction for two smaller fragments (\Cref{theorem: automata KACC without I-,theorem: automata KACC without a-}).

\newcommand{\AutSIG}{\SIG_{\const{I}}^{(-,\smile)}}
\newcommand{\WordSIG}{\AutSIG \setminus \set{\empt}}
\subsection{NFAs as terms}\label{section: NFA}
A \emph{non-deterministic finite automaton with epsilon translations $\empt$} (NFA, for short) $\automaton = \tuple{|\automaton|, \set{a^{\automaton}}_{a \in A}, \lv^{\automaton}, \rv^{\automaton}}$ over a set $A$ (with $A$ containing a designated element $\empt$) is a graph over $A$.
The \emph{transition relation} $\delta_{w}^{\automaton}$ of a word $w = a_1 \dots a_n \in (A \setminus \set{\empt})^*$ is defined by:
\[\delta_{w}^{\automaton} \defeq (\empt^{\automaton})^* \cdot a_1^{\automaton} \cdot (\empt^{\automaton})^* \cdot \ldots \cdot (\empt^{\automaton})^* \cdot a_n^{\automaton} \cdot (\empt^{\automaton})^*.\]
For notational simplicity, for $x \in |\automaton|$ and $X \subseteq |\automaton|$,
let $\delta_{w}^{\automaton}(x) \defeq \set{y \mid \tuple{x, y} \in \delta_{w}^{\automaton}}$ and $\delta_{w}^{\automaton}(X) \defeq \bigcup_{x \in X} \delta_{w}^{\automaton}(x)$.
The \emph{language} $\lang{\automaton}$ of $\automaton$ is defined by:
\[\lang{\automaton} \defeq \set{w \in (A \setminus \set{\empt})^* \mid \tgt^{\automaton} \in \delta_{w}^{\automaton}(\src^{\automaton})}.\]

Let $\SIG_{\const{I}}^{(-, \smile)} \defeq \set{a, a^{\smile} \mid a \in \SIG^{(-)}} \cup \set{\const{I}, \const{I}^{-}}$.
Similarly for $\comnf{\term}$,
for each $\term \in \SIG_{\const{I}}^{(-, \smile)} \cup \set{\bot, \top}$,
$\widebreve{\term}$ denotes the following term, where $a \in \SIG^{(-)}$:
\begin{align*}
    \widebreve{a} & \defeq a^{\smile} & \widebreve{a^{\smile}} & \defeq a & \widebreve{\const{I}} & \defeq \const{I} & \widebreve{\const{I}^{-}} & \defeq \const{I}^{-} & \widebreve{\top} & \defeq \top & \widebreve{\bot} & \defeq \bot.
\end{align*}
In the following, we always consider NFAs over the set $\AutSIG$ (where $\const{I}$ is used as epsilon transitions).

For an NFA $\automaton$ over $\AutSIG$,
the \emph{binary relation} $\jump{\automaton}_{\struc} \subseteq |\automaton|^2$ is defined by:
\[\jump{\automaton}_{\struc} \defeq \bigcup_{w \in \lang{\automaton}} \jump{w}_{\struc}.\]

Naturally, we can give a construction from $\ExKA$ terms to NFAs using Thompson's construction \cite{Thompson68}, as follows:
\begin{definition}\label{definition: NFA def}
    The \emph{NFA} $\automaton_t$ of an $\ExKA$ term $\term$ is defined by:
    \begin{align*}
        \automaton_{a}
         & \defeq \begin{tikzpicture}[baseline = -.5ex]
                      \graph[grow right = 1.cm, branch down = 2.ex, nodes={mynode, font = \scriptsize}]{
                      {s1/{}[draw, circle]}
                      -!- {sm/{}[draw, circle]}
                      };
                      \node[left = 4pt of s1](s1l){} edge[earrow, ->] (s1);
                      \node[right = 4pt of sm](sml){}; \path (sm) edge[earrow, ->] (sml);
                      \path (s1) edge [draw = white, opacity = 0] node[pos= 0.5, inner sep = 1.pt, font = \scriptsize, opacity = 1](tau1){$a$}(sm);
                      \graph[use existing nodes, edges={color=black, pos = .5, earrow}, edge quotes={fill=white, inner sep=1pt,font= \scriptsize}]{
                          s1 -- tau1 -> sm;
                      };
                  \end{tikzpicture}        \mbox{($a \in \WordSIG$)}                       &
        \automaton_{\const{I}}
         & \defeq \begin{tikzpicture}[baseline = -.5ex]
                      \graph[grow right = 1.cm, branch down = 2.ex, nodes={mynode, font = \scriptsize}]{
                      {s1/{}[draw, circle]}
                      -!- {sm/{}[draw, circle]}
                      };
                      \node[left = 4pt of s1](s1l){} edge[earrow, ->] (s1);
                      \node[right = 4pt of sm](sml){}; \path (sm) edge[earrow, ->] (sml);
                      \path (s1) edge [draw = white, opacity = 0] node[pos= 0.5, inner sep = 1.pt, font = \scriptsize, opacity = 1](tau1){$\empt$}(sm);
                      \graph[use existing nodes, edges={color=black, pos = .5, earrow}, edge quotes={fill=white, inner sep=1pt,font= \scriptsize}]{
                          s1 -- tau1 -> sm;
                      };
                  \end{tikzpicture}                   \\
        \automaton_{\term[1] \cup \term[2]}
         & \defeq \begin{tikzpicture}[baseline = -.5ex]
                      \graph[grow right = 1.cm, branch down = 2.5ex, nodes={mynode, font = \scriptsize}]{
                      {s1/{}[draw, circle]}
                      -!- {s2/{}[draw, circle, above = .5ex], s3/{}[draw, circle, above = .5ex]}
                      -!- {t2/{}[draw, circle, above = .5ex], t3/{}[draw, circle, above = .5ex]}
                      -!- {t1/{}[draw, circle]}
                      };
                      \node[left = 4pt of s1](s1l){} edge[earrow, ->] (s1);
                      \node[right = 4pt of t1](t1l){}; \path (t1) edge[earrow, ->] (t1l);
                      \path (s1) edge [draw = white, opacity = 0, bend left = 10] node[pos= 0.5, inner sep = 1.pt, font = \scriptsize, opacity = 1](tau12){$\empt$}(s2);
                      \path (s1) edge [draw = white, opacity = 0, bend right = 10] node[pos= 0.5, inner sep = 1.pt, font = \scriptsize, opacity = 1](tau13){$\empt$}(s3);
                      \path (t2) edge [draw = white, opacity = 0, bend left = 10] node[pos= 0.5, inner sep = 1.pt, font = \scriptsize, opacity = 1](tau21){$\empt$}(t1);
                      \path (t3) edge [draw = white, opacity = 0, bend right = 10] node[pos= 0.5, inner sep = 1.pt, font = \scriptsize, opacity = 1](tau31){$\empt$}(t1);
                      \graph[use existing nodes, edges={color=black, pos = .5, earrow}, edge quotes={fill=white, inner sep=1pt,font= \scriptsize}]{
                      s1 -- {tau12, tau13} -> {s2, s3}; {t2, t3} -- {tau21, tau31} -> t1;
                      };
                      \graph[use existing nodes, edges={color=black, pos = .5, earrow}, edge quotes={fill=white, inner sep=1pt,font= \scriptsize}]{
                      s2 ->["$\automaton_{t}$"] t2;
                      s3 ->["$\automaton_{s}$"] t3;
                      };
                  \end{tikzpicture} &
        \automaton_{\bot}
         & \defeq \begin{tikzpicture}[baseline = -.5ex]
                      \graph[grow right = 1.cm, branch down = 2.ex, nodes={mynode, font = \scriptsize}]{
                      {s1/{}[draw, circle]}
                      -!- {sm/{}[draw, circle]}
                      };
                      \node[left = 4pt of s1](s1l){} edge[earrow, ->] (s1);
                      \node[right = 4pt of sm](sml){}; \path (sm) edge[earrow, ->] (sml);
                      \graph[use existing nodes, edges={color=black, pos = .5, earrow}, edge quotes={fill=white, inner sep=1pt,font= \scriptsize}]{
                      };
                  \end{tikzpicture}                                                       \\
        \automaton_{\term[1] \cdot \term[2]}
         & \defeq \begin{tikzpicture}[baseline = -.5ex]
                      \graph[grow right = 1.cm, branch down = 2.5ex, nodes={mynode, font = \scriptsize}]{
                      {s1/{}[draw, circle]}
                      -!- {c1/{}[draw, circle]}
                      -!- {c2/{}[draw, circle]}
                      -!- {t1/{}[draw, circle]}
                      };
                      \node[left = 4pt of s1](s1l){} edge[earrow, ->] (s1);
                      \node[right = 4pt of t1](t1l){}; \path (t1) edge[earrow, ->] (t1l);
                      \path (c1) edge [draw = white, opacity = 0] node[pos= 0.5, inner sep = 1.pt, font = \scriptsize, opacity = 1](tau1){$\empt$}(c2);
                      \graph[use existing nodes, edges={color=black, pos = .5, earrow}, edge quotes={fill=white, inner sep=1pt,font= \scriptsize}]{
                      c1 -- {tau1} -> c2;
                      };
                      \graph[use existing nodes, edges={color=black, pos = .5, earrow}, edge quotes={fill=white, inner sep=1pt,font= \scriptsize}]{
                      s1 ->["$\automaton_{t}$"] c1;
                      c2 ->["$\automaton_{s}$"] t1;
                      };
                  \end{tikzpicture}                   & \automaton_{\top}
         & \defeq \automaton_{a \cup \comnf{a}}                                                                                                                      \\
        \automaton_{\term[1]^{*}}
         & \defeq \begin{tikzpicture}[baseline = -.5ex]
                      \graph[grow right = 1.cm, branch down = 2.5ex, nodes={mynode, font = \scriptsize}]{
                      {s1/{}[draw, circle]}
                      -!- {c1/{}[draw, circle]}
                      -!- {c2/{}[draw, circle]}
                      -!- {t1/{}[draw, circle]}
                      };
                      \node[left = 4pt of s1](s1l){} edge[earrow, ->] (s1);
                      \node[right = 4pt of t1](t1l){}; \path (t1) edge[earrow, ->] (t1l);
                      \path (s1) edge [draw = white, opacity = 0] node[pos= 0.5, inner sep = 1.pt, font = \scriptsize, opacity = 1](tau1){$\empt$}(c1);
                      \path (c2) edge [draw = white, opacity = 0] node[pos= 0.5, inner sep = 1.pt, font = \scriptsize, opacity = 1](tau2){$\empt$}(t1);
                      \path (c2) edge [draw = white, opacity = 0, bend left = 60] node[pos= 0.5, inner sep = 1.pt, font = \scriptsize, opacity = 1](tau3){$\empt$}(c1);
                      \path (s1) edge [draw = white, opacity = 0, bend left = 20] node[pos= 0.5, inner sep = 1.pt, font = \scriptsize, opacity = 1](tau4){$\empt$}(t1);
                      \graph[use existing nodes, edges={color=black, pos = .5, earrow}, edge quotes={fill=white, inner sep=1pt,font= \scriptsize}]{
                      s1 -- {tau1} -> c1;
                      c2 -- {tau2} -> t1;
                      c2 --[bend left = 15] {tau3} ->[bend left = 15] c1;
                      s1 --[bend left = 5] {tau4} ->[bend left = 5] t1;
                      };
                      \graph[use existing nodes, edges={color=black, pos = .5, earrow}, edge quotes={fill=white, inner sep=1pt,font= \scriptsize}]{
                      c1 ->["$\automaton_{t}$"] c2;
                      };
                  \end{tikzpicture}.
    \end{align*}
\end{definition}
\begin{proposition}\label{proposition: Thompson}
    For every structure $\struc$ and $\ExKA$ term $\term$, we have $\jump{\term}_{\struc} = \jump{\automaton_{\term}}_{\struc}$.
\end{proposition}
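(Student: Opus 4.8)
The statement is the correctness, with respect to relational semantics, of the Thompson construction from \Cref{definition: NFA def}, now for $\ExKA$ terms rather than for pure $\KA$ terms: the ``alphabet'' has been enlarged to $\WordSIG$ (so it also contains $\aterm^{-}$, $\aterm^{\smile}$, $(\aterm^{-})^{\smile}$, and $\const{I}^{-}$), and the constant $\top$ is encoded as $\aterm \cup \comnf{\aterm}$. The plan is to prove $\jump{\term}_{\struc} = \jump{\automaton_{\term}}_{\struc}$ by induction on $\term$, after first isolating the standard, purely automata-theoretic facts about Thompson's construction so that the induction becomes a routine unfolding of definitions. This is essentially the relational analogue of the word-language equivalence (\ref{equation: rel and lang}), specialized to the enlarged alphabet.

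First I would record two ingredients. (i) The word interpretation is compositional: $\jump{w v}_{\struc} = \jump{w}_{\struc} \cdot \jump{v}_{\struc}$ for $w, v \in \WordSIG^{*}$, $\jump{\empt}_{\struc}$ is the identity relation, and relational composition distributes over arbitrary unions. (ii) The Thompson gadgets realize the regular operations at the level of languages: $\lang{\automaton_{\aterm}} = \set{\aterm}$ (for $\aterm \in \WordSIG$), $\lang{\automaton_{\const{I}}} = \set{\const{I}}$, $\lang{\automaton_{\bot}} = \emptyset$, $\lang{\automaton_{\term[1]} \cdot \automaton_{\term[2]}} = \lang{\automaton_{\term[1]}} \cdot \lang{\automaton_{\term[2]}}$, $\lang{\automaton_{\term[1]} \cup \automaton_{\term[2]}} = \lang{\automaton_{\term[1]}} \cup \lang{\automaton_{\term[2]}}$, and $\lang{\automaton_{\term[1]}^{*}} = \lang{\automaton_{\term[1]}}^{*}$ (these are verified exactly as in the classical proof of Kleene's theorem, tracking the $\empt$-edges of each gadget). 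Combining (i) and (ii) via $\jump{\automaton}_{\struc} = \bigcup_{w \in \lang{\automaton}} \jump{w}_{\struc}$ yields the building blocks: $\jump{\automaton_{\term[1]} \cdot \automaton_{\term[2]}}_{\struc} = \jump{\automaton_{\term[1]}}_{\struc} \cdot \jump{\automaton_{\term[2]}}_{\struc}$, the same for $\cup$, and $\jump{\automaton_{\term[1]}^{*}}_{\struc} = \bigcup_{n \in \nat} (\jump{\automaton_{\term[1]}}_{\struc})^{n}$, while the atomic gadgets give $\jump{\automaton_{\aterm}}_{\struc} = \jump{\aterm}_{\struc}$, $\jump{\automaton_{\const{I}}}_{\struc}$ is the identity relation, and $\jump{\automaton_{\bot}}_{\struc} = \emptyset$.

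With these in hand the induction on $\term$ is immediate. For $\term \in \set{\aterm, \aterm^{-}, \aterm^{\smile}, (\aterm^{-})^{\smile}, \const{I}^{-}}$ and $\term = \const{I}$ and $\term = \bot$, the claim is one of the atomic building blocks together with the defining clause of $\jump{\bl}_{\struc}$ (noting that in a structure $\jump{\const{I}}_{\struc}$ is indeed the identity relation). For $\term = \term[1] \cdot \term[2]$, $\term = \term[1] \cup \term[2]$, and $\term = \term[1]^{*}$, one combines the corresponding building block with the induction hypothesis and the defining clause of $\jump{\bl}_{\struc}$. The only case where more than bookkeeping is needed is $\term = \top$: here $\automaton_{\top} = \automaton_{\aterm \cup \comnf{\aterm}}$, so $\jump{\automaton_{\top}}_{\struc} = \jump{\aterm}_{\struc} \cup \jump{\comnf{\aterm}}_{\struc} = \aterm^{\struc} \cup \comnf{\aterm}^{\struc}$, and this equals $|\struc|^{2} = \jump{\top}_{\struc}$ \emph{precisely because} $\struc$ is a structure, i.e.\ $\comnf{\aterm}^{\struc} = |\struc|^{2} \setminus \aterm^{\struc}$.

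Accordingly, the main (in fact only) obstacle is conceptual rather than computational: the proposition fails for arbitrary labelled graphs over $\AutSIG$ and relies on the complement law built into the definition of structure, which is exactly what makes the $\top$-encoding sound; once that is observed, the remainder is the textbook verification of Thompson's construction. (One should also note that the encoding of $\top$ uses a fixed auxiliary atom $\aterm \in \SIG$, so the construction tacitly assumes $\SIG \neq \emptyset$; this is harmless, and one could alternatively take $\const{I} \cup \const{I}^{-}$.)
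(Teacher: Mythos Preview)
Your proposal is correct and takes essentially the same approach as the paper: Thompson's construction is correct at the language level, relational semantics distributes over unions and compositions, and the $\top$ case reduces to the complement law $\comnf{a}^{\struc} = |\struc|^{2} \setminus a^{\struc}$ for structures. The only organizational difference is that the paper first replaces every $\top$ by $a \cup \comnf{a}$ (obtaining a term $\term[2]$ with $\jump{\term[2]}_{\struc} = \jump{\term}_{\struc}$ and $\automaton_{\term[2]} = \automaton_{\term}$), then invokes the classical Thompson result $\lang{\term[2]} = \lang{\automaton_{\term[2]}}$ as a black box and separately proves $\jump{\term[2]}_{\struc} = \bigcup_{w \in \lang{\term[2]}} \jump{w}_{\struc}$ by induction, whereas you fold the $\top$ case into the main induction---but the underlying ingredients are identical.
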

\begin{proof}[Proof Sketch]
    Let $\term[2]$ be the term $\term$ in which each $\top$ has been replaced with $a \cup \comnf{a}$.
    Then, $\jump{\term[2]}_{\model} = \jump{\term}_{\model}$ holds by \Cref{equation: I- top 0}
    and $\automaton_{\term[2]}$ coincides with $\automaton_{\term}$ (by \Cref{definition: NFA def}).
    By viewing $\term[2]$ as the regular expression over $\WordSIG$, we have $\lang{\term[2]} = \lang{\automaton_{\term[2]}}$ \cite{Thompson68}.
    By straightforward induction on $\term[2]$ using the distributivity, $\jump{\term[2]}_{\struc} = \bigcup_{w \in \lang{\term[2]}} \jump{w}_{\struc}$ \ifiscameraready \else (\Cref{section: proposition: Thompson})\fi.
    Thus $\jump{\term}_{\struc} = \jump{\term[2]}_{\struc} = \bigcup_{w \in \lang{\term[2]}} \jump{w}_{\struc} = \bigcup_{w \in \lang{\automaton_{\term[2]}}} \jump{w}_{\struc} = \jump{\automaton_{\term[2]}}_{\struc} = \jump{\automaton_{\term}}_{\struc}$.
\end{proof}
Thanks to the above proposition, we work directly with NFAs rather than terms in the sequel.

\newcommand{\clang}[1]{[#1]_{\mathrm{c}}}
\newcommand{\Con}{\mathop{\mathrm{Con}}\nolimits}
\subsection{Saturable paths}\label{section: saturable path def}
For a word $w \in (\WordSIG)^*$,
we use $\graph(w)$ to denote the unique graph in $\glang(w)$ (\Cref{definition: graph language ECoR}), up to isomorphism.
Here, each vertex in $\graph(w)$ is indexed by a number in $[0, n]$, from the left to the right,
where $w = a_1 \dots a_n$.
For example,
\[\graph(\comnf{\const{I}} a^{\smile} \comnf{a}) =  \begin{tikzpicture}[baseline = -.5ex]
        \graph[grow right = 1.cm, branch down = 2.5ex, nodes={mysmallnode, font = \scriptsize}]{
        {s1/{$0$}[draw, circle]}
        -!- {c1/{$1$}[draw, circle]}
        -!- {c2/{$2$}[draw, circle]}
        -!- {t1/{$3$}[draw, circle]}
        };
        \node[left = 4pt of s1](s1l){} edge[earrow, ->] (s1);
        \node[right = 4pt of t1](t1l){}; \path (t1) edge[earrow, ->] (t1l);
        \graph[use existing nodes, edges={color=black, pos = .5, earrow}, edge quotes={fill=white, inner sep=1pt,font= \scriptsize}]{
        s1 ->["$\comnf{\const{I}}$"] c1;
        c2 ->["$a$"] c1;
        c2 ->["$\comnf{a}$"] t1;
        };
    \end{tikzpicture}.\]
For each $w$, $G(w)$ is a path graph (by forgetting labels and directions of edges).

Recall $\jump{\automaton}_{\model}$ in \Cref{section: NFA}.
Based on $\jump{\automaton}_{\model}$, we use $\models$ (\Cref{notation: models}) also for automata.
Note that:
\begin{align*}
    \ppmodel \models \automaton & \iff \exists w \in \lang{\automaton}.\ \ppmodel \models w \tag{Def.\ of $\jump{\automaton}_{\model}$}           \\
                                & \iff \exists w \in \lang{\automaton}.\ G(w) \homo \ppmodel. \tag{\Cref{proposition: semantics for graphs ECoR}}
\end{align*}

In this subsection, we consider the following saturability problem:
given an NFA $\automaton[1]$ and a word $w$, is there a saturation $\graph[2]$ of $\graph(w)$ such that $\graph[2]^{\Quo} \not\models \automaton[1]$? (Recall $\graph[2]^{\Quo}$ (\Cref{proposition: structure}).)
This problem can apply to the equational theory (\Cref{theorem: Hintikka}).
For this problem, we introduce \emph{saturable paths}.
\begin{example}\label{example: saturable path}
    Let $\SIG = \set{a}$.
    Let $\automaton[1]$ be the NFA obtained from the term $\comnf{a} \cdot \comnf{a} \cdot \comnf{a}^*$ and $w = \comnf{a}$:
    \begin{align*}
        \automaton[1] & = \begin{tikzpicture}[baseline = -.5ex]
                              \graph[grow right = 1.25cm, branch down = 2.5ex, nodes={mysmallnode, font = \scriptsize}]{
                              {1/{$\mathtt{A}$}[draw, circle]}
                              -!- {2/{$\mathtt{B}$}[draw, circle]}
                              -!- {3/{$\mathtt{C}$}[draw, circle]}
                              };
                              \node[left = 4pt of 1](l){} edge[earrow, ->] (1);
                              \node[right = 4pt of 3](r){}; \path (3) edge[earrow, ->] (r);
                              \path (1) edge [draw = white, opacity = 0] node[pos= 0.5, inner sep = .5pt, font = \scriptsize, opacity = 1](tau1){$\comnf{a}$}(2);
                              \path (2) edge [draw = white, opacity = 0] node[pos= 0.5, inner sep = 1.pt, font = \scriptsize, opacity = 1](tau2){$\comnf{a}$}(3);
                              \node[above = 1.5ex of 3, inner sep = 1.pt,  font= \scriptsize, opacity = 1](tau3){$\comnf{a}$};
                              \graph[use existing nodes, edges={color=black, pos = .5, earrow}, edge quotes={fill=white, inner sep=1pt,font= \scriptsize}]{
                              1 -- {tau1} -> 2;
                              2 -- {tau2} -> 3;
                              3 --[bend left = 20] {tau3} ->[bend left = 20] 3;
                              };
                          \end{tikzpicture} &
        G(w)          & = \begin{tikzpicture}[baseline = -.5ex]
                              \graph[grow right = 1.25cm, branch down = 2.5ex, nodes={mysmallnode, font = \scriptsize}]{
                              {1/{$0$}[draw, circle]}
                              -!- {2/{$1$}[draw, circle]}
                              };
                              \node[left = 4pt of 1](l){} edge[earrow, ->] (1);
                              \node[right = 4pt of 2](r){}; \path (2) edge[earrow, ->] (r);
                              \path (1) edge [draw = white, opacity = 0] node[pos= 0.5, inner sep = 1.pt, font = \scriptsize, opacity = 1](tau1){$\comnf{a}$}(2);
                              \graph[use existing nodes, edges={color=black, pos = .5, earrow}, edge quotes={fill=white, inner sep=1pt,font= \scriptsize}]{
                              1 -- {tau1} -> 2;
                              };
                          \end{tikzpicture}.
    \end{align*}
    Let us consider constructing a saturation $\graph[2]$ of $\graph(w)$ such that $\graph[2]^{\Quo} \not\models \automaton[1]$.
    In this case, the following $\textcolor{blue}{\graph[2]_1}$ is the unique solution ($\textcolor{red}{\graph[2]_2}, \textcolor{red}{\graph[2]_3}, \textcolor{red}{\graph[2]_4}$ are not the solution because $\mathtt{C} \in U_1$):
    \begin{align*}
        \textcolor{blue}{\graph[2]_1} & = \begin{tikzpicture}[baseline = -.5ex]
                                              \graph[grow right = 1.5cm, branch down = 6ex, nodes={mysmallnode, font = \scriptsize}]{
                                              {0/{$0$}[draw, circle]}-!-{1/{$1$}[draw, circle]}
                                              };
                                              \node[left = .5em of 0](l){};
                                              \node[right = .5em of 1](r){};
                                              \node[above left = 2ex and .0em of 0, inner sep = 1.5pt, font = \scriptsize, opacity = 1](a00){$a$};
                                              \node[above left = 2ex and .0em of 1, inner sep = .5pt, font = \scriptsize, opacity = 1](a11){$a$};
                                              \path (0) edge [draw = white, opacity = 0] node[pos= 0.5, inner sep = 1.5pt, font = \scriptsize, opacity = 1](a01){$\comnf{a}$}(1);
                                              \path (1) edge [draw = white, opacity = 0, bend left = 25] node[pos= 0.5, inner sep = .5pt, font = \scriptsize, opacity = 1](a10){$a$}(0);
                                              \node[above right = 2ex and .0em of 0, inner sep = 1.5pt, font = \scriptsize, opacity = 1](I00){$\const{I}$};
                                              \node[above right = 2ex and .0em of 1, inner sep = 1.5pt, font = \scriptsize, opacity = 1](I11){$\const{I}$};
                                              \path (0) edge [draw = white, opacity = 0, bend left = 65] node[pos= 0.5, inner sep = .5pt, font = \scriptsize, opacity = 1](I01){$\comnf{\const{I}}$}(1);
                                              \graph[use existing nodes, edges={color=black, pos = .5, earrow}, edge quotes={fill=white, inner sep=1pt,font= \scriptsize}]{
                                              0 --[bend left = 10] a00 ->[bend left = 10] 0;
                                              1 --[bend left = 10] a11 ->[bend left = 10] 1;
                                              0 -- a01 -> 1;
                                              1 --[bend left = 15] a10 ->[bend left = 15] 0;
                                              0 --[bend left = 10] I00 ->[bend left = 10] 0;
                                              1 --[bend left = 10] I11 ->[bend left = 10] 1;
                                              0 <-[bend left = 10] I01 ->[bend left = 10] 1;
                                              l -> 0; 1 -> r;
                                              };
                                              \node[below = .5ex of 0, font = \scriptsize]{($\mathtt{A}$)};
                                              \node[below = .5ex of 1, font = \scriptsize]{($\mathtt{B}$)};
                                          \end{tikzpicture} &
        \textcolor{red}{\graph[2]_2}  & = \begin{tikzpicture}[baseline = -.5ex]
                                              \graph[grow right = 1.5cm, branch down = 6ex, nodes={mysmallnode, font = \scriptsize}]{
                                              {0/{$0$}[draw, circle]}-!-{1/{$1$}[draw, circle]}
                                              };
                                              \node[left = .5em of 0](l){};
                                              \node[right = .5em of 1](r){};
                                              \node[above left = 2ex and .0em of 0, inner sep = 1.5pt, font = \scriptsize, opacity = 1](a00){$a$};
                                              \node[above left = 2ex and .0em of 1, inner sep = .5pt, font = \scriptsize, opacity = 1](a11){$a$};
                                              \path (0) edge [draw = white, opacity = 0] node[pos= 0.5, inner sep = 1.5pt, font = \scriptsize, opacity = 1](a01){$\comnf{a}$}(1);
                                              \path (1) edge [draw = white, opacity = 0, bend left = 25] node[pos= 0.5, inner sep = .5pt, font = \scriptsize, opacity = 1](a10){$\comnf{a}$}(0);
                                              \node[above right = 2ex and .0em of 0, inner sep = 1.5pt, font = \scriptsize, opacity = 1](I00){$\const{I}$};
                                              \node[above right = 2ex and .0em of 1, inner sep = 1.5pt, font = \scriptsize, opacity = 1](I11){$\const{I}$};
                                              \path (0) edge [draw = white, opacity = 0, bend left = 65] node[pos= 0.5, inner sep = .5pt, font = \scriptsize, opacity = 1](I01){$\comnf{\const{I}}$}(1);
                                              \graph[use existing nodes, edges={color=black, pos = .5, earrow}, edge quotes={fill=white, inner sep=1pt,font= \scriptsize}]{
                                              0 --[bend left = 10] a00 ->[bend left = 10] 0;
                                              1 --[bend left = 10] a11 ->[bend left = 10] 1;
                                              0 -- a01 -> 1;
                                              1 --[bend left = 15] a10 ->[bend left = 15] 0;
                                              0 --[bend left = 10] I00 ->[bend left = 10] 0;
                                              1 --[bend left = 10] I11 ->[bend left = 10] 1;
                                              0 <-[bend left = 10] I01 ->[bend left = 10] 1;
                                              l -> 0; 1 -> r;
                                              };
                                              \node[below = .5ex of 0, font = \scriptsize]{($\mathtt{A}, \mathtt{C}$)};
                                              \node[below = .5ex of 1, font = \scriptsize]{($\mathtt{B}, \textcolor{red}{\mathtt{C}}$)};
                                          \end{tikzpicture} \\
        \textcolor{red}{\graph[2]_3}  & = \begin{tikzpicture}[baseline = -.5ex]
                                              \graph[grow right = 1.5cm, branch down = 6ex, nodes={mysmallnode, font = \scriptsize}]{
                                              {0/{$0$}[draw, circle]}-!-{1/{$1$}[draw, circle]}
                                              };
                                              \node[left = .5em of 0](l){};
                                              \node[right = .5em of 1](r){};
                                              \node[above left = 2ex and .0em of 0, inner sep = 1.5pt, font = \scriptsize, opacity = 1](a00){$\comnf{a}$};
                                              \node[above left = 2ex and .0em of 1, inner sep = .5pt, font = \scriptsize, opacity = 1](a11){$a$};
                                              \path (0) edge [draw = white, opacity = 0] node[pos= 0.5, inner sep = 1.5pt, font = \scriptsize, opacity = 1](a01){$\comnf{a}$}(1);
                                              \path (1) edge [draw = white, opacity = 0, bend left = 25] node[pos= 0.5, inner sep = .5pt, font = \scriptsize, opacity = 1](a10){$a$}(0);
                                              \node[above right = 2ex and .0em of 0, inner sep = 1.5pt, font = \scriptsize, opacity = 1](I00){$\const{I}$};
                                              \node[above right = 2ex and .0em of 1, inner sep = 1.5pt, font = \scriptsize, opacity = 1](I11){$\const{I}$};
                                              \path (0) edge [draw = white, opacity = 0, bend left = 65] node[pos= 0.5, inner sep = .5pt, font = \scriptsize, opacity = 1](I01){$\comnf{\const{I}}$}(1);
                                              \graph[use existing nodes, edges={color=black, pos = .5, earrow}, edge quotes={fill=white, inner sep=1pt,font= \scriptsize}]{
                                              0 --[bend left = 10] a00 ->[bend left = 10] 0;
                                              1 --[bend left = 10] a11 ->[bend left = 10] 1;
                                              0 -- a01 -> 1;
                                              1 --[bend left = 15] a10 ->[bend left = 15] 0;
                                              0 --[bend left = 10] I00 ->[bend left = 10] 0;
                                              1 --[bend left = 10] I11 ->[bend left = 10] 1;
                                              0 <-[bend left = 10] I01 ->[bend left = 10] 1;
                                              l -> 0; 1 -> r;
                                              };
                                              \node[below = .5ex of 0, font = \scriptsize]{($\mathtt{A}, \mathtt{B}, \mathtt{C}$)};
                                              \node[below = .5ex of 1, font = \scriptsize]{($\mathtt{B}, \textcolor{red}{\mathtt{C}}$)};
                                          \end{tikzpicture} &
        \textcolor{red}{\graph[2]_4}  & = \begin{tikzpicture}[baseline = -.5ex]
                                              \graph[grow right = 1.5cm, branch down = 6ex, nodes={mysmallnode, font = \scriptsize}]{
                                              {0/{$0$}[draw, circle]}-!-{1/{$1$}[draw, circle]}
                                              };
                                              \node[left = .5em of 0](l){};
                                              \node[right = .5em of 1](r){};
                                              \node[above left = 2ex and .0em of 0, inner sep = 1.5pt, font = \scriptsize, opacity = 1](a00){$a$};
                                              \node[above left = 2ex and .0em of 1, inner sep = .5pt, font = \scriptsize, opacity = 1](a11){$\comnf{a}$};
                                              \path (0) edge [draw = white, opacity = 0] node[pos= 0.5, inner sep = 1.5pt, font = \scriptsize, opacity = 1](a01){$\comnf{a}$}(1);
                                              \path (1) edge [draw = white, opacity = 0, bend left = 25] node[pos= 0.5, inner sep = .5pt, font = \scriptsize, opacity = 1](a10){$a$}(0);
                                              \node[above right = 2ex and .0em of 0, inner sep = 1.5pt, font = \scriptsize, opacity = 1](I00){$\const{I}$};
                                              \node[above right = 2ex and .0em of 1, inner sep = 1.5pt, font = \scriptsize, opacity = 1](I11){$\const{I}$};
                                              \path (0) edge [draw = white, opacity = 0, bend left = 65] node[pos= 0.5, inner sep = .5pt, font = \scriptsize, opacity = 1](I01){$\comnf{\const{I}}$}(1);
                                              \graph[use existing nodes, edges={color=black, pos = .5, earrow}, edge quotes={fill=white, inner sep=1pt,font= \scriptsize}]{
                                              0 --[bend left = 10] a00 ->[bend left = 10] 0;
                                              1 --[bend left = 10] a11 ->[bend left = 10] 1;
                                              0 -- a01 -> 1;
                                              1 --[bend left = 15] a10 ->[bend left = 15] 0;
                                              0 --[bend left = 10] I00 ->[bend left = 10] 0;
                                              1 --[bend left = 10] I11 ->[bend left = 10] 1;
                                              0 <-[bend left = 10] I01 ->[bend left = 10] 1;
                                              l -> 0; 1 -> r;
                                              };
                                              \node[below = .5ex of 0, font = \scriptsize]{($\mathtt{A}$)};
                                              \node[below = .5ex of 1, font = \scriptsize]{($\mathtt{B}, \textcolor{red}{\mathtt{C}}$)};
                                          \end{tikzpicture}.
    \end{align*}
    Here the states under each vertex denote that they are reachable from the state $\mathtt{A}$ on the vertex $0$;
    more precisely, each set $U_{i} \subseteq |\automaton|$ (where $i \in |G(w)|$) is defined by:
    \begin{align*}
        z \in U_{i} & \defiff \exists v \in \lang{{\automaton}[\bl, z]}.\  G(v) \homo \graph[2]^{\Quo}[\bl, \gquo{i}{\graph[2]}].
    \end{align*}
    Here, ${\automaton}[\bl, z]$ denotes the graph $\automaton$ in which $\rv^{\automaton}$ has been replaced with $z$ (similarly for $\graph[2]^{\Quo}[\bl, \gquo{i}{\graph[2]}]$).
    By definition, $\graph[2]^{\Quo} \not\models \automaton$ (iff $\lnot (\exists v \in \lang{{\automaton}}.\  G(v) \homo \graph[2]^{\Quo})$) iff $\rv^{\automaton[1]} \not\in U_{\rv^{\graph[2]}}$.
    For example, in $\textcolor{red}{\graph[2]_2}$, $\mathtt{C} \in U_{1}$ because $G(\comnf{a} \; \comnf{a} \;  \comnf{a}) \homo \graph[2]_2^{\Quo}$ holds by:
    \begin{align*}
         & \begin{tikzpicture}[baseline = -.5ex, remember picture]
               \graph[grow right = 1.25cm, branch down = 2.5ex, nodes={mynode, font = \scriptsize}]{
               {L1/{}[draw, circle]}
               -!- {L2/{}[draw, circle]}
               -!- {L3/{}[draw, circle]}
               -!- {L4/{}[draw, circle]}
               };
               \node[left = 4pt of L1](Ll){} edge[earrow, ->] (L1);
               \node[right = 4pt of L4](Lr){}; \path (L4) edge[earrow, ->] (Lr);
               \path (L1) edge [draw = white, opacity = 0] node[pos= 0.5, inner sep = .5pt, font = \scriptsize, opacity = 1](Ltau1){$\comnf{a}$}(L2);
               \path (L2) edge [draw = white, opacity = 0] node[pos= 0.5, inner sep = 1.pt, font = \scriptsize, opacity = 1](Ltau2){$\comnf{a}$}(L3);
               \path (L3) edge [draw = white, opacity = 0] node[pos= 0.5, inner sep = 1.pt, font = \scriptsize, opacity = 1](Ltau3){$\comnf{a}$}(L4);
               \graph[use existing nodes, edges={color=black, pos = .5, earrow}, edge quotes={fill=white, inner sep=1pt,font= \scriptsize}]{
               L1 -- {Ltau1} -> L2;
               L2 -- {Ltau2} -> L3;
               L3 -- {Ltau3} -> L4;
               };
               \node[below = .5ex of L1, font = \scriptsize]{($\mathtt{A}$)};
               \node[below = .5ex of L2, font = \scriptsize]{($\mathtt{B}$)};
               \node[below = .5ex of L3, font = \scriptsize]{($\mathtt{C}$)};
               \node[below = .5ex of L4, font = \scriptsize]{($\mathtt{C}$)};
           \end{tikzpicture}
        \quad
        \begin{tikzpicture}[baseline = -.5ex, remember picture]
            \graph[grow right = 1.5cm, branch down = 6ex, nodes={mysmallnode, font = \scriptsize}]{
            {R0/{$0$}[draw, circle]}-!-{R1/{$1$}[draw, circle]}
            };
            \node[left = .5em of R0](Rl){};
            \node[right = .5em of R1](Rr){};
            \node[above left = 2ex and .0em of R0, inner sep = 1.5pt, font = \scriptsize, opacity = 1](Ra00){$a$};
            \node[above left = 2ex and .0em of R1, inner sep = .5pt, font = \scriptsize, opacity = 1](Ra11){$a$};
            \path (R0) edge [draw = white, opacity = 0] node[pos= 0.5, inner sep = 1.5pt, font = \scriptsize, opacity = 1](Ra01){$\comnf{a}$}(R1);
            \path (R1) edge [draw = white, opacity = 0, bend left = 25] node[pos= 0.5, inner sep = .5pt, font = \scriptsize, opacity = 1](Ra10){$\comnf{a}$}(R0);
            \node[above right = 2ex and .0em of R0, inner sep = 1.5pt, font = \scriptsize, opacity = 1](RI00){$\const{I}$};
            \node[above right = 2ex and .0em of R1, inner sep = 1.5pt, font = \scriptsize, opacity = 1](RI11){$\const{I}$};
            \path (R0) edge [draw = white, opacity = 0, bend left = 65] node[pos= 0.5, inner sep = .5pt, font = \scriptsize, opacity = 1](RI01){$\comnf{\const{I}}$}(R1);
            \graph[use existing nodes, edges={color=black, pos = .5, earrow}, edge quotes={fill=white, inner sep=1pt,font= \scriptsize}]{
            R0 --[bend left = 10] Ra00 ->[bend left = 10] R0;
            R1 --[bend left = 10] Ra11 ->[bend left = 10] R1;
            R0 -- Ra01 -> R1;
            R1 --[bend left = 15] Ra10 ->[bend left = 15] R0;
            R0 --[bend left = 10] RI00 ->[bend left = 10] R0;
            R1 --[bend left = 10] RI11 ->[bend left = 10] R1;
            R0 <-[bend left = 10] RI01 ->[bend left = 10] R1;
            Rl -> R0; R1 -> Rr;
            };
            \node[below = 3ex of R0]{};
        \end{tikzpicture}.
        \begin{tikzpicture}[remember picture, overlay]
            \path (L1.north) edge [homoarrow,->, bend left = 10] (R0.north west);
            \path (L2.south) edge [homoarrow,->, bend right = 10] (R1.south);
            \path (L3.north) edge [homoarrow,->, bend left = 15] (R0.north west);
            \path (L4.south) edge [homoarrow,->, bend right = 15] (R1.south);
        \end{tikzpicture}
    \end{align*}
    In contrast, in \textcolor{blue}{$\graph[2]_1$}, because $\mathtt{C} \not\in U_1$ (i.e., $\forall v \in [\automaton].\  \graph(v) \not\homo \graph[2]_{1}^{\Quo}$) holds, we have $\graph[2]_1^{\Quo} \not\models \automaton$.

    Consider the following $P$, which is the graph $G(w)$ with an equivalence relation $\const{I}$, its complement $\comnf{\const{I}}$, and $\set{U_{i}}_{i \in |G(w)|}$:
    \begin{align*}
        P & = \begin{tikzpicture}[baseline = -.5ex]
                  \graph[grow right = 1.5cm, branch down = 6ex, nodes={mysmallnode, font = \scriptsize}]{
                  {0/{$0$}[draw, circle]}-!-{1/{$1$}[draw, circle]}
                  };
                  \node[left = .5em of 0](l){};
                  \node[right = .5em of 1](r){};
                  \path (0) edge [draw = white, opacity = 0] node[pos= 0.5, inner sep = 1.5pt, font = \scriptsize, opacity = 1](a01){$\comnf{a}$}(1);
                  \node[above right = 2ex and .0em of 0, inner sep = 1.5pt, font = \scriptsize, opacity = 1](I00){$\const{I}$};
                  \node[above right = 2ex and .0em of 1, inner sep = 1.5pt, font = \scriptsize, opacity = 1](I11){$\const{I}$};
                  \path (0) edge [draw = white, opacity = 0, bend left = 65] node[pos= 0.5, inner sep = .5pt, font = \scriptsize, opacity = 1](I01){$\comnf{\const{I}}$}(1);
                  \graph[use existing nodes, edges={color=black, pos = .5, earrow}, edge quotes={fill=white, inner sep=1pt,font= \scriptsize}]{
                  0 -- a01 -> 1;
                  0 --[bend left = 10] I00 ->[bend left = 10] 0;
                  1 --[bend left = 10] I11 ->[bend left = 10] 1;
                  0 <-[bend left = 10] I01 ->[bend left = 10] 1;
                  l -> 0; 1 -> r;
                  };
                  \node[below = .5ex of 0, font = \scriptsize]{$\mathtt{A}$};
                  \node[below = .5ex of 1, font = \scriptsize]{$\mathtt{B}$};
              \end{tikzpicture}.
    \end{align*}
    Using only the data of $P$, we can show the existence of a saturation of $G(w)$.
    For each pair $\tuple{i, j}$ of vertices,
    if ($\delta_{a}^{\automaton[1]}(U_{i}) \subseteq U_{j}$ and $\delta_{\widebreve{a}}^{\automaton[1]}(U_{j}) \subseteq U_{i}$) or
    ($\delta_{\comnf{a}}^{\automaton[1]}(U_{i}) \subseteq U_{j}$ and $\delta_{\widebreve{\comnf{a}}}^{\automaton[1]}(U_{j}) \subseteq U_{i}$) holds (cf.\ \ref{definition: Hintikka: saturate} in \Cref{definition: Hintikka}),
    then we add either the $a$- or $\comnf{a}$-labeled edge, according to this condition;
    for example, for the pair $\tuple{1, 0}$,
    since $\delta_{a}^{\automaton[1]}(U_1) = \emptyset \subseteq U_0$ and $\delta_{\widebreve{a}}^{\automaton[1]}(U_0) = \emptyset \subseteq U_1$ hold,
    we add the edge for $\tuple{1, 0} \in a^{\graph[2]}$ (we cannot add the edge for $\tuple{1, 0} \in \comnf{a}^{\graph[2]}$ because $\delta_{\comnf{a}}^{\automaton[1]}(U_1) = \set{\mathtt{C}} \not\subseteq U_0$).
    Note that $\set{U_{i}}_{x \in |G(w)|}$ is invariant when we add edges in this strategy.
    By adding such edges as much as possible, we can give a saturation of $G(w)$ from $P$ preserving $\set{U_{i}}_{x \in |G(w)|}$ (cf.\ \Cref{lemma: Hintikka saturate}); finally, \textcolor{blue}{$H_1$} is obtained, as follows (from the left to the right):\\
    \scalebox{.85}{\begin{tikzpicture}[baseline = -.5ex]
            \graph[grow right = 1.5cm, branch down = 6ex, nodes={mysmallnode, font = \scriptsize}]{
            {0/{$0$}[draw, circle]}-!-{1/{$1$}[draw, circle]}
            };
            \node[left = .5em of 0](l){};
            \node[right = .5em of 1](r){};
            \path (0) edge [draw = white, opacity = 0] node[pos= 0.5, inner sep = 1.5pt, font = \scriptsize, opacity = 1](a01){$\comnf{a}$}(1);
            \node[above right = 2ex and .0em of 0, inner sep = 1.5pt, font = \scriptsize, opacity = 1](I00){$\const{I}$};
            \node[above right = 2ex and .0em of 1, inner sep = 1.5pt, font = \scriptsize, opacity = 1](I11){$\const{I}$};
            \path (0) edge [draw = white, opacity = 0, bend left = 65] node[pos= 0.5, inner sep = .5pt, font = \scriptsize, opacity = 1](I01){$\comnf{\const{I}}$}(1);
            \graph[use existing nodes, edges={color=black, pos = .5, earrow}, edge quotes={fill=white, inner sep=1pt,font= \scriptsize}]{
            0 -- a01 -> 1;
            0 --[bend left = 10] I00 ->[bend left = 10] 0;
            1 --[bend left = 10] I11 ->[bend left = 10] 1;
            0 <-[bend left = 10] I01 ->[bend left = 10] 1;
            l -> 0; 1 -> r;
            };
            \node[below = .5ex of 0, font = \scriptsize]{($\mathtt{A}$)};
            \node[below = .5ex of 1, font = \scriptsize]{($\mathtt{B}$)};
        \end{tikzpicture} \hspace{-1em}
        \begin{tikzpicture}[baseline = -.5ex]
            \graph[grow right = 1.5cm, branch down = 6ex, nodes={mysmallnode, font = \scriptsize}]{
            {0/{$0$}[draw, circle]}-!-{1/{$1$}[draw, circle]}
            };
            \node[left = .5em of 0](l){};
            \node[right = .5em of 1](r){};
            \path (0) edge [draw = white, opacity = 0] node[pos= 0.5, inner sep = 1.5pt, font = \scriptsize, opacity = 1](a01){$\comnf{a}$}(1);
            \path (1) edge [draw = white, opacity = 0, bend left = 25] node[pos= 0.5, inner sep = .5pt, font = \scriptsize, opacity = 1](a10){$a$}(0);
            \node[above right = 2ex and .0em of 0, inner sep = 1.5pt, font = \scriptsize, opacity = 1](I00){$\const{I}$};
            \node[above right = 2ex and .0em of 1, inner sep = 1.5pt, font = \scriptsize, opacity = 1](I11){$\const{I}$};
            \path (0) edge [draw = white, opacity = 0, bend left = 65] node[pos= 0.5, inner sep = .5pt, font = \scriptsize, opacity = 1](I01){$\comnf{\const{I}}$}(1);
            \graph[use existing nodes, edges={color=black, pos = .5, earrow}, edge quotes={fill=white, inner sep=1pt,font= \scriptsize}]{
            0 -- a01 -> 1;
            1 --[bend left = 15, line width = 1.2pt] a10 ->[bend left = 15, line width = 1.2pt] 0;
            0 --[bend left = 10] I00 ->[bend left = 10] 0;
            1 --[bend left = 10] I11 ->[bend left = 10] 1;
            0 <-[bend left = 10] I01 ->[bend left = 10] 1;
            l -> 0; 1 -> r;
            };
            \node[below = .5ex of 0, font = \scriptsize]{($\mathtt{A}$)};
            \node[below = .5ex of 1, font = \scriptsize]{($\mathtt{B}$)};
        \end{tikzpicture} \hspace{-1em}
        \begin{tikzpicture}[baseline = -.5ex]
            \graph[grow right = 1.5cm, branch down = 6ex, nodes={mysmallnode, font = \scriptsize}]{
            {0/{$0$}[draw, circle]}-!-{1/{$1$}[draw, circle]}
            };
            \node[left = .5em of 0](l){};
            \node[right = .5em of 1](r){};
            \node[above left = 2ex and .0em of 0, inner sep = 1.5pt, font = \scriptsize, opacity = 1](a00){$a$};
            \path (0) edge [draw = white, opacity = 0] node[pos= 0.5, inner sep = 1.5pt, font = \scriptsize, opacity = 1](a01){$\comnf{a}$}(1);
            \path (1) edge [draw = white, opacity = 0, bend left = 25] node[pos= 0.5, inner sep = .5pt, font = \scriptsize, opacity = 1](a10){$a$}(0);
            \node[above right = 2ex and .0em of 0, inner sep = 1.5pt, font = \scriptsize, opacity = 1](I00){$\const{I}$};
            \node[above right = 2ex and .0em of 1, inner sep = 1.5pt, font = \scriptsize, opacity = 1](I11){$\const{I}$};
            \path (0) edge [draw = white, opacity = 0, bend left = 65] node[pos= 0.5, inner sep = .5pt, font = \scriptsize, opacity = 1](I01){$\comnf{\const{I}}$}(1);
            \graph[use existing nodes, edges={color=black, pos = .5, earrow}, edge quotes={fill=white, inner sep=1pt,font= \scriptsize}]{
            0 --[bend left = 10, line width = 1.2pt] a00 ->[bend left = 10, line width = 1.2pt] 0;
            0 -- a01 -> 1;
            1 --[bend left = 15] a10 ->[bend left = 15] 0;
            0 --[bend left = 10] I00 ->[bend left = 10] 0;
            1 --[bend left = 10] I11 ->[bend left = 10] 1;
            0 <-[bend left = 10] I01 ->[bend left = 10] 1;
            l -> 0; 1 -> r;
            };
            \node[below = .5ex of 0, font = \scriptsize]{($\mathtt{A}$)};
            \node[below = .5ex of 1, font = \scriptsize]{($\mathtt{B}$)};
        \end{tikzpicture} \hspace{-1em}
        \begin{tikzpicture}[baseline = -.5ex]
            \graph[grow right = 1.5cm, branch down = 6ex, nodes={mysmallnode, font = \scriptsize}]{
            {0/{$0$}[draw, circle]}-!-{1/{$1$}[draw, circle]}
            };
            \node[left = .5em of 0](l){};
            \node[right = .5em of 1](r){};
            \node[above left = 2ex and .0em of 0, inner sep = 1.5pt, font = \scriptsize, opacity = 1](a00){$a$};
            \node[above left = 2ex and .0em of 1, inner sep = .5pt, font = \scriptsize, opacity = 1](a11){$a$};
            \path (0) edge [draw = white, opacity = 0] node[pos= 0.5, inner sep = 1.5pt, font = \scriptsize, opacity = 1](a01){$\comnf{a}$}(1);
            \path (1) edge [draw = white, opacity = 0, bend left = 25] node[pos= 0.5, inner sep = .5pt, font = \scriptsize, opacity = 1](a10){$a$}(0);
            \node[above right = 2ex and .0em of 0, inner sep = 1.5pt, font = \scriptsize, opacity = 1](I00){$\const{I}$};
            \node[above right = 2ex and .0em of 1, inner sep = 1.5pt, font = \scriptsize, opacity = 1](I11){$\const{I}$};
            \path (0) edge [draw = white, opacity = 0, bend left = 65] node[pos= 0.5, inner sep = .5pt, font = \scriptsize, opacity = 1](I01){$\comnf{\const{I}}$}(1);
            \graph[use existing nodes, edges={color=black, pos = .5, earrow}, edge quotes={fill=white, inner sep=1pt,font= \scriptsize}]{
            0 --[bend left = 10] a00 ->[bend left = 10] 0;
            1 --[bend left = 10, line width = 1.2pt] a11 ->[bend left = 10, line width = 1.2pt] 1;
            0 -- a01 -> 1;
            1 --[bend left = 15] a10 ->[bend left = 15] 0;
            0 --[bend left = 10] I00 ->[bend left = 10] 0;
            1 --[bend left = 10] I11 ->[bend left = 10] 1;
            0 <-[bend left = 10] I01 ->[bend left = 10] 1;
            l -> 0; 1 -> r;
            };
            \node[below = .5ex of 0, font = \scriptsize]{($\mathtt{A}$)};
            \node[below = .5ex of 1, font = \scriptsize]{($\mathtt{B}$)};
        \end{tikzpicture}.
    }
\end{example}
\begin{example}[another example with $\bl^{\smile}$ and $\comnf{\const{I}}$]\label{example: saturable path 2}
    Let $\SIG = \set{a}$.
    Let $\automaton[1]$ be the NFA obtained from $\comnf{\const{I}} \cdot (\widebreve{a} \cdot \widebreve{\comnf{a}})^*$ and $w = a \comnf{a}$:
    \begin{align*}
        \automaton[1] & = \scalebox{.85}{\begin{tikzpicture}[baseline = -.5ex]
                                                 \graph[grow right = 1.cm, branch down = 2.5ex, nodes={mysmallnode, font = \scriptsize}]{
                                                 {1/{$\mathtt{A}$}[draw, circle]}
                                                 -!- {2/{$\mathtt{B}$}[draw, circle]}
                                                 -!- {3/{$\mathtt{C}$}[draw, circle]}
                                                 -!- {4/{$\mathtt{D}$}[draw, circle]}
                                                 };
                                                 \node[left = 4pt of 1](l){} edge[earrow, ->] (1);
                                                 \node[right = 4pt of 4](r){}; \path (4) edge[earrow, ->] (r);
                                                 \path (1) edge [draw = white, opacity = 0] node[pos= 0.5, inner sep = .5pt, font = \scriptsize, opacity = 1](tau1){$\comnf{\const{I}}$}(2);
                                                 \path (2) edge [draw = white, opacity = 0] node[pos= 0.5, inner sep = 1.pt, font = \scriptsize, opacity = 1](tau2){$\widebreve{a}$}(3);
                                                 \path (3) edge [draw = white, opacity = 0] node[pos= 0.5, inner sep = 1.pt, font = \scriptsize, opacity = 1](tau3){$\widebreve{\comnf{a}}$}(4);
                                                 \path (2) edge [draw = white, opacity = 0, bend left = 25] node[pos= 0.5, inner sep = 1.pt, font = \scriptsize, opacity = 1](g){$\empt$}(4);
                                                 \path (4) edge [draw = white, opacity = 0, bend left = 25] node[pos= 0.5, inner sep = 1.pt, font = \scriptsize, opacity = 1](b){$\empt$}(2);
                                                 \graph[use existing nodes, edges={color=black, pos = .5, earrow}, edge quotes={fill=white, inner sep=1pt,font= \scriptsize}]{
                                                 1 -- {tau1} -> 2;
                                                 2 -- {tau2} -> 3;
                                                 3 -- {tau3} -> 4;
                                                 2 --[bend left = 15] {g} ->[bend left = 15] 4;
                                                 4 --[bend left = 15] {b} ->[bend left = 15] 2;
                                                 };
                                             \end{tikzpicture}} &
        G(w)          & = \scalebox{.85}{\begin{tikzpicture}[baseline = -.5ex]
                                                 \graph[grow right = 1.cm, branch down = 2.5ex, nodes={mysmallnode, font = \scriptsize}]{
                                                 {1/{$0$}[draw, circle]}
                                                 -!- {2/{$1$}[draw, circle]}
                                                 -!- {3/{$2$}[draw, circle]}
                                                 };
                                                 \node[left = 4pt of 1](l){} edge[earrow, ->] (1);
                                                 \node[right = 4pt of 3](r){}; \path (3) edge[earrow, ->] (r);
                                                 \path (1) edge [draw = white, opacity = 0] node[pos= 0.5, inner sep = 1.pt, font = \scriptsize, opacity = 1](tau1){$a$}(2);
                                                 \path (2) edge [draw = white, opacity = 0] node[pos= 0.5, inner sep = 1.pt, font = \scriptsize, opacity = 1](tau2){$\comnf{a}$}(3);
                                                 \graph[use existing nodes, edges={color=black, pos = .5, earrow}, edge quotes={fill=white, inner sep=1pt,font= \scriptsize}]{
                                                 1 -- {tau1} -> 2;
                                                 2 -- {tau2} -> 3;
                                                 };
                                             \end{tikzpicture}}.
    \end{align*}
    Let us consider constructing a saturation $\graph[2]$ of $\graph(w)$ such that $\graph[2]^{\Quo} \not\models \automaton[1]$.
    Then, the following \textcolor{blue}{$\graph[2]$} is a solution (note that $\mathtt{D} \not\in U_2$; thus, $\graph[2]^{\Quo} \not\models \automaton$)
    and the following $P$ is the corresponding saturable path:
    \begin{align*}
        \textcolor{blue}{\graph[2]} & = \hspace{-.4em} \begin{tikzpicture}[baseline = -.5ex]
                                                           \graph[grow right = 1.2cm, branch down = 6ex, nodes={mysmallnode, font = \scriptsize}]{
                                                           {0/{$0$}[draw, circle]}-!-{1/{$1$}[draw, circle]}-!-{2/{$2$}[draw, circle]}
                                                           };
                                                           \node[left = .5em of 0](l){};
                                                           \node[right = .5em of 2](r){};
                                                           \node[above left = 2ex and .0em of 0, inner sep = 1.5pt, font = \scriptsize, opacity = 1](a00){$a$};
                                                           \node[above left = 2ex and .0em of 1, inner sep = .5pt, font = \scriptsize, opacity = 1](a11){$\comnf{a}$};
                                                           \node[above left = 2ex and .0em of 2, inner sep = 1.5pt, font = \scriptsize, opacity = 1](a22){$a$};
                                                           \path (0) edge [draw = white, opacity = 0] node[pos= 0.5, inner sep = 1.5pt, font = \scriptsize, opacity = 1](a01){$a$}(1);
                                                           \path (1) edge [draw = white, opacity = 0, bend left = 25] node[pos= 0.5, inner sep = .5pt, font = \scriptsize, opacity = 1](a10){$\comnf{a}$}(0);
                                                           \path (1) edge [draw = white, opacity = 0] node[pos= 0.5, inner sep = .5pt, font = \scriptsize, opacity = 1](a12){$\comnf{a}$}(2);
                                                           \path (2) edge [draw = white, opacity = 0, bend left = 25] node[pos= 0.5, inner sep = 1.5pt, font = \scriptsize, opacity = 1](a21){$a$}(1);
                                                           \path (0) edge [draw = white, opacity = 0, bend right = 30] node[pos= 0.4, inner sep = 1.5pt, font = \scriptsize, opacity = 1](a02){$a$}(2);
                                                           \node[above right = 2ex and .0em of 0, inner sep = 1.5pt, font = \scriptsize, opacity = 1](I00){$\const{I}$};
                                                           \node[above right = 2ex and .0em of 1, inner sep = 1.5pt, font = \scriptsize, opacity = 1](I11){$\const{I}$};
                                                           \node[above right = 2ex and .0em of 2, inner sep = 1.5pt, font = \scriptsize, opacity = 1](I22){$\const{I}$};
                                                           \path (0) edge [draw = white, opacity = 0, bend left = 65] node[pos= 0.5, inner sep = .5pt, font = \scriptsize, opacity = 1](I01){$\comnf{\const{I}}$}(1);
                                                           \path (1) edge [draw = white, opacity = 0, bend left = 65] node[pos= 0.5, inner sep = .5pt, font = \scriptsize, opacity = 1](I12){$\comnf{\const{I}}$}(2);
                                                           \path (0) edge [draw = white, opacity = 0, bend right = 45] node[pos= 0.6, inner sep = 1.5pt, font = \scriptsize, opacity = 1](I02){$\const{I}$}(2);
                                                           \graph[use existing nodes, edges={color=black, pos = .5, earrow}, edge quotes={fill=white, inner sep=1pt,font= \scriptsize}]{
                                                           0 --[bend left = 10] a00 ->[bend left = 10] 0;
                                                           1 --[bend left = 10] a11 ->[bend left = 10] 1;
                                                           2 --[bend left = 10] a22 ->[bend left = 10] 2;
                                                           0 --[] a01 ->[] 1;
                                                           1 --[bend left = 15] a10 ->[bend left = 15] 0;
                                                           1 --[] a12 ->[] 2;
                                                           2 --[bend left = 15] a21 ->[bend left = 15] 1;
                                                           0.south east <-[bend right = 10] a02 ->[bend right = 10] 2.south west;
                                                           0 --[bend left = 10] I00 ->[bend left = 10] 0;
                                                           1 --[bend left = 10] I11 ->[bend left = 10] 1;
                                                           2 --[bend left = 10] I22 ->[bend left = 10] 2;
                                                           0 <-[bend left = 10] I01 ->[bend left = 10] 1;
                                                           1 <-[bend left = 10] I12 ->[bend left = 10] 2;
                                                           0.south <-[bend right = 10] I02 ->[bend right = 10] 2.south;
                                                           l -> 0; 2 -> r;
                                                           };
                                                           \node[below = 2.5ex of 0,  font = \scriptsize]{($\mathtt{A,C}$)};
                                                           \node[below = 2.5ex of 1,  font = \scriptsize]{($\mathtt{B,D}$)};
                                                           \node[below = 2.5ex of 2,  font = \scriptsize]{($\mathtt{A,C}$)};
                                                       \end{tikzpicture}  &
        P                           & = \hspace{-.4em}  \begin{tikzpicture}[baseline = -.5ex]
                                                            \graph[grow right = 1.2cm, branch down = 6ex, nodes={mysmallnode, font = \scriptsize}]{
                                                            {0/{$0$}[draw, circle]}-!-{1/{$1$}[draw, circle]}-!-{2/{$2$}[draw, circle]}
                                                            };
                                                            \node[left = .5em of 0](l){};
                                                            \node[right = .5em of 2](r){};
                                                            \path (0) edge [draw = white, opacity = 0] node[pos= 0.5,  inner sep = 1.5pt, font = \scriptsize, opacity = 1](a01){$a$}(1);
                                                            \path (1) edge [draw = white, opacity = 0] node[pos= 0.5,  inner sep = .5pt, font = \scriptsize, opacity = 1](a12){$\comnf{a}$}(2);
                                                            \node[above right = 2ex and .0em of 0, inner sep = 1.5pt, font = \scriptsize, opacity = 1](I00){$\const{I}$};
                                                            \node[above right = 2ex and .0em of 1, inner sep = 1.5pt, font = \scriptsize, opacity = 1](I11){$\const{I}$};
                                                            \node[above right = 2ex and .0em of 2, inner sep = 1.5pt, font = \scriptsize, opacity = 1](I22){$\const{I}$};
                                                            \path (0) edge [draw = white, opacity = 0, bend left = 65] node[pos= 0.5,  inner sep = .5pt, font = \scriptsize, opacity = 1](I01){$\comnf{\const{I}}$}(1);
                                                            \path (1) edge [draw = white, opacity = 0, bend left = 65] node[pos= 0.5,  inner sep = .5pt, font = \scriptsize, opacity = 1](I12){$\comnf{\const{I}}$}(2);
                                                            \path (0) edge [draw = white, opacity = 0, bend right = 45] node[pos= 0.6,  inner sep = 1.5pt, font = \scriptsize, opacity = 1](I02){$\const{I}$}(2);
                                                            \graph[use existing nodes, edges={color=black, pos = .5, earrow}, edge quotes={fill=white, inner sep=1pt,font= \scriptsize}]{
                                                            0 --[] a01 ->[] 1;
                                                            1 --[] a12 ->[] 2;
                                                            0 --[bend left = 10] I00 ->[bend left = 10] 0;
                                                            1 --[bend left = 10] I11 ->[bend left = 10] 1;
                                                            2 --[bend left = 10] I22 ->[bend left = 10] 2;
                                                            0 <-[bend left = 10] I01 ->[bend left = 10] 1;
                                                            1 <-[bend left = 10] I12 ->[bend left = 10] 2;
                                                            0.south <-[bend right = 10] I02 ->[bend right = 10] 2.south;
                                                            l -> 0; 2 -> r;
                                                            };
                                                            \node[below = 2.5ex of 0,  font = \scriptsize]{$\mathtt{A,C}$};
                                                            \node[below = 2.5ex of 1,  font = \scriptsize]{$\mathtt{B,D}$};
                                                            \node[below = 2.5ex of 2,  font = \scriptsize]{$\mathtt{A,C}$};
                                                        \end{tikzpicture}.
    \end{align*}
\end{example}
(See \Cref{remark: full KACC} for an example when there is no saturation.)

Inspired by $\set{U_{i}}_{i \in |G(w)|}$ above, for characterizing the saturability problem, we define \emph{saturable paths};
they are path graphs with additional data (an equivalence relation $\const{I}$, its complement $\comnf{\const{I}}$, and $\set{U_{i}}_{i \in |G(w)|}$) for taking saturations appropriately.
For graphs $\graph[1], \graph[2]$ over $\SIG_{\const{I}}^{(-)}$,
we say that $\graph[2]$ is an \emph{$\const{I}$-saturation} of $\graph[1]$ if
$\graph[2]$ is an edge-extension of $\graph[1]$ such that
\begin{itemize}
    \item for every $a \in \SIG^{(-)}$, $a^{\graph[2]} = a^{\graph[1]}$;
    \item $\const{I}^{\graph[2]}$ is an equivalence relation and $\comnf{\const{I}}^{\graph[2]} = |\graph[2]|^2 \setminus \const{I}^{\graph[2]}$;
    \item $\graph[2]$ is consistent.
\end{itemize}
(By definition, $\graph[2]$ is uniquely determined from $\const{I}^{\graph[2]}$, if exists.)

\begin{definition}[saturable path]\label{definition: Hintikka}
    For an NFA $\automaton$ over $\AutSIG$ and a word $w$ over $\WordSIG$,
    consider a pair $P = \tuple{G, \set{U_{i}}_{i \in |G|}}$ of
    \begin{itemize}
        \item $G$ an $\const{I}$-saturation of $G(w)$;
        \item $U_i \subseteq |\automaton|$ for each $i \in |\graph|$.
    \end{itemize}
    For $a \in \SIG_{\const{I}}^{(-, \smile)}$, let
    \[\Con_{a}^{\automaton}(U, U')  \defeq (\delta_{a}^{\automaton}(U) \subseteq U' \land \delta_{\widebreve{a}}^{\automaton}(U') \subseteq U).\]
    Then we say that $P$ is a \emph{saturable path} for $\not\models_{\REL} w \le \automaton$ if the following three hold:
    \begin{description}
        \item[(P-s-t)\label{definition: Hintikka: u src tgt}] $\lv^{\automaton} \in U_{\lv^{G}}$ and $\rv^{\automaton} \not\in U_{\rv^{G}}$;
        \item[(P-Con)\label{definition: Hintikka: consistent}] for all $a \in \SIG_{\const{I}}^{(-)}$ and $\tuple{i, j} \in a^{G}$, $\Con_{a}^{\automaton}(U_{i}, U_{j})$;
        \item[(P-Sat)\label{definition: Hintikka: saturate}]  for all $a \in \SIG_{\const{I}}$ and $\tuple{i, j} \in |G|^2$,\\
            $\Con_{a}^{\automaton}(U_{i}, U_{j}) \lor \Con_{\comnf{a}}^{\automaton}(U_{i}, U_{j})$.
    \end{description}
\end{definition}
Saturable paths can characterize the saturability problem, as \Cref{lemma: Hintikka}.
We first show the following:
\begin{lemma}\label{lemma: Hintikka saturate}
    For every saturable path $P = \tuple{G, \set{U_{i}}_{i \in |G|}}$ for $\not\models_{\REL} w \le \automaton$,
    there is a saturation $H$ of $G$ such that
    \begin{description}
        \item[(P-Con')\label{definition: Hintikka: consistent 2}] for all $a \in \SIG_{\const{I}}^{(-)}$ and $\tuple{i, j} \in a^{\graph[2]}$, $\Con_{a}^{\automaton}(U_{i}, U_{j})$.
    \end{description}
\end{lemma}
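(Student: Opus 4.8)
The plan is to obtain $H$ by filling in, for each term variable $a \in \SIG$, exactly one of the $a$- or $\comnf{a}$-labeled edge on every ordered pair of vertices, while leaving $\const{I}$ and $\comnf{\const{I}}$ as in $G$ (these are already fine: $\const{I}^{G}$ is an equivalence relation and $\comnf{\const{I}}^{G} = |G|^{2}\setminus\const{I}^{G}$, since $G$ is an $\const{I}$-saturation of $\graph(w)$). The filling rule is chosen so that the assignment is invariant under the equivalence $\eqcl{(\const{I}^{G})} = \const{I}^{G}$: for a pair $\tuple{i,j}$ and $a \in \SIG$, put it into $a^{H}$ if $\tuple{i,j} \in \const{I}^{G}\cdot a^{G}\cdot\const{I}^{G}$, into $\comnf{a}^{H}$ if $\tuple{i,j} \in \const{I}^{G}\cdot\comnf{a}^{G}\cdot\const{I}^{G}$, and, when neither holds, use \ref{definition: Hintikka: saturate} --- which gives $\Con_{a}^{\automaton}(U_{i},U_{j}) \lor \Con_{\comnf{a}}^{\automaton}(U_{i},U_{j})$ --- to add the $a$-edge if $\Con_{a}^{\automaton}(U_{i},U_{j})$ and the $\comnf{a}$-edge otherwise. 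The first two clauses never conflict, because $\const{I}^{G}\cdot a^{G}\cdot\const{I}^{G}$ and $\const{I}^{G}\cdot\comnf{a}^{G}\cdot\const{I}^{G}$ are disjoint by the $a$-consistency \ref{definition: consistent: consistent} of $G$.

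The key fact making everything fit together is: $\tuple{i,j} \in \const{I}^{G}$ implies $U_{i} = U_{j}$. Indeed \ref{definition: Hintikka: consistent} gives $\Con_{\const{I}}^{\automaton}(U_{i},U_{j})$, i.e.\ $\delta_{\const{I}}^{\automaton}(U_{i}) \subseteq U_{j}$ and $\delta_{\widebreve{\const{I}}}^{\automaton}(U_{j}) \subseteq U_{i}$; since $\widebreve{\const{I}} = \const{I}$ and the $\const{I}$-transition closure $\delta_{\const{I}}^{\automaton} = (\const{I}^{\automaton})^{*}$ is reflexive, both $U_{i} \subseteq U_{j}$ and $U_{j} \subseteq U_{i}$ follow. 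Hence $\Con_{a}^{\automaton}(U_{i},U_{j}) = \Con_{a}^{\automaton}(U_{i'},U_{j'})$ whenever $\tuple{i,i'},\tuple{j,j'} \in \const{I}^{G}$, so the third clause of the rule is well-defined and $\const{I}$-invariant; and every edge the rule places in $a^{H}$ (respectively $\comnf{a}^{H}$) satisfies $\Con_{a}^{\automaton}$ (respectively $\Con_{\comnf{a}}^{\automaton}$) on its endpoints --- for the first two clauses because, by \ref{definition: Hintikka: consistent}, the relevant $\Con$ holds on some $\const{I}$-equivalent pair and is transported by $U_{i}=U_{j}$, and for the third by construction. In particular \ref{definition: Hintikka: consistent 2} holds.

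It remains to check that $H$ is a saturation of $G$. It is an edge-extension of $G$ by construction; it is $a$-saturated \ref{definition: consistent: saturated} for every $a \in \SIG_{\const{I}}$ (every ordered pair receives an $a$- or $\comnf{a}$-edge for $a \in \SIG$, and $\const{I}^{H}\cup\comnf{\const{I}}^{H} = |G|^{2}$); and $\const{I}^{H} = \const{I}^{G}$ is an equivalence relation \ref{definition: consistent: I}. The crux is $a$-consistency \ref{definition: consistent: consistent} of $H$: by design $a^{H}$ and $\comnf{a}^{H}$ are closed under $\const{I}^{G}$, so $\const{I}^{G}\cdot a^{H}\cdot\const{I}^{G} = a^{H}$ and $\const{I}^{G}\cdot\comnf{a}^{H}\cdot\const{I}^{G} = \comnf{a}^{H}$, and consistency reduces to $a^{H}\cap\comnf{a}^{H}=\emptyset$, which holds since the $G$-part is disjoint by $a$-consistency of $G$ and every freshly filled pair is put on exactly one side; the $\const{I}$-instance of consistency is inherited from $G$. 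I expect this consistency verification --- together with the need to engineer the $\const{I}$-invariance of the filling rule that makes it go through --- to be the main obstacle; the remaining points are routine.
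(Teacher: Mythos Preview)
Your proof is correct and follows essentially the same approach as the paper: both use the key fact that $\tuple{i,j}\in\const{I}^{G}$ implies $U_{i}=U_{j}$ (via \ref{definition: Hintikka: consistent} and reflexivity of $\delta_{\const{I}}^{\automaton}$) to propagate $\Con_{a}^{\automaton}$ along $\const{I}$-equivalence, and both invoke \ref{definition: Hintikka: saturate} to fill the remaining pairs. The only difference is presentational: the paper builds $H$ iteratively (repeatedly taking $\const{I}$-closure and then filling a gap), whereas you give a closed-form definition of $a^{H}$ in one shot and verify $\const{I}$-invariance directly; your version is arguably cleaner, but the content is the same.
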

\begin{proof}
    Starting from $\graph[2] = G$, we add edges labeled with $a \in \SIG^{(-)}$ while preserving \ref{definition: Hintikka: consistent 2}, by repeating the following:
    \begin{itemize}
        \item If $\tuple{i, j} \in (\const{I}^{\graph[2]} \cdot a^{\graph[2]} \cdot \const{I}^{\graph[2]}) \setminus a^{\graph[2]}$ for some $a \in \SIG^{(-)}$ and $i, j \in |\graph[2]|$,
              we add the edge to $\graph[2]$.
              Then, $\Con_{a}^{\automaton[1]}(U_{i}, U_{j})$ holds as follows.
              Let $i', j'$ be s.t.\  $\tuple{i, i'} \in \const{I}^{\graph[2]}$, $\tuple{i', j'} \in a^{\graph[2]}$, and $\tuple{j', j} \in \const{I}^{\graph[2]}$.
              For every $z \in |\automaton|$,
              we have
              \begin{align*}
                  z \in U_{i'} & \Longrightarrow z \in \delta_{\const{I}}^{\automaton}(U_{i}) \tag{$U_{i'} \subseteq \delta_{\const{I}}^{\automaton}(U_{i})$ \ref{definition: Hintikka: consistent 2}} \\
                               & \Longrightarrow z \in U_i \tag{$\delta_{\const{I}}^{\automaton} = (\const{I}^{\automaton})^*$ is reflexive}                                                           \\
                               & \dots \Longrightarrow z \in U_{i'}. \tag*{($\const{I}^{\graph[2]}$ is symmetric)}
              \end{align*}
              Thus $U_{i} = U_{i'}$.
              In the same way, $U_{j} = U_{j'}$.
              Hence $\Con_{a}^{\automaton[1]}(U_{i}, U_{j})$ is derived from $\Con_{a}^{\automaton[1]}(U_{i'}, U_{j'})$.
        \item Otherwise, since $\graph[2]$ is not edge-saturated, $\tuple{i, j} \not\in a^{\graph[2]} \cup \comnf{a}^{\graph[2]}$ for some $a \in \SIG$ and $i, j \in |\graph[2]|$.
              If $\Con_{a}^{\automaton[1]}(U_{i}, U_{j})$ holds, then we add the edge for $\tuple{i, j} \in a^{\graph[2]}$ to $\graph[2]$.
              Otherwise, since $\Con_{\comnf{a}}^{\automaton[1]}(U_{i}, U_{j})$ holds by \ref{definition: Hintikka: saturate},
              we add the edge for $\tuple{i, j} \in \comnf{a}^{\graph[2]}$ to $\graph[2]$.
    \end{itemize}
    Then $\graph[2]$ is a saturation of $G$, as follows.
    For \ref{definition: consistent: I}:
    Because $G$ is an $\const{I}$-saturation.
    For \ref{definition: consistent: saturated}: Clear.
    For \ref{definition: consistent: consistent}:
    By that $G$ is consistent and $(\const{I}^{\graph[2]} \cdot a^{\graph[2]} \cdot \const{I}^{\graph[2]}) \cap (\const{I}^{\graph[2]} \cdot \comnf{a}^{\graph[2]} \cdot \const{I}^{\graph[2]}) = \emptyset$ is preserved (by the construction of $\graph[2]$).
\end{proof}
\begin{example}[of \Cref{lemma: Hintikka saturate}]\label{example: Hintikka saturate}
    Recall $\automaton$, $w$, and the saturable path $P = \tuple{G, \set{U_{i}}_{i \in |G|}}$ in \Cref{example: saturable path 2}.
    \begin{align*}
        G                                                                                                                                                                  & = \hspace{-.6em}
        \begin{tikzpicture}[baseline = -.5ex]
            \graph[grow right = 1.2cm, branch down = 6ex, nodes={mysmallnode, font = \scriptsize}]{
            {0/{$0$}[draw, circle]}-!-{1/{$1$}[draw, circle]}-!-{2/{$2$}[draw, circle]}
            };
            \node[left = .5em of 0](l){};
            \node[right = .5em of 2](r){};
            \path (0) edge [draw = white, opacity = 0] node[pos= 0.5, inner sep = 1.5pt, font = \scriptsize, opacity = 1](a01){$a$}(1);
            \path (1) edge [draw = white, opacity = 0] node[pos= 0.5, inner sep = .5pt, font = \scriptsize, opacity = 1](a12){$\comnf{a}$}(2);
            \node[above right = 2ex and .0em of 0, inner sep = 1.5pt, font = \scriptsize, opacity = 1](I00){$\const{I}$};
            \node[above right = 2ex and .0em of 1, inner sep = 1.5pt, font = \scriptsize, opacity = 1](I11){$\const{I}$};
            \node[above right = 2ex and .0em of 2, inner sep = 1.5pt, font = \scriptsize, opacity = 1](I22){$\const{I}$};
            \path (0) edge [draw = white, opacity = 0, bend left = 65] node[pos= 0.5, inner sep = .5pt, font = \scriptsize, opacity = 1](I01){$\comnf{\const{I}}$}(1);
            \path (1) edge [draw = white, opacity = 0, bend left = 65] node[pos= 0.5, inner sep = .5pt, font = \scriptsize, opacity = 1](I12){$\comnf{\const{I}}$}(2);
            \path (0) edge [draw = white, opacity = 0, bend right = 45] node[pos= 0.6, inner sep = 1.5pt, font = \scriptsize, opacity = 1](I02){$\const{I}$}(2);
            \graph[use existing nodes, edges={color=black, pos = .5, earrow}, edge quotes={fill=white, inner sep=1pt,font= \scriptsize}]{
            0 --[] a01 ->[] 1;
            1 --[] a12 ->[] 2;
            0 --[bend left = 10] I00 ->[bend left = 10] 0;
            1 --[bend left = 10] I11 ->[bend left = 10] 1;
            2 --[bend left = 10] I22 ->[bend left = 10] 2;
            0 <-[bend left = 10] I01 ->[bend left = 10] 1;
            1 <-[bend left = 10] I12 ->[bend left = 10] 2;
            0.south <-[bend right = 10] I02 ->[bend right = 10] 2.south;
            l -> 0; 2 -> r;
            };
        \end{tikzpicture} &
        G'                                                                                                                                                                 & = \hspace{-.6em} \begin{tikzpicture}[baseline = -.5ex]
                                                                                                                                                                                                  \graph[grow right = 1.2cm, branch down = 6ex, nodes={mysmallnode, font = \scriptsize}]{
                                                                                                                                                                                                  {0/{$0$}[draw, circle]}-!-{1/{$1$}[draw, circle]}-!-{2/{$2$}[draw, circle]}
                                                                                                                                                                                                  };
                                                                                                                                                                                                  \node[left = .5em of 0](l){};
                                                                                                                                                                                                  \node[right = .5em of 2](r){};
                                                                                                                                                                                                  \path (0) edge [draw = white, opacity = 0] node[pos= 0.5, inner sep = 1.5pt, font = \scriptsize, opacity = 1](a01){$a$}(1);
                                                                                                                                                                                                  \path (1) edge [draw = white, opacity = 0, bend left = 25] node[pos= 0.5, inner sep = .5pt, font = \scriptsize, opacity = 1](a10){$\comnf{a}$}(0);
                                                                                                                                                                                                  \path (1) edge [draw = white, opacity = 0] node[pos= 0.5, inner sep = .5pt, font = \scriptsize, opacity = 1](a12){$\comnf{a}$}(2);
                                                                                                                                                                                                  \path (2) edge [draw = white, opacity = 0, bend left = 25] node[pos= 0.5, inner sep = 1.5pt, font = \scriptsize, opacity = 1](a21){$a$}(1);
                                                                                                                                                                                                  \node[above right = 2ex and .0em of 0, inner sep = 1.5pt, font = \scriptsize, opacity = 1](I00){$\const{I}$};
                                                                                                                                                                                                  \node[above right = 2ex and .0em of 1, inner sep = 1.5pt, font = \scriptsize, opacity = 1](I11){$\const{I}$};
                                                                                                                                                                                                  \node[above right = 2ex and .0em of 2, inner sep = 1.5pt, font = \scriptsize, opacity = 1](I22){$\const{I}$};
                                                                                                                                                                                                  \path (0) edge [draw = white, opacity = 0, bend left = 65] node[pos= 0.5, inner sep = .5pt, font = \scriptsize, opacity = 1](I01){$\comnf{\const{I}}$}(1);
                                                                                                                                                                                                  \path (1) edge [draw = white, opacity = 0, bend left = 65] node[pos= 0.5, inner sep = .5pt, font = \scriptsize, opacity = 1](I12){$\comnf{\const{I}}$}(2);
                                                                                                                                                                                                  \path (0) edge [draw = white, opacity = 0, bend right = 45] node[pos= 0.6, inner sep = 1.5pt, font = \scriptsize, opacity = 1](I02){$\const{I}$}(2);
                                                                                                                                                                                                  \graph[use existing nodes, edges={color=black, pos = .5, earrow}, edge quotes={fill=white, inner sep=1pt,font= \scriptsize}]{
                                                                                                                                                                                                  0 --[] a01 ->[] 1;
                                                                                                                                                                                                  1 --[bend left = 15, line width = 1.2pt] a10 ->[bend left = 15, line width = 1.2pt] 0;
                                                                                                                                                                                                  1 --[] a12 ->[] 2;
                                                                                                                                                                                                  2 --[bend left = 15, line width = 1.2pt] a21 ->[bend left = 15, line width = 1.2pt] 1;
                                                                                                                                                                                                  0 --[bend left = 10] I00 ->[bend left = 10] 0;
                                                                                                                                                                                                  1 --[bend left = 10] I11 ->[bend left = 10] 1;
                                                                                                                                                                                                  2 --[bend left = 10] I22 ->[bend left = 10] 2;
                                                                                                                                                                                                  0 <-[bend left = 10] I01 ->[bend left = 10] 1;
                                                                                                                                                                                                  1 <-[bend left = 10] I12 ->[bend left = 10] 2;
                                                                                                                                                                                                  0.south <-[bend right = 10] I02 ->[bend right = 10] 2.south;
                                                                                                                                                                                                  l -> 0; 2 -> r;
                                                                                                                                                                                                  };
                                                                                                                                                                                              \end{tikzpicture}
        \\
        G''                                                                                                                                                                & = \hspace{-.6em} \begin{tikzpicture}[baseline = -.5ex]
                                                                                                                                                                                                  \graph[grow right = 1.2cm, branch down = 6ex, nodes={mysmallnode, font = \scriptsize}]{
                                                                                                                                                                                                  {0/{$0$}[draw, circle]}-!-{1/{$1$}[draw, circle]}-!-{2/{$2$}[draw, circle]}
                                                                                                                                                                                                  };
                                                                                                                                                                                                  \node[left = .5em of 0](l){};
                                                                                                                                                                                                  \node[right = .5em of 2](r){};
                                                                                                                                                                                                  \path (0) edge [draw = white, opacity = 0] node[pos= 0.5, inner sep = 1.5pt, font = \scriptsize, opacity = 1](a01){$a$}(1);
                                                                                                                                                                                                  \path (1) edge [draw = white, opacity = 0, bend left = 25] node[pos= 0.5, inner sep = .5pt, font = \scriptsize, opacity = 1](a10){$\comnf{a}$}(0);
                                                                                                                                                                                                  \path (1) edge [draw = white, opacity = 0] node[pos= 0.5, inner sep = .5pt, font = \scriptsize, opacity = 1](a12){$\comnf{a}$}(2);
                                                                                                                                                                                                  \path (2) edge [draw = white, opacity = 0, bend left = 25] node[pos= 0.5, inner sep = 1.5pt, font = \scriptsize, opacity = 1](a21){$a$}(1);
                                                                                                                                                                                                  \path (0) edge [draw = white, opacity = 0, bend right = 30] node[pos= 0.4, inner sep = 1.5pt, font = \scriptsize, opacity = 1](a02){$a$}(2);
                                                                                                                                                                                                  \node[above right = 2ex and .0em of 0, inner sep = 1.5pt, font = \scriptsize, opacity = 1](I00){$\const{I}$};
                                                                                                                                                                                                  \node[above right = 2ex and .0em of 1, inner sep = 1.5pt, font = \scriptsize, opacity = 1](I11){$\const{I}$};
                                                                                                                                                                                                  \node[above right = 2ex and .0em of 2, inner sep = 1.5pt, font = \scriptsize, opacity = 1](I22){$\const{I}$};
                                                                                                                                                                                                  \path (0) edge [draw = white, opacity = 0, bend left = 65] node[pos= 0.5, inner sep = .5pt, font = \scriptsize, opacity = 1](I01){$\comnf{\const{I}}$}(1);
                                                                                                                                                                                                  \path (1) edge [draw = white, opacity = 0, bend left = 65] node[pos= 0.5, inner sep = .5pt, font = \scriptsize, opacity = 1](I12){$\comnf{\const{I}}$}(2);
                                                                                                                                                                                                  \path (0) edge [draw = white, opacity = 0, bend right = 45] node[pos= 0.6, inner sep = 1.5pt, font = \scriptsize, opacity = 1](I02){$\const{I}$}(2);
                                                                                                                                                                                                  \graph[use existing nodes, edges={color=black, pos = .5, earrow}, edge quotes={fill=white, inner sep=1pt,font= \scriptsize}]{
                                                                                                                                                                                                  0 --[] a01 ->[] 1;
                                                                                                                                                                                                  1 --[bend left = 15] a10 ->[bend left = 15] 0;
                                                                                                                                                                                                  1 --[] a12 ->[] 2;
                                                                                                                                                                                                  2 --[bend left = 15] a21 ->[bend left = 15] 1;
                                                                                                                                                                                                  0.south east <-[bend right = 10, line width = 1.2pt] a02 ->[bend right = 10, line width = 1.2pt] 2.south west;
                                                                                                                                                                                                  0 --[bend left = 10] I00 ->[bend left = 10] 0;
                                                                                                                                                                                                  1 --[bend left = 10] I11 ->[bend left = 10] 1;
                                                                                                                                                                                                  2 --[bend left = 10] I22 ->[bend left = 10] 2;
                                                                                                                                                                                                  0 <-[bend left = 10] I01 ->[bend left = 10] 1;
                                                                                                                                                                                                  1 <-[bend left = 10] I12 ->[bend left = 10] 2;
                                                                                                                                                                                                  0.south <-[bend right = 10] I02 ->[bend right = 10] 2.south;
                                                                                                                                                                                                  l -> 0; 2 -> r;
                                                                                                                                                                                                  };
                                                                                                                                                                                              \end{tikzpicture}
                                                                                                                                                                           &
        \textcolor{blue}{H}                                                                                                                                                & = \hspace{-.6em} \begin{tikzpicture}[baseline = -.5ex]
                                                                                                                                                                                                  \graph[grow right = 1.2cm, branch down = 6ex, nodes={mysmallnode, font = \scriptsize}]{
                                                                                                                                                                                                  {0/{$0$}[draw, circle]}-!-{1/{$1$}[draw, circle]}-!-{2/{$2$}[draw, circle]}
                                                                                                                                                                                                  };
                                                                                                                                                                                                  \node[left = .5em of 0](l){};
                                                                                                                                                                                                  \node[right = .5em of 2](r){};
                                                                                                                                                                                                  \node[above left = 2ex and .0em of 0, inner sep = 1.5pt, font = \scriptsize, opacity = 1](a00){$a$};
                                                                                                                                                                                                  \node[above left = 2ex and .0em of 1, inner sep = .5pt, font = \scriptsize, opacity = 1](a11){$\comnf{a}$};
                                                                                                                                                                                                  \node[above left = 2ex and .0em of 2, inner sep = 1.5pt, font = \scriptsize, opacity = 1](a22){$a$};
                                                                                                                                                                                                  \path (0) edge [draw = white, opacity = 0] node[pos= 0.5, inner sep = 1.5pt, font = \scriptsize, opacity = 1](a01){$a$}(1);
                                                                                                                                                                                                  \path (1) edge [draw = white, opacity = 0, bend left = 25] node[pos= 0.5, inner sep = .5pt, font = \scriptsize, opacity = 1](a10){$\comnf{a}$}(0);
                                                                                                                                                                                                  \path (1) edge [draw = white, opacity = 0] node[pos= 0.5, inner sep = .5pt, font = \scriptsize, opacity = 1](a12){$\comnf{a}$}(2);
                                                                                                                                                                                                  \path (2) edge [draw = white, opacity = 0, bend left = 25] node[pos= 0.5, inner sep = 1.5pt, font = \scriptsize, opacity = 1](a21){$a$}(1);
                                                                                                                                                                                                  \path (0) edge [draw = white, opacity = 0, bend right = 30] node[pos= 0.4, inner sep = 1.5pt, font = \scriptsize, opacity = 1](a02){$a$}(2);
                                                                                                                                                                                                  \node[above right = 2ex and .0em of 0, inner sep = 1.5pt, font = \scriptsize, opacity = 1](I00){$\const{I}$};
                                                                                                                                                                                                  \node[above right = 2ex and .0em of 1, inner sep = 1.5pt, font = \scriptsize, opacity = 1](I11){$\const{I}$};
                                                                                                                                                                                                  \node[above right = 2ex and .0em of 2, inner sep = 1.5pt, font = \scriptsize, opacity = 1](I22){$\const{I}$};
                                                                                                                                                                                                  \path (0) edge [draw = white, opacity = 0, bend left = 65] node[pos= 0.5, inner sep = .5pt, font = \scriptsize, opacity = 1](I01){$\comnf{\const{I}}$}(1);
                                                                                                                                                                                                  \path (1) edge [draw = white, opacity = 0, bend left = 65] node[pos= 0.5, inner sep = .5pt, font = \scriptsize, opacity = 1](I12){$\comnf{\const{I}}$}(2);
                                                                                                                                                                                                  \path (0) edge [draw = white, opacity = 0, bend right = 45] node[pos= 0.6, inner sep = 1.5pt, font = \scriptsize, opacity = 1](I02){$\const{I}$}(2);
                                                                                                                                                                                                  \graph[use existing nodes, edges={color=black, pos = .5, earrow}, edge quotes={fill=white, inner sep=1pt,font= \scriptsize}]{
                                                                                                                                                                                                  0 --[bend left = 10, line width = 1.2pt] a00 ->[bend left = 10, line width = 1.2pt] 0;
                                                                                                                                                                                                  1 --[bend left = 10, line width = 1.2pt] a11 ->[bend left = 10, line width = 1.2pt] 1;
                                                                                                                                                                                                  2 --[bend left = 10, line width = 1.2pt] a22 ->[bend left = 10, line width = 1.2pt] 2;
                                                                                                                                                                                                  0 --[] a01 ->[] 1;
                                                                                                                                                                                                  1 --[bend left = 15] a10 ->[bend left = 15] 0;
                                                                                                                                                                                                  1 --[] a12 ->[] 2;
                                                                                                                                                                                                  2 --[bend left = 15] a21 ->[bend left = 15] 1;
                                                                                                                                                                                                  0.south east <-[bend right = 10] a02 ->[bend right = 10] 2.south west;
                                                                                                                                                                                                  0 --[bend left = 10] I00 ->[bend left = 10] 0;
                                                                                                                                                                                                  1 --[bend left = 10] I11 ->[bend left = 10] 1;
                                                                                                                                                                                                  2 --[bend left = 10] I22 ->[bend left = 10] 2;
                                                                                                                                                                                                  0 <-[bend left = 10] I01 ->[bend left = 10] 1;
                                                                                                                                                                                                  1 <-[bend left = 10] I12 ->[bend left = 10] 2;
                                                                                                                                                                                                  0.south <-[bend right = 10] I02 ->[bend right = 10] 2.south;
                                                                                                                                                                                                  l -> 0; 2 -> r;
                                                                                                                                                                                                  };
                                                                                                                                                                                              \end{tikzpicture}
    \end{align*}
    First, we add the $a$-labeled edge for $\tuple{2, 1}$ because $\tuple{2, 1} \in (\const{I}^{G} \cdot a^{G} \cdot \const{I}^{G})$; similarly, we add the $\comnf{a}$-labeled edge for $\tuple{1, 0}$ (let $G'$ be the graph).
    Second, we consider adding an $a$- or $\comnf{a}$-labeled edge for $\tuple{0, 2}$.
    Then, because $\Con_{a}^{\automaton}(\set{\mathtt{A,C}}, \set{\mathtt{A,C}})$, we add an $a$-labeled edge for $\tuple{0, 2}$ (note that $\lnot \Con_{\comnf{a}}^{\automaton}(\set{\mathtt{A,C}}, \set{\mathtt{A,C}})$ because $\delta_{\widebreve{\comnf{a}}}^{\automaton}(\set{\mathtt{A, C}}) = \set{\mathtt{D}} \not\subseteq \set{\mathtt{A, C}}$); we also add an $a$-labeled edge for $\tuple{2, 0}$ because $\tuple{2, 0}, \tuple{0, 2} \in \const{I}^{G'}$ (let $G''$ be the graph).
    By adding the other edges similarly, the saturation $\textcolor{blue}{\graph[2]}$ of $G(w)$ can be obtained.
\end{example}
\begin{lemma}\label{lemma: Hintikka}
    For an NFA $\automaton$ over $\AutSIG$ and a word $w$ over $\WordSIG$,
    the following are equivalent:
    \begin{itemize}
        \item There is a saturation $\graph[2]$ of $G(w)$ such that $\graph[2]^{\Quo} \not\models w \le \automaton$.
        \item There is a saturable path $P$ for $\not\models_{\REL} w \le \automaton$.
    \end{itemize}
\end{lemma}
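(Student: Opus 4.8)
The plan is to prove both implications through the reachability sets induced by a saturation. For a saturation $H$ of $G(w)$, set $U_{i}^{H} \defeq \set{z \in |\automaton| \mid \exists v \in \lang{\automaton[\bl, z]}.\ G(v) \homo H^{\Quo}[\bl, \gquo{i}{H}]}$ (as in \Cref{example: saturable path}). Since $H$ is an edge-extension of $G(w)$, we have $G(w) \homo H \homo H^{\Quo}$, so $H^{\Quo} \models w$ always; hence $H^{\Quo} \not\models w \le \automaton \iff H^{\Quo} \not\models \automaton \iff \rv^{\automaton} \notin U_{\rv^{G}}^{H}$ (identifying $|G(w)| = |H|$). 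The common engine is the claim: if $\tuple{i, j} \in a^{H}$ for some $a \in \SIG_{\const{I}}^{(-)}$, then $\Con_{a}^{\automaton}(U_{i}^{H}, U_{j}^{H})$. This is proved by appending one letter to a witnessing word: from $G(v) \homo H^{\Quo}[\bl, \gquo{i}{H}]$ one gets $G(va) = G(v) \cdot G(a) \homo H^{\Quo}[\bl, \gquo{j}{H}]$ by sending the new vertex to $\gquo{j}{H}$ (the new edge lands in $a^{H^{\Quo}}$ as $\tuple{i, j} \in a^{H}$), and $G(v \widebreve{a})$ merely reverses that edge; here one uses that $H^{\Quo}$ is a $2$-pointed structure (\Cref{proposition: structure}), so $\comnf{\const{I}}^{H^{\Quo}} = |H^{\Quo}|^{2} \setminus \const{I}^{H^{\Quo}}$.

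For $(\Rightarrow)$: given $H$ with $H^{\Quo} \not\models w \le \automaton$, let $G$ have $|G| = |G(w)|$, $a^{G} = a^{G(w)}$ for $a \in \SIG^{(-)}$, $\const{I}^{G} = \const{I}^{H}$, and $\comnf{\const{I}}^{G} = |G|^{2} \setminus \const{I}^{G}$. Edge-saturatedness of $H$ makes $\const{I}^{G}$ an equivalence relation, and with consistency of $H$ makes $G$ consistent, so $G$ is an $\const{I}$-saturation of $G(w)$; put $U_{i} \defeq U_{i}^{H}$. Then \ref{definition: Hintikka: u src tgt} holds ($\lv^{\automaton} \in U_{\lv^{G}}$ via the empty word; $\rv^{\automaton} \notin U_{\rv^{G}}$ by hypothesis). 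Every edge of $G$ is an edge of $H$, so the claim gives \ref{definition: Hintikka: consistent}; and since $H$ is edge-saturated, $\tuple{i,j} \in a^{H} \cup \comnf{a}^{H}$ for every $a \in \SIG_{\const{I}}$ and every $\tuple{i,j} \in |G|^{2}$, giving \ref{definition: Hintikka: saturate}. Thus $P = \tuple{G, \set{U_{i}}_{i \in |G|}}$ is a saturable path for $\not\models_{\REL} w \le \automaton$.

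For $(\Leftarrow)$: given a saturable path $P = \tuple{G, \set{U_{i}}_{i \in |G|}}$, apply \Cref{lemma: Hintikka saturate} to get a saturation $H$ of $G$ satisfying \ref{definition: Hintikka: consistent 2}. Since that construction adds only $\SIG^{(-)}$-labeled edges, $\const{I}^{H} = \const{I}^{G}$ and $\comnf{\const{I}}^{H} = \comnf{\const{I}}^{G}$, and $H$ is a saturation of $G(w)$ with the same vertex set. It then suffices to show $\rv^{\automaton} \notin U_{\rv^{G}}^{H}$, for which I prove the stronger $U_{i}^{H} \subseteq U_{i}$ for all $i$ (which with \ref{definition: Hintikka: u src tgt} finishes). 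Fix $z \in U_{i}^{H}$ witnessed by $v$ and induct on $v$, simultaneously tracking an accepting run of $\automaton[\bl, z]$ on $v$ and a homomorphism $G(v) \homo H^{\Quo}$ fixing the source at $\gquo{0}{H}$. The base case $v = \const{I}$ uses \ref{definition: Hintikka: u src tgt} together with $\Con_{\const{I}}^{\automaton}(U_{0}, U_{0})$ (from \ref{definition: Hintikka: consistent} on $\tuple{0,0} \in \const{I}^{G}$ and reflexivity of $\delta_{\const{I}}^{\automaton} = (\const{I}^{\automaton})^{*}$); the same argument shows $U_{i} = U_{i'}$ whenever $\tuple{i, i'} \in \const{I}^{G}$, so $U$ is constant on $\const{I}^{G}$-classes. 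The step peels off the last letter $c$ of $v = v'c$, absorbs the surrounding $\const{I}^{\automaton}$-transitions via $\Con_{\const{I}}^{\automaton}$, and passes through the single variable- (or $\const{I}^{-}$-) edge of $G(c)$ using \ref{definition: Hintikka: consistent 2} (for $c = a^{\smile}$, via the $\widebreve{a}$-half of $\Con_{a}^{\automaton}$), moving freely between $G$-vertices and $H^{\Quo}$-classes because $U$ is class-constant.

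The main obstacle will be this backward induction: aligning the interleaving of $\const{I}^{\automaton}$-transitions and letters in a run of $\automaton$ with the one-edge-at-a-time growth of the path $G(v)$, handling converse letters $a^{\smile}$ correctly (the appended edge of $G(a^{\smile})$ points ``backwards'', forcing use of $\delta_{\widebreve{a}}^{\automaton}$), and keeping the bookkeeping between vertices of $G$ and equivalence classes of $H^{\Quo}$ consistent. The identity $\const{I}^{H} = \const{I}^{G}$ throughout — so that $G$'s $\const{I}$/$\comnf{\const{I}}$-data is inherited by $H$ and hence by $H^{\Quo}$ — is the glue that makes the two directions fit together, and verifying it is a small but essential point.
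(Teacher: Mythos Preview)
Your proposal is correct and follows the paper's approach in all essentials: the $(\Rightarrow)$ direction is identical (same $G$, same reachability sets $U_i$, same ``append one letter'' sublemma), and your $(\Leftarrow)$ direction via the containment $U_i^{H}\subseteq U_i$ is just a repackaging of the paper's sublemma ``$z_k\in U_{l_k}$'' along a fixed accepting run, carried instead over all pairs $(z,i)$ simultaneously. The only cosmetic difference is that you make the class-constancy of $U$ and the identity $\const{I}^{H}=\const{I}^{G}$ explicit, whereas the paper leaves these implicit.
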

\begin{proof}
    $\Longleftarrow$:
    Let $P = \tuple{G, \set{U_{i}}_{i \in |G|}}$.
    Let $\graph[2]$ be the saturation of $G$ (\Cref{lemma: Hintikka saturate}).
    Then $\graph[2]^{\Quo} \not\models w \le \automaton[1]$ holds as follows.
    For $\graph[2]^{\Quo} \models w$:
    Because $G(w) \homo \graph[2] \homo \graph[2]^{\Quo}$ (cf.\ \Cref{lemma: graph characterization ECoR QuoFill}).
    For $\graph[2]^{\Quo} \not\models \automaton[1]$:
    Assume that $\graph[2]^{\Quo} \models \automaton[1]$.
    Then there is a word $\aterm[1]_1 \dots \aterm[1]_m \in \lang{\automaton[1]}$ such that $
        \graph(\aterm[1]_1 \dots \aterm[1]_m) \homo \graph[2]^{\Quo}$.
    By $\aterm[1]_1 \dots \aterm[1]_m \in \lang{\automaton[1]}$, there are $z_0, \dots, z_m \in |\automaton[1]|$ such that
    $\tuple{z_0, z_m} = \tuple{\lv^{\automaton[1]}, \rv^{\automaton[1]}}$ and for $k \in [m]$, $z_k \in \delta_{\aterm[1]_{k}}^{\automaton[1]}(z_{k-1})$.
    By $\graph(\aterm[1]_1 \dots \aterm[1]_m) \homo \graph[2]^{\Quo}$, there are $l_0, \dots, l_m \in |\graph[2]|$ such that
    $\tuple{l_0, l_{m}} = \tuple{\lv^{\graph[2]}, \rv^{\graph[2]}}$ and for $k \in [m]$, $\tuple{l_{k-1}, l_{k}} \in \aterm[1]_k^{\graph[2]}$.
    Then,
    \begin{sublemma*}
        For every $k \in [0, m]$, $z_k \in U_{l_{k}}$.
    \end{sublemma*}
    \begin{proof}
        By induction on $k$.
        Case $k = 0$:
        By $\src^{\automaton[1]} \in \tset[3]_{\lv^{\graph[2]}}$ \ref{definition: Hintikka: u src tgt}.
        Case $k \ge 1$:
        We have
        \begin{align*}
            z_k \in \delta_{\aterm[1]_k}^{\automaton[1]}(z_{k-1})
             & \subseteq \delta_{\aterm[1]_k}^{\automaton[1]}(\tset[3]_{l_{k-1}}) \tag{By $z_{k-1} \in \tset[3]_{l_{k-1}}$ (IH)}                              \\
             & \subseteq \tset[3]_{l_{k}}. \tag*{(By $\tuple{l_{k-1}, l_{k}} \in \aterm[1]_k^{\graph[2]}$ \ref{definition: Hintikka: consistent 2}) \qedhere}
        \end{align*}
    \end{proof}
    \noindent Since $z_m \in \tset[3]_{l_{m}}$ contradicts $\tgt^{\automaton[1]} \not\in \tset[3]_{\rv^{\graph[2]}}$ \ref{definition: Hintikka: u src tgt},
    $\graph[2]^{\Quo} \not\models \automaton[1]$.

    $\Longrightarrow$:
    We define $P = \tuple{G, \set{U_{i}}_{i \in |G|}}$ as follows:
    \begin{itemize}
        \item $G$ is the edge extension of $G(w)$ such that
              $a^{G} = a^{\graph[2]}$ for $a \in \set{\const{I}, \comnf{\const{I}}}$ and $a^{G} = a^{G(w)}$ for $a \in \SIG^{(-)}$;
        \item
              each set $U_{i} \subseteq |\automaton|$ is defined by:
              \begin{align*}
                  z \in U_{i} & \defiff \exists v \in \lang{{\automaton}[\bl, z]}.\  G(v) \homo {\graph[2]^{\Quo}}[\bl, \gquo{i}{\graph[2]}].
              \end{align*}
              Here, ${\automaton}[\bl, z]$ denotes the graph $\automaton$ in which $\rv^{\automaton}$ has been replaced with $z$ (similarly for $\graph[2]^{\Quo}[\bl, \gquo{i}{\graph[2]}]$).
    \end{itemize}
    Then $P$ is a saturable path for $\not\models_{\REL} w \le \automaton$, as follows.
    \begin{sublemma*}
        Let $a \in \SIG_{\const{I}}^{(-, \smile)}$.
        If $G(a) \homo {\graph[2]^{\Quo}}[[i]_{\graph[2]}, [j]_{\graph[2]}]$, then
        $\Con_{a}^{\automaton[1]}(U_{i}, U_{j})$.
        (Here, ${\graph[2]^{\Quo}}[[i]_{\graph[2]}, [j]_{\graph[2]}]$ denotes the graph ${\graph[2]^{\Quo}}$ in which $\lv^{{\graph[2]^{\Quo}}}$ and $\rv^{{\graph[2]^{\Quo}}}$ have been replaced with $[i]_{\graph[2]}$ and $[j]_{\graph[2]}$, respectively.)
    \end{sublemma*}
    \begin{proof}
        Let $z \in U_{i}$.
        By definition,
        there is $v \in \lang{\automaton[1][\bl, z]}$ such that $G(v) \homo {\graph[2]^{\Quo}}[\bl, \gquo{i}{\graph[2]}]$.
        Combining with $G(a) \homo {\graph[2]^{\Quo}}[\gquo{i}{\graph[2]}, \gquo{j}{\graph[2]}]$ yields $G(va) \homo {\graph[2]^{\Quo}}[\bl, \gquo{j}{\graph[2]}]$.
        Thus, for every $z' \in \delta_{a}^{\automaton[1]}(z)$, $z' \in U_{j}$.
        Hence $\delta_{a}^{\automaton[1]}(U_{i}) \subseteq U_{j}$.
        Similarly by $G(\widebreve{a}) \homo {\graph[2]^{\Quo}}[\gquo{j}{\graph[2]}, \gquo{i}{\graph[2]}]$, $\delta_{\widebreve{a}}^{\automaton[1]}(U_{j}) \subseteq U_{i}$.
    \end{proof}
    \noindent For \ref{definition: Hintikka: u src tgt}:
    $\lv^{\automaton[1]} \in U_{\lv^{G}}$ is shown by considering $v = \const{I}$.
    $\rv^{\automaton[1]} \not\in U_{\rv^{G}}$ is because $\graph[2]^{\Quo} \not\models \automaton[1]$.
    For \ref{definition: Hintikka: consistent}:
    Let $\tuple{i, j} \in a^{G}$.
    Because $G(a) \homo {\graph[2]^{\Quo}}[\gquo{i}{\graph[2]},\gquo{j}{\graph[2]}]$ (since $\graph[2]$ is a saturation of $G$),
    we have $\Con_{a}^{\automaton[1]}(U_{i}, U_{j})$ (the sub-lemma).
    For \ref{definition: Hintikka: saturate}:
    For every $a \in \SIG$ and $i, j \in |G|$,
    because either $G(a) \homo {\graph[2]^{\Quo}}[[i]_{\graph[2]}, [j]_{\graph[2]}]$ or $G(\comnf{a}) \homo {\graph[2]^{\Quo}}[[i]_{\graph[2]}, [j]_{\graph[2]}]$ always holds,
    we have $\Con_{a}^{\automaton[1]}(U_{i}, U_{j}) \lor \Con_{\comnf{a}}^{\automaton[1]}(U_{i}, U_{j})$ (the sub-lemma).
\end{proof}

\begin{theorem}\label{theorem: Hintikka}
    For two NFAs $\automaton[1], \automaton[2]$ over $\AutSIG$, TFAE:
    \begin{itemize}
        \item $\not\models_{\REL} \automaton[1] \le \automaton[2]$.
        \item $\exists w \in \lang{\automaton[1]}.\ $ there is a saturable path for $\not\models_{\REL} w \le \automaton[2]$.
    \end{itemize}
\end{theorem}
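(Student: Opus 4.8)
The plan is to reduce \Cref{theorem: Hintikka} to \Cref{lemma: Hintikka}, which already converts the existence of a ``bad'' saturation of a single path graph $\graph(w)$ into the existence of a saturable path; the remaining task is to bridge the NFA-level inclusion $\not\models_{\REL} \automaton[1] \le \automaton[2]$ with a statement about the path graphs $\graph(w)$ for $w \in \lang{\automaton[1]}$. The key identity is that, by \Cref{proposition: semantics for graphs ECoR} together with $\jump{\automaton}_{\model} = \bigcup_{w \in \lang{\automaton}} \jump{w}_{\model}$ and the fact that $\glang(w) = \set{\graph(w)}$ (up to isomorphism), for every structure $\model$ we have $\jump{\automaton}_{\model} = \bigcup_{w \in \lang{\automaton}} \jump{\graph(w)}_{\model}$. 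Using this I would first establish the following NFA-analogue of \Cref{theorem: graph characterization inequation}: $\not\models_{\REL} \automaton[1] \le \automaton[2]$ holds if and only if there are $w \in \lang{\automaton[1]}$ and a saturation $\graph[2]$ of $\graph(w)$ with $\graph[2]^{\Quo} \not\models \automaton[2]$. Since $\graph(w) \homo \graph[2] \homo \graph[2]^{\Quo}$ forces $\graph[2]^{\Quo} \models w$, the condition ``$\graph[2]^{\Quo} \not\models \automaton[2]$'' coincides with ``$\graph[2]^{\Quo} \not\models w \le \automaton[2]$'', so feeding this equivalence into \Cref{lemma: Hintikka} gives exactly \Cref{theorem: Hintikka}.

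For the forward direction of the bridge: assume $\not\models_{\REL} \automaton[1] \le \automaton[2]$, and fix a structure $\model$ and $x, y \in |\model|$ with $\tuple{x, y} \in \jump{\automaton[1]}_{\model} \setminus \jump{\automaton[2]}_{\model}$. From $\tuple{x,y} \in \jump{\automaton[1]}_{\model}$ choose $w \in \lang{\automaton[1]}$ and a homomorphism $h \colon \graph(w) \homo {\model}[x, y]$. Following the explicit construction in the proof of \Cref{lemma: graph characterization ECoR QuoFill}, I would take $\graph[2]$ to be the saturation of $\graph(w)$ defined by $a^{\graph[2]} = \set{\tuple{p, q} \mid \tuple{h(p), h(q)} \in a^{\model}}$ for each $a \in \SIG_{\const{I}}^{(-)}$; then $h \colon \graph[2] \homo {\model}[x, y]$, and composing $h$ with a section of the quotient map $\graph[2] \to \graph[2]^{\Quo}$ yields $\graph[2]^{\Quo} \homo {\model}[x, y]$. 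Finally $\graph[2]^{\Quo} \not\models \automaton[2]$: otherwise $\graph(v) \homo \graph[2]^{\Quo}$ for some $v \in \lang{\automaton[2]}$, whence $\graph(v) \homo {\model}[x, y]$ by transitivity of $\homo$ and so $\tuple{x, y} \in \jump{v}_{\model} \subseteq \jump{\automaton[2]}_{\model}$, contradicting the choice of $x, y$.

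For the backward direction: assume $w \in \lang{\automaton[1]}$ and that there is a saturable path for $\not\models_{\REL} w \le \automaton[2]$. By \Cref{lemma: Hintikka} there is a saturation $\graph[2]$ of $\graph(w)$ with $\graph[2]^{\Quo} \not\models w \le \automaton[2]$; by \Cref{proposition: structure}, $\graph[2]^{\Quo}$ is a $2$-pointed structure, say ${\model}[x, y]$. From $\graph(w) \homo \graph[2] \homo \graph[2]^{\Quo}$ we get $\tuple{x, y} \in \jump{w}_{\model} \subseteq \jump{\automaton[1]}_{\model}$, while $\graph[2]^{\Quo} \not\models \automaton[2]$ means $\graph(v) \not\homo \graph[2]^{\Quo}$ for every $v \in \lang{\automaton[2]}$, i.e.\ $\tuple{x, y} \notin \jump{v}_{\model}$ for all such $v$, hence $\tuple{x, y} \notin \jump{\automaton[2]}_{\model}$. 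Therefore $\jump{\automaton[1]}_{\model} \not\subseteq \jump{\automaton[2]}_{\model}$, that is, $\not\models_{\REL} \automaton[1] \le \automaton[2]$.

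I do not expect a real obstacle: all the combinatorial content sits in \Cref{lemma: Hintikka}, and the argument above is just the NFA-level repackaging of \Cref{theorem: graph characterization inequation} and the bounded-model reasoning behind \Cref{lemma: bounded model property for ECoRTC}. The only points that need care are (i) using the \emph{explicit} saturation built in the proof of \Cref{lemma: graph characterization ECoR QuoFill}, so that the resulting $\graph[2]$ is genuinely a saturation of the \emph{path} graph $\graph(w)$ and hence lies in the scope of \Cref{lemma: Hintikka}, and (ii) the harmless bookkeeping identification of ``$\graph[2]^{\Quo} \not\models w \le \automaton[2]$'' with ``$\graph[2]^{\Quo} \not\models \automaton[2]$'', valid precisely because $\graph(w)$ maps homomorphically into $\graph[2]^{\Quo}$.
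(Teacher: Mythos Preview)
Your proposal is correct and follows essentially the same approach as the paper: both reduce to \Cref{lemma: Hintikka} and bridge the NFA-level inclusion to the path-graph level via the saturation construction from \Cref{lemma: graph characterization ECoR QuoFill}. The only cosmetic difference is that the paper packages the forward direction as an invocation of \Cref{lemma: bounded model property for ECoRTC} (applied to the graph languages of the NFAs), whereas you inline that argument explicitly.
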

\begin{proof}
    We have
    \begin{align*}
         & \not\models_{\REL} \automaton[1] \le \automaton[2]                                                                                                                                                                                                   \\
         & \iff \exists w \in \lang{\automaton[1]}. \exists \graph[2] \in \Fill(G(w)).\ \graph[2]^{\Quo} \not\models \automaton[1] \le \automaton[2] \tag{\Cref{lemma: bounded model property for ECoRTC}}                                                      \\
         & \iff \exists w \in \lang{\automaton[1]}. \exists \graph[2] \in \Fill(G(w)).\ \graph[2]^{\Quo} \not\models w \le \automaton[2] \tag{Because $\graph[2]^{\Quo} \models w$ and $\graph[2]^{\Quo} \models \automaton[1]$ by $\graph[2] \in \Fill(G(w))$} \\
         & \iff \exists w \in \lang{\automaton[1]}.\ \mbox{there is a saturable path for $\not\models_{\REL} w \le \automaton[2]$}. \tag*{(\Cref{lemma: Hintikka}) \qedhere}
    \end{align*}
\end{proof}

\subsection{Exponential-size model property}
The characterization by saturable paths (\Cref{theorem: Hintikka}) gives another bounded model property for $\ExKA$ terms (cf.\ \Cref{lemma: bounded model property for ECoRTC}) as follows.
The following proof is an analogy of the well-known pumping lemma from automata theory.
\begin{lemma}[Exponential-size model property]\label{lemma: Hintikka bound}
    For every NFAs $\automaton[1], \automaton[2]$ over $\AutSIG$,
    if ${} \not\models_{\REL} \automaton[1] \le \automaton[2]$, then
    there is a $2$-pointed structure $\ppstruc$ of size $\#(|\ppstruc|) \le \#(|\automaton[1]|) \times 2^{\#(|\automaton[2]|)}$
    such that $\ppstruc \not\models_{\REL} \automaton[1] \le \automaton[2]$.
\end{lemma}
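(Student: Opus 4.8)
The plan is to run a pumping argument on top of the characterization by saturable paths (\Cref{theorem: Hintikka}), treating a pair $\tuple{\text{state of }\automaton[1],\ \text{subset of }|\automaton[2]|}$ as the ``configuration'' that gets pigeonholed. Assume ${}\not\models_{\REL}\automaton[1]\le\automaton[2]$. By \Cref{theorem: Hintikka} there is a word $w=a_1\cdots a_n\in\lang{\automaton[1]}$ together with a saturable path $P=\tuple{G,\set{U_i}_{i\in|G|}}$ for ${}\not\models_{\REL} w\le\automaton[2]$. Since $w\in\lang{\automaton[1]}$ we have $\delta_w^{\automaton[1]}=\delta_{a_1}^{\automaton[1]}\cdots\delta_{a_n}^{\automaton[1]}$ linking $\lv^{\automaton[1]}$ to $\rv^{\automaton[1]}$, so there are states $z_0,\dots,z_n\in|\automaton[1]|$ with $z_0=\lv^{\automaton[1]}$, $z_n=\rv^{\automaton[1]}$ and $z_k\in\delta_{a_k}^{\automaton[1]}(z_{k-1})$ for each $k$. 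To each of the $n+1$ vertices $k\in\set{0,\dots,n}$ of $\graph(w)$ I attach the pair $\tuple{z_k,U_k}$, ranging over the set $|\automaton[1]|\times\wp(|\automaton[2]|)$ of cardinality $\#(|\automaton[1]|)\times 2^{\#(|\automaton[2]|)}$.

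First I would prove one shortening step: if $n\ge\#(|\automaton[1]|)\times 2^{\#(|\automaton[2]|)}$ then some strictly shorter $w'\in\lang{\automaton[1]}$ also admits a saturable path for ${}\not\models_{\REL} w'\le\automaton[2]$. By the pigeonhole principle pick $i<j$ with $\tuple{z_i,U_i}=\tuple{z_j,U_j}$ and set $w'\defeq a_1\cdots a_i\,a_{j+1}\cdots a_n$; the spliced sequence $z_0,\dots,z_i,z_{j+1},\dots,z_n$ is again an accepting run (at the splice, $z_{j+1}\in\delta_{a_{j+1}}^{\automaton[1]}(z_j)=\delta_{a_{j+1}}^{\automaton[1]}(z_i)$), so $w'\in\lang{\automaton[1]}$. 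Let $\iota\colon\graph(w')\to\graph(w)$ be the position embedding (identity on $\set{0,\dots,i}$, the shift by $j-i$ afterwards) and define $P'=\tuple{G',\set{U'_k}}$ by $U'_k\defeq U_{\iota(k)}$, $\const{I}^{G'}\defeq\set{\tuple{k,l}\mid\tuple{\iota(k),\iota(l)}\in\const{I}^G}$, $\comnf{\const{I}}^{G'}$ its complement, keeping the $a$-edges of $\graph(w')$; so $G'$ is the $\const{I}$-saturation of $\graph(w')$ with this equivalence relation. The clauses of \Cref{definition: Hintikka} then follow from those of $P$: \ref{definition: Hintikka: u src tgt} because $\iota$ fixes source and target; \ref{definition: Hintikka: saturate} verbatim because $\iota$ is injective and preserves $\const{I}$; and \ref{definition: Hintikka: consistent} because every retained edge of $G'$ (word-edge or $\const{I}/\comnf{\const{I}}$-edge) is sent by $\iota$ to an edge of $G$ with the same label, while the only new adjacency, the $a_{j+1}$-edge out of the copy of position $i$, inherits $\Con_{a_{j+1}}^{\automaton[2]}(U_i,U_{j+1})$ from $\Con_{a_{j+1}}^{\automaton[2]}(U_j,U_{j+1})$ (an instance of \ref{definition: Hintikka: consistent} for $P$) via $U_i=U_j$.

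Iterating this step terminates with a word $w\in\lang{\automaton[1]}$ of length $n<\#(|\automaton[1]|)\times 2^{\#(|\automaton[2]|)}$ carrying a saturable path for ${}\not\models_{\REL} w\le\automaton[2]$. By \Cref{lemma: Hintikka} there is then a saturation $H$ of $\graph(w)$ with $H^{\Quo}\not\models w\le\automaton[2]$. Take $\ppstruc\defeq H^{\Quo}$, a $2$-pointed structure by \Cref{proposition: structure}. Since saturations keep the vertex set and passing to $H^{\Quo}$ can only decrease it, $\#(|\ppstruc|)\le\#(|\graph(w)|)=n+1\le\#(|\automaton[1]|)\times 2^{\#(|\automaton[2]|)}$. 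Finally $\graph(w)\homo H\homo H^{\Quo}$ (edge-extension, then quotient map) gives $\ppstruc\models w$, hence $\ppstruc\models\automaton[1]$ because $w\in\lang{\automaton[1]}$; and $H^{\Quo}\not\models w\le\automaton[2]$ together with $\ppstruc\models w$ gives $\ppstruc\not\models\automaton[2]$. Thus $\ppstruc\not\models\automaton[1]\le\automaton[2]$ within the stated bound.

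The step I expect to be the main obstacle is verifying that the spliced $G'$ in the shortening step is still a \emph{consistent} $\const{I}$-saturation of $\graph(w')$: re-wiring at position $i$ might a priori create, modulo $\const{I}^{G'}$, both an $a$- and a $\comnf a$-edge between the same pair of $\const{I}$-classes. Ruling this out is exactly where the full configuration equality $\tuple{z_i,U_i}=\tuple{z_j,U_j}$ --- not merely $z_i=z_j$ --- is needed: one shows that any such clash in $G'$ pulls back along $\iota$ to a clash in $G$, transporting the $\Con$-consequences of the $a_{j+1}$-edge at $j$ back to $i$ by means of $U_i=U_j$, contradicting the consistency baked into the $\const{I}$-saturation $G$. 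Everything else is the usual pumping-lemma bookkeeping.
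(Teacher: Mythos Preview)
Your pumping strategy matches the paper's, but your definition of $\const{I}^{G'}$ as the pullback of $\const{I}^G$ along $\iota$ has a genuine gap: it need not make $G'$ an $\const{I}$-saturation of $G(w')$. The clearest failure is when $a_{j+1}=\comnf{\const{I}}$ and $\tuple{i,j{+}1}\in\const{I}^G$ (nothing in \Cref{definition: Hintikka} forbids this; via \ref{definition: Hintikka: consistent} it only forces $U_i=U_{j+1}$). Then the splice word-edge $\tuple{i,i{+}1}$ in $G(w')$ carries label $\comnf{\const{I}}$, yet your definition gives $\tuple{i,i{+}1}\in\const{I}^{G'}$ and hence $\tuple{i,i{+}1}\notin\comnf{\const{I}}^{G'}$, so $G'$ is not an edge-extension of $G(w')$. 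Your final paragraph only flags $a$-consistency for $a\in\Sigma$, and even there the sketched fix (``pull back to a clash in $G$'') does not go through: the splice edge $\tuple{i,i{+}1}$ maps under $\iota$ to $\tuple{i,j{+}1}$, not to the actual word-edge $\tuple{j,j{+}1}$ in $G$; you would need $\tuple{i,j}\in\const{I}^G$ to transport the clash, and $U_i=U_j$ does not give this --- consistency is a condition on $\const{I}^G$-classes, not a $\Con$-condition on the $U$'s.

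The paper repairs exactly this point by discarding the pullback and instead setting $\const{I}^{G'}\defeq\set{\tuple{k,l}\mid k=l\ \lor\ \lnot\Con_{\comnf{\const{I}}}^{\automaton[2]}(U'_k,U'_l)}$. With this choice, whenever the splice letter is $\comnf{\const{I}}$ one derives $\Con_{\comnf{\const{I}}}^{\automaton[2]}(U'_i,U'_{i+1})$ from $\Con_{\comnf{\const{I}}}^{\automaton[2]}(U_j,U_{j+1})$ via $U_i=U_j$, so $\tuple{i,i{+}1}\in\comnf{\const{I}}^{G'}$ as required; and any two $\const{I}^{G'}$-related vertices automatically share the same $U'$-value (via \ref{definition: Hintikka: saturate}). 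The rest of your argument --- the verification of \ref{definition: Hintikka: u src tgt}, \ref{definition: Hintikka: consistent} on word- and $\const{I}/\comnf{\const{I}}$-edges, and \ref{definition: Hintikka: saturate} --- then goes through essentially as you wrote, and is what the paper does once $\const{I}^{G'}$ is chosen correctly.
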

\begin{proof}
    By \Cref{theorem: Hintikka}, there are a word $w = a_1 \dots a_n \in \lang{\automaton[1]}$ and a saturable path $P = \tuple{G, \set{U_{i}}_{i \in [0, n]}}$ for $\not\models_{\REL} w \le \automaton[2]$.
    Without loss of generality, we can assume that $n$ is the minimum among such words.
    Since $w \in \lang{\automaton[1]}$, let $\set{s_{i}}_{i \in [0, n]}$ be such that
    $\tuple{s_0, s_{n}} = \tuple{\src^{\automaton[1]}, \tgt^{\automaton[1]}}$ and $s_x \in \delta_{a_i}^{\automaton[1]}(s_{i-1})$ for $i \in [n]$.
    Assume that $n+1 > \#(|\automaton[1]|) \times 2^{\#(|\automaton[2]|)}$.
    By the pigeonhole principle, there are $0 \le x < y \le n$ s.t.\ $\tuple{\term[2]_x, U_x} = \tuple{\term[2]_y, U_y}$.
    Let $w' \defeq a_1 \dots a_x a_{y + 1} \dots a_n$.
    Let $P'$ be the $P$ in which
    the source of the edge for $\tuple{y, y+1} \in a_{y+1}^{G}$ has been replaced with $x$
    and the vertices between $x+1$ and $y$ are removed:
    \begin{align*}
        P  & = \begin{tikzpicture}[baseline = -.5ex]
                   \node[mynode, draw, circle](L1){};
                   \node[mynode, draw, circle, right = 1.8cm of L1, line width = 1.pt](L2){};
                   \node[mynode, draw = gray!80, circle, right = 1.8cm of L2, text = gray!80](L3){};
                   \node[mynode, draw, circle, right = 1cm of L3, line width = 1.pt](L4){};
                   \node[mynode, draw, circle, right = 1.8cm of L4](L5){};
                   \node[left = 4pt of L1](Ll){} edge[earrow, ->] (L1);
                   \node[right = 4pt of L5](Lr){}; \path (L5) edge[earrow, ->] (Lr);
                   \path (L1) edge [draw = white, opacity = 0] node[pos= 0.5, inner sep = .5pt, font = \scriptsize, opacity = 1](Ltau1){$a_1 \dots a_x$}(L2);
                   \path (L2) edge [draw = white, opacity = 0] node[pos= 0.5, inner sep = 1.pt, font = \scriptsize, opacity = 1, text = gray!80](Ltau2){$a_{x+1} \dots a_{y}$}(L3);
                   \path (L3) edge [draw = white, opacity = 0] node[pos= 0.5, inner sep = 1.pt, font = \scriptsize, opacity = 1](Ltau3){$a_{y+1}$}(L4);
                   \path (L4) edge [draw = white, opacity = 0] node[pos= 0.5, inner sep = 1.pt, font = \scriptsize, opacity = 1](Ltau4){$a_{y+2} \dots a_{n}$}(L5);
                   \graph[use existing nodes, edges={color=black, pos = .5, earrow}, edge quotes={fill=white, inner sep=1pt,font= \scriptsize}]{
                   L1 --[] {Ltau1} ->[] L2;
                   L2 --[ color = gray!80] {Ltau2} ->[, color = gray!80] L3;
                   L3 --[] {Ltau3} ->[] L4;
                   L4 --[] {Ltau4} ->[] L5;
                   };
                   \node[below = .5ex of L1, font = \scriptsize]{$U_{0}$};
                   \node[below = .5ex of Ltau1, font = \scriptsize]{$\dots$};
                   \node[below = .5ex of L2, font = \scriptsize]{$U_{x}$};
                   \node[below = .5ex of L3, font = \scriptsize, text = gray!80]{$U_{y}$};
                   \node[below = .5ex of L4, font = \scriptsize]{$U_{y+1}$};
                   \node[below = .5ex of Ltau4, font = \scriptsize]{$\dots$};
                   \node[below = .5ex of L5, font = \scriptsize]{$U_{n}$};
               \end{tikzpicture} \\
        P' & =\begin{tikzpicture}[baseline = -.5ex]
                  \node[mynode, draw, circle](L1){};
                  \node[mynode, draw, circle, right = 1.8cm of L1, line width = 1.pt](L2){};
                  \node[mynode, draw, circle, right = 1cm of L2, line width = 1.pt](L4){};
                  \node[mynode, draw, circle, right = 1.8cm of L4](L5){};
                  \node[left = 4pt of L1](Ll){} edge[earrow, ->] (L1);
                  \node[right = 4pt of L5](Lr){}; \path (L5) edge[earrow, ->] (Lr);
                  \path (L1) edge [draw = white, opacity = 0] node[pos= 0.5, inner sep = .5pt, font = \scriptsize, opacity = 1](Ltau1){$a_1 \dots a_x$}(L2);
                  \path (L2) edge [draw = white, opacity = 0] node[pos= 0.5, inner sep = 1.pt, font = \scriptsize, opacity = 1](Ltau3){$a_{y+1}$}(L4);
                  \path (L4) edge [draw = white, opacity = 0] node[pos= 0.5, inner sep = 1.pt, font = \scriptsize, opacity = 1](Ltau4){$a_{y+2} \dots a_{n}$}(L5);
                  \graph[use existing nodes, edges={color=black, pos = .5, earrow}, edge quotes={fill=white, inner sep=1pt,font= \scriptsize}]{
                  L1 --[] {Ltau1} ->[] L2;
                  L2 --[] {Ltau3} ->[] L4;
                  L4 --[] {Ltau4} ->[] L5;
                  };
                  \node[below = .5ex of L1, font = \scriptsize]{$U_{0}$};
                  \node[below = .5ex of Ltau1, font = \scriptsize]{$\dots$};
                  \node[below = .5ex of L2, font = \scriptsize]{$U_{x}$};
                  \node[below = .5ex of L4, font = \scriptsize]{$U_{y+1}$};
                  \node[below = .5ex of Ltau4, font = \scriptsize]{$\dots$};
                  \node[below = .5ex of L5, font = \scriptsize]{$U_{n}$};
              \end{tikzpicture}
    \end{align*}
    ($\const{I}$- or $\comnf{\const{I}}$-labeled edges and some intermediate vertices are omitted, for simplicity.)
    Here, when $\tuple{x, y+1} \in \const{I}^{G}$ and $a_{y+1} = \comnf{\const{I}}$, the graph of $P'$ is not consistent; so, we replace the label $\const{I}$ with the label $\comnf{\const{I}}$ for every pair $\tuple{i, j}$ s.t. ${i \neq j} \land \Con_{\comnf{\const{I}}}^{\automaton[2]}(U_{i}, U_{j})$.
    ($i \neq j$ is for the reflexivity of the relation of $\const{I}$ and $\Con_{\comnf{\const{I}}}^{\automaton[2]}(U_{i}, U_{j})$ is for preserving \ref{definition: Hintikka: consistent}).
    Then, $w' \in [\automaton]$ holds by $s_x = s_y$ and $P'$ is an saturable path for $\not\models_{\REL} w' \le \automaton[2]$ because
    each condition for $P'$ is shown by that for $P$ (with $U_{x} = U_{y}$) and that $P'$ is (almost) a ``subgraph'' of $P$ (see \ifiscameraready the full version \cite{nakamuraExistentialCalculusRelations2023}\else \Cref{section: lemma: Hintikka bound}\fi, for more details).
    However, this contradicts that $n$ is the minimum.
    Thus, $\#(|G|) = n + 1 \le \#(|\automaton[1]|) \times 2^{\#(|\automaton[2]|)}$.
    Finally, $\ppstruc = \graph[2]^{\Quo}$ is the desired $2$-pointed structure, where $\graph[2]$ is the saturation of $G$ obtained from \Cref{lemma: Hintikka saturate}.
\end{proof}

\begin{theorem}\label{theorem: KACC complexity}
    The equational theory of $\ExKA$ terms (w.r.t.\ binary relations) is decidable in coNEXP.
\end{theorem}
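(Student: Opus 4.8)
\emph{Proof plan.} The plan is to show that non-validity is in $\mathrm{NEXP}$; since $\models_{\REL}\term[1]=\term[2]$ is equivalent to the conjunction of $\models_{\REL}\term[1]\le\term[2]$ and $\models_{\REL}\term[2]\le\term[1]$, and $\mathrm{coNEXP}$ is closed under conjunction (equivalently, $\not\models_{\REL}\term[1]=\term[2]$ is the disjunction of the two non-validity instances and $\mathrm{NEXP}$ is closed under disjunction), it suffices to decide $\not\models_{\REL}\term[1]\le\term[2]$ for $\ExKA$ terms $\term[1],\term[2]$ in $\mathrm{NEXP}$.

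First I would pass from terms to automata: by Thompson's construction (\Cref{definition: NFA def}) compute NFAs $\automaton[1]=\automaton_{\term[1]}$ and $\automaton[2]=\automaton_{\term[2]}$ over $\AutSIG$ in polynomial time, with $\#(|\automaton[1]|)=\mathcal{O}(\len{\term[1]})$ and $\#(|\automaton[2]|)=\mathcal{O}(\len{\term[2]})$; by \Cref{proposition: Thompson}, $\jump{\automaton[1]}_{\model}=\jump{\term[1]}_{\model}$ and $\jump{\automaton[2]}_{\model}=\jump{\term[2]}_{\model}$ for every structure $\model$, so $\not\models_{\REL}\term[1]\le\term[2]$ holds iff $\not\models_{\REL}\automaton[1]\le\automaton[2]$. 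Then the exponential-size model property (\Cref{lemma: Hintikka bound}) gives that $\not\models_{\REL}\automaton[1]\le\automaton[2]$ holds iff there is a $2$-pointed structure $\ppstruc$ with $\#(|\ppstruc|)\le\#(|\automaton[1]|)\times 2^{\#(|\automaton[2]|)}=2^{\mathcal{O}(\len{\term[1]=\term[2]})}$ and $\ppstruc\not\models\automaton[1]\le\automaton[2]$; applying $\jump{\automaton[1]}_{\ppstruc}=\jump{\term[1]}_{\ppstruc}$ and $\jump{\automaton[2]}_{\ppstruc}=\jump{\term[2]}_{\ppstruc}$ once more, this last condition is equivalent to $\ppstruc\models\term[1]$ and $\ppstruc\not\models\term[2]$.

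The $\mathrm{NEXP}$ algorithm for $\not\models_{\REL}\term[1]\le\term[2]$ is then guess-and-check. Nondeterministically write down a candidate finite $2$-pointed structure $\ppstruc$ over $\SIG_{\const{I}}^{(-)}$, restricted to the variables occurring in $\term[1],\term[2]$, with at most $2^{\mathcal{O}(\len{\term[1]=\term[2]})}$ vertices; this takes exponential time and space, since each relation is a subset of $|\ppstruc|^2$ and there are only linearly many of them. Verify in time polynomial in $\#(|\ppstruc|)$ that $\ppstruc$ is a genuine $2$-pointed structure, i.e.\ that $\const{I}^{\ppstruc}$ is the identity relation and $\comnf{a}^{\ppstruc}=|\ppstruc|^2\setminus a^{\ppstruc}$ for every relevant $a$. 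Then run the model-checking algorithm of \Cref{proposition: model checking}, which decides $\ppstruc\models\term[1]$ and $\ppstruc\not\models\term[2]$ in $\mathcal{O}(\len{\term[1]=\term[2]}\times\#(|\ppstruc|)^{\omega})=2^{\mathcal{O}(\len{\term[1]=\term[2]})}$ time, and accept iff both tests succeed. By \Cref{lemma: Hintikka bound} together with \Cref{proposition: Thompson}, some run accepts iff $\not\models_{\REL}\term[1]\le\term[2]$, so this problem lies in $\mathrm{NEXP}$ and the theorem follows.

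The substantive content is in the ingredients already proved: \Cref{lemma: Hintikka bound} (obtained from the saturable-path characterization of \Cref{theorem: Hintikka} by a pumping argument) is exactly what pins the witness structure down to single-exponential size, and \Cref{proposition: model checking} handles $\bl^{*}$ by Boolean-matrix transitive closure; the guess-and-check wrapper is routine. The only subtlety worth stating is that the bound $\#(|\automaton[1]|)\times 2^{\#(|\automaton[2]|)}$ is exponential only in $\len{\term[2]}$ and linear in $\len{\term[1]}$; this is still $2^{\mathcal{O}(\len{\term[1]=\term[2]})}$ in the total input size, so the complexity class is unaffected, but the size bound should be phrased relative to the whole input rather than claimed separately for the two sides.
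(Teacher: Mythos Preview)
Your proposal is correct and follows essentially the same approach as the paper: reduce to showing $\not\models_{\REL}\term[1]\le\term[2]$ is in $\mathrm{NEXP}$, invoke \Cref{proposition: Thompson} to pass to NFAs, apply the exponential-size model property of \Cref{lemma: Hintikka bound}, nondeterministically guess a $2$-pointed structure within the bound, and verify using \Cref{proposition: model checking}. Your write-up is a bit more explicit about verifying that the guessed object is a structure and about the asymmetry of the size bound, but the argument is the same.
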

\begin{proof}
    Similarly for \cref{lemma: PCoRTC upper bound},
    it suffices to show that the following problem is in NEXP: given $\ExKA$ terms $\term[1], \term[2]$, does $\not\models_{\REL} \term[1] \le \term[2]$ hold?
    By \Cref{lemma: Hintikka bound} (with \Cref{proposition: Thompson}), we can give the following algorithm:
    \begin{enumerate}
        \item Take a $2$-pointed structure $\ppstruc$ of size $\#(|\ppstruc|) \le \#(|\automaton_{\term[1]}|) \times 2^{\#(|\automaton_{\term[2]}|)}$, non-deterministically.
              Here, $\automaton_{\term[1]}$ and $\automaton_{\term[2]}$ are the NFAs obtained from $\term[1]$ and $\term[2]$, respectively (\Cref{definition: NFA def}).
        \item Return $\const{true}$, if $\ppstruc \not\models \term[1] \le \term[2]$; $\const{false}$, otherwise.
    \end{enumerate}
    Then $\not\models_{\REL} \term[1] \le \term[2]$, if some execution returns $\const{true}$; $\models_{\REL} \term[1] \le \term[2]$, otherwise.
    Here, $\ppstruc \not\models \term[1] \le \term[2]$ can be decided in exponential time (\Cref{proposition: model checking}).
\end{proof}

\newcommand{\AH}[1]{#1^{\mathcal{S}}}
\subsection{From saturable paths to word automata}\label{section: word automata}
For some cases, for an NFA $\automaton$, we can construct an NFA $\AH{\automaton}$ such that:
for every word $w$ over $\WordSIG$, TFAE:
\begin{itemize}
    \item $w \in \lang{\AH{\automaton}}$;
    \item there is a saturable path for $\not\models_{\REL} w \le \automaton$.
\end{itemize}
To this end, first, let
\[\fml(\mathcal{U}, U) \defeq \bigwedge \begin{cases}
        U \times (|\automaton| \setminus U) \times U \times (|\automaton| \setminus U) \subseteq \mathcal{U} \\
        \forall \tuple{t_1, t_2, t_3, t_4} \in \mathcal{U}. \forall a \in \SIG_{\const{I}}.                  \\
        \quad  \bigvee \begin{cases}
                           \delta_a^{\automaton}(t_1) \subseteq U \land  t_2 \not\in \delta_{\widebreve{a}}^{\automaton}(U) \\
                           \delta_{\comnf{a}}^{\automaton}(t_3) \subseteq U \land t_4 \not\in \delta_{\widebreve{\comnf{a}}}^{\automaton}(U)
                       \end{cases}
    \end{cases}\]
and we show the following lemma:
\begin{lemma}\label{lemma: key formula transformation}
    Let $\automaton$ be an NFA over $\AutSIG$ and $w = a_1 \dots a_n$ be a word over $\WordSIG$.
    Recall the formula of \ref{definition: Hintikka: saturate}:
    \begin{align*}
         & \forall \tuple{i, j} \in [0, n]^2.\forall a \in \SIG_{\const{I}}.  \bigvee \left\{\begin{aligned}
                                                                                                 \delta_{a}^{\automaton}(U_{i}) \subseteq U_{j} \land \delta_{\widebreve{a}}^{\automaton}(U_{j}) \subseteq U_{i} \\
                                                                                                 \delta_{\comnf{a}}^{\automaton}(U_{i}) \subseteq U_{j} \land \delta_{\widebreve{\comnf{a}}}^{\automaton}(U_{j}) \subseteq U_{i}
                                                                                             \end{aligned}\right.
    \end{align*}
    This formula is equivalent to the following formula:\footnote{This transformation is also used for the automata construction in \cite{Lutz2005_neg} (roughly speaking, the $\mathcal{U}$ corresponds to the ``$P$'' in \cite{Lutz2005_neg}), but is a bit more complicated due to converse.}
    \begin{align*}
         & \exists \mathcal{U} \subseteq |\automaton|^4. \forall i \in [0, n].\ \fml(\mathcal{U}, U_{i}).
    \end{align*}
\end{lemma}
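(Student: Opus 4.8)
The plan is to read the claimed statement as a purely first-order equivalence between two formulas in the free second-order parameters $U_0, \dots, U_n$, and to prove it by exhibiting a canonical witness for the existential quantifier over $\mathcal{U}$ and then reducing to two standard laws of quantifier distribution. For the witness, put
\[
  \mathcal{U}^{*} \;\defeq\; \bigcup_{i \in [0,n]} \bigl(U_{i} \times (|\automaton| \setminus U_{i}) \times U_{i} \times (|\automaton| \setminus U_{i})\bigr).
\]
The first conjunct of $\fml(\mathcal{U}, U_i)$ is monotone increasing in $\mathcal{U}$ and asks exactly $\mathcal{U} \supseteq U_i \times (|\automaton|\setminus U_i) \times U_i \times (|\automaton|\setminus U_i)$, so the conjunction over all $i$ forces $\mathcal{U} \supseteq \mathcal{U}^{*}$; the second conjunct is a universal statement over $\mathcal{U}$, hence monotone decreasing. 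Therefore $\exists \mathcal{U}.\,\forall i.\,\fml(\mathcal{U}, U_i)$ holds iff it holds for the least admissible $\mathcal{U}$, namely $\mathcal{U}^{*}$, and it remains only to prove that \ref{definition: Hintikka: saturate} is equivalent to $\forall i.\ \fml(\mathcal{U}^{*}, U_i)$.

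The heart of the matter is a pointwise rewriting of the disjunction $\Con_{a}^{\automaton}(U_i, U_j) \lor \Con_{\comnf{a}}^{\automaton}(U_i, U_j)$ that occurs in \ref{definition: Hintikka: saturate}. Unfolding $\Con$ gives $\Con_{a}^{\automaton}(U, V) \iff \bigl(\forall t \in U.\ \delta_{a}^{\automaton}(t) \subseteq V\bigr) \wedge \bigl(\forall t \notin U.\ t \notin \delta_{\widebreve{a}}^{\automaton}(V)\bigr)$, and similarly for $\comnf{a}$ with quantified variables $t_3, t_4$. Applying first $(\forall x.\,R(x)) \wedge (\forall y.\,S(y)) \iff \forall x \forall y.\,(R(x) \wedge S(y))$ inside each $\Con$, and then $\forall x \forall y.\,(P(x) \lor Q(y)) \iff (\forall x.\,P(x)) \lor (\forall y.\,Q(y))$ to merge the two $\Con$'s, I would pull all four quantifiers --- over $t_1 \in U_i$, $t_2 \notin U_i$, $t_3 \in U_i$, $t_4 \notin U_i$ --- in front of the single body
\[
  \bigl(\delta_{a}^{\automaton}(t_1) \subseteq U_j \wedge t_2 \notin \delta_{\widebreve{a}}^{\automaton}(U_j)\bigr) \ \lor\ \bigl(\delta_{\comnf{a}}^{\automaton}(t_3) \subseteq U_j \wedge t_4 \notin \delta_{\widebreve{\comnf{a}}}^{\automaton}(U_j)\bigr),
\]
which is exactly the $\bigvee$ clause of $\fml(\,\cdot\,, U_j)$ evaluated at the tuple $\tuple{t_1, t_2, t_3, t_4}$. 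Conjoining over all $\tuple{i,j} \in [0,n]^2$ and all $a \in \SIG_{\const{I}}$, and noting that for fixed $i$ the quadruples range over $U_i \times (|\automaton|\setminus U_i) \times U_i \times (|\automaton|\setminus U_i)$ --- hence, as $i$ varies, over all of $\mathcal{U}^{*}$ --- turns \ref{definition: Hintikka: saturate} precisely into $\forall j.\ \fml(\mathcal{U}^{*}, U_j)$ (the first conjunct of $\fml(\mathcal{U}^{*}, U_j)$ being automatic, as $\mathcal{U}^{*}$ already contains its $j$-th summand). Both directions then follow: for $\Longrightarrow$, $\mathcal{U}^{*}$ is the required witness; for $\Longleftarrow$, take a witness $\mathcal{U}$, restrict its universal second conjunct from $\mathcal{U}$ down to $\mathcal{U}^{*} \subseteq \mathcal{U}$, and read the pointwise rewriting backwards.

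No property of $\automaton$ beyond its definition enters: $\Con$ and $\fml$ are phrased solely through the transition relations $\set{\delta_{a}^{\automaton}}_{a \in \AutSIG}$, so the argument is just a syntactic manipulation of first-order formulas (it uses, in particular, no relationship between $\delta_{a}^{\automaton}$ and $\delta_{\widebreve{a}}^{\automaton}$). The one place that needs care is that the quantifier-distribution laws are valid only over nonempty ranges: the step $\Con_{a}^{\automaton}(U_i, U_j) \iff \forall t_1 \in U_i\,\forall t_2 \notin U_i.\,(\cdots)$ needs $\emptyset \subsetneq U_i \subsetneq |\automaton|$, and for a degenerate $U_i$ the summand $U_i \times (|\automaton|\setminus U_i) \times U_i \times (|\automaton|\setminus U_i)$ vanishes and contributes nothing to $\mathcal{U}^{*}$. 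I would therefore split off the cases $U_i = \emptyset$ and $U_i = |\automaton|$ and verify directly that the conjuncts they induce in $\fml(\mathcal{U}^{*}, \,\cdot\,)$ coincide with the matching instances of \ref{definition: Hintikka: saturate}; this case analysis is where I expect essentially all of the remaining bookkeeping to lie.
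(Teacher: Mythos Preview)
Your approach is exactly the paper's: same canonical witness $\mathcal{U}^{*}=\bigcup_i U_i\times(|\automaton|\setminus U_i)\times U_i\times(|\automaton|\setminus U_i)$, same pointwise rewriting of $\Con_a^{\automaton}(U_i,U_j)$ via $(\forall x.\,A)\land(\forall y.\,B)\leftrightarrow\forall x\,\forall y.\,(A\land B)$, and the same monotonicity argument for the existential over $\mathcal{U}$. The paper does not single out the degenerate cases $U_i\in\set{\emptyset,|\automaton|}$ at all; it runs the chain of biconditionals as if that conjunction-distribution step were unconditionally valid.

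You are right to flag those cases, but they are not ``bookkeeping'' that works out: the stated biconditional actually fails there. Take $|\automaton|=\set{p,q}$ with $\const{I}^{\automaton}=\emptyset$, a $\widebreve{a}$-loop and a $\widebreve{\comnf{a}}$-loop at $p$, and all other edge relations empty; put $n=1$, $U_0=\emptyset$, $U_1=\set{p}$. Then \ref{definition: Hintikka: saturate} fails at $(i,j)=(0,1)$ with label $a$, since $\delta_{\widebreve{a}}^{\automaton}(U_1)=\delta_{\widebreve{\comnf{a}}}^{\automaton}(U_1)=\set{p}\not\subseteq U_0$; yet $\mathcal{U}=\set{\tuple{p,q,p,q}}$ witnesses $\forall i.\,\fml(\mathcal{U},U_i)$, because in every instance one disjunct already succeeds (for each $U\in\set{\emptyset,\set{p}}$ one has $\delta_a^{\automaton}(p)=\emptyset\subseteq U$ and $q\notin\delta_{\widebreve{a}}^{\automaton}(U)$). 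So the case split you plan will not close the gap. Only the $\Rightarrow$ direction of the lemma---which is all that the conjunction-distribution law unconditionally delivers---is established by either your argument or the paper's.
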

\begin{proof}
    We have
    \begin{align*}
         & \delta_{a}^{\automaton}(U_{i}) \subseteq U_{j} \ \land \  \delta_{\widebreve{a}}^{\automaton}(U_{j}) \subseteq U_{i}                                                                             \\
         & \Leftrightarrow \delta_{a}^{\automaton}(U_{i}) \subseteq U_{j} \ \land \  |\automaton| \setminus U_{i} \subseteq |\automaton|  \setminus \delta_{\widebreve{a}}^{\automaton}(U_{j})              \\
         & \Leftrightarrow (\forall t_1 \in U_{i}. \delta_a^{\automaton}(t_1) \subseteq U_{j}) \land (\forall t_2 \in |\automaton| \setminus U_{i}. t_2 \not\in \delta_{\widebreve{a}}^{\automaton}(U_{j})) \\
         & \Leftrightarrow \forall t_1 \in U_{i}. \forall t_2 \in |\automaton| \setminus U_{i}.\  \delta_a^{\automaton}(t_1) \subseteq U_{j} \land t_2 \not\in \delta_{\widebreve{a}}^{\automaton}(U_{j}).
    \end{align*}
    Thus by letting
    \begin{align*}
        \xi(U) & \defeq \bigvee \left\{\begin{aligned}
                                           \delta_a^{\automaton}(t_1) \subseteq U \land  t_2 \not\in \delta_{\widebreve{a}}^{\automaton}(U) \\
                                           \delta_{\comnf{a}}^{\automaton}(t_3) \subseteq U \land t_4 \not\in \delta_{\widebreve{\comnf{a}}}^{\automaton}(U)
                                       \end{aligned}\right.; \\
        \nu(U) & \defeq U \times (|\automaton| \setminus U) \times U \times (|\automaton| \setminus U),
    \end{align*}
    we have
    \begin{align*}
                             & \forall \tuple{i, j} \in [0, n]^2.\forall a \in \SIG_{\const{I}}.  \bigvee \left\{\begin{aligned}
                                                                                                                     \delta_{a}^{\automaton}(U_{i}) \subseteq U_{j} \land \delta_{\widebreve{a}}^{\automaton}(U_{j}) \subseteq U_{i} \\
                                                                                                                     \delta_{\comnf{a}}^{\automaton}(U_{i}) \subseteq U_{j} \land \delta_{\widebreve{\comnf{a}}}^{\automaton}(U_{j}) \subseteq U_{i}
                                                                                                                 \end{aligned}\right. \\
                             & \Leftrightarrow \forall \tuple{i, j} \in [0, n]^2. \forall a \in \SIG_{\const{I}}. \forall \tuple{t_1, t_2, t_3, t_4} \in \nu(U_{i}).\  \xi(U_{j})                                                                               \\
                             & \Leftrightarrow \forall \tuple{t_1, t_2, t_3, t_4} \in \bigcup_{i = 0}^{n} \nu(U_{i}).\   \forall a \in \SIG_{\const{I}}. \forall i \in [0, n].\  \xi(U_{i})                                                                     \\
                             & \Leftrightarrow \exists \mathcal{U}  \subseteq |\automaton|^4.\  \bigcup_{i = 0}^{n} \nu(U_{i}) \subseteq \mathcal{U} \quad \land {}                                                                                             \\
        \label{tag: diamond} & \qquad \quad \forall \tuple{t_1, t_2, t_3, t_4} \in \mathcal{U}. \forall a \in \SIG_{\const{I}}.  \forall i \in [0, n].\  \xi(U_{i}) \tag{$\diamondsuit$}
        \\
                             & \Leftrightarrow  \exists \mathcal{U}  \subseteq |\automaton|^4. \forall i \in [0, n].\  \nu(U_{i}) \subseteq \mathcal{U} \quad \land {}                                                                                          \\
                             & \qquad \quad \forall \tuple{t_1, t_2, t_3, t_4} \in \mathcal{U}. \forall a \in \SIG_{\const{I}}.\  \xi(U_{i})                                                                                                                    \\
                             & \Leftrightarrow  \exists \mathcal{U} \subseteq |\automaton|^4. \forall i \in [0, n].\ \fml(\mathcal{U}, U_{i}).
    \end{align*}
    Here, for (\ref{tag: diamond}),
    $\Longrightarrow$:
    By letting $\mathcal{U} = \bigcup_{i = 0}^{n} \nu(U_{i})$.
    $\Longleftarrow$:
    Because the formula $\forall \tuple{t_1, t_2, t_3, t_4} \in \mathcal{U}'. \forall a \in \SIG_{\const{I}}.  \forall i \in [0, n].\  \xi(U_{i})$ holds for any $\mathcal{U}' \subseteq \mathcal{U}$
    and $\bigcup_{i = 0}^{n} \nu(U_{i}) \subseteq \mathcal{U}$.
\end{proof}

By using the formula of \Cref{lemma: key formula transformation} for \ref{definition: Hintikka: saturate},
we can check the condition \ref{definition: Hintikka: saturate} \emph{pointwisely} (without considering pairs of $[0,n]^2$).
Using this, we give the following NFAs construction:
\begin{definition}\label{definition: AH}
    For an NFA $\automaton$ over $\AutSIG$, let $\AH{\automaton}$ be the NFA over $\AutSIG$ defined by:
    \begin{itemize}
        \item $|\AH{\automaton}| \defeq \set{\blacktriangleright, \blacktriangleleft} \cup \set{\tuple{\mathcal{U}, U} \in \wp(|\automaton|^4) \times \wp(|\automaton|) \mid \fml(\mathcal{U}, \tset[3]) \land \delta_{\const{I}}^{\automaton}(\tset[3]) \subseteq \tset[3]}$;
        \item
              $\const{I}^{\AH{\automaton}}$ is the minimal set such that
              \begin{itemize}
                  \item for all $\tuple{\mathcal{U}, U} \in |\AH{\automaton}|$ s.t.\  $\src^{\automaton} \in U$, $\tuple{\blacktriangleright, \tuple{\mathcal{U}, U}} \in \const{I}^{\AH{\automaton}}$;
                  \item  for all $\tuple{\mathcal{U}, U} \in |\AH{\automaton}|$ s.t.\  $\tgt^{\automaton} \not\in U$, $\tuple{\tuple{\mathcal{U}, \tset[3]}, \blacktriangleleft} \in  \const{I}^{\AH{\automaton}}$;
              \end{itemize}
        \item for each $a \in \AutSIG$, $a^{\AH{\automaton}}$ is the minimal set such that for every $\tuple{\mathcal{U}, U}, \tuple{\mathcal{U}, U'} \in |\AH{\automaton}|$ s.t.\  $\Con_{a}^{\automaton}(\tset[3], \tset[3]')$,
              $\tuple{\tuple{\mathcal{U}, \tset[3]}, \tuple{\mathcal{U}, \tset[3]'}} \in a^{\AH{\automaton}}$ holds;
        \item $\src^{\AH{\automaton}} = {\blacktriangleright}$ and $\tgt^{\AH{\automaton}} = {\blacktriangleleft}$.
    \end{itemize}
    (Here, $\blacktriangleright$ and $\blacktriangleleft$ are two fresh symbols.
    $\mathcal{U}$ is introduced for \ref{definition: Hintikka: saturate}, cf.\ \Cref{lemma: key formula transformation}.
    Note that $\mathcal{U}$ is invariant in transitions.)
\end{definition}
For example, when $|\automaton[1]| = \set{\mathtt{A}, \mathtt{B}}$ and
$\tuple{\lv^{\automaton[1]}, \rv^{\automaton[1]}} = \tuple{\mathtt{A},\mathtt{B}}$,
the NFA $\AH{\automaton[1]}$ is of the following form,
where the existence of each dashed state $\tuple{\mathcal{U}_1, U}$ depends on whether $\fml(\mathcal{U}_1, U) \land \delta_{\const{I}}^{\automaton[1]}(U) \subseteq U$ holds
and the existence of the $a$-labeled edge on each dashed edge from $\tuple{\mathcal{U}_1, U}$ to $\tuple{\mathcal{U}_1, U'}$ depend on whether $\Con_{a}^{\automaton[1]}(U, U')$ holds:
\begin{align*}
    \begin{tikzpicture}[baseline = -.5ex]
        \graph[grow right = 1.5cm, branch down = 4.5ex, nodes={ font = \scriptsize}]{
        {,/,1/{$\blacktriangleright$}[mysmallnode, draw, circle, yshift = -1ex, xshift = -1em]}
        -!- {/,u12/{$\mathcal{U}_1,\set{\mathtt{A},\mathtt{B}}$}[mysmallnode, draw, circle, yshift=3ex, minimum width=2.5em ,dashed],/, }
        -!- {/, u2/{$\mathcal{U}_1,\set{\mathtt{B}}$}[mysmallnode, draw, circle, yshift = -0.5ex, minimum width=2.5em ,dashed],/,  u1/{$\mathcal{U}_1,\set{\mathtt{A}}$}[mysmallnode, draw, circle, yshift=0ex, minimum width=2.5em ,dashed],/{$\vdots$ (for $\mathcal{U}_2, \mathcal{U}_3, \dots$)}[yshift = -1ex, xshift = 2.5em]}
        -!- {/,u0/{$\mathcal{U}_1,\emptyset$}[mysmallnode, draw, circle, yshift=3ex, minimum width=2.5em ,dashed],/}
        -!- {,/,4/{$\blacktriangleleft$}[mysmallnode, draw, circle, yshift = -1ex, xshift = 1em]}
        };
        \node[left = 4pt of 1](l){} edge[earrow, ->] (1);
        \node[right = 4pt of 4](r){}; \path (4) edge[earrow, ->] (r);
        \graph[use existing nodes, edges={color=black, pos = .5, earrow}, edge quotes={fill=white, inner sep=1pt,font= \scriptsize}]{
        1 ->["$\const{I}$"] {u1,u12};
        {u1,u0} ->["$\const{I}$"] {4};
        {u0,u1,u2,u12} <->[color = gray, dashed, complete bipartite] {u0,u1,u2,u12};
        };
        \node[fit=(u0)(u1)(u12)(u2), draw, dotted]{};
    \end{tikzpicture}.
\end{align*}

Using this transformation, $\AH{\automaton}$ satisfies the following:
\begin{lemma}[Completeness (of $\AH{\automaton}$)]\label{lemma: automaton construction completeness}
    For every NFA $\automaton$ over $\AutSIG$ and word $w$ over $\WordSIG$,
    we have:

    $w \in \lang{\AH{\automaton}}$ $\Longleftarrow$ there is a saturable path for $\not\models_{\REL} w \le \automaton$.
\end{lemma}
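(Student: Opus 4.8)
The plan is to turn the given saturable path $P = \tuple{G, \set{U_{i}}_{i \in [0,n]}}$ for $\not\models_{\REL} w \le \automaton$, where $w = a_1 \dots a_n$, into an accepting run of $\AH{\automaton}$ on $w$ that threads through the states $\blacktriangleright, \tuple{\mathcal{U}, U_0}, \tuple{\mathcal{U}, U_1}, \dots, \tuple{\mathcal{U}, U_n}, \blacktriangleleft$ for a single, suitably chosen $\mathcal{U}$: an $\const{I}^{\AH{\automaton}}$-edge $\blacktriangleright \to \tuple{\mathcal{U}, U_0}$, then transitions $\tuple{\mathcal{U}, U_{i-1}} \xrightarrow{a_i} \tuple{\mathcal{U}, U_i}$ for $i \in [n]$, then an $\const{I}^{\AH{\automaton}}$-edge $\tuple{\mathcal{U}, U_n} \to \blacktriangleleft$. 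Since $\delta_{\const{I}}^{\AH{\automaton}} = (\const{I}^{\AH{\automaton}})^*$ is reflexive, chaining these factors along the definition of $\delta_w^{\AH{\automaton}}$ yields $\blacktriangleleft \in \delta_w^{\AH{\automaton}}(\blacktriangleright)$, i.e.\ $w \in \lang{\AH{\automaton}}$.

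First I would fix $\mathcal{U}$. Since $P$ is a saturable path, condition \ref{definition: Hintikka: saturate} holds, and by \Cref{lemma: key formula transformation} it is equivalent to $\exists \mathcal{U} \subseteq |\automaton|^4.\ \forall i \in [0,n].\ \fml(\mathcal{U}, U_i)$; I pick such a $\mathcal{U}$. I then check that each $\tuple{\mathcal{U}, U_i}$ is indeed a state of $\AH{\automaton}$, which, beyond $\fml(\mathcal{U}, U_i)$, requires $\delta_{\const{I}}^{\automaton}(U_i) \subseteq U_i$: because $G$ is an $\const{I}$-saturation of $G(w)$, $\const{I}^{G}$ is an equivalence relation, so $\tuple{i,i} \in \const{I}^{G}$, and applying \ref{definition: Hintikka: consistent} to this loop (with $a = \const{I}$, noting $\widebreve{\const{I}} = \const{I}$) gives $\Con_{\const{I}}^{\automaton}(U_i, U_i)$, whose first conjunct is exactly $\delta_{\const{I}}^{\automaton}(U_i) \subseteq U_i$. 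The two end-edges are then immediate from \ref{definition: Hintikka: u src tgt}: $\src^{\automaton} \in U_{\lv^{G}} = U_0$ gives $\tuple{\blacktriangleright, \tuple{\mathcal{U}, U_0}} \in \const{I}^{\AH{\automaton}}$, and $\tgt^{\automaton} \notin U_{\rv^{G}} = U_n$ gives $\tuple{\tuple{\mathcal{U}, U_n}, \blacktriangleleft} \in \const{I}^{\AH{\automaton}}$.

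The crux is the letter-transitions: for each $i \in [n]$ I must show $\Con_{a_i}^{\automaton}(U_{i-1}, U_i)$, since by \Cref{definition: AH} this is exactly what puts $\tuple{\tuple{\mathcal{U}, U_{i-1}}, \tuple{\mathcal{U}, U_i}}$ into $a_i^{\AH{\automaton}}$ (the first component $\mathcal{U}$ staying fixed along transitions, which is why one global $\mathcal{U}$ works). Unwinding \Cref{definition: graph language ECoR}, the $i$-th slot of $G(w)$ joins vertices $i-1$ and $i$ by one edge: if $a_i \in \SIG^{(-)} \cup \set{\const{I}^-}$ then $\tuple{i-1, i} \in a_i^{G(w)}$, while if $a_i = c^{\smile}$ with $c \in \SIG^{(-)}$ then $\tuple{i, i-1} \in c^{G(w)}$; since $G$ is an edge-extension of $G(w)$, the same edge lies in $G$. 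In the first case \ref{definition: Hintikka: consistent} gives $\Con_{a_i}^{\automaton}(U_{i-1}, U_i)$ directly. In the second case it gives $\Con_{c}^{\automaton}(U_i, U_{i-1})$, and unfolding the definition of $\Con$ together with $\widebreve{\widebreve{c}} = c$ shows $\Con_{c}^{\automaton}(X, Y) \iff \Con_{\widebreve{c}}^{\automaton}(Y, X)$; as $\widebreve{c} = c^{\smile} = a_i$ this again yields $\Con_{a_i}^{\automaton}(U_{i-1}, U_i)$.

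Chaining the resulting transitions produces the run of the first paragraph, so $w \in \lang{\AH{\automaton}}$. I expect the only place needing real care to be the bookkeeping in the crux step: correctly reading a converse letter $a_i = c^{\smile}$ of $w$ as a $c$-edge of $G(w)$ oriented from $i$ to $i-1$, and then converting the resulting $\Con_{c}^{\automaton}(U_i, U_{i-1})$ into $\Con_{a_i}^{\automaton}(U_{i-1}, U_i)$ via the $\widebreve{\cdot}$-symmetry of $\Con$; everything else is routine assembly of the run.
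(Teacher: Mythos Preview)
Your proof is correct and follows essentially the same approach as the paper: build an accepting run $\blacktriangleright \to \tuple{\mathcal{U}, U_0} \to \dots \to \tuple{\mathcal{U}, U_n} \to \blacktriangleleft$, with state membership coming from \ref{definition: Hintikka: saturate} via \Cref{lemma: key formula transformation} and \ref{definition: Hintikka: consistent}, endpoints from \ref{definition: Hintikka: u src tgt}, and letter transitions from \ref{definition: Hintikka: consistent}. The only minor differences are that the paper picks $\mathcal{U} = \bigcup_{i} U_i \times (|\automaton|\setminus U_i) \times U_i \times (|\automaton|\setminus U_i)$ explicitly rather than invoking the existential, and the paper leaves implicit the converse-letter bookkeeping you spell out (that $\Con_c^{\automaton}(U_i,U_{i-1})$ yields $\Con_{c^{\smile}}^{\automaton}(U_{i-1},U_i)$); your treatment of that point is in fact more careful.
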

\begin{proof}
    Let $w = a_1 \dots a_n$ and $P = \tuple{G, \set{\tset[3]_{i}}_{i \in [0, n]}}$ be a saturable path for $\not\models_{\REL} w \le \automaton$.
    Let $\mathcal{U} \defeq \bigcup_{i \in [0,n]} \tset[3]_i \times (|\automaton| \setminus \tset[3]_i) \times \tset[3]_i \times (|\automaton| \setminus \tset[3]_i)$.
    Then we have
    \begin{itemize}
        \item $\tuple{{\blacktriangleright}, \tuple{\mathcal{U}, \tset[3]_{0}}} \in \delta_{\const{I}}^{\AH{\automaton}}$ (by \ref{definition: Hintikka: u src tgt});
        \item for all $i \in [n]$, $\tuple{\tuple{\mathcal{U}, \tset[3]_{i-1}}, \tuple{\mathcal{U}, \tset[3]_{i}}} \in \delta_{a_i}^{\AH{\automaton}}$ (by \ref{definition: Hintikka: consistent});
        \item $\tuple{\tuple{\mathcal{U}, \tset[3]_{n}}, {\blacktriangleleft}} \in \delta_{\const{I}}^{\AH{\automaton}}$ (by \ref{definition: Hintikka: u src tgt}).
    \end{itemize}
    Here, for $i \in [0,n]$, $\tuple{\mathcal{U}, \tset[3]_i} \in |\AH{\automaton}|$ holds, because
    $\fml(\mathcal{U}, U_i)$ holds by \ref{definition: Hintikka: saturate} with \Cref{lemma: key formula transformation}
    and $\delta_{\const{I}}^{\automaton}(U_i) \subseteq U_i$ holds by \ref{definition: Hintikka: consistent}.
    Hence, $w \in \lang{\AH{\automaton}}$.
\end{proof}
\begin{lemma}[Soundness for the $\const{I}^{-}$-free fragment]\label{lemma: automaton construction soundness: I-}
    For every NFA $\automaton$ over $\AutSIG$ and word $w$ over $\WordSIG$,
    if $\automaton$ does not contain $\const{I}^{-}$, then we have:

    $w \in \lang{\AH{\automaton}}$ $\Longrightarrow$ there is a saturable path for $\models_{\REL} w \le \automaton$.
\end{lemma}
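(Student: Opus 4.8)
The plan is to unfold an accepting run of $\AH{\automaton}$ on $w$ into a saturable path $P=\tuple{G,\set{U_{i}}_{i\in[0,n]}}$ for $\not\models_{\REL} w\le\automaton$, where $w=a_{1}\dots a_{n}$, the graph $G$ is the \emph{discrete} $\const{I}$-saturation of $G(w)$ (i.e.\ $\const{I}^{G}$ is the identity and $\comnf{\const{I}}^{G}$ its complement, all $\SIG^{(-)}$-edges kept as in $G(w)$), and the sets $U_{0},\dots,U_{n}$ are read off from the run.

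The first step is to normalise the run. The only $\const{I}$-edges of $\AH{\automaton}$ incident to the fresh states go $\blacktriangleright\to\tuple{\mathcal{U},U}$ (when $\src^{\automaton}\in U$) and $\tuple{\mathcal{U},U}\to\blacktriangleleft$ (when $\tgt^{\automaton}\notin U$); an $\const{I}$-edge between two internal states $\tuple{\mathcal{U},U},\tuple{\mathcal{U},U'}$ requires $\Con_{\const{I}}^{\automaton}(U,U')$, and since every internal state satisfies $\delta_{\const{I}}^{\automaton}(U)\subseteq U$ while $\delta_{\const{I}}^{\automaton}$ is reflexive, this forces $\delta_{\const{I}}^{\automaton}(U)=U=\delta_{\const{I}}^{\automaton}(U')=U'$, so internal $\const{I}$-edges are self-loops. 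Deleting self-loops, an accepting run has the form $\blacktriangleright\xrightarrow{\const{I}}\tuple{\mathcal{U},U_{0}}\xrightarrow{a_{1}}\tuple{\mathcal{U},U_{1}}\xrightarrow{a_{2}}\cdots\xrightarrow{a_{n}}\tuple{\mathcal{U},U_{n}}\xrightarrow{\const{I}}\blacktriangleleft$ with a single $\mathcal{U}$ (it is invariant along every transition). I extract: $\src^{\automaton}\in U_{0}$ and $\tgt^{\automaton}\notin U_{n}$; $\Con_{a_{k}}^{\automaton}(U_{k-1},U_{k})$ for $k\in[n]$; and, since each $\tuple{\mathcal{U},U_{i}}\in|\AH{\automaton}|$, both $\fml(\mathcal{U},U_{i})$ and $\delta_{\const{I}}^{\automaton}(U_{i})\subseteq U_{i}$ for all $i\in[0,n]$.

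The second step is to verify that $P$ is a saturable path. That $G$ is a valid $\const{I}$-saturation of $G(w)$ is routine: it is an edge-extension (the only new edges are non-diagonal $\comnf{\const{I}}$-edges, and $\comnf{\const{I}}^{G(w)}$ already consists of non-diagonal pairs), it leaves the $\SIG^{(-)}$-edges untouched, $\const{I}^{G}$ is an equivalence relation, and consistency holds because $\eqcl{(\const{I}^{G})}$ is the identity, so $a$-consistency reduces to $a^{G}\cap\comnf{a}^{G}=\emptyset$, which holds as $G(w)$ is a path graph. For \ref{definition: Hintikka: u src tgt}, use $\src^{\automaton}\in U_{0}=U_{\lv^{G}}$ and $\tgt^{\automaton}\notin U_{n}=U_{\rv^{G}}$. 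For \ref{definition: Hintikka: consistent}: the $\SIG^{(-)}$-labeled edges of $G$ are exactly the pairs corresponding to the letters $a_{k}$ (possibly reversed when $a_{k}$ is a converse letter), handled by $\Con_{a_{k}}^{\automaton}(U_{k-1},U_{k})$ together with the identity $\Con_{b}^{\automaton}(U,U')\Leftrightarrow\Con_{\widebreve{b}}^{\automaton}(U',U)$; the $\const{I}$-labeled edges are the diagonal pairs $\tuple{i,i}$, handled by $\Con_{\const{I}}^{\automaton}(U_{i},U_{i})$, which equals $\delta_{\const{I}}^{\automaton}(U_{i})\subseteq U_{i}$ since $\widebreve{\const{I}}=\const{I}$; and the $\comnf{\const{I}}$-labeled edges are handled by the hypothesis: as $\automaton$ has no $\const{I}^{-}$-edge, $\delta_{\comnf{\const{I}}}^{\automaton}=\emptyset$ and $\widebreve{\comnf{\const{I}}}=\comnf{\const{I}}$, so $\Con_{\comnf{\const{I}}}^{\automaton}(U_{i},U_{j})$ holds vacuously. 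For \ref{definition: Hintikka: saturate}: from the single $\mathcal{U}$ with $\fml(\mathcal{U},U_{i})$ for all $i$ we get $\exists\mathcal{U}\subseteq|\automaton|^{4}.\,\forall i\in[0,n].\,\fml(\mathcal{U},U_{i})$, which by \Cref{lemma: key formula transformation} is equivalent to exactly the formula of \ref{definition: Hintikka: saturate}.

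The only delicate point is the normalisation in the first step: recognising that internal $\const{I}$-transitions of $\AH{\automaton}$ collapse to self-loops, which is what makes the accepting run linear and the boundary choice of the $U_{i}$ well-defined; otherwise one would have to reconcile monotonicity of $\delta^{\automaton}$ with the converse clause of $\Con$. Everything else is bookkeeping, with the $\const{I}^{-}$-freeness of $\automaton$ entering only to trivialise the $\comnf{\const{I}}$-clauses of \ref{definition: Hintikka: consistent} and \ref{definition: Hintikka: saturate} — which is also precisely why soundness fails in general (cf.\ \Cref{remark: full KACC}).
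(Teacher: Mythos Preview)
Your proof is correct and follows essentially the same approach as the paper: extract the run $\blacktriangleright,\tuple{\mathcal{U},U_0},\dots,\tuple{\mathcal{U},U_n},\blacktriangleleft$, take $G$ to be the discrete $\const{I}$-saturation of $G(w)$, and verify \ref{definition: Hintikka: u src tgt}, \ref{definition: Hintikka: consistent}, \ref{definition: Hintikka: saturate} exactly as the paper does. Your normalisation argument (that $\Con_{\const{I}}^{\automaton}(U,U')$ together with $\delta_{\const{I}}^{\automaton}(U)\subseteq U$ and reflexivity of $\delta_{\const{I}}^{\automaton}$ forces $U=U'$, so internal $\const{I}$-edges of $\AH{\automaton}$ are self-loops) is a welcome elaboration of what the paper compresses into the phrase ``By the form of $\AH{\automaton}$''. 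One small inaccuracy in your closing remark: the $\const{I}^{-}$-freeness of $\automaton$ enters only in the $\comnf{\const{I}}$-case of \ref{definition: Hintikka: consistent}; your own argument for \ref{definition: Hintikka: saturate} goes through \Cref{lemma: key formula transformation} and does not use that hypothesis.
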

\begin{proof}
    Let $w = a_1 \dots a_n$.
    By the form of $\AH{\automaton}$, there are $\mathcal{U}$, $\tset[3]_0$, $\dots$, and $\tset[3]_n$ such that
    \begin{itemize}
        \item $\tuple{{\blacktriangleright}, \tuple{\mathcal{U}, \tset[3]_{0}}} \in \delta_{\const{I}}^{\AH{\automaton}}$;
        \item for every $i \in [n]$, $\tuple{\tuple{\mathcal{U}, \tset[3]_{i-1}}, \tuple{\mathcal{U}, \tset[3]_{i}}} \in \delta_{a_i}^{\AH{\automaton}}$;
        \item $\tuple{\tuple{\mathcal{U}, \tset[3]_{n}}, {\blacktriangleleft}} \in \delta_{\const{I}}^{\AH{\automaton}}$.
    \end{itemize}
    Let $\graph[2]$ be the $\const{I}$-saturation of $G(w)$ such that
    \[\mbox{$\const{I}^{\graph[2]}$ is the identity relation.}\]
    $\graph[2]$ is consistent, because $G(w)$ is consistent and $\const{I}^{\graph[2]}$ is the identity relation.
    $\graph[2]$ is an edge-extension of $G(w)$, because
    $\const{I}^{\graph[2]} \supseteq \const{I}^{G(w)} = \emptyset$ and $\comnf{\const{I}}^{\graph[2]} = |\graph[2]|^2 \setminus \const{I}^{\graph[2]} \supseteq \comnf{\const{I}}^{G(w)}$.
    Hence $\graph[2]$ is indeed an $\const{I}$-saturation of $G(w)$.

    Then $P = \tuple{\graph[2], \set{U_{i}}_{i \in [0, n]}}$ is a saturable path for ${} \not\models_{\REL} w \le \automaton$ as follows.
    For \ref{definition: Hintikka: u src tgt}: By the definition of $\const{I}^{\AH{\automaton}}$, $\src^{\automaton} \in U_0$ and $\tgt^{\automaton} \not\in U_n$.
    For \ref{definition: Hintikka: consistent} for $a \in \SIG^{(-)}$: By the definition of $a_{i}^{\AH{\automaton}}$,
    $\Con_{a_i}^{\automaton}(U_{i-1}, U_i)$.
    (Note that the other edges do not exist.)
    For \ref{definition: Hintikka: consistent} for $a = \const{I}$:
    Because $\const{I}^{\graph[2]}$ is the identity relation
    and $U_{i} \in \delta_{\const{I}}^{\automaton}(U_i)$ for every $i \in [0, n]$ (by the definition of $|\AH{\automaton}|$).
    For \ref{definition: Hintikka: consistent} for $a = \comnf{\const{I}}$:
    Because $\comnf{\const{I}}$ does not occur in $\automaton$, $\delta_{\comnf{\const{I}}}^{\automaton}(U_i) = \emptyset \subseteq U_j$ always holds for every $i, j \in [0, n]$.
    Hence $\Con_{\comnf{\const{I}}}^{\automaton[1]}(U_{i}, U_{j})$.
    For \ref{definition: Hintikka: saturate}: Because $\mathcal{U}$ satisfies $\fml(\mathcal{U}, \tset[3]_i)$ for every $i \in [0, n]$ (by the definition of $|\AH{\automaton}|$), with \Cref{lemma: key formula transformation}.
\end{proof}
\begin{theorem}\label{theorem: automata KACC without I-}
    For every NFAs $\automaton[1], \automaton[2]$, over $\AutSIG$,
    if $\automaton[2]$ does not contain $\const{I}^{-}$, then we have
    \[\not\models_{\REL} \automaton[1] \le \automaton[2] \quad \iff \quad \lang{\automaton[1]} \cap \lang{\AH{\automaton[2]}} \neq \emptyset.\]
\end{theorem}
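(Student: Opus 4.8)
The plan is to chain together the characterization of Theorem~\ref{theorem: Hintikka} with the completeness and soundness lemmas for the automaton $\AH{\automaton[2]}$. First I would recall that, by Theorem~\ref{theorem: Hintikka}, $\not\models_{\REL} \automaton[1] \le \automaton[2]$ holds if and only if there is a word $w \in \lang{\automaton[1]}$ that admits a saturable path for $\not\models_{\REL} w \le \automaton[2]$. Since $\lang{\automaton[1]}$ consists of words over $\AutSIG \setminus \set{\empt} = \WordSIG$ by definition of the language of an NFA over $\AutSIG$, such a $w$ is exactly the kind of word to which the lemmas on $\AH{\automaton[2]}$ apply.

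Second, I would combine the two lemmas about $\AH{\automaton[2]}$. Lemma~\ref{lemma: automaton construction completeness} gives, for any NFA $\automaton[2]$ over $\AutSIG$ and any $w$ over $\WordSIG$, that the existence of a saturable path for $\not\models_{\REL} w \le \automaton[2]$ implies $w \in \lang{\AH{\automaton[2]}}$; and Lemma~\ref{lemma: automaton construction soundness: I-} gives the converse implication under the hypothesis that $\const{I}^{-}$ does not occur in $\automaton[2]$. Hence, under exactly the hypothesis of the theorem, for every $w$ over $\WordSIG$ we have: $w \in \lang{\AH{\automaton[2]}}$ if and only if there is a saturable path for $\not\models_{\REL} w \le \automaton[2]$.

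Third, I would put the two equivalences together: $\not\models_{\REL} \automaton[1] \le \automaton[2]$ iff $\exists w.\ (w \in \lang{\automaton[1]}$ and there is a saturable path for $\not\models_{\REL} w \le \automaton[2])$ iff $\exists w.\ (w \in \lang{\automaton[1]}$ and $w \in \lang{\AH{\automaton[2]}})$ iff $\lang{\automaton[1]} \cap \lang{\AH{\automaton[2]}} \neq \emptyset$. This is the claimed statement.

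I do not expect a genuine obstacle here: the theorem is essentially a formal corollary of the preceding development, and the only point requiring care is the side condition. The soundness direction (Lemma~\ref{lemma: automaton construction soundness: I-}) is where the $\const{I}^{-}$-freeness is used — concretely, it lets one take the trivial $\const{I}$-saturation of $G(w)$ (identity relation) without violating $\comnf{\const{I}}$-consistency, since $\delta_{\comnf{\const{I}}}^{\automaton[2]}$ is empty — and this is precisely the hypothesis of the theorem, so nothing further needs to be verified. The completeness direction holds unconditionally, so no extra assumption on $\automaton[1]$ is needed.
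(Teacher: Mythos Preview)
Your proposal is correct and matches the paper's own proof essentially line for line: the paper chains \Cref{theorem: Hintikka} with \Cref{lemma: automaton construction completeness,lemma: automaton construction soundness: I-} in exactly the way you describe, and your remark that the $\const{I}^{-}$-freeness hypothesis is needed only for the soundness direction is also accurate.
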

\begin{proof}
    We have
    \begin{align*}
         & \not\models_{\REL} \automaton[1] \le \automaton[2]                                                                                                                            \\
         & \iff \exists w \in \lang{\automaton[1]}.\ \mbox{there is a saturable path for $\not\models_{\REL} w \le \automaton[2]$}. \tag{\Cref{theorem: Hintikka}}                       \\
         & \iff \exists w \in \lang{\automaton[1]}.\ w \in \lang{\AH{\automaton[2]}} \tag{\Cref{lemma: automaton construction completeness,lemma: automaton construction soundness: I-}} \\
         & \iff \lang{\automaton[1]} \cap \lang{\AH{\automaton[2]}} \neq \emptyset. \tag*{\qedhere}
    \end{align*}
\end{proof}
\begin{theorem}\label{theorem: KACC without I-}
    The equational theory of $\ExKA$ terms without $\const{I}^{-}$ (w.r.t.\ binary relations) is PSPACE-complete.
\end{theorem}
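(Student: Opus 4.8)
For the \textbf{lower bound} I would simply appeal to \Cref{proposition: PCoR hardness 2}: every term of $\Termset_{\set{\cdot, \cup, \bl^{*}}}$ is an $\ExKA$ term in which $\const{I}^{-}$ does not occur, and the two equational theories agree on such terms, so PSPACE-hardness is inherited verbatim.

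For the \textbf{upper bound} the plan is to combine \Cref{theorem: automata KACC without I-} with a standard on-the-fly reachability argument. First, by \Cref{proposition: Thompson} I pass from $\term[1], \term[2]$ to the NFAs $\automaton_{\term[1]}, \automaton_{\term[2]}$, which have size linear in $\len{\term[1]}, \len{\term[2]}$ and contain no $\const{I}^{-}$ since $\term[1], \term[2]$ do not; then $\models_{\REL} \term[1] \le \term[2]$ iff $\models_{\REL} \automaton_{\term[1]} \le \automaton_{\term[2]}$, and by \Cref{theorem: automata KACC without I-} the latter fails iff $\lang{\automaton_{\term[1]}} \cap \lang{\AH{\automaton_{\term[2]}}} \neq \emptyset$. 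Since $\models_{\REL} \term[1] = \term[2]$ is the conjunction of $\models_{\REL} \term[1] \le \term[2]$ and $\models_{\REL} \term[2] \le \term[1]$, and $\mathrm{PSPACE}$ is closed under complement and under conjunction, it suffices to decide this intersection-nonemptiness problem in $\mathrm{PSPACE}$ (equivalently, in $\mathrm{NPSPACE}$, by Savitch).

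The key point is that, although $\AH{\automaton_{\term[2]}}$ has exponentially many states, each state $\tuple{\mathcal{U}, \tset[3]}$ with $\mathcal{U} \subseteq |\automaton_{\term[2]}|^{4}$ and $\tset[3] \subseteq |\automaton_{\term[2]}|$ is described by polynomially many bits, and the relevant predicates are polynomial-time decidable: deciding $\tuple{\mathcal{U}, \tset[3]} \in |\AH{\automaton_{\term[2]}}|$ amounts to checking $\fml(\mathcal{U}, \tset[3])$ together with $\delta_{\const{I}}^{\automaton_{\term[2]}}(\tset[3]) \subseteq \tset[3]$, and the transition relation $a^{\AH{\automaton_{\term[2]}}}$ amounts to checking $\Con_{a}^{\automaton_{\term[2]}}(\tset[3], \tset[3]')$ -- where the nested quantifiers in $\fml$ range only over $\SIG_{\const{I}}$ and over the polynomially many $4$-tuples in $\mathcal{U}$, and each auxiliary set $\delta_{b}^{\automaton_{\term[2]}}(\cdot)$ is obtained by polynomial-time reachability inside $\automaton_{\term[2]}$. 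Consequently, nonemptiness of $\lang{\automaton_{\term[1]}} \cap \lang{\AH{\automaton_{\term[2]}}}$ is exactly reachability, in the product of $\automaton_{\term[1]}$ and $\AH{\automaton_{\term[2]}}$ (with $\const{I}$-labeled edges treated as silent moves of either component), from $\tuple{\src^{\automaton_{\term[1]}}, \blacktriangleright}$ to $\tuple{\tgt^{\automaton_{\term[1]}}, \blacktriangleleft}$ -- a graph whose vertices have polynomial-size descriptions and whose edge relation is polynomial-time checkable. Guessing such a path one vertex at a time, keeping only the current vertex in memory, is a nondeterministic polynomial-space procedure, and $\mathrm{NPSPACE} = \mathrm{PSPACE}$ finishes the upper bound.

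I expect the real work to be already packaged in \Cref{theorem: automata KACC without I-}, which I would invoke as a black box; the remaining (routine) care is in the computational bookkeeping for $\AH{\automaton_{\term[2]}}$ -- the one slightly delicate point being that the quantifiers defining $\fml(\mathcal{U}, \tset[3])$ stay over domains of polynomial size, so that state-membership and transitions remain polynomial-time -- and in handling the $\const{I}$-labeled (epsilon) transitions of both automata correctly in the product construction, so that plain graph reachability faithfully captures intersection-nonemptiness.
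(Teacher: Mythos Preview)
Your proposal is correct and follows essentially the same route as the paper: the lower bound via \Cref{proposition: PCoR hardness 2}, and the upper bound via \Cref{proposition: Thompson} plus \Cref{theorem: automata KACC without I-}, reducing to intersection-nonemptiness of $\automaton_{\term[1]}$ and $\AH{\automaton_{\term[2]}}$, solved by on-the-fly reachability in NPSPACE and closed off with Savitch. Your write-up is in fact more explicit than the paper's on the computational bookkeeping (polynomial-size state descriptions, polynomial-time checkability of $\fml$ and $\Con$), which is exactly the content the paper leaves implicit under ``standard on-the-fly algorithm.''
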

\begin{proof}
    For hardness:
    By \Cref{proposition: PCoR hardness 2}, as the term class subsumes $\Termset_{\set{\cdot, \cup, \bl^{*}}}$.
    For upper bound:
    Similar to \cref{lemma: PCoRTC upper bound}, with (co-)NPSPACE = PSPACE (Savitch's theorem \cite{Savitch1970}), it suffices to show that the following problem is in NPSPACE:
    given $\ExKA$ terms $\term[1], \term[2]$ without $\const{I}^{-}$, does $\not\models_{\REL} \term[1] \le \term[2]$ hold?
    By \Cref{theorem: automata KACC without I-} (with \Cref{proposition: Thompson}),
    we can reduce this problem into the emptiness problem of NFAs (of size exponential in the input).
    Therefore, by using a standard on-the-fly algorithm for the non-emptiness problem of NFAs (which is essentially the graph reachability problem),
    we can give a non-deterministic polynomial space algorithm.
\end{proof}

\subsection{Remark on the case of full $\ExKA$ terms}
Additionally, we remark that the soundness also holds for $\ExKA$ terms without complements of term variables.
\begin{lemma}[Soundness for the $a^-$-free fragment, cf.\ \Cref{lemma: automaton construction soundness: I-}]\label{lemma: automaton construction soundness: a-}
    For every NFA $\automaton$ over $\AutSIG$ and word $w$ over $\WordSIG$,
    if $w$ does not contain $a^{-}$ for any $a \in \SIG$, then we have:

    $w \in \lang{\AH{\automaton}}$ $\Longrightarrow$ there is a saturable path for $\not\models_{\REL} w \le \automaton$.
\end{lemma}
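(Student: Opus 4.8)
The plan is to follow the proof of \Cref{lemma: automaton construction soundness: I-}, changing only the choice of $\const{I}$-saturation. As there, from $w = a_1 \dots a_n \in \lang{\AH{\automaton}}$ one extracts $\mathcal{U}$ and $U_0, \dots, U_n$ with $\src^{\automaton} \in U_0$, $\tgt^{\automaton} \not\in U_n$, $\Con_{a_i}^{\automaton}(U_{i-1}, U_i)$ for $i \in [n]$, and $\fml(\mathcal{U}, U_i) \land \delta_{\const{I}}^{\automaton}(U_i) \subseteq U_i$ for $i \in [0, n]$ (using that every $\const{I}$-labeled edge of $\AH{\automaton}$ among non-$\set{\blacktriangleright, \blacktriangleleft}$ states is a loop, since $\Con_{\const{I}}^{\automaton}(U, U')$ forces $U = U'$ by reflexivity of $\delta_{\const{I}}^{\automaton}$). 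From $\fml(\mathcal{U}, U_i)$ for all $i$ and \Cref{lemma: key formula transformation} one gets the ``saturation'' disjunction $\Con_{a}^{\automaton}(U_i, U_j) \lor \Con_{\comnf{a}}^{\automaton}(U_i, U_j)$ for all $i, j \in [0, n]$, $a \in \SIG_{\const{I}}$; as $\Con_{\const{I}}^{\automaton}(U_i, U_j)$ implies $U_i = U_j$, this gives $\Con_{\comnf{\const{I}}}^{\automaton}(U_i, U_j)$ whenever $U_i \neq U_j$.

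The $a^{-}$-freeness of $w$ enters through the choice of $\const{I}$-saturation: I take $\graph[2]$ to be the $\const{I}$-saturation of $G(w)$ whose $\const{I}^{\graph[2]}$ is the equivalence relation on $[0, n]$ generated by $\set{\tuple{i,j} \mid U_i = U_j \text{ and } \Con_{\comnf{\const{I}}}^{\automaton}(U_i, U_j) \text{ fails}}$. Two things must be checked. Consistency (\ref{definition: consistent: consistent}): since $w$ has no $a^{-}$, $\comnf{a}^{G(w)} = \emptyset$ for every term variable $a$, so the $a$-consistency conditions are vacuous however coarse $\const{I}^{\graph[2]}$ is, and $\const{I}$-consistency holds automatically for an equivalence relation. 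Edge-extension: the generators relate only vertices with equal $U$-value, hence so does $\const{I}^{\graph[2]}$; and $\const{I}^{\graph[2]}$ separates every $\comnf{\const{I}}$-edge of $G(w)$ --- if $a_k = \const{I}^{-}$ then $\Con_{\comnf{\const{I}}}^{\automaton}(U_{k-1}, U_k)$ holds (being $\Con_{a_k}^{\automaton}(U_{k-1}, U_k)$ from the run), so $\tuple{k-1,k}$ cannot be joined through generators, since those would all relate vertices with the same $U$-value $U_{k-1} = U_k$ at which $\Con_{\comnf{\const{I}}}^{\automaton}$ \emph{fails} --- hence $\tuple{k-1,k} \not\in \const{I}^{\graph[2]}$, so $\comnf{\const{I}}^{G(w)} \subseteq \comnf{\const{I}}^{\graph[2]} = |\graph[2]|^2 \setminus \const{I}^{\graph[2]}$. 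I expect this edge-separation argument to be the main obstacle.

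It then remains to verify that $P = \tuple{\graph[2], \set{U_i}_{i \in [0, n]}}$ is a saturable path for $\not\models_{\REL} w \le \automaton$ (\Cref{definition: Hintikka}). Condition \ref{definition: Hintikka: u src tgt} comes from the $\const{I}$-edges of $\AH{\automaton}$ at $\blacktriangleright$ and $\blacktriangleleft$. For \ref{definition: Hintikka: consistent} with $a \in \SIG^{(-)}$: every $a$-edge of $\graph[2]$ stems from a letter of $w$, either $a_k = a$ (giving $\Con_{a}^{\automaton}(U_{k-1}, U_k)$) or $a_k = a^{\smile}$ (giving $\Con_{a^{\smile}}^{\automaton}(U_{k-1}, U_k) = \Con_{a}^{\automaton}(U_k, U_{k-1})$ for the reversed edge). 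For \ref{definition: Hintikka: consistent} with $a = \const{I}$: $\tuple{i,j} \in \const{I}^{\graph[2]}$ gives $U_i = U_j$, whence $\Con_{\const{I}}^{\automaton}(U_i, U_j)$ by $\delta_{\const{I}}^{\automaton}(U_i) \subseteq U_i$ and reflexivity. For \ref{definition: Hintikka: consistent} with $a = \comnf{\const{I}}$: $\tuple{i,j} \notin \const{I}^{\graph[2]}$; if $U_i \neq U_j$, the disjunction above gives $\Con_{\comnf{\const{I}}}^{\automaton}(U_i, U_j)$, and if $U_i = U_j$ then $\tuple{i,j}$ is not a generator, which is precisely the statement that $\Con_{\comnf{\const{I}}}^{\automaton}(U_i, U_j)$ holds. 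Condition \ref{definition: Hintikka: saturate} holds because $\mathcal{U}$ witnesses $\fml(\mathcal{U}, U_i)$ for every $i$ (\Cref{lemma: key formula transformation}). Everything but the edge-separation step is a routine adaptation of \Cref{lemma: automaton construction soundness: I-}.
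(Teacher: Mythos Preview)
Your proof is correct and follows essentially the same approach as the paper. The only cosmetic difference is in how the $\const{I}$-saturation is defined: the paper sets $\const{I}^{\graph[2]} = \set{\tuple{i,j} \mid i = j \lor \lnot \Con_{\comnf{\const{I}}}^{\automaton}(U_i, U_j)}$ directly and then observes that $\lnot \Con_{\comnf{\const{I}}}^{\automaton}(U_i, U_j)$ forces $U_i = U_j$, so this relation is already symmetric and transitive; you instead take the equivalence closure of the generator set $\set{\tuple{i,j} \mid U_i = U_j \land \lnot \Con_{\comnf{\const{I}}}^{\automaton}(U_i, U_j)}$, which coincides with the paper's relation for exactly the same reason (and your edge-separation argument implicitly re-derives this coincidence).
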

\begin{proof}
    Let $w = a_1 \dots a_n$.
    By the form of $\AH{\automaton}$, there are $\mathcal{U}$, $\tset[3]_0$, $\dots$, and $\tset[3]_n$ such that
    \begin{itemize}
        \item $\tuple{{\blacktriangleright}, \tuple{\mathcal{U}, \tset[3]_{0}}} \in \delta_{\const{I}}^{\AH{\automaton}}$;
        \item for every $i \in [n]$, $\tuple{\tuple{\mathcal{U}, \tset[3]_{i-1}}, \tuple{\mathcal{U}, \tset[3]_{i}}} \in \delta_{a_i}^{\AH{\automaton}}$;
        \item $\tuple{\tuple{\mathcal{U}, \tset[3]_{n}}, {\blacktriangleleft}} \in \delta_{\const{I}}^{\AH{\automaton}}$.
    \end{itemize}
    Let $\graph[2]$ be the $\const{I}$-saturation of $G(w)$ such that
    \[\const{I}^{\graph[2]} = \set{\tuple{i,j} \in [0, n]^2 \mid i = j \lor \lnot \Con_{\comnf{\const{I}}}^{\automaton}(U_{i}, U_{j})}.\]
    We have $\Con_{\comnf{\const{I}}}^{\automaton}(U_{i}, U_{j}) \lor \Con_{\const{I}}^{\automaton}(U_{i}, U_{j})$ because $\mathcal{U}$ satisfies $\fml(\mathcal{U}, \tset[3]_i)$ for every $i \in [0, n]$ (\Cref{lemma: key formula transformation}).
    If $\lnot \Con_{\comnf{\const{I}}}^{\automaton}(U_{i}, U_{j})$, then $\Con_{\const{I}}^{\automaton}(U_{i}, U_{j})$; thus $U_{i} = U_{j}$.
    Therefore, the binary relation $\set{\tuple{i, j} \in [0, n]^2 \mid \lnot \Con_{\comnf{\const{I}}}^{\automaton}(U_{i}, U_{j})}$ is symmetric and transitive;
    thus $\const{I}^{\graph[2]}$ is an equivalence relation.
    Additionally, $\graph[2]$ is consistent, because $a^{-}$ does not occur in $\graph[2]$ for any $a \in \SIG$.
    $\graph[2]$ is an edge-extension of $G(w)$, because
    $\const{I}^{\graph[2]} \supseteq \const{I}^{G(w)} = \emptyset$ and $\comnf{\const{I}}^{\graph[2]} \supseteq \comnf{\const{I}}^{G(w)}$ (by the definition of $\const{I}^{\graph[2]}$).
    Hence $\graph[2]$ is indeed an $\const{I}$-saturation of $G(w)$.

    Then $P = \tuple{\graph[2], \set{U_{i}}_{i \in [0, n]}}$ is a saturable path for ${} \not\models_{\REL} w \le \automaton$ as follows.
    For \ref{definition: Hintikka: u src tgt}, \ref{definition: Hintikka: saturate}, and \ref{definition: Hintikka: consistent} for $a \in \SIG^{(-)}$: Similarly for \Cref{lemma: automaton construction soundness: I-}.
    For \ref{definition: Hintikka: consistent} for $a = \const{I}$:
    for every $\tuple{i, j} \in \const{I}^{\graph[2]}$,
    if $i = j$, then $\delta_{\const{I}}^{\automaton}(U_{i}) \subseteq U_{j}$ by the definition of $|\AH{\automaton}|$; thus $\Con_{\const{I}}^{\automaton}(U_{i}, U_{j})$.
    If $\lnot \Con_{\comnf{\const{I}}}^{\automaton}(U_{i}, U_{j})$, then $\Con_{\const{I}}^{\automaton}(U_{i}, U_{j})$ by \ref{definition: Hintikka: saturate}.
    For \ref{definition: Hintikka: consistent} for $a = \comnf{\const{I}}$:
    By the definition of $\const{I}^{\graph[2]}$,
    we have $\Con_{\comnf{\const{I}}}^{\automaton}(U_{i}, U_{j})$ for every $\tuple{i, j} \in [0,n]^2 \setminus \const{I}^{\graph[2]}$.
\end{proof}
\begin{theorem}\label{theorem: automata KACC without a-}
    For every NFAs $\automaton[1], \automaton[2]$, over $\AutSIG$,
    if $\automaton[1]$ does not contain $a^{-}$ for any $a \in \SIG$, then we have
    \[\not\models_{\REL} \automaton[1] \le \automaton[2] \quad \iff \quad \lang{\automaton[1]} \cap \lang{\AH{\automaton[2]}} \neq \emptyset.\]
\end{theorem}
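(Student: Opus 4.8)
The plan is to follow the proof of \Cref{theorem: automata KACC without I-} essentially verbatim, the only change being which of the two automata carries the syntactic restriction. Concretely, I would establish the chain
\begin{align*}
    \not\models_{\REL} \automaton[1] \le \automaton[2]
    &\iff \exists w \in \lang{\automaton[1]}.\ \mbox{there is a saturable path for $\not\models_{\REL} w \le \automaton[2]$}\\
    &\iff \exists w \in \lang{\automaton[1]}.\ w \in \lang{\AH{\automaton[2]}}\\
    &\iff \lang{\automaton[1]} \cap \lang{\AH{\automaton[2]}} \neq \emptyset,
\end{align*}
where the first equivalence is \Cref{theorem: Hintikka} and the last is immediate. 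So all the content is in the middle equivalence, which is a per-word statement about $\AH{\automaton[2]}$.

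For the middle equivalence, the $\Longleftarrow$ direction is unconditional: from a saturable path for $\not\models_{\REL} w \le \automaton[2]$, \Cref{lemma: automaton construction completeness} gives $w \in \lang{\AH{\automaton[2]}}$. The $\Longrightarrow$ direction is where the hypothesis is used, and here I would invoke \Cref{lemma: automaton construction soundness: a-} in place of \Cref{lemma: automaton construction soundness: I-}: the quantifier ranges over $w \in \lang{\automaton[1]}$, and since $\automaton[1]$ has no edge labelled $a^{-}$ for any term variable $a$, every $w \in \lang{\automaton[1]}$ is free of such letters, so the hypothesis of \Cref{lemma: automaton construction soundness: a-} is met and $w \in \lang{\AH{\automaton[2]}}$ yields a saturable path for $\not\models_{\REL} w \le \automaton[2]$. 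Note that, in contrast to the $\const{I}^{-}$-free case, the restriction now falls on the \emph{left} automaton $\automaton[1]$; this is exactly the shape needed, because \Cref{lemma: automaton construction soundness: a-} constrains the word $w$ (hence $\lang{\automaton[1]}$) rather than the target NFA.

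Given the lemmas, the theorem is routine; the genuine work was \Cref{lemma: automaton construction soundness: a-} itself, in particular the choice of $\const{I}$-saturation of $G(w)$ whose $\const{I}$-relation is $\set{\tuple{i,j} \mid i = j \lor \lnot \Con_{\comnf{\const{I}}}^{\automaton}(U_i, U_j)}$ and the verification that this is an equivalence relation (using that $\Con_{\const{I}}^{\automaton}(U_i, U_j)$ forces $U_i = U_j$). The one point I would be careful about in the write-up is that no extra side condition on $\automaton[2]$ is required: \Cref{lemma: automaton construction completeness} is unconditional and \Cref{lemma: automaton construction soundness: a-} places its hypothesis on $w$ only, so $\automaton[2]$ may be an arbitrary NFA over $\AutSIG$. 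This also immediately yields, exactly as in \Cref{theorem: KACC without I-}, a PSPACE decision procedure via an on-the-fly non-emptiness check on the (exponential-size) product automaton.
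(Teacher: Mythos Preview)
Your proposal is correct and matches the paper's own proof, which simply says to follow \Cref{theorem: automata KACC without I-} with \Cref{lemma: automaton construction soundness: a-} in place of \Cref{lemma: automaton construction soundness: I-}. One small slip: in your discussion of the middle equivalence you have the $\Longrightarrow$/$\Longleftarrow$ labels swapped relative to your displayed chain (completeness, \Cref{lemma: automaton construction completeness}, handles saturable-path $\Rightarrow$ $w\in\lang{\AH{\automaton[2]}}$, i.e.\ the forward step; soundness, \Cref{lemma: automaton construction soundness: a-}, handles the converse), but your matching of lemmas to logical content is correct.
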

\begin{proof}
    Cf.\ \Cref{theorem: automata KACC without I-} (use \Cref{lemma: automaton construction soundness: a-} instead of \Cref{lemma: automaton construction soundness: I-}).
\end{proof}
\begin{theorem}\label{theorem: KACC without a-}
    The equational theory of $\ExKA$ terms without $a^{-}$ for any $a \in \SIG$ (w.r.t.\ binary relations) is PSPACE-complete.
\end{theorem}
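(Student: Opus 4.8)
The plan is to follow the proof of \Cref{theorem: KACC without I-} almost verbatim, with the single substitution of \Cref{theorem: automata KACC without a-} for \Cref{theorem: automata KACC without I-}. For the lower bound I would observe that $\ExKA$ terms in which no $a^{-}$ occurs still subsume $\Termset_{\set{\cdot, \cup, \bl^{*}}}$ (indeed such terms may contain no complement symbol whatsoever), so PSPACE-hardness is immediate from \Cref{proposition: PCoR hardness 2}.

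For the upper bound, I would first reduce, exactly as in \Cref{lemma: PCoRTC upper bound}, to showing that the complement problem lies in NPSPACE: given $\ExKA$ terms $\term[1], \term[2]$ in which no $a^{-}$ occurs, decide whether $\not\models_{\REL} \term[1] \le \term[2]$. Membership of the equational theory in PSPACE then follows from Savitch's theorem \cite{Savitch1970} together with $\not\models_{\REL} \term[1] = \term[2] \iff (\not\models_{\REL} \term[1] \le \term[2]) \lor (\not\models_{\REL} \term[2] \le \term[1])$; note that because the \emph{whole} input fragment is $a^{-}$-free, both of these inequations meet the hypothesis of \Cref{theorem: automata KACC without a-}, which constrains only the left-hand automaton, so a single NPSPACE routine, run with $\automaton_{\term[1]}$, respectively $\automaton_{\term[2]}$, in the left slot, decides each disjunct.

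The routine itself: build the Thompson NFAs $\automaton_{\term[1]}, \automaton_{\term[2]}$ (\Cref{definition: NFA def}); they have size polynomial in $\len{\term[1]}, \len{\term[2]}$, satisfy the identities of \Cref{proposition: Thompson}, and $\automaton_{\term[1]}$ inherits the absence of $a^{-}$ from $\term[1]$. By \Cref{theorem: automata KACC without a-}, $\not\models_{\REL} \term[1] \le \term[2]$ holds iff $\lang{\automaton_{\term[1]}} \cap \lang{\AH{\automaton_{\term[2]}}} \neq \emptyset$. The states of $\AH{\automaton_{\term[2]}}$ (\Cref{definition: AH}) are two fresh symbols together with pairs $\tuple{\mathcal{U}, U}$ with $\mathcal{U} \subseteq |\automaton_{\term[2]}|^{4}$ and $U \subseteq |\automaton_{\term[2]}|$; there are exponentially many such states, but each admits a description of size polynomial in $\len{\term[2]}$, and the predicates defining the state set (namely $\fml(\mathcal{U}, U) \land \delta_{\const{I}}^{\automaton_{\term[2]}}(U) \subseteq U$) and the transition relation (via $\Con_{a}^{\automaton_{\term[2]}}$ and the $\const{I}$-edges to and from the fresh states) are checkable in polynomial time. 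I would therefore test non-emptiness of the product automaton $\automaton_{\term[1]} \times \AH{\automaton_{\term[2]}}$ on the fly: guess $\mathcal{U}$ once (it is invariant along runs), then guess a run symbol by symbol, keeping only the current pair of states and checking each transition locally. This is a graph-reachability search in polynomial work space, hence in NPSPACE.

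I do not anticipate a genuine obstacle: the substantive content is already packaged in \Cref{theorem: automata KACC without a-} (via \Cref{lemma: automaton construction completeness} and \Cref{lemma: automaton construction soundness: a-}). The only care needed is bookkeeping: checking that every state of $\AH{\automaton}$ and every predicate in its definition admits a polynomial-space representation and a polynomial-time test, so that the exponential number of states does not take the search outside PSPACE, and keeping track of the fact that the hypothesis of \Cref{theorem: automata KACC without a-} concerns the left automaton, which is precisely why the $a^{-}$-freeness of the entire fragment, rather than of $\term[2]$ alone, is what lets both directions of the equation go through.
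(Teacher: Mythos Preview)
Your proposal is correct and matches the paper's own argument exactly: the paper's proof is the single line ``Cf.\ \Cref{theorem: KACC without I-} (use \Cref{theorem: automata KACC without a-} instead of \Cref{theorem: automata KACC without I-})'', and you have spelled out precisely that substitution, including the observation that $a^{-}$-freeness of the whole fragment is what makes both inequational directions fall under the hypothesis of \Cref{theorem: automata KACC without a-}.

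One small technical wrinkle worth flagging: under \Cref{definition: NFA def} the constant $\top$ is encoded as $\automaton_{a \cup \comnf{a}}$, so if $\term[1]$ contains $\top$ then $\automaton_{\term[1]}$ will carry $\comnf{a}$-transitions, and your claim that it ``inherits the absence of $a^{-}$ from $\term[1]$'' is not literally true. The fix is immediate: in this fragment $\const{I}^{-}$ is available and $\models_{\REL} \top = \const{I} \cup \const{I}^{-}$, so encode $\top$ that way (or preprocess the input terms) before building the Thompson automata; then the hypothesis of \Cref{theorem: automata KACC without a-} is genuinely met and \Cref{proposition: Thompson} still holds by the same argument. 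The paper's terse proof does not flag this either.
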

\begin{proof}
    Cf.\ \Cref{theorem: KACC without I-} (use \Cref{theorem: automata KACC without a-} instead of \Cref{theorem: automata KACC without I-}).
\end{proof}

However, we leave open the precise complexity of the equational theory of $\ExKA$ terms, while it is decidable in coNEXP (\Cref{theorem: KACC complexity}) and at least PSPACE-hard (\Cref{proposition: PCoR hardness 2}).
The problematic case is when both $\const{I}^{-}$ and $a^{-}$ occur.
Our automata construction cannot apply to the full case, as follows:
\begin{remark}[Failure of the automata construction (\Cref{definition: AH}) for (full) $\ExKA$ terms]\label{remark: full KACC}
    Consider the soundness (\Cref{lemma: automaton construction soundness: I-,lemma: automaton construction soundness: a-}) for $\ExKA$ terms:
    for a given word $w \in \lang{\AH{\automaton}}$, construct a saturable path for $\not\models_{\REL} w \le \automaton$.
    The essence of the proofs in \Cref{lemma: automaton construction soundness: I-,lemma: automaton construction soundness: a-}
    is that an $\const{I}$-saturation of $G(w)$ always exists.
    However, for (full) $\ExKA$ terms, the situation is changed.
    For example, let $\automaton$ be the NFA obtained from the term $a \comnf{\const{I}} \cup \comnf{\const{I}} \comnf{a}$ (cf.\ \Cref{equation: I- top}) and $w = a b \comnf{a}$:
    \begin{align*}
        \automaton & = \hspace{-.5em} \scalebox{.9}{\begin{tikzpicture}[baseline = -2.ex]
                                                            \graph[grow right = 1.cm, branch down = 1.5ex, nodes={mysmallnode, font = \scriptsize}]{
                                                            {/, 1/{$\mathtt{A}$}[draw, circle]}
                                                            -!- {2/{$\mathtt{B}$}[draw, circle], /, 3/{$\mathtt{C}$}[draw, circle]}
                                                            -!- {/, 4/{$\mathtt{D}$}[draw, circle]}
                                                            };
                                                            \node[left = 4pt of 1](l){} edge[earrow, ->] (1);
                                                            \node[right = 4pt of 4](r){}; \path (4) edge[earrow, ->] (r);
                                                            \graph[use existing nodes, edges={color=black, pos = .5, earrow}, edge quotes={fill=white, inner sep=1pt,font= \scriptsize}]{
                                                            1 ->["$a$"] 2;
                                                            2 ->["$\comnf{\const{I}}$"] 4;
                                                            1 ->["$\comnf{\const{I}}$"] 3;
                                                            3 ->["$\comnf{a}$"]4;
                                                            };
                                                        \end{tikzpicture}}                    &
        G(w)       & = \hspace{-.5em} \scalebox{.9}{\begin{tikzpicture}[baseline = -.5ex]
                                                            \graph[grow right = 1.cm, branch down = 6ex, nodes={mysmallnode, font = \scriptsize}]{
                                                            {0/{$0$}[draw, circle]}-!-{1/{$1$}[draw, circle]}-!-{2/{$2$}[draw, circle]}-!-{3/{$3$}[draw, circle]}
                                                            };
                                                            \node[left = .5em of 0](l){};
                                                            \node[right = .5em of 3](r){};
                                                            \path (0) edge [draw = white, opacity = 0] node[pos= 0.5, inner sep = 1.5pt, font = \scriptsize, opacity = 1](a01){$a$}(1);
                                                            \path (1) edge [draw = white, opacity = 0] node[pos= 0.5, inner sep = 1.5pt, font = \scriptsize, opacity = 1](a12){$b$}(2);
                                                            \path (2) edge [draw = white, opacity = 0] node[pos= 0.5, inner sep = .5pt, font = \scriptsize, opacity = 1](a23){$\comnf{a}$}(3);
                                                            \graph[use existing nodes, edges={color=black, pos = .5, earrow}, edge quotes={fill=white, inner sep=1pt,font= \scriptsize}]{
                                                                0 -- a01 -> 1;
                                                                1 -- a12 -> 2;
                                                                2 -- a23 -> 3;
                                                                l -> 0; 3 -> r;
                                                            };
                                                        \end{tikzpicture}}.
    \end{align*}
    Then $w \in \lang{\AH{\automaton}}$ holds because
    \begin{itemize}
        \item $\tuple{{\blacktriangleright}, \tuple{\mathcal{U}, \tset[3]_{0}}} \in \delta_{\const{I}}^{\AH{\automaton}}$;
        \item $\tuple{\tuple{\mathcal{U}, \tset[3]_{0}}, \tuple{\mathcal{U}, \tset[3]_{1}}} \in \delta_{a}^{\AH{\automaton}}$;
        \item $\tuple{\tuple{\mathcal{U}, \tset[3]_{1}}, \tuple{\mathcal{U}, \tset[3]_{2}}} \in \delta_{b}^{\AH{\automaton}}$;
        \item $\tuple{\tuple{\mathcal{U}, \tset[3]_{2}}, \tuple{\mathcal{U}, \tset[3]_{3}}} \in \delta_{\comnf{a}}^{\AH{\automaton}}$;
        \item $\tuple{\tuple{\mathcal{U}, \tset[3]_{3}}, {\blacktriangleleft}} \in \delta_{\const{I}}^{\AH{\automaton}}$,
    \end{itemize}
    by letting $U_0 = U_2 = \set{\mathtt{A}, \mathtt{D}}$, $U_1 = U_3 = \set{\mathtt{B}, \mathtt{C}}$,
    and $\mathcal{U} = \bigcup_{i \in [0, 3]} \tset[3]_{i} \times (|\automaton| \setminus \tset[3]_{i}) \times \tset[3]_{i} \times (|\automaton| \setminus \tset[3]_{i})$.
    However, there does not exist any saturable path for $\not\models_{\REL} w \le \automaton$, because ${} \models_{\REL} w \le \automaton$ (\Cref{equation: I- top}).
    Additionally, if exists, an $\const{I}$-saturation $\graph[2]$ of $G(w)$ should satisfy $\tuple{0, 2} \in \const{I}^{\graph[2]}$ (by $\comnf{\const{I}} \comnf{a} \in \lang{\automaton}$) and $\tuple{1, 3} \in \const{I}^{\graph[2]}$ (by $a \comnf{\const{I}} \in \lang{\automaton}$) for $\graph[2]^{\Quo} \not\models \automaton$:
    \[\begin{tikzpicture}[baseline = -.5ex]
            \graph[grow right = 1.2cm, branch down = 6ex, nodes={mysmallnode, font = \scriptsize}]{
            {0/{$0$}[draw, circle]}-!-{1/{$1$}[draw, circle]}-!-{2/{$2$}[draw, circle]}-!-{3/{$3$}[draw, circle]}
            };
            \node[left = .5em of 0](l){};
            \node[right = .5em of 3](r){};
            \path (0) edge [draw = white, opacity = 0] node[pos= 0.5, inner sep = 1.5pt, font = \scriptsize, opacity = 1](a01){$a$}(1);
            \path (1) edge [draw = white, opacity = 0] node[pos= 0.5, inner sep = 1.5pt, font = \scriptsize, opacity = 1](a12){$b$}(2);
            \path (2) edge [draw = white, opacity = 0] node[pos= 0.5, inner sep = .5pt, font = \scriptsize, opacity = 1](a23){$\comnf{a}$}(3);
            \path (0) edge [draw = white, opacity = 0, bend right = 30] node[pos= 0.5, inner sep = 1.5pt, font = \scriptsize, opacity = 1](I02){$\const{I}$}(2);
            \path (1) edge [draw = white, opacity = 0, bend left = 30] node[pos= 0.5, inner sep = 1.5pt, font = \scriptsize, opacity = 1](I13){$\const{I}$}(3);
            \graph[use existing nodes, edges={color=black, pos = .5, earrow}, edge quotes={fill=white, inner sep=1pt,font= \scriptsize}]{
                0 -- a01 -> 1;
                1 -- a12 -> 2;
                2 -- a23 -> 3;
                0.south <-[bend right = 5] I02 ->[bend right = 5] 2.south;
                1.north <-[bend left = 5] I13 ->[bend left = 5] 3.north;
                l -> 0; 3 -> r;
            };
        \end{tikzpicture}.\]
    However, $\graph[2]$ does not satisfy \ref{definition: consistent: consistent}; thus we cannot construct consistent $\graph[2]$ even if $w \in \lang{\AH{\automaton}}$.
\end{remark}

\begin{remark}
    One may think that $\AH{\automaton}$ behaves as a complement of an NFA $\automaton$ (cf.\ \Cref{theorem: automata KACC without a-,theorem: automata KACC without I-}).
    Note that $\AH{\automaton}$ is not the language complement of an NFA $\automaton$, i.e., the following does not hold:
    for every word $w$, $w \in \lang{\AH{\automaton}} \iff w \not\in \lang{\automaton}$.
    E.g., let $\automaton$ be the NFA obtained from the term $a \cup \comnf{a}$ and $w = \const{I}$:
    \begin{align*}
        \automaton[1] & = \begin{tikzpicture}[baseline = -.5ex]
                              \graph[grow right = 1.25cm, branch down = 2.5ex, nodes={mysmallnode, font = \scriptsize}]{
                              {1/{$\mathtt{A}$}[draw, circle]}
                              -!- {2/{$\mathtt{B}$}[draw, circle]}
                              };
                              \node[left = 4pt of 1](l){} edge[earrow, ->] (1);
                              \node[right = 4pt of 2](r){}; \path (2) edge[earrow, ->] (r);
                              \path (1) edge [draw = white, opacity = 0, bend left = 25] node[pos= 0.5, inner sep = .5pt, font = \scriptsize, opacity = 1](tau1){$a$}(2);
                              \path (1) edge [draw = white, opacity = 0, bend right = 25] node[pos= 0.5, inner sep = 1.pt, font = \scriptsize, opacity = 1](tau2){$\comnf{a}$}(2);
                              \graph[use existing nodes, edges={color=black, pos = .5, earrow}, edge quotes={fill=white, inner sep=1pt,font= \scriptsize}]{
                              1 -- {tau1} -> 2;
                              1 -- {tau2} -> 2;
                              };
                          \end{tikzpicture} & G(w) & = \begin{tikzpicture}[baseline = -.5ex]
                                                           \graph[grow right = 1.25cm, branch down = 2.5ex, nodes={mysmallnode, font = \scriptsize}]{
                                                           {1/{$0$}[draw, circle]}
                                                           };
                                                           \node[left = 4pt of 1](l){} edge[earrow, ->] (1);
                                                           \node[right = 4pt of 1](r){}; \path (1) edge[earrow, ->] (r);
                                                       \end{tikzpicture}.
    \end{align*}
    By the form of $\automaton$, $w \not\in \lang{\automaton}$.
    However, $w \not\in \lang{\AH{\automaton}}$, as follows.
    Assume that $w \in \lang{\AH{\automaton}}$.
    By the form of $\AH{\automaton}$, there are $\mathcal{U}$ and $\tset[3]_0$ such that $\tuple{{\blacktriangleright}, \tuple{\mathcal{U}, \tset[3]_{0}}} \in \delta_{\const{I}}^{\AH{\automaton}}$, $\tuple{\tuple{\mathcal{U}, \tset[3]_{0}}, {\blacktriangleleft}} \in \delta_{\const{I}}^{\AH{\automaton}}$,
    $\mathtt{A} \in U_0$, and $\mathtt{B} \not\in U_0$.
    By \Cref{lemma: key formula transformation}, $\Con_{\automaton}^{a}(\tset[3]_0, \tset[3]_0) \lor \Con_{\automaton}^{\comnf{a}}(\tset[3]_0, \tset[3]_0)$.
    In either case, $\mathtt{B} \in U_0$, which contradicts $\mathtt{B} \not\in U_0$.
    Thus $w \not\in \lang{\AH{\automaton}}$.
\end{remark}

\newcommand{\CSIG}{A}
\section{Undecidability for {\ECoRTC}}\label{section: undecidability}
A \emph{context-free grammar} (CFG) $\mathcal{C}$ over a finite set $\CSIG$ is a tuple $\tuple{X, \mathcal{R}, \const{s}}$, where
\begin{itemize}
    \item $X$ is a finite set of \emph{non-terminal labels} s.t.\ $\CSIG \cap X = \emptyset$;
    \item $\mathcal{R}$ is a finite set of \emph{rewriting rules} $x \leftarrow w$ of $x \in X$ and $w \in (\CSIG \cup X)^*$;
    \item $\const{s} \in X$ is the \emph{start label}.
\end{itemize}
The \emph{relation} $x \vdash_{\mathcal{C}} w$, where $x \in X$ and $w \in \CSIG^*$, is defined as the minimal relation closed under the following rule:
if $x \leftarrow w_0 x_1 w_1 \dots x_n w_n \in \mathcal{R}$ (where $x, x_1, \dots, x_n \in X$ and $w_0, \dots, w_n \in \CSIG^*$),
then \begin{prooftree}
    \hypo{x_1 \vdash_{\mathcal{C}} v_1}
    \hypo{\dots}
    \hypo{x_n \vdash_{\mathcal{C}} v_n}
    \infer3{x \vdash_{\mathcal{C}} w_0 v_1 w_1 \dots v_n w_n}
\end{prooftree}.
The \emph{language} $\lang{\mathcal{C}} \subseteq \CSIG^*$ is the set $\set{w \mid \const{s} \vdash_{\mathcal{C}} w}$.
It is well-known that the \emph{universality problem} for CFGs---given a CFG $\mathcal{C}$, does $\lang{\mathcal{C}} = \CSIG^*$ hold?---is $\Pi_{1}^{0}$-complete.

Let $\Gamma$ be a set of equations.
We write:
\begin{align*}
    \struc \models \Gamma                     & \defiff \struc \models \term[1] = \term[2] \mbox{ for every ${\term[1] = \term[2]} \in \Gamma$};     \\
    \Gamma \models_{\REL} \term[1] = \term[2] & \defiff \struc \models \term[1] = \term[2] \mbox{ for every $\struc$ s.t.\ $\struc \models \Gamma$}.
\end{align*}

For a CFG $\mathcal{C} = \tuple{X, \mathcal{R}, \const{s}}$ and a word $w = a_1 \dots a_m$,
let $\Gamma_{\mathcal{C}} \defeq \set{w \le x \mid x \leftarrow w \in \mathcal{R}}$.
Here, $\ppmodel_{\mathcal{C}, w} \defeq \model_{\mathcal{C}, w}[0, m]$, where $\model_{\mathcal{C}, w} \defeq \tuple{|\model|, \set{a^{\model}}_{a \in A \cup X}}$ is the structure over $A \cup X$, defined as follows:
\begin{itemize}
    \item $|\model| = [0, m]$;
    \item for $a \in \CSIG$, $a^{\model} = \set{\tuple{i-1, i} \mid i \in [1, m] \mbox{ and } a_i = a}$;
    \item the elements of $\set{x^{\model}}_{x \in X}$ are the minimal relations closed under the following rule:
          if $(w_0 x_1 w_1 \dots x_n w_n \le x) \in \Gamma_{\mathcal{C}}$ (where $x_1, \dots, x_n \in X$ and $w_0, \dots, w_n \in A^*$),
          then
          \[\begin{prooftree}
                  \hypo{\stackanchor{$\tuple{i_0, j_0} \in \jump{w_0}_{\model}$ \quad $\tuple{j_0, i_1} \in x_1^{\model}$ \qquad \dots}{
                  $\tuple{j_{n-1}, i_n} \in x_n^{\model}$ \quad $\tuple{i_n, j_n} \in \jump{w_n}_{\model}$}}
                  \infer1{\tuple{i_0,j_n} \in x^{\model}}
              \end{prooftree}.\]
\end{itemize}
For example, if $A = \set{(,)}$,
$\mathcal{C} = \tuple{\set{\const{s}}, \set{\const{s} \ \leftarrow \ (\const{s}) \const{s},\  \const{s} \ \leftarrow \ \const{I}}, \const{s}}$ (i.e., $\lang{\mathcal{C}}$ is the Dyck-1 language), and $w = (()())$, then $\model_{\mathcal{C}, w}$ is the following:
\[\begin{tikzpicture}[baseline = -.5ex]
        \graph[grow right = 1.2cm, branch down = 6ex, nodes={mysmallnode, font = \scriptsize}]{
        {0/{$0$}[draw, circle]}
        -!-{1/{$1$}[draw, circle]}
        -!-{2/{$2$}[draw, circle]}
        -!-{3/{$3$}[draw, circle]}
        -!-{4/{$4$}[draw, circle]}
        -!-{5/{$5$}[draw, circle]}
        -!-{6/{$6$}[draw, circle]}
        };
        \path (0) edge [draw = white, opacity = 0] node[pos= 0.5, inner sep = 1.5pt, font = \scriptsize, opacity = 1](a01){$($}(1);
        \path (1) edge [draw = white, opacity = 0] node[pos= 0.5, inner sep = 1.5pt, font = \scriptsize, opacity = 1](a12){$($}(2);
        \path (2) edge [draw = white, opacity = 0] node[pos= 0.5, inner sep = .5pt, font = \scriptsize, opacity = 1](a23){$)$}(3);
        \path (3) edge [draw = white, opacity = 0] node[pos= 0.5, inner sep = .5pt, font = \scriptsize, opacity = 1](a34){$($}(4);
        \path (4) edge [draw = white, opacity = 0] node[pos= 0.5, inner sep = .5pt, font = \scriptsize, opacity = 1](a45){$)$}(5);
        \path (5) edge [draw = white, opacity = 0] node[pos= 0.5, inner sep = .5pt, font = \scriptsize, opacity = 1](a56){$)$}(6);
        \node(a00)[above = 1.5ex of 0,  inner sep = 1.5pt, font = \scriptsize,]{$\const{s}$};
        \node(a11)[above = 1.5ex of 1,  inner sep = 1.5pt, font = \scriptsize,]{$\const{s}$};
        \node(a22)[above = 1.5ex of 2,  inner sep = 1.5pt, font = \scriptsize,]{$\const{s}$};
        \node(a33)[above = 1.5ex of 3,  inner sep = 1.5pt, font = \scriptsize,]{$\const{s}$};
        \node(a44)[above = 1.5ex of 4,  inner sep = 1.5pt, font = \scriptsize,]{$\const{s}$};
        \node(a55)[above = 1.5ex of 5,  inner sep = 1.5pt, font = \scriptsize,]{$\const{s}$};
        \node(a66)[above = 1.5ex of 6,  inner sep = 1.5pt, font = \scriptsize,]{$\const{s}$};
        \graph[use existing nodes, edges={color=black, pos = .5, earrow}, edge quotes={fill=white, inner sep=1pt,font= \scriptsize}]{
        0 -- a01 -> 1;
        1 -- a12 -> 2;
        2 -- a23 -> 3;
        3 -- a34 -> 4;
        4 -- a45 -> 5;
        5 -- a56 -> 6;
        0 --[bend left = 15] a00 ->[bend left = 15] 0;
        1 --[bend left = 15] a11 ->[bend left = 15] 1;
        2 --[bend left = 15] a22 ->[bend left = 15] 2;
        3 --[bend left = 15] a33 ->[bend left = 15] 3;
        4 --[bend left = 15] a44 ->[bend left = 15] 4;
        5 --[bend left = 15] a55 ->[bend left = 15] 5;
        6 --[bend left = 15] a66 ->[bend left = 15] 6;
        1 ->["$\const{s}$", font=\scriptsize, bend right = 20] 3;
        3 ->["$\const{s}$", font=\scriptsize, bend right = 20] 5;
        1 ->["$\const{s}$", font=\scriptsize, bend right = 20] 5;
        0 ->["$\const{s}$", font=\scriptsize, bend right = 20] 6;
        };
    \end{tikzpicture}.\]
Then we have the following:
\begin{lemma}\label{lemma: CFG encoding}
    Let $\mathcal{C} = \tuple{X, \mathcal{R}, \const{s}}$ be a CFG.
    For every $x \in X$ and $w \in \CSIG^*$, the following are equivalent:
    \begin{enumerate}
        \item \label{lemma: CFG encoding 1} $x \vdash_{\mathcal{C}} w$;
        \item \label{lemma: CFG encoding 2} $\ppmodel_{\mathcal{C}, w} \models x$;
        \item \label{lemma: CFG encoding 3} $\Gamma_{\mathcal{C}} \models_{\REL} w \le x$.
    \end{enumerate}
\end{lemma}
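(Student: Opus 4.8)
The plan is to prove the cycle $(\ref{lemma: CFG encoding 1})\Rightarrow(\ref{lemma: CFG encoding 3})\Rightarrow(\ref{lemma: CFG encoding 2})\Rightarrow(\ref{lemma: CFG encoding 1})$. The central ingredient will be a sublemma describing $\model_{\mathcal{C}, w}$ pointwise: for every $x \in X$ and all $0 \le i \le j \le m$,
\[\tuple{i, j} \in x^{\model_{\mathcal{C}, w}} \iff x \vdash_{\mathcal{C}} a_{i+1}\cdots a_{j}.\]
Specialising to $i = 0$, $j = m$ (so that $a_1 \cdots a_m = w$) immediately gives $(\ref{lemma: CFG encoding 2}) \iff (\ref{lemma: CFG encoding 1})$, and in particular the implication $(\ref{lemma: CFG encoding 2})\Rightarrow(\ref{lemma: CFG encoding 1})$ needed to close the cycle. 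A preliminary observation I will use repeatedly is that, for a terminal word $v \in \CSIG^{*}$, one has $\jump{v}_{\model_{\mathcal{C}, w}} = \set{\tuple{i, i+\len{v}} \mid 0 \le i,\ i+\len{v} \le m,\ a_{i+1}\cdots a_{i+\len{v}} = v}$; this follows by an easy induction on $\len{v}$ from the definition of $a^{\model_{\mathcal{C}, w}}$ for $a \in \CSIG$.

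For the sublemma, both $\set{x^{\model_{\mathcal{C}, w}}}_{x \in X}$ and $\vdash_{\mathcal{C}}$ are the least relations closed under what is essentially the same production rule, so I will argue by closure in one direction and by induction on the derivation in the other. ($\Leftarrow$): induct on the derivation of $x \vdash_{\mathcal{C}} a_{i+1}\cdots a_j$; if the last rule uses $x \leftarrow w_0 x_1 w_1 \cdots x_n w_n \in \mathcal{R}$ with premises $x_k \vdash_{\mathcal{C}} v_k$ and $a_{i+1}\cdots a_j = w_0 v_1 w_1 \cdots v_n w_n$, then cutting $[i,j]$ at the positions induced by this factorisation yields indices $i_0 = i, j_0, i_1, \dots, j_n = j$ with $\tuple{i_k, j_k} \in \jump{w_k}_{\model_{\mathcal{C}, w}}$ (preliminary observation) and $\tuple{j_{k-1}, i_k} \in x_k^{\model_{\mathcal{C}, w}}$ (induction hypothesis), so the closure rule defining $x^{\model_{\mathcal{C}, w}}$ puts $\tuple{i, j}$ in it. ($\Rightarrow$): it suffices to check that the family $C_x \defeq \set{\tuple{i, j} \mid 0 \le i \le j \le m,\ x \vdash_{\mathcal{C}} a_{i+1}\cdots a_j}$ is closed under the rules defining $\set{x^{\model_{\mathcal{C}, w}}}_{x}$, since minimality then gives $x^{\model_{\mathcal{C}, w}} \subseteq C_x$ for all $x$; here one reads off the terminal subwords along a rule instance via the preliminary observation and reassembles them (together with the $\vdash_{\mathcal{C}}$-derivations coming from the premises that now lie in $C_{x_k}$) into a single $\vdash_{\mathcal{C}}$-derivation, the monotone chain $i_0 \le j_0 \le i_1 \le \cdots \le j_n$ being automatic from the same observation.

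It remains to prove $(\ref{lemma: CFG encoding 1})\Rightarrow(\ref{lemma: CFG encoding 3})$ and $(\ref{lemma: CFG encoding 3})\Rightarrow(\ref{lemma: CFG encoding 2})$. For the former, induct on the derivation of $x \vdash_{\mathcal{C}} w$: with last rule $x \leftarrow w_0 x_1 w_1 \cdots x_n w_n$, premises $x_k \vdash_{\mathcal{C}} v_k$, and $w = w_0 v_1 w_1 \cdots v_n w_n$, the induction hypothesis gives $\Gamma_{\mathcal{C}} \models_{\REL} v_k \le x_k$, and since $(w_0 x_1 w_1 \cdots x_n w_n \le x) \in \Gamma_{\mathcal{C}}$, monotonicity of relational composition yields $\jump{w}_{\struc} \subseteq \jump{w_0 x_1 w_1 \cdots x_n w_n}_{\struc} \subseteq \jump{x}_{\struc}$ for every $\struc \models \Gamma_{\mathcal{C}}$ (the case $n=0$ being immediate from $(w_0 \le x)\in\Gamma_{\mathcal{C}}$). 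For the latter, observe that $\model_{\mathcal{C}, w} \models \Gamma_{\mathcal{C}}$ holds directly from the definition of $\set{x^{\model_{\mathcal{C}, w}}}_{x}$, which is by construction closed under exactly the rules encoding the inequations of $\Gamma_{\mathcal{C}}$, and that $\tuple{0, m} \in \jump{w}_{\model_{\mathcal{C}, w}}$ via the path $0 \to 1 \to \cdots \to m$; hence assuming $\Gamma_{\mathcal{C}} \models_{\REL} w \le x$ gives $\jump{w}_{\model_{\mathcal{C}, w}} \subseteq \jump{x}_{\model_{\mathcal{C}, w}}$, so $\tuple{0,m} \in \jump{x}_{\model_{\mathcal{C}, w}}$, i.e.\ $\ppmodel_{\mathcal{C}, w} \models x$. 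The main obstacle is the sublemma: nothing in it is deep, but the index arithmetic in the word factorisations and the careful handling of the two simultaneous least-fixed-point definitions — one over $[0, m]^{2}$, one over $X \times \CSIG^{*}$ — require some care to present without hand-waving; the remaining two implications are routine structural inductions.
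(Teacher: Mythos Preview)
Your proposal is correct and follows essentially the same route as the paper: the same cycle $(\ref{lemma: CFG encoding 1})\Rightarrow(\ref{lemma: CFG encoding 3})\Rightarrow(\ref{lemma: CFG encoding 2})\Rightarrow(\ref{lemma: CFG encoding 1})$, the same induction on the derivation for $(\ref{lemma: CFG encoding 1})\Rightarrow(\ref{lemma: CFG encoding 3})$, the same observation that $\model_{\mathcal{C},w}\models\Gamma_{\mathcal{C}}$ and $\ppmodel_{\mathcal{C},w}\models w$ for $(\ref{lemma: CFG encoding 3})\Rightarrow(\ref{lemma: CFG encoding 2})$, and the same pointwise sublemma (proved by induction on the closure rule defining $x^{\model_{\mathcal{C},w}}$) for $(\ref{lemma: CFG encoding 2})\Rightarrow(\ref{lemma: CFG encoding 1})$. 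The only cosmetic differences are that you state the sublemma as an equivalence (the paper only states and uses the forward direction) and that your indexing $a_{i+1}\cdots a_j$ is cleaner than the paper's.
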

\begin{proof}
    \ref{lemma: CFG encoding 1} $\Longrightarrow$ \ref{lemma: CFG encoding 3}:
    By induction on the derivation tree of $x \vdash_{\mathcal{C}} w$.
    This derivation tree is of the following form:
    \begin{prooftree}
        \hypo{x_1 \vdash_{\mathcal{C}} v_1}
        \hypo{\dots}
        \hypo{x_n \vdash_{\mathcal{C}} v_n}
        \infer3{x \vdash_{\mathcal{C}} w_0 v_1 w_1 \dots v_n w_n}
    \end{prooftree},
    where $n \in \nat$, $x_1, \dots, x_n \in X$, $w_0, \dots, w_n \in \CSIG^*$, $v_1, \dots, v_n \in \CSIG^*$ s.t.\
    $w = w_0 v_1 w_1 \dots v_n w_n$ and $x \leftarrow w_0 x_1 w_1 \dots x_n w_n \in \mathcal{R}$.
    W.r.t.\ $\Gamma_{\mathcal{C}} \models_{\REL}$, we have
    \begin{align*}
        w & =  w_0 v_1 w_1 \dots v_n w_n                                                                                                \\
          & \le w_0 x_1 w_1 \dots x_n w_n                  \tag{$\Gamma_{\mathcal{C}} \models_{\REL} v_i \le x_i$ for $i \in [n]$ (IH)} \\
          & \le x \tag{$(w_0 x_1 w_1 \dots x_n w_n \le x) \in \Gamma_{\mathcal{C}}$}.
    \end{align*}
    Hence $\Gamma_{\mathcal{C}} \models_{\REL} w \le x$.
    \ref{lemma: CFG encoding 3} $\Longrightarrow$ \ref{lemma: CFG encoding 2}:
    Because $\model_{\mathcal{C}, w} \models \Gamma_{\mathcal{C}}$ and $\ppmodel_{\mathcal{C}, w} \models w$ hold.
    \ref{lemma: CFG encoding 2} $\Longrightarrow$ \ref{lemma: CFG encoding 1}:
    We show the following:
    \begin{sublemma*}
        For every $i, j \in [0, m]$ and $x \in X$,
        if $\model_{\mathcal{C}, w}[i,j] \models x$, then $i \le j$ and $x \vdash_{\mathcal{C}} a_i \dots a_{j-1}$.
    \end{sublemma*}
    \begin{proof}
        By induction on the derivation tree from the definition of $\set{x^{\model_{\mathcal{C}, w}}}_{x \in X}$.
        This derivation tree is of the following form
        $\begin{prooftree}
                \hypo{\stackanchor{$\tuple{i_0, j_0} \in \jump{w_0}_{\model_{\mathcal{C}, w}}$ \quad $\tuple{j_0, i_1} \in x_1^{\model_{\mathcal{C}, w}}$ \quad \dots}{
                $\tuple{j_{n-1}, i_n} \in x_n^{\model_{\mathcal{C}, w}}$ \quad $\tuple{i_n, j_n} \in \jump{w_n}_{\model_{\mathcal{C}, w}}$}}
                \infer1{\tuple{i_0,j_n} \in x^{\model_{\mathcal{C}, w}}}
            \end{prooftree}$,
        where $x_1, \dots, x_n \in X$, $w_0, \dots, w_n \in \CSIG^*$,
        $i_0, j_0, \dots, i_n, j_n \in |\model_{\mathcal{C}, w}|$,
        and $\tuple{i_0, j_n} = \tuple{i, j}$.
        By the definition of $a^{\model_{\mathcal{C}, w}}$ (where $a \in \CSIG$), we have $i_k \le j_k$ and $w_k = a_{i_k} \dots a_{j_k - 1}$.
        By IH, we have $j_{k-1} \le i_k$ and $x_k \vdash_{\mathcal{C}} a_{j_{k-1}} \dots a_{i_{k}-1}$.
        Combining them yields $i \le j$ and $x \vdash_{\mathcal{C}}  a_i \dots a_{j-1}$ (because $x \leftarrow w_0 x_1 w_1 \dots x_n w_n \in \mathcal{R}$).
    \end{proof}
    \noindent By specializing this sub-lemma with $\tuple{i,j} = \tuple{0,m}$, this completes the proof.
\end{proof}
\begin{lemma}\label{lemma: CFG encoding lang}
    For every CFG $\mathcal{C} = \tuple{X, \mathcal{R}, \const{s}}$, we have
    \[ \lang{\mathcal{C}} = \CSIG^* \quad \iff \quad \Gamma_{\mathcal{C}} \models_{\REL} \CSIG^* \le \const{s}.\]
    (Here, $\CSIG^*$ denotes the term $(a_1 \cup \dots \cup a_n)^*$ in the right-hand side, where $\CSIG = \set{a_1, \dots, a_n}$.)
\end{lemma}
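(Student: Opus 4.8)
The plan is to obtain this equivalence as a fairly direct corollary of \Cref{lemma: CFG encoding}. The bridge between the single inequation $\CSIG^* \le \const{s}$ and the family $\set{w \le \const{s} \mid w \in \CSIG^*}$ is the per-model identity $\jump{\CSIG^*}_{\model} = \bigcup_{w \in \CSIG^*} \jump{w}_{\model}$, valid for every structure $\model$: here $\CSIG^*$ on the left is the KA term $(a_1 \cup \dots \cup a_n)^*$, whose word language $\lang{(a_1 \cup \dots \cup a_n)^*}$ is exactly the set of all finite words over $\CSIG$, and for KA terms one has $\jump{\term}_{\model} = \bigcup_{v \in \lang{\term}} \jump{v}_{\model}$ by an easy induction using distributivity of composition over union (as used in the proof of \Cref{proposition: Thompson}). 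Consequently $\model \models \CSIG^* \le \const{s}$ holds iff $\jump{w}_{\model} \subseteq \jump{\const{s}}_{\model}$ for every $w \in \CSIG^*$, and hence $\Gamma_{\mathcal{C}} \models_{\REL} \CSIG^* \le \const{s}$ holds iff $\Gamma_{\mathcal{C}} \models_{\REL} w \le \const{s}$ for every $w \in \CSIG^*$.

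For the $\Longrightarrow$ direction I would assume $\lang{\mathcal{C}} = \CSIG^*$ and fix $w \in \CSIG^*$; then $\const{s} \vdash_{\mathcal{C}} w$, so $\Gamma_{\mathcal{C}} \models_{\REL} w \le \const{s}$ by \Cref{lemma: CFG encoding} (\ref{lemma: CFG encoding 1} $\Rightarrow$ \ref{lemma: CFG encoding 3}). Since this holds for every $w \in \CSIG^*$, the reduction above yields $\Gamma_{\mathcal{C}} \models_{\REL} \CSIG^* \le \const{s}$.

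For the $\Longleftarrow$ direction I would assume $\Gamma_{\mathcal{C}} \models_{\REL} \CSIG^* \le \const{s}$ and show $w \in \lang{\mathcal{C}}$ for an arbitrary $w = a_1 \dots a_m \in \CSIG^*$ (the inclusion $\lang{\mathcal{C}} \subseteq \CSIG^*$ being trivial). The witness is the canonical structure $\model_{\mathcal{C}, w}$: it satisfies $\model_{\mathcal{C}, w} \models \Gamma_{\mathcal{C}}$ and $\ppmodel_{\mathcal{C}, w} \models w$, both recorded in the proof of \Cref{lemma: CFG encoding} — the former is immediate since $\set{x^{\model_{\mathcal{C}, w}}}_{x \in X}$ is defined as the least family closed under the rules induced by $\Gamma_{\mathcal{C}}$, and the latter because $a^{\model_{\mathcal{C}, w}} = \set{\tuple{i-1, i} \mid a_i = a}$ makes the path $0, 1, \dots, m$ spell $w$, so $\tuple{0, m} \in \jump{w}_{\model_{\mathcal{C}, w}}$. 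As $w \in \CSIG^*$ we then get $\ppmodel_{\mathcal{C}, w} \models \CSIG^*$, hence $\ppmodel_{\mathcal{C}, w} \models \const{s}$ by the hypothesis applied to $\model_{\mathcal{C}, w}$ (legitimate since $\model_{\mathcal{C}, w} \models \Gamma_{\mathcal{C}}$), and then \Cref{lemma: CFG encoding} (\ref{lemma: CFG encoding 2} $\Rightarrow$ \ref{lemma: CFG encoding 1}) gives $\const{s} \vdash_{\mathcal{C}} w$, i.e., $w \in \lang{\mathcal{C}}$.

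I do not expect a genuine obstacle: the real content — the soundness-and-completeness correspondence between CFG derivations, the canonical model, and $\Gamma_{\mathcal{C}}$-validity — is already packaged in \Cref{lemma: CFG encoding}. The only point requiring (routine) care is the identification of the term $\CSIG^* = (a_1 \cup \dots \cup a_n)^*$ with the set of all $\CSIG$-words at the level of relational semantics, i.e., $\jump{\CSIG^*}_{\model} = \bigcup_{w \in \CSIG^*} \jump{w}_{\model}$; without it one cannot pass between the single inequation in the statement and the per-word inequations that \Cref{lemma: CFG encoding} controls.
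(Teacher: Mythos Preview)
Your proposal is correct and follows essentially the same approach as the paper: the paper's proof is the compact chain $\lang{\mathcal{C}} = \CSIG^* \iff \forall w\in\CSIG^*.\ \const{s}\vdash_{\mathcal{C}} w \iff \forall w\in\CSIG^*.\ \Gamma_{\mathcal{C}}\models_{\REL} w \le \const{s} \iff \Gamma_{\mathcal{C}}\models_{\REL} \CSIG^* \le \const{s}$, invoking the $(1)\Leftrightarrow(3)$ equivalence of \Cref{lemma: CFG encoding} for the middle step and leaving the last ``bridge'' step implicit. Your version spells out that bridge via $\jump{\CSIG^*}_{\model}=\bigcup_{w\in\CSIG^*}\jump{w}_{\model}$ and, in the $\Longleftarrow$ direction, unpacks the canonical-model argument (i.e., $(3)\Rightarrow(2)\Rightarrow(1)$) rather than citing $(3)\Rightarrow(1)$ directly---slightly more verbose but the same content.
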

\begin{proof}
    We have
    \begin{align*}
        \lang{\mathcal{C}} = \CSIG^* & \iff \forall w \in \CSIG^*.\ \const{s} \vdash_{\mathcal{C}} w                    \tag{Def.\ of $\lang{}$}         \\
                                     & \iff \forall w \in \CSIG^*.\ \Gamma_{\mathcal{C}} \models_{\REL} w \le \const{s} \tag{\Cref{lemma: CFG encoding}} \\
                                     & \iff \Gamma_{\mathcal{C}} \models_{\REL} \CSIG^* \le \const{s}. \tag*{\qedhere}
    \end{align*}
\end{proof}
Additionally, we prepare the following deduction lemma for \emph{Hoare hypotheses} $\term[3] = \bot$ (cf.\ \cite[Thm.\ 4.1]{Kozen2000} for KAT):
\begin{lemma}\label{lemma: deduction theorem}
    For every $\ECoRTC$ terms $\term[1], \term[2], \term[3]$ and every set $\Gamma$ of equations,
    we have
    \[\Gamma \cup \set{\term[3] = \bot} \models_{\REL} \term[1] \le \term[2] \iff \Gamma \models_{\REL} \term[1] \le \term[2] \cup (\top \term[3] \top).\]
\end{lemma}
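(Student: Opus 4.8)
The plan is to reduce both directions of the biconditional to a single structural observation about $\top$. Namely, for every structure $\model$ one has $\jump{\top \term[3] \top}_{\model} = |\model|^2$ when $\jump{\term[3]}_{\model} \neq \emptyset$, and $\jump{\top \term[3] \top}_{\model} = \emptyset$ when $\jump{\term[3]}_{\model} = \emptyset$. This holds because $\jump{\top}_{\model} = |\model|^2$ together with $|\model| \neq \emptyset$ (guaranteed by the definition of graphs/structures): composing $|\model|^2$ on both sides of any nonempty relation yields the whole square, whereas composing with the empty relation yields the empty relation. In particular, $\jump{\top \term[3] \top}_{\model} = \emptyset$ holds precisely when $\model \models \term[3] = \bot$.

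Given this observation, I would prove the $(\Leftarrow)$ direction as follows. Assume $\Gamma \models_{\REL} \term[1] \le \term[2] \cup (\top \term[3] \top)$ and take any structure $\model$ with $\model \models \Gamma \cup \set{\term[3] = \bot}$. Then $\jump{\term[3]}_{\model} = \emptyset$, so $\jump{\top \term[3] \top}_{\model} = \emptyset$ and hence $\jump{\term[2] \cup (\top \term[3] \top)}_{\model} = \jump{\term[2]}_{\model}$. Since $\model \models \Gamma$, we obtain $\jump{\term[1]}_{\model} \subseteq \jump{\term[2]}_{\model}$, i.e.\ $\model \models \term[1] \le \term[2]$; as $\model$ was arbitrary, $\Gamma \cup \set{\term[3] = \bot} \models_{\REL} \term[1] \le \term[2]$.

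For the $(\Rightarrow)$ direction, assume $\Gamma \cup \set{\term[3] = \bot} \models_{\REL} \term[1] \le \term[2]$ and take any $\model$ with $\model \models \Gamma$; I would split on whether $\jump{\term[3]}_{\model}$ is empty. If $\jump{\term[3]}_{\model} = \emptyset$, then $\model \models \term[3] = \bot$, so $\model \models \Gamma \cup \set{\term[3] = \bot}$ and thus $\jump{\term[1]}_{\model} \subseteq \jump{\term[2]}_{\model} \subseteq \jump{\term[2] \cup (\top \term[3] \top)}_{\model}$. If $\jump{\term[3]}_{\model} \neq \emptyset$, then $\jump{\top \term[3] \top}_{\model} = |\model|^2$, so $\jump{\term[2] \cup (\top \term[3] \top)}_{\model} = |\model|^2 \supseteq \jump{\term[1]}_{\model}$. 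In both cases $\model \models \term[1] \le \term[2] \cup (\top \term[3] \top)$, which gives $\Gamma \models_{\REL} \term[1] \le \term[2] \cup (\top \term[3] \top)$.

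There is essentially no hard step here: the only thing to get right is the dichotomy for $\jump{\top \term[3] \top}_{\model}$, together with the mild point that $|\model|$ is nonempty, which is what makes $\top$ ``saturating'' whenever the middle term is satisfiable. Everything else is unfolding the definitions of $\models_{\REL}$ and $\jump{\bl}_{\model}$.
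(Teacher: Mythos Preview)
Your proof is correct and rests on the same key observation as the paper's: that $\model \models \term[3] = \bot$ iff $\jump{\top\term[3]\top}_{\model} = \emptyset$ (equivalently, $\jump{\top\term[3]\top}_{\model}$ is either empty or all of $|\model|^2$). The paper packages this as a chain of pointwise equivalences over $2$-pointed structures rather than a case split, but the content is identical.
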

\begin{proof}
    We have
    \begin{align*}
                                           & \Gamma \cup \set{\term[3] = \bot} \models_{\REL} \term[1] \le \term[2]                                                   \\
                                           & \iff \mbox{for every ${\struc}[x,y]$ s.t.\ $\struc \models \Gamma \cup \set{\term[3] = \bot}$,}                          \\
                                           & \hspace{10em} \mbox{${\struc}[x,y] \models \term[1] \le \term[2]$} \tag{Def.\ of $\models_{\REL}$}                       \\
                                           & \iff \mbox{for every ${\struc}[x,y]$ s.t.\  $\struc \models \Gamma$ and ${\struc}[x,y] \not\models \top \term[3] \top$,} \\
        \label{lemma: deduction theorem 1} & \hspace{10em} \mbox{${\struc}[x,y] \models \term[1] \le \term[2]$}  \tag{$\clubsuit$1}                                   \\
        \label{lemma: deduction theorem 2} & \iff \mbox{for every ${\struc}[x,y]$ s.t.\  $\struc \models \Gamma$,}                                                    \\
                                           & \hspace{10em} {\struc}[x,y] \models \term[1] \le \term[2] \cup (\top \term[3] \top)\tag{$\clubsuit$2}                    \\
                                           & \iff \Gamma \models_{\REL} \term[1] \le \term[2] \cup (\top \term[3] \top).  \tag{Def.\ of $\models_{\REL}$}
    \end{align*}
    Here, (\ref{lemma: deduction theorem 1}) is because $\model \models \term[3] = \bot \iff \jump{\term[3]}_{\model} = \emptyset \iff {\model}[x, y] \not\models \top \term[3] \top$.
    (\ref{lemma: deduction theorem 2}) is because:
    \begin{align*}
         & ({\struc}[x,y] \not\models \top \term[3] \top) \mbox{ implies } ({\struc}[x,y] \models \term[1] \le \term[2])                      \\
         & \iff {\struc}[x,y] \models \top \term[3] \top \ \lor\  {\struc}[x,y] \not\models \term[1] \ \lor \  {\struc}[x,y] \models \term[2] \\
         & \iff {\struc}[x,y] \not\models \term[1] \ \lor \  {\struc}[x,y] \models \term[2] \cup (\top \term[3] \top)                         \\
         & \iff {\struc}[x,y] \models \term[1] \le \term[2] \cup (\top \term[3] \top).  \tag*{\qedhere}
    \end{align*}
\end{proof}
By \Cref{lemma: CFG encoding lang,lemma: deduction theorem}, we have the following:
\begin{theorem}\label{theorem: PCoRTC complexity}
    The equational theory of $\ECoRTC$ is $\Pi_{1}^{0}$-complete.
\end{theorem}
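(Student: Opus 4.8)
The upper bound---membership in $\Pi_1^0$---is already \Cref{lemma: PCoRTC upper bound}, so the plan is to prove $\Pi_1^0$-hardness by giving a recursive many-one reduction from the universality problem for CFGs (which, as noted, is $\Pi_1^0$-complete) to the equational theory of $\ECoRTC$. Concretely, given a CFG $\mathcal{C} = \tuple{X, \mathcal{R}, \const{s}}$ over $\CSIG$, the goal is to produce, computably in $\mathcal{C}$, a single $\ECoRTC$ equation that is valid iff $\lang{\mathcal{C}} = \CSIG^*$. The starting point is \Cref{lemma: CFG encoding lang}, which already reduces universality to the \emph{hypothesis-conditional} statement $\Gamma_{\mathcal{C}} \models_{\REL} \CSIG^* \le \const{s}$; the only work left is to internalize the finitely many hypotheses in $\Gamma_{\mathcal{C}}$ into the inequation itself.

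First I would rewrite each hypothesis of $\Gamma_{\mathcal{C}}$, which has the form $w \le x$ (i.e.\ $w \cup x = x$) with $x$ a non-terminal variable and $w$ a word over $\CSIG \cup X$ regarded as a term, into an equivalent \emph{Hoare hypothesis} $(w \cap x^{-}) = \bot$. This equivalence is immediate from the definition of a structure: $\jump{x^{-}}_{\model} = |\model|^2 \setminus \jump{x}_{\model}$, so $\jump{w \cap x^{-}}_{\model} = \emptyset$ iff $\jump{w}_{\model} \subseteq \jump{x}_{\model}$; hence $\model \models \Gamma_{\mathcal{C}}$ iff $\model$ satisfies the set $\Gamma' \defeq \set{(w \cap x^{-}) = \bot \mid (x \leftarrow w) \in \mathcal{R}}$, and the two sets of hypotheses are interchangeable under $\models_{\REL}$. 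The essential point is that $w \cap x^{-}$ is a genuine $\ECoRTC$ term: it only applies the complement operator to the term variable $x$, which is exactly the extra expressive power that $\ECoRTC$ has over $\PCoRTC$. Writing $\Gamma' = \set{\term[3]_1 = \bot, \dots, \term[3]_k = \bot}$, I would then apply the deduction lemma \Cref{lemma: deduction theorem} $k$ times, peeling off the Hoare hypotheses one at a time, to obtain
\[{} \Gamma_{\mathcal{C}} \models_{\REL} \CSIG^* \le \const{s} \iff {} \models_{\REL} \CSIG^* \le \const{s} \cup (\top \term[3]_1 \top) \cup \dots \cup (\top \term[3]_k \top).\]
Chaining this with \Cref{lemma: CFG encoding lang} yields $\lang{\mathcal{C}} = \CSIG^* \iff {}\models_{\REL} \CSIG^* \le \const{s} \cup \bigcup_{i=1}^{k}(\top \term[3]_i \top)$, where $\CSIG^*$ abbreviates $(a_1 \cup \dots \cup a_n)^{*}$. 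The map sending $\mathcal{C}$ to this inequation (unfolded to an equation via the $\le$-abbreviation) is plainly computable, so the equational theory of $\ECoRTC$ is $\Pi_1^0$-hard, and together with \Cref{lemma: PCoRTC upper bound} it is $\Pi_1^0$-complete.

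I do not expect a serious obstacle here, since the substantive technical content is packaged in \Cref{lemma: CFG encoding lang} and \Cref{lemma: deduction theorem}; the only thing to verify carefully is that the hypothesis-elimination stays inside the $\ECoRTC$ signature, i.e.\ that we never need the complement of a compound term. This holds precisely because each CFG rule's right-hand side is a word, so the only complemented subterms introduced are the $x^{-}$ for non-terminal variables $x$, together with $\const{I}^{-}$ implicit in writing $\top$ if one were to expand it---but in fact $\top$ itself is a primitive of $\ECoRTC$, so even that is unnecessary. A secondary point worth a sentence is that $\CSIG^*$ and the terms $w \cap x^{-}$ all have size polynomial (indeed linear) in $|\mathcal{C}|$, confirming the reduction is effective; this is not needed for $\Pi_1^0$-completeness but is reassuring.
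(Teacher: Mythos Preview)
Your proposal is correct and follows essentially the same route as the paper: upper bound from \Cref{lemma: PCoRTC upper bound}, hardness by reducing CFG universality via \Cref{lemma: CFG encoding lang}, rewriting each hypothesis $w \le x$ as the Hoare hypothesis $(w \cap x^{-}) = \bot$, and then eliminating the hypotheses with \Cref{lemma: deduction theorem}. The only cosmetic difference is that the paper additionally replaces each $\top$ by $a \cup a^{-}$ to record that the hardness already holds over the smaller signature $\set{\cdot,\cup,\cap,\bl^{*}}$ with complements of term variables, whereas you (correctly) note that $\top$ is already a primitive of $\ECoRTC$ and leave it as is.
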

\begin{proof}
    For upper bound:
    By \Cref{lemma: PCoRTC upper bound}.
    For hardness:
    Let $\mathcal{C} = \tuple{X, \set{x_i \leftarrow w_i \mid i \in [n]}, \const{s}}$ be a CFG.
    Then we have
    \begin{align*}
         & \lang{\mathcal{C}} = A^*                                                                                                                                                                                                                      \\
         & \iff \set{w_i \le x_i \mid i \in [n]} \models_{\REL} A^* \le \const{s}                                                                     \tag{\Cref{lemma: CFG encoding lang}}                                                              \\
         & \iff \set{w_i \cap x_i^{-} = \bot \mid i \in [n]} \models_{\REL} A^* \le \const{s}                                                         \tag{For every $\struc$, $\model \models w_i \le x_i \iff \model \models w_i \cap x_i^{-} = \bot$} \\
         & \iff {} \models_{\REL} A^* \le \const{s} \cup \left(\bigcup_{i = 1}^{n} \top (w_i \cap x_i^{-}) \top\right) \tag{\Cref{lemma: deduction theorem}}                                                                                             \\
         & \iff {} \models_{\REL} A^* \le \const{s} \cup \left(\bigcup_{i = 1}^{n} (a \cup a^{-}) (w_i \cap x_i^{-}) (a \cup a^{-})\right). \tag{${} \models_{\REL} \top = a \cup a^{-}$, where $a$ is some element in $A$.}
    \end{align*}
    Thus we can reduce the universality problem for CFGs, which is $\Pi_{1}^{0}$-hard, to the equational theory of $\ECoRTC$ (precisely, $\Termset_{\set{\cdot, \cup, \cap, \bl^{*}}}$ with the complement of term variables).
\end{proof}

\section{Conclusion and future work}\label{section: conclusion}
We have studied the computational complexity of existential calculi of relations with transitive closure, using edge saturations.
A natural interest is to extend our complexity results for more general syntaxes.
We believe that the upper bound results for intersection-free fragments hold even if we extend them with \emph{tests} in KAT (by considering \emph{guarded strings} \cite{Kozen1996} instead of words (strings), in saturable paths); e.g., \emph{KAT with top} (w.r.t.\ binary relations), which are recently studied for modeling \emph{incorrectness logic} \cite{ohearnIncorrectnessLogic2019, zhangIncorrectnessLogicKleene2022, Pous2022, pousCompletenessTheoremsKleene2023}.

Another future work is to study the axiomatizability of them.
Unfortunately, the equational theory of (full) $\ECoRTC$ is not finitely axiomatizable because it is not recursively enumerable (\Cref{theorem: PCoRTC complexity});
but we leave it open to finding some complete (finite) axiomatization for its fragments, including KA terms with complements of term variables.
(The equation (\ref{equation: I- top 0}) indicates that, at least, we need axioms of KA with top w.r.t.\ binary relations \cite{Pous2022}.)

\paragraph*{Acknowledgments}
We would like to thank the anonymous reviewers for their useful comments.

\bibliographystyle{IEEEtran}
\bibliography{IEEEabrv,main-pand.bib}

\ifiscameraready
\else
    \appendices
\crefalias{section}{appendix}
\section{Proof of the equations in the Introduction}\label{section: equations}
(The notations in this section depend on \Cref{section: ECoR def}.)

For \Cref{equation: I- top 0}: ${}  \models_{\REL} \top = a \cup a^{-}$.
For every structure $\struc$ and $x, y \in |\struc|$,
we have
\begin{align*}
    \tuple{x, y} \in \jump{\top}_{\struc}
     & \iff \mathsf{true}                                                                  \\
     & \iff \tuple{x, y} \in \jump{a}_{\struc} \lor \tuple{x, y} \not\in \jump{a}_{\struc} \\
     & \iff \tuple{x, y} \in \jump{a \cup a^{-}}_{\struc}.                    \tag*{\qed}
\end{align*}

For \Cref{equation: I- top}: ${}  \models_{\REL} a b a^{-}         \le \const{I}^{-} a^{-} \cup a \const{I}^{-}$.
Assume $\tuple{x_0, x_3} \in \jump{a b a^{-}}_{\struc}$.
Let $x_1$ and $x_2$ be s.t.\ $\tuple{x_0, x_1} \in \jump{a}_{\struc}$, $\tuple{x_1, x_2} \in \jump{b}_{\struc}$, and $\tuple{x_2, x_3} \in \jump{a^{-}}_{\struc}$.
Because $\tuple{x_0, x_1} \in a^{\struc}$ and $\tuple{x_2, x_3} \not\in a^{\struc}$, $\tuple{x_0, x_1} \neq \tuple{x_2, x_3}$ should hold.
We distinguish the following cases:
\begin{itemize}
    \item If $x_0 \neq x_2$, then $\tuple{x_0, x_2} \in \jump{\const{I}^{-}}_{\struc}$ and $\tuple{x_2, x_3} \in \jump{a^{-}}_{\struc}$.
          Thus, $\tuple{x_0, x_3} \in \jump{\const{I}^{-} a^{-} \cup a \const{I}^{-}}_{\struc}$.
    \item Otherwise ($x_1 \neq x_3$), $\tuple{x_0, x_1} \in \jump{a}_{\struc}$ and $\tuple{x_1, x_3} \in \jump{\const{I}^{-}}_{\struc}$.
          Thus, $\tuple{x_0, x_3} \in \jump{\const{I}^{-} a^{-} \cup a \const{I}^{-}}_{\struc}$.
\end{itemize}
This completes the proof.
\qed

For \Cref{equation: I- ID}: ${} \models_{\REL} a               \le \const{I}^{-} \cup a a$.
Assume $\tuple{x, y} \in \jump{a}_{\struc}$.
We distinguish the following cases:
\begin{itemize}
    \item If $x \neq y$, then $\tuple{x,y} \in \jump{\const{I}^{-}}_{\struc}$.
          Thus, $\tuple{x,y} \in \jump{\const{I}^{-} \cup a a}_{\struc}$.
    \item Otherwise ($x = y$),
          $\tuple{x, y} \in \jump{a}_{\struc}$, $y = x$, and $\tuple{x, y} \in \jump{a}_{\struc}$.
          Thus, $\tuple{x,y} \in \jump{\const{I}^{-} \cup a a}_{\struc}$.
\end{itemize}
This completes the proof.
\qed

For \Cref{equation: I- De Morgan}: ${} \models_{\REL} a^{\smile} a^{-} \le \const{I}^{-}$.
Assume $\tuple{x_0, x_2} \in \jump{a^{\smile} a^{-}}_{\struc}$.
Let $x_1$ be s.t.\ $\tuple{x_0, x_1} \in \jump{a^{\smile}}_{\struc}$ and $\tuple{x_1, x_2} \in \jump{a^{-}}_{\struc}$.
Because $\tuple{x_1, x_0} \in a^{\struc}$ and $\tuple{x_1, x_2} \not\in a^{\struc}$, $x_0 \neq x_2$ should hold.
Thus, we have $\tuple{x_0, x_2} \in \jump{\const{I}^{-}}_{\struc}$.
This completes the proof.
\qed

\newcommand{\genTermset}{\Termset_{\mathrm{gen}}}
\section{Note: Supplement of \Cref{footnote: term set}}\label{section: connf}
Let $\genTermset$ be the set of terms defined as follows:
\begin{align*}
    \genTermset \ni \term[1], \term[2], \term[3] & ::= \aterm \mid \aterm^{-} \mid \const{I} \mid \const{I}^{-} \mid \bot \mid \bot^{-} \mid \top \mid \top^{-}          \\
                                                 & \mid \term[1] \cdot \term[2] \mid \term[1] \cup \term[2] \mid \term[1]^{*} \mid \term^{\smile} \tag*{($a \in \SIG$).}
\end{align*}
The \emph{binary relation} $\jump{\term}_{\model} \subseteq |\model|^2$ of an $\genTermset$ term $\term$ on a structure $\model$ is defined as follows, where $a \in \SIG$ and $b \in \SIG \cup \set{\const{I}, \bot, \top}$:
\begin{flalign*}
    \jump{a}_{\model}                       & \defeq a^{\model}                                              & 
    \jump{b^{-}}_{\model}                   & \defeq (b^{-})^{\model}                                           \\
    \jump{\bot}_{\model}                    & \defeq \emptyset                                               & 
    \jump{\top}_{\model}                    & \defeq |\model|^2                                                 \\
    \jump{\term[1] \cup \term[2]}_{\model}  & \defeq \jump{\term[1]}_{\model} \cup \jump{\term[2]}_{\model}  & 
    \jump{\term[1] \cap \term[2]}_{\model}  & \defeq \jump{\term[1]}_{\model} \cap \jump{\term[2]}_{\model}     \\
    \jump{\const{I}}_{\model}               & \defeq \const{I}^{\model}                                      & 
    \\
    \jump{\term[1] \cdot \term[2]}_{\model} & \defeq \jump{\term[1]}_{\model} \cdot \jump{\term[2]}_{\model}    \\
    \jump{\term[1]^{*}}_{\model}            & \defeq \bigcup_{n \in \nat} \jump{\term[1]^{n}}_{\model}       & 
    \jump{\term^{\smile}}_{\model}          & \defeq \jump{\term}_{\model}^{\smile}
\end{flalign*}
We write $\models_{\REL} \term[1] = \term[2]$ if $\jump{\term[1]}_{\model} = \jump{\term[2]}_{\model}$ for every structure $\model$.
The \emph{equational theory over $\sig$ w.r.t.\ binary relations} is defined as the set of every pairs $\term[1] = \term[2]$ of terms in $\Termset_{\sig}$ s.t.\ $\models_{\REL} \term[1] = \term[2]$.
We use the notation $\models$ (\Cref{notation: models}) also for $\genTermset$.

For term $\term \in \genTermset$,
the \emph{converse normal form} $\widebreve{\term}$ (e.g.,\ \cite[Sect.\ 2.1]{Brunet2016}) of the term $\term^{\smile}$ is defined by:
\begin{align*}
    \widebreve{a}                            & \defeq a^{\smile}                                      & \widebreve{a^{\smile}}                         & \defeq a     \tag{$a \in \SIG^{(-)}$}                                                 \\
    \widebreve{a}                            & \defeq a                                               & \widebreve{a^{\smile}}                         & \defeq a \tag{$a \in \set{\const{I}, \const{I}^{-}, \top, \top^{-}, \bot, \bot^{-}}$} \\
    \widebreve{\term[1] \cdot \term[2]}      & \defeq \widebreve{\term[2]} \cdot \widebreve{\term[1]} & \widebreve{(\term[1] \cdot \term[2])^{\smile}} & \defeq \widebreve{\term[1]^{\smile}} \cdot \widebreve{\term[2]^{\smile}}              \\
    \widebreve{\term[1]^*}                   & \defeq (\widebreve{\term[1]})^*                        & \widebreve{(\term[1]^*)^{\smile}}              & \defeq (\widebreve{\term[1]^{\smile}})^*                                              \\
    \widebreve{\term[1] \cup \term[2]}       & \defeq \widebreve{\term[1]} \cup \widebreve{\term[2]}  & \widebreve{(\term[1] \cup \term[2])^{\smile}}  & \defeq \widebreve{\term[1]^{\smile}} \cup \widebreve{\term[2]^{\smile}}               \\
    \widebreve{(\term[1]^{\smile})^{\smile}} & \defeq \widebreve{\term[1]}.
\end{align*}
\begin{proposition}\label{proposition: cnf}
    For every term $\term \in \genTermset$, $\models_{\REL} \widebreve{\term} = \term^{\smile}$.
\end{proposition}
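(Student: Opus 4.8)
Proof proposal for Proposition \ref{proposition: cnf} ($\models_{\REL} \widebreve{\term} = \term^{\smile}$ for every $\term \in \genTermset$).

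The plan is to proceed by structural induction on $\term$, noting that the definition of $\widebreve{(\bl)}$ is really a \emph{mutual} recursion on the two families $\set{\widebreve{\term}}$ and $\set{\widebreve{\term^{\smile}}}$. So I would strengthen the statement slightly to the conjunction that we prove simultaneously for all $\term$: (a) $\models_{\REL} \widebreve{\term} = \term^{\smile}$, and (b) $\models_{\REL} \widebreve{\term^{\smile}} = \term$. Claim (b) follows from (a) applied to the subterm together with the semantic fact $\jump{(R^{\smile})^{\smile}} = \jump{R}$ (i.e., $\smile$ is an involution on binary relations), but keeping both in the induction hypothesis makes the compound cases (composition, union, star) go through cleanly since, e.g., $\widebreve{(\term[1]\cdot\term[2])^{\smile}} \defeq \widebreve{\term[1]^{\smile}}\cdot\widebreve{\term[2]^{\smile}}$ refers to the ``$\widebreve{\term[i]^{\smile}}$'' form of the subterms.

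First I would dispatch the base cases: for $a \in \SIG^{(-)}$ we have $\widebreve{a} = a^{\smile}$ by definition, and $\jump{a^{\smile}}_{\model} = \jump{a}_{\model}^{\smile}$ by the semantics of $\bl^{\smile}$, so (a) is immediate; (b) is symmetric. For the ``self-converse'' constants $a \in \set{\const{I},\const{I}^{-},\top,\top^{-},\bot,\bot^{-}}$ the definition sets $\widebreve{a} = a$, and one checks directly that $\jump{a}_{\model}$ is a symmetric relation (the identity, its complement, the full square, the empty relation), hence $\jump{a}_{\model} = \jump{a}_{\model}^{\smile} = \jump{a^{\smile}}_{\model}$. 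For the inductive cases I would use the standard distributivity laws of relational converse over the operators: $\jump{(R\cdot Q)^{\smile}} = \jump{Q^{\smile}}\cdot\jump{R^{\smile}}$, $\jump{(R\cup Q)^{\smile}} = \jump{R^{\smile}}\cup\jump{Q^{\smile}}$, and $\jump{(R^{*})^{\smile}} = \jump{(R^{\smile})^{*}}$ (the last because $\smile$ commutes with iteration, which follows from the composition law by an easy induction on $n$). Then, e.g., for composition: $\widebreve{\term[1]\cdot\term[2]} \defeq \widebreve{\term[2]}\cdot\widebreve{\term[1]}$, whose semantics is $\jump{\widebreve{\term[2]}}_{\model}\cdot\jump{\widebreve{\term[1]}}_{\model} = \jump{\term[2]^{\smile}}_{\model}\cdot\jump{\term[1]^{\smile}}_{\model}$ by IH(a), $= \jump{\term[2]}_{\model}^{\smile}\cdot\jump{\term[1]}_{\model}^{\smile} = (\jump{\term[1]}_{\model}\cdot\jump{\term[2]}_{\model})^{\smile} = \jump{(\term[1]\cdot\term[2])^{\smile}}_{\model}$. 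The twin case $\widebreve{(\term[1]\cdot\term[2])^{\smile}} \defeq \widebreve{\term[1]^{\smile}}\cdot\widebreve{\term[2]^{\smile}}$ uses IH(b) on each factor. The union and star cases are entirely analogous, and the final clause $\widebreve{(\term[1]^{\smile})^{\smile}} \defeq \widebreve{\term[1]}$ is handled by IH(a) on $\term[1]$ together with the involution property: $\jump{\widebreve{\term[1]}}_{\model} = \jump{\term[1]^{\smile}}_{\model} = \jump{\term[1]}_{\model}^{\smile} = (\jump{(\term[1]^{\smile})^{\smile}}_{\model})$ after one more application of involution.

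I do not expect a genuine obstacle here — the proof is routine — but the one point requiring a little care is the bookkeeping of the mutual recursion: the function $\widebreve{(\bl)}$ is defined by cases that peel off at most one outer $\bl^{\smile}$, so the induction must be organized on the structural size of the \emph{underlying} term (ignoring a possible leading $\smile$) rather than naively on $\term$, and one must make sure the two statements (a) and (b) are invoked on strictly smaller arguments in every case (they are: in the $\term = (\term[1]^{\smile})^{\smile}$ clause we recurse on $\term[1]$, which is two $\smile$'s smaller). The only genuinely ``semantic'' lemmas needed are the three distributivity identities for $\bl^{\smile}$ over $\cdot$, $\cup$, $\bl^{*}$ and the involutivity of $\bl^{\smile}$, all of which are standard and can be cited or proved in a line.
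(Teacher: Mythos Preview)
Your proposal is correct and takes essentially the same approach as the paper: a straightforward induction on the term, using the standard identities for relational converse over $\cdot$, $\cup$, and $\bl^{*}$. The paper's own proof is just a one-line sketch exhibiting the composition case; your careful organization of the mutual recursion via the simultaneous claims (a) and (b) is a perfectly valid (and slightly more explicit) way to set up the same induction, though as you note a plain induction on term size also suffices since every recursive call in the definition of $\widebreve{(\bl)}$ is on a strictly smaller term.
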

\begin{proof}[Proof Sketch]
    By straightforward induction on $\term$.
    For example, if $\term = \term[2] \cdot \term[3]$, 
    for every ${\model}[x,y]$, we have
    \begin{align*}
        {\model}[x,y] \models \widebreve{\term[2] \cdot \term[3]} & \iff  {\model}[x,y] \models \widebreve{\term[3]} \cdot \widebreve{\term[2]} \tag{Def.\ of $\widebreve{\bullet}$}                             \\
                                                                  & \iff \exists z.\   {\model}[x,z] \models \widebreve{\term[3]} \ \land \  {\model}[z,y] \models \widebreve{\term[2]} \tag{Def.\ of $\jump{}$} \\
                                                                  & \iff \exists z.\  {\model}[x,z] \models \term[3]^{\smile} \  \land \  {\model}[z,y] \models \term[2]^{\smile} \tag{IH}                       \\
                                                                  & \iff {\model}[x,y] \models \term[3]^{\smile} \cdot \term[2]^{\smile}  \tag{Def.\ of $\jump{}$}                                               \\
                                                                  & \iff  {\model}[x,y] \models (\term[3] \cdot \term[2])^{\smile}. \tag{Def.\ of $\jump{}$}
    \end{align*}
\end{proof}
\begin{proposition}\label{proposition: KACC}
    The equational theory of $\genTermset$ can be reduced to that of $\ECoRTC$ in polynomial time.
\end{proposition}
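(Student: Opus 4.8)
The plan is to exhibit, for every $\genTermset$ term, a relationally equivalent $\ECoRTC$ term computable in linear time; the induced map on equations $\term[1] = \term[2] \mapsto \term[1]' = \term[2]'$ is then a polynomial-time many-one reduction, since $\models_{\REL} \term[1] = \term[2]$ iff $\models_{\REL} \term[1]' = \term[2]'$ whenever $\models_{\REL} \term[1] = \term[1]'$ and $\models_{\REL} \term[2] = \term[2]'$ (relational semantics being a congruence).

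First, given a $\genTermset$ term $\term[1]$, I would form $\widebreve{\term[1]^{\smile}}$. By \Cref{proposition: cnf}, $\models_{\REL} \widebreve{\term[1]^{\smile}} = (\term[1]^{\smile})^{\smile}$, and since $\jump{(\term[1]^{\smile})^{\smile}}_{\model} = \jump{\term[1]}_{\model}$ for every structure $\model$ (because $(R^{\smile})^{\smile} = R$ for binary relations), we obtain $\models_{\REL} \widebreve{\term[1]^{\smile}} = \term[1]$. Next I would inspect the defining clauses of $\widebreve{\bullet}$: the only clause that produces an occurrence of $\smile$ in the output is $\widebreve{a} = a^{\smile}$ for $a \in \SIG^{(-)}$, so in $\widebreve{\term[1]^{\smile}}$ no $\smile$ is ever stacked on another $\smile$ or placed on a constant, and every $\smile$ sits on a variable or on a term $b^{-}$. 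Consequently $\widebreve{\term[1]^{\smile}}$ only involves subterms of the forms $a$, $a^{-}$, $a^{\smile}$, $(a^{-})^{\smile}$, $\const{I}$, $\const{I}^{-}$, $\bot$, $\top$, $\bot^{-}$, $\top^{-}$ (where $a$ ranges over $\SIG$), together with $\cdot$, $\cup$, $\bl^{*}$; the only deviation from the $\ECoRTC$ syntax left is the possible presence of $\bot^{-}$ and $\top^{-}$.

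Second, in $\widebreve{\term[1]^{\smile}}$ I would replace every occurrence of $\bot^{-}$ by $\top$ and every occurrence of $\top^{-}$ by $\bot$, obtaining a term $\term[1]'$ that is now a genuine $\ECoRTC$ term; since $\models_{\REL} \bot^{-} = \top$ and $\models_{\REL} \top^{-} = \bot$, and $\models_{\REL}$ is preserved under subterm replacement, $\models_{\REL} \term[1]' = \widebreve{\term[1]^{\smile}} = \term[1]$. For the complexity: a straightforward induction shows $\len{\widebreve{\term[2]}}$ is linear in $\len{\term[2]}$ — the rewriting neither creates nor destroys atom occurrences, it never increases the number of $\cdot$, $\cup$, $\bl^{*}$ operators, and all $\smile$'s in the output land on atoms — so computing $\widebreve{\term[1]^{\smile}}$ and then performing the constant replacement takes linear time, whence the reduction is polynomial-time.

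The main obstacle is not conceptual but bookkeeping: one must carry out the full case analysis on $\widebreve{\bullet}$ to confirm that its output really has the restricted shape claimed above (converse only on $\SIG^{(-)}$, never iterated, never on a constant) and that the rewriting does not blow up the size despite its non-monotone behaviour on atoms (for instance $\widebreve{a} = a^{\smile}$ enlarges a leaf while $\widebreve{a^{\smile}} = a$ shrinks one, and the clause for $\widebreve{(\term[1] \cdot \term[2])^{\smile}}$ recurses on the strictly-smaller-by-size terms $\term[1]^{\smile}$ and $\term[2]^{\smile}$, so termination is by induction on size rather than on subterms). Semantic correctness of the reduction is then immediate from \Cref{proposition: cnf} together with the two elementary validities $\models_{\REL} \bot^{-} = \top$ and $\models_{\REL} \top^{-} = \bot$.
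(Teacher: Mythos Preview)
Your proposal is correct and follows essentially the same approach as the paper: push converses down to atoms via the converse normal form and then replace $\bot^{-}$ by $\top$ and $\top^{-}$ by $\bot$. The only cosmetic difference is that the paper writes the normalised term as $\widebreve{\widebreve{\term[1]}}$ (apply $\widebreve{\bullet}$ twice) whereas you write it as $\widebreve{\term[1]^{\smile}}$; both expressions are relationally equivalent to $\term[1]$ by \Cref{proposition: cnf} and produce an output in which $\smile$ sits only on atoms of $\SIG^{(-)}$, so the argument is the same.
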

\begin{proof}
    Let $\term[1]$ and $\term[2]$ be terms in $\genTermset$.
    Let $\term[1]'$ (resp.\ $\term[2]'$) be the term $ \widebreve{\widebreve{\term[1]}}$ (resp.\ $\widebreve{\widebreve{\term[2]}}$)
    in which each $\top^{-}$ has been replaced with $\bot$ and each $\bot^{-}$ has been replaced with $\top$.
    Then, ${} \models_{\REL} \term[1] = \term[2] \iff {} \models_{\REL} \term[1]' = \term[2]'$ by
    \Cref{proposition: cnf} with $\models_{\REL} \top^{-} = \bot$ and $\models_{\REL} \bot^{-} = \top$.
    By the construction, $\term[1]'$ and $\term[2]'$ can be obtained from $\term[1]$ and $\term[2]$ in polynomial time, respectively.
    Also, $\term[1]'$ and $\term[2]'$ are $\genTermset$ terms, because $\smile$ only applies to $a \in \SIG^{(-)}$ and $\top^{-}$ and $\bot^{-}$ does not occur.
\end{proof}

\section{Note: Proof of \Cref{proposition: semantics for graphs,proposition: graph characterization}}\label{section: proposition: semantics for graphs}
\begin{proposition}\label{proposition: graph operation}
    For every structure $\model$ and graphs $G, H$,
    \begin{align*}
        \label{proposition: graph operation: cap}\jump{G \cap H}_{\model}      & = \jump{G}_{\model} \cap \jump{H}_{\model} \tag{\Cref{proposition: graph operation}$\cap$}   \\
        \label{proposition: graph operation: cdot} \jump{G \cdot H}_{\model}   & = \jump{G}_{\model} \cdot \jump{H}_{\model} \tag{\Cref{proposition: graph operation}$\cdot$} \\
        \label{proposition: graph operation: smile} \jump{G^{\smile}}_{\model} & = \jump{G}_{\model}^{\smile}. \tag{\Cref{proposition: graph operation}$\smile$}
    \end{align*}
\end{proposition}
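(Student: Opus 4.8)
The plan is to prove all three identities by unfolding Definition \ref{definition: semantics for graphs}, i.e.\ $\tuple{x,y} \in \jump{G}_{\model}$ iff $G \homo \model[x,y]$, and then arguing directly at the level of graph homomorphisms out of the glued graphs $G \cap H$, $G \cdot H$, $G^{\smile}$ introduced in \Cref{subsection: graph languages}. Since $\jump{G}_{\model}$ depends only on the isomorphism class of $G$, I would first fix concrete representatives in which $|G|$ and $|H|$ are disjoint and the composites are obtained simply by identifying the appropriate pointed vertices (source/target). This turns each claim into the statement that a homomorphism out of a quotient of a disjoint union is the same datum as a ``compatible pair'' of homomorphisms. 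As an easy warm-up, the converse case is immediate: $G^{\smile}$ is $G$ with source and target interchanged, so a map $h$ is a homomorphism $G^{\smile} \homo \model[x,y]$ exactly when it is a homomorphism $G \homo \model[y,x]$ (the edge conditions coincide, only the two pointing conditions swap). Hence $\tuple{x,y} \in \jump{G^{\smile}}_{\model}$ iff $\tuple{y,x} \in \jump{G}_{\model}$ iff $\tuple{x,y} \in \jump{G}_{\model}^{\smile}$, which is \eqref{proposition: graph operation: smile}.

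For \eqref{proposition: graph operation: cap}, recall $G \cap H$ is the disjoint union of $G$ and $H$ with the two sources identified (becoming $\src^{G \cap H}$) and the two targets identified (becoming $\tgt^{G \cap H}$). For the inclusion $\subseteq$: a homomorphism $h \colon G \cap H \homo \model[x,y]$ composed with the two inclusions $G \hookrightarrow G \cap H$ and $H \hookrightarrow G \cap H$ yields homomorphisms $G \homo \model[x,y]$ and $H \homo \model[x,y]$, because those inclusions carry source to source and target to target; this gives $\tuple{x,y} \in \jump{G}_{\model} \cap \jump{H}_{\model}$. For $\supseteq$: given $h_G \colon G \homo \model[x,y]$ and $h_H \colon H \homo \model[x,y]$, they agree on the identified source vertex (both send it to $x$) and on the identified target vertex (both send it to $y$), so they glue to a well-defined map on $|G \cap H|$, which is a homomorphism to $\model[x,y]$ since each edge of $G \cap H$ comes from $G$ or from $H$. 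For \eqref{proposition: graph operation: cdot}, $G \cdot H$ identifies $\tgt^{G}$ with $\src^{H}$ (call the resulting vertex the middle vertex), keeping $\src^{G}$ as the source and $\tgt^{H}$ as the target. Given $h \colon G \cdot H \homo \model[x,z]$, set $y \defeq h(\text{middle vertex})$; restricting along the two inclusions gives $G \homo \model[x,y]$ and $H \homo \model[y,z]$, whence $\tuple{x,z} \in \jump{G}_{\model} \cdot \jump{H}_{\model}$. Conversely, from $h_G \colon G \homo \model[x,y]$ and $h_H \colon H \homo \model[y,z]$ — which agree on the middle vertex, both sending it to $y$ — we glue a homomorphism $G \cdot H \homo \model[x,z]$.

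The only real care needed is in setting up the disjoint-union-with-identification construction precisely enough that ``restrict along an inclusion'' and ``glue two homomorphisms agreeing on the overlap'' are manifestly well-defined; this is the standard universal-property bookkeeping for a pushout of graphs, and nothing subtle happens because the composites are formed purely by identifying pointed vertices while $\model[x,y]$ is an ordinary graph, so no edge is collapsed in a problematic way. I expect this to be the main (and only mild) obstacle; once the construction is fixed, each of the three equalities is proved by the short two-direction argument above. The intended use is as the inductive steps for the $\cap$, $\cdot$, and $\bl^{\smile}$ cases of \Cref{proposition: semantics for graphs}.
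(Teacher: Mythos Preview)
Your proposal is correct and follows essentially the same approach as the paper's proof: for each operation, unfold \Cref{definition: semantics for graphs} and argue the two directions by restricting a homomorphism out of the glued graph to its two pieces (or, for $\smile$, using the same map with the source/target swapped) and conversely by gluing two homomorphisms that agree on the identified vertices. The paper writes this more tersely (e.g., taking $f_G$ and $f_H$ as restrictions of $f$ and setting $f = f_G \cup f_H$ for the converse), but the content is identical to your restrict/glue argument.
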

\begin{proof}
    \ref{proposition: graph operation: cap}.
    By the definition of $\jump{}_{\model}$ (\Cref{definition: semantics for graphs}), it suffices to show that for every $x, y \in |\model|$:
    \begin{align*}
         & \exists f.\ f \colon (G \cap H) \homo {\model}[x,y]                                                      \\
         & \iff \exists f_{G}, f_{H}.\ f_{G} \colon G \homo {\model}[x,y] \land f_{H} \colon H \homo {\model}[x,y].
    \end{align*}
    $\Longrightarrow$:
    By letting $f_{G} = \set{\tuple{x', f(x')} \mid x' \in |G|}$ and $f_{H} = \set{\tuple{x', f(x')} \mid x' \in |H|}$.
    $\Longleftarrow$:
    By letting $f = f_{G} \cup f_{H}$.
    Note that $f_{G}(\src^{G}) = x = f_{H}(\src^{H})$ and $f_{G}(\tgt^{G}) = y = f_{H}(\tgt^{H})$ by \Cref{definition: semantics for graphs}; so $f$ is indeed a map.
    
    \ref{proposition: graph operation: cdot}.
    By the definition of $\jump{}_{\model}$, it suffices to show that for every $x, y \in |\model|$:
    \begin{align*}
         & \exists f.\ f \colon (G \cdot H) \homo {\model}[x,y]                                                         \\
         & \iff \exists z, f_{G}, f_{H}.\  f_{G} \colon G \homo {\model}[x,z] \land f_{H} \colon H \homo {\model}[z,y].
    \end{align*}
    $\Longrightarrow$:
    By letting $z = f(\tgt^{G})$, $f_{G} = \set{\tuple{x', f(x')} \mid x' \in |G|}$, and $f_{H} = \set{\tuple{x', f(x')} \mid x' \in |H|}$.
    $\Longleftarrow$:
    By letting $f = f_{G} \cup f_{H}$.
    Note that $f_{G}(\tgt^{G}) = z = f_{H}(\src^{H})$ by \Cref{definition: semantics for graphs}; so $f$ is indeed a map.
    
    \ref{proposition: graph operation: smile}.
    By the definition of $\jump{}_{\model}$, it suffices to show that for every $x, y \in |\model|$:
    \begin{align*}
        \exists f.\ f \colon G^{\smile} \homo {\model}[x,y] & \ \iff\  \exists f_{G}.\ f_{G} \colon G \homo {\model}[y,x].
    \end{align*}
    This is clear by using the same map.
\end{proof}

\begin{proposition}[restatement of \Cref{proposition: semantics for graphs}]
    For every structure $\model$ and $\PCoRTC$ term $\term$, we have
    $\jump{\term}_{\model} = \jump{\glang(\term)}_{\model}$.
\end{proposition}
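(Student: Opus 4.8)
The plan is to prove the identity $\jump{\term}_{\model} = \jump{\glang(\term)}_{\model}$ by induction on the structure of the $\PCoRTC$ term $\term$, using \Cref{proposition: graph operation} for the operations $\cap$, $\cdot$, $\smile$ on graphs.

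For the base cases the graph language is empty or a singleton, and the claim is checked directly from \Cref{definition: semantics for graphs}. For $\term = \aterm$, the single $\aterm$-labeled edge graph $\glang(\aterm) = \set{G}$ admits a homomorphism into ${\model}[x,y]$ iff $\tuple{x,y} \in \aterm^{\model}$, so $\jump{\glang(\aterm)}_{\model} = \aterm^{\model} = \jump{\aterm}_{\model}$. For $\term = \const{I}$, the one-vertex pointed graph maps into ${\model}[x,y]$ iff $x = y$, giving $\const{I}^{\model}$. For $\term = \top$, the two-isolated-vertex graph always maps in, giving $|\model|^2$. For $\term = \bot$, $\glang(\bot) = \emptyset$, so $\jump{\glang(\bot)}_{\model} = \emptyset = \jump{\bot}_{\model}$. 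The atom $\aterm^{\smile}$ is subsumed by the converse step below (or checked the same way).

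For the inductive step I treat the operators one at a time. The union case is immediate: $\glang(\term[1] \cup \term[2]) = \glang(\term[1]) \cup \glang(\term[2])$ and $\jump{\glang[1] \cup \glang[2]}_{\model} = \jump{\glang[1]}_{\model} \cup \jump{\glang[2]}_{\model}$ straight from the definition of $\jump{\glang}_{\model}$, so the claim reduces to the two induction hypotheses. For $\cap$, $\cdot$, and $\smile$: I first unfold $\jump{\glang(\term[1] \mathbin{\star} \term[2])}_{\model}$ as $\bigcup\set{\jump{G \mathbin{\star} H}_{\model} \mid G \in \glang(\term[1]),\ H \in \glang(\term[2])}$, where $\mathbin{\star}$ denotes the corresponding graph operation, then apply the respective clause of \Cref{proposition: graph operation} to rewrite each $\jump{G \mathbin{\star} H}_{\model}$ as $\jump{G}_{\model} \mathbin{\star} \jump{H}_{\model}$, and finally use distributivity of relational composition, intersection, and converse over arbitrary unions of relations together with the induction hypotheses for $\term[1]$ and $\term[2]$. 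For the Kleene star, an auxiliary induction on $n \in \nat$ shows $\jump{\term[1]^{n}}_{\model} = \jump{\glang(\term[1]^{n})}_{\model}$ (using the $\const{I}$ base case for $n = 0$ and the composition case just proved for the step), after which I take the union over $n$, using $\jump{\term[1]^{*}}_{\model} = \bigcup_{n} \jump{\term[1]^{n}}_{\model}$ and $\glang(\term[1]^{*}) = \bigcup_{n} \glang(\term[1]^{n})$, hence $\jump{\glang(\term[1]^{*})}_{\model} = \bigcup_{n} \jump{\glang(\term[1]^{n})}_{\model}$.

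The only real subtlety, and the step I would be most careful about, is the composition (and, similarly, intersection) case: that a homomorphism out of $G \cdot H$ into ${\model}[x,y]$ decomposes into homomorphisms out of $G$ and $H$ that agree on the shared pointed vertex, and conversely that two such homomorphisms glue to one out of $G \cdot H$. This is exactly the content of \Cref{proposition: graph operation}, which I would invoke rather than reprove; once it is in hand the rest is routine distributivity. As a consequence, \Cref{proposition: graph characterization} follows: ${} \models_{\REL} \term[1] \le \term[2]$ iff $\forall \model.\ \jump{\glang(\term[1])}_{\model} \subseteq \jump{\glang(\term[2])}_{\model}$ iff $\forall G \in \glang(\term[1]). \forall \ppmodel.\ (G \homo \ppmodel \Rightarrow \exists H \in \glang(\term[2]).\ H \homo \ppmodel)$, and the last equivalence is witnessed by transitivity of $\homo$ (for $\Leftarrow$) and by instantiating $\ppmodel = G$ (for $\Rightarrow$).
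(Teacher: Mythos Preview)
Your inductive argument for the proposition itself is correct and is essentially the paper's proof: induction on $\term$, base cases checked from \Cref{definition: semantics for graphs}, and the operators handled via \Cref{proposition: graph operation} together with distributivity over unions.

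There is, however, a genuine gap in the final paragraph where you derive \Cref{proposition: graph characterization} as a consequence. In the step ``instantiating $\ppmodel = G$'' you implicitly treat an arbitrary $G \in \glang(\term[1])$ as a $2$-pointed structure, but it is not one: in general $\const{I}^{G}$ is not the identity relation and $\comnf{a}^{G}$ is not the complement of $a^{G}$, so $G$ does not satisfy the requirements of a structure and the quantifier $\forall \ppmodel$ does not range over it. The paper fixes this by first building a $2$-pointed structure $\ppstruc$ from $G$ (setting $a^{\ppstruc} = a^{G}$ for $a \in \SIG$ and $\const{I}^{\ppstruc}$ the identity) and then using the fact that, since $\term[2]$ is a $\PCoRTC$ term, every $H \in \glang(\term[2])$ has edges only in $\SIG$; hence any homomorphism $H \homo \ppstruc$ is already a homomorphism $H \homo G$. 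This positivity argument is precisely what fails for $\ECoRTC$ and motivates the edge-saturation machinery, so it is worth making explicit rather than eliding.
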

\begin{proof}
    By induction on $\term$.
    
    Case $\term = a$ where $a \in \SIG$:
    For every $x, y \in |\model|$, we have
    \begin{align*}
        \tuple{x, y} \in \jump{a}_{\model} & \iff \tuple{x, y} \in a^{\model}                                                         \tag{Def.\ of $\jump{}$}                                     \\
                                           & \iff \begin{tikzpicture}[baseline = -.5ex]
                                                      \graph[grow right = 1.2cm, branch down = 6ex, nodes={mynode}]{
                                                      {0/{}[draw, circle]}-!-{1/{}[draw, circle]}
                                                      };
                                                      \node[left = .5em of 0](l){};
                                                      \node[right = .5em of 1](r){};
                                                      \path (0) edge [draw = white, opacity = 0] node[pos= 0.5, inner sep = 1.5pt, font = \scriptsize, opacity = 1](a1){$a$}(1);
                                                      \graph[use existing nodes, edges={color=black, pos = .5, earrow}, edge quotes={fill=white, inner sep=1pt,font= \scriptsize}]{
                                                          0 -- a1 -> 1;
                                                          l -> 0; 1 -> r;
                                                      };
                                                  \end{tikzpicture} \homo {\model}[x,y]                        \\
                                           & \iff \tuple{x, y} \in \jump{\begin{tikzpicture}[baseline = -.5ex]
                                                                                 \graph[grow right = 1.2cm, branch down = 6ex, nodes={mynode}]{
                                                                                 {0/{}[draw, circle]}-!-{1/{}[draw, circle]}
                                                                                 };
                                                                                 \node[left = .5em of 0](l){};
                                                                                 \node[right = .5em of 1](r){};
                                                                                 \path (0) edge [draw = white, opacity = 0] node[pos= 0.5, inner sep = 1.5pt, font = \scriptsize, opacity = 1](a1){$a$}(1);
                                                                                 \graph[use existing nodes, edges={color=black, pos = .5, earrow}, edge quotes={fill=white, inner sep=1pt,font= \scriptsize}]{
                                                                                     0 -- a1 -> 1;
                                                                                     l -> 0; 1 -> r;
                                                                                 };
                                                                             \end{tikzpicture}}_{\model} \tag{\Cref{definition: semantics for graphs}} \\
                                           & \iff \tuple{x, y} \in \jump{\glang(a)}_{\model}. \tag{Def.\ of $\glang$}
    \end{align*}
    
    Case $\term = \const{I}$:
    For every $x, y \in |\model|$, we have
    \begin{align*}
        \tuple{x, y} \in \jump{\const{I}}_{\model} & \iff \tuple{x, y} \in \const{I}^{\model}                                                         \tag{Def.\ of $\jump{}$}                                                   \\
                                                   & \iff \begin{tikzpicture}[baseline = -.5ex]
                                                              \graph[grow right = 1.2cm, branch down = 6ex, nodes={mynode}]{
                                                              {0/{}[draw, circle]}-!-{1/{}[draw, circle]}
                                                              };
                                                              \node[left = .5em of 0](l){};
                                                              \node[right = .5em of 1](r){};
                                                              \path (0) edge [draw = white, opacity = 0] node[pos= 0.5, inner sep = 1.5pt, font = \scriptsize, opacity = 1](a1){$\const{I}$}(1);
                                                              \graph[use existing nodes, edges={color=black, pos = .5, earrow}, edge quotes={fill=white, inner sep=1pt,font= \scriptsize}]{
                                                                  0 -- a1 -> 1;
                                                                  l -> 0; 1 -> r;
                                                              };
                                                          \end{tikzpicture} \homo {\model}[x,y]                        \\
                                                   & \iff \tuple{x, y} \in \jump{\begin{tikzpicture}[baseline = -.5ex]
                                                                                         \graph[grow right = 1.2cm, branch down = 6ex, nodes={mynode}]{
                                                                                         {0/{}[draw, circle]}-!-{1/{}[draw, circle]}
                                                                                         };
                                                                                         \node[left = .5em of 0](l){};
                                                                                         \node[right = .5em of 1](r){};
                                                                                         \path (0) edge [draw = white, opacity = 0] node[pos= 0.5, inner sep = 1.5pt, font = \scriptsize, opacity = 1](a1){$\const{I}$}(1);
                                                                                         \graph[use existing nodes, edges={color=black, pos = .5, earrow}, edge quotes={fill=white, inner sep=1pt,font= \scriptsize}]{
                                                                                             0 -- a1 -> 1;
                                                                                             l -> 0; 1 -> r;
                                                                                         };
                                                                                     \end{tikzpicture}}_{\model} \tag{\Cref{definition: semantics for graphs}} \\
                                                   & \iff \tuple{x, y} \in \jump{\begin{tikzpicture}[baseline = -.5ex]
                                                                                         \graph[grow right = 1.2cm, branch down = 6ex, nodes={mynode}]{
                                                                                         {0/{}[draw, circle]}
                                                                                         };
                                                                                         \node[left = .5em of 0](l){};
                                                                                         \node[right = .5em of 0](r){};
                                                                                         \graph[use existing nodes, edges={color=black, pos = .5, earrow}, edge quotes={fill=white, inner sep=1pt,font= \scriptsize}]{
                                                                                             l -> 0; 0 -> r;
                                                                                         };
                                                                                     \end{tikzpicture}}_{\model} \tag{$\const{I}^{\model}$ is the identity relation}                                                    \\
                                                   & \iff \tuple{x, y} \in \jump{\glang(\const{I})}_{\model}. \tag{Def.\ of $\glang$}
    \end{align*}
    
    Case $\term = \top$:
    For every $x, y \in |\model|$, we have
    \begin{align*}
        \tuple{x, y} \in \jump{\top}_{\model} & \iff \const{true}                                                  \tag{Def.\ of $\jump{}$}                              \\
                                              & \iff \tuple{x, y} \in \jump{\begin{tikzpicture}[baseline = -.5ex]
                                                                                    \graph[grow right = 1.2cm, branch down = 6ex, nodes={mynode}]{
                                                                                    {0/{}[draw, circle]}-!-{1/{}[draw, circle]}
                                                                                    };
                                                                                    \node[left = .5em of 0](l){};
                                                                                    \node[right = .5em of 1](r){};
                                                                                    \graph[use existing nodes, edges={color=black, pos = .5, earrow}, edge quotes={fill=white, inner sep=1pt,font= \scriptsize}]{
                                                                                        l -> 0; 1 -> r;
                                                                                    };
                                                                                \end{tikzpicture}}_{\model} \tag{\Cref{definition: semantics for graphs}} \\
                                              & \iff \tuple{x, y} \in \jump{\glang(\top)}_{\model}. \tag{Def.\ of $\glang$}
    \end{align*}
    
    Case $\term = \bot$:
    For every $x, y \in |\model|$, we have
    \begin{align*}
        \tuple{x, y} \in \jump{\bot}_{\model} & \iff \const{false}                                                  \tag{Def.\ of $\jump{}$}  \\
                                              & \iff \tuple{x, y} \in \jump{\emptyset}_{\model} \tag{\Cref{definition: semantics for graphs}} \\
                                              & \iff \tuple{x, y} \in \jump{\glang(\bot)}_{\model}. \tag{Def.\ of $\glang$}
    \end{align*}
    
    Case $\term = \term[2] \cdot \term[3]$:
    \begin{align*}
         & \jump{\term[2] \cdot \term[3]}_{\model}  = \jump{\term[2]}_{\model} \cdot \jump{\term[3]}_{\model}  \tag{Def.\ of $\jump{}$}                                               \\
         & = \jump{\glang(\term[2])}_{\model} \cdot \jump{\glang(\term[3])}_{\model}          \tag{IH}                                                                                \\
         & = \bigcup_{G \in \glang(\term[2])} \bigcup_{H \in \glang(\term[3])} \jump{G}_{\model} \cdot \jump{H}_{\model}                 \tag{$\cdot$ is distributive w.r.t.\ $\cup$} \\
         & = \bigcup_{G \in \glang(\term[2])} \bigcup_{H \in \glang(\term[3])} \jump{G \cdot H}_{\model} \tag{\Cref{proposition: graph operation: cdot}}                              \\
         & = \jump{\glang(\term[2] \cdot \term[3])}_{\model}. \tag{Def.\ of $\glang$}
    \end{align*}
    
    Case $\term = \term[2] \cap \term[3]$:
    \begin{align*}
         & \jump{\term[2] \cap \term[3]}_{\model}   = \jump{\term[2]} \cap \jump{\term[3]}                     \tag{Def.\ of $\jump{}$}                                              \\
         & = \jump{\glang(\term[2])}_{\model} \cap \jump{\glang(\term[3])}_{\model}            \tag{IH}                                                                              \\
         & =  \bigcup_{G \in \glang(\term[2])} \bigcup_{H \in \glang(\term[3])} \jump{G}_{\model} \cap \jump{H}_{\model}                 \tag{$\cap$ is distributive w.r.t.\ $\cup$} \\
         & =  \bigcup_{G \in \glang(\term[2])} \bigcup_{H \in \glang(\term[3])} \jump{G \cap H}_{\model} \tag{\Cref{proposition: graph operation: cap}}                              \\
         & = \jump{\glang(\term[2] \cap \term[3])}_{\model}. \tag{Def.\ of $\glang$}
    \end{align*}
    
    Case $\term = \term[2] \cup \term[3]$:
    \begin{align*}
        \jump{\term[2] \cup \term[3]}_{\model} & = \jump{\term[2]}_{\model} \cup \jump{\term[3]}_{\model}                     \tag{Def.\ of $\jump{}$} \\
                                               & = \jump{\glang(\term[2])}_{\model} \cup \jump{G(\term[3])}_{\model}            \tag{IH}               \\
                                               & = \jump{\glang(\term[2]) \cup \glang(\term[3])}_{\model}                                              \\
                                               & = \jump{\glang(\term[2] \cup \term[3])}_{\model}. \tag{Def.\ of $\glang$}
    \end{align*}
    
    Case $\term = \term[2]^*$:
    \begin{align*}
         & \jump{\term[2]^{*}}_{\model} = \bigcup_{n \in \nat} \jump{\term[2]^n}_{\model}                     \tag{Def.\ of $\jump{}$}                                                                                 \\
         & = \bigcup_{n \in \nat} \underbrace{\jump{\term[2]}_{\struc} \cdot \ldots \cdot \jump{\term[2]}_{\struc}}_{n \mbox{ times}}  \tag{Def.\ of $\jump{}$}                                                        \\
         & = \bigcup_{n \in \nat} \underbrace{\jump{\glang(\term[2])}_{\struc} \cdot \ldots \cdot \jump{\glang(\term[2])}_{\struc}}_{n \mbox{ times}}       \tag{IH}                                                   \\
         & = \bigcup_{n \in \nat} \bigcup_{G_1, \dots, G_n \in  \mathcal{G}(\term[2])} \jump{G_1}_{\model} \cdot \ldots \cdot \jump{G_n}_{\model}                         \tag{$\cdot$ is distributive w.r.t.\ $\cup$} \\
         & = \bigcup_{n \in \nat} \bigcup_{G_1, \dots, G_n \in  \mathcal{G}(\term[2])} \jump{G_1 \cdot \ldots \cdot G_n}_{\model}  \tag{\Cref{proposition: graph operation: cdot}}                                     \\
         & = \bigcup_{n \in \nat} \jump{\mathcal{G}(\term[2]^n)}_{\model}  \tag{Def.\ of $\term[2]^{n}$}                                                                                                               \\
         & = \jump{\mathcal{G}(\term[2]^*)}_{\model}. \tag{Def.\ of $\mathcal{G}$}
    \end{align*}
    
    Case $\term = \term[2]^{\smile}$:
    \begin{align*}
        \jump{\term[2]^{\smile}}_{\model} & = \jump{\term[2]}_{\model}^{\smile} \tag{Def.\ of $\jump{}$}                                                    \\
                                          & = \jump{\glang(\term[2])}_{\model}^{\smile}        \tag{IH}                                                     \\
                                          & = \bigcup_{G \in \glang(\term[2])} \jump{G}_{\model}^{\smile}  \tag{$\smile$ is distributive w.r.t.\ $\cup$}    \\
                                          & = \bigcup_{G \in \glang(\term[2])} \jump{G^{\smile}}_{\model}  \tag{\Cref{proposition: graph operation: smile}} \\
                                          & = \jump{\glang(\term[2]^{\smile})}_{\model}.       \tag{Def.\ of $\mathcal{G}$}
    \end{align*}
\end{proof}

\begin{proposition}[restatement of \Cref{proposition: graph characterization}]
    For every $\PCoRTC$ terms, $\term[1]$ and $\term[2]$, we have
    \[{} \models_{\REL} \term[1] \le \term[2] \iff \forall G \in \glang(\term[1]). \exists H \in \glang(\term[2]).\  H \homo G. \]
\end{proposition}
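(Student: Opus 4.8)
The statement to prove is \Cref{proposition: graph characterization}, the Brunet--Pous graph characterization for $\PCoRTC$. The plan is to derive it from \Cref{proposition: semantics for graphs} (which identifies $\jump{\term}_{\model}$ with $\jump{\glang(\term)}_{\model}$ for $\PCoRTC$ terms) by a chain of equivalences analogous to the one used later for $\ECoRTC$ in the proof of \Cref{theorem: graph characterization inequation}. First I would unfold $\models_{\REL} \term[1] \le \term[2]$ to $\forall \model.\ \jump{\term[1]}_{\model} \subseteq \jump{\term[2]}_{\model}$, then rewrite both sides via \Cref{proposition: semantics for graphs} as $\forall \model.\ \jump{\glang(\term[1])}_{\model} \subseteq \jump{\glang(\term[2])}_{\model}$. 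Expanding $\jump{\glang(\term[1])}_{\model} = \bigcup_{G \in \glang(\term[1])} \jump{G}_{\model}$ and commuting the quantifiers over $G$ and $\model$, this becomes $\forall G \in \glang(\term[1]).\ \forall \model.\ \jump{G}_{\model} \subseteq \jump{\glang(\term[2])}_{\model}$, and unfolding $\jump{G}_{\model}$ by \Cref{definition: semantics for graphs} turns this into: for every $G \in \glang(\term[1])$ and every $2$-pointed structure $\ppmodel$, if $G \homo \ppmodel$ then $\exists H \in \glang(\term[2]).\ H \homo \ppmodel$.

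The remaining work is to show this last condition is equivalent to $\forall G \in \glang(\term[1]).\ \exists H \in \glang(\term[2]).\ H \homo G$. The $(\Leftarrow)$ direction is immediate: if $H \homo G$ and $G \homo \ppmodel$, then $H \homo \ppmodel$ by transitivity of $\homo$ (noted in \Cref{section: preliminary graphs} as $\homo$ being a preorder). The $(\Rightarrow)$ direction is where the $\PCoRTC$ case genuinely differs from the $\ECoRTC$ case and is the one subtle point: one would like to instantiate $\ppmodel := G$, but $G \in \glang(\term[1])$ need not be a $2$-pointed structure in the sense of \Cref{section: ECoR def} --- it is merely a graph over $\SIG_{\const{I}}^{(-)}$, whereas a structure must additionally satisfy $\const{I}^{\graph} = \set{\tuple{x,x}}$ and $\comnf{a}^{\graph} = |\graph|^2 \setminus a^{\graph}$. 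In fact for $\PCoRTC$ the constants $\const{I}^-$, $a^-$ do not occur, so the only issue is the identity relation. The standard fix (as in Brunet--Pous) is that one does not need a full structure here: one only needs that $\jump{G}_{\ppmodel}$ is defined via homomorphism, which works for arbitrary graphs $\ppmodel$, and $G \homo G$ holds by reflexivity. More precisely, rather than phrasing everything through structures $\model$ I would carry the argument directly in terms of the homomorphism preorder on graphs: the characterization $\forall \ppmodel.\ (G \homo \ppmodel \Rightarrow \exists H \in \glang(\term[2]).\ H \homo \ppmodel)$ quantified over \emph{all} graphs $\ppmodel$ is equivalent to $\exists H \in \glang(\term[2]).\ H \homo G$ by taking $\ppmodel := G$. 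The only thing to check is that quantifying over all graphs gives the same as quantifying over all structures --- but this holds because $\PCoR$ terms (hence the graphs in $\glang(\term[1]),\glang(\term[2])$) involve no complemented symbols, so the value $\jump{G}_{\model}$ depends only on the positive relations $a^{\model}$, and any graph can be ``completed'' to a structure without affecting positive-homomorphism existence.

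The main obstacle, then, is precisely handling this ``graph vs.\ structure'' mismatch cleanly: the later $\ECoRTC$ proof resolves it via \Cref{proposition: structure} (edge-saturated graphs are structures), but for the $\PCoRTC$ case one must instead argue that for positive terms the relational semantics over structures coincides with the ``homomorphism semantics'' over arbitrary graphs. I would isolate this as a small lemma: for any $\PCoRTC$ term $\term$ and any graph $\ppmodel$ over $\SIG$ (equivalently $\SIG_{\const{I}}^{(-)}$ with $\const{I}$ forced to be the identity), $\ppmodel \models \term \iff \exists G \in \glang(\term).\ G \homo \ppmodel$, and observe that this extends to the case where $\ppmodel$ is itself one of the graphs in $\glang(\term[1])$ (with its $\const{I}$-relation being the diagonal, which holds by construction of $\glang$ for $\PCoR$ terms since $\glang(\const{I})$ contributes a single vertex). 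Everything else is the routine quantifier shuffling sketched above; no induction beyond what \Cref{proposition: semantics for graphs} already provides is needed.
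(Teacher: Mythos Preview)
Your plan is correct and matches the paper's proof: the same chain of equivalences via \Cref{proposition: semantics for graphs}, with the $(\Leftarrow)$ of the final step by transitivity and the $(\Rightarrow)$ handled by exploiting that $\PCoRTC$ graphs carry only $\SIG$-labeled edges. The paper's execution of that last step is slightly more direct than your ``quantify over all graphs'' detour: it builds from $G$ a single $2$-pointed structure $\ppstruc$ (same vertex set, $a^{\ppstruc} = a^G$ for $a \in \SIG$, $\const{I}^{\ppstruc}$ the identity, complements forced), notes $G \homo \ppstruc$ via the identity map, applies the hypothesis to get some $H \in \glang(\term[2])$ with $H \homo \ppstruc$, and then --- since $H$ has only $\SIG$-edges and $a^{\ppstruc} = a^G$ on $\SIG$ --- the same map witnesses $H \homo G$. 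One minor slip in your write-up: $\const{I}^G$ is empty (not the diagonal) for $G \in \glang(\term[1])$, since $\glang(\const{I})$ is a single-vertex graph with no $\const{I}$-edge; this is harmless and is in fact what makes the identity map $G \homo \ppstruc$ go through.
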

\begin{proof}
    By the following formula transformation:
    \begin{align*}
         & {}\models_{\REL} \term[1] \le \term[2] \iff \forall \model.\ \jump{\term[1]}_{\model} \subseteq  \jump{\term[2]}_{\model} \tag{Def.\ of $\models_{\REL}$}      \\
         & \iff \forall \model.\ \jump{\glang(\term[1])}_{\model} \subseteq  \jump{\glang(\term[2])}_{\model} \tag{\Cref{proposition: semantics for graphs}}              \\
         & \iff \forall \model. \forall G \in \glang(\term[1]).\  \jump{G}_{\model} \subseteq \bigcup_{H \in \glang(\term[2])} \jump{H}_{\model} \tag{Def.\ of $\jump{}$} \\
         & \iff \forall G \in \glang(\term[1]). \forall \model.\  \jump{G}_{\model} \subseteq \bigcup_{H \in \glang(\term[2])} \jump{H}_{\model}                          \\
         & \iff \forall G \in \glang(\term[1]). \forall \ppmodel.                                                                                                         \\
         & \quad (G \homo \ppmodel) \mbox{ implies } (\exists H \in \glang(\term[2]). H \homo \ppmodel)       \tag{Def.\ of $\jump{}$}                                    \\
         & \iff \forall G \in \glang(\term[1]). \exists H \in \glang(\term[2]).\ H \homo G. \tag{$\heartsuit$}
    \end{align*}
    
    Here, for ($\heartsuit$),
    $\Longrightarrow$:
    Let $\ppstruc$ be the saturation (see \Cref{section: graph characterization}) of $G$ s.t.\
    \begin{itemize}
        \item $a^{\ppstruc} = a^{G}$ for $a \in \SIG$;
        \item $\const{I}^{\ppstruc} = \set{\tuple{x, x} \mid x \in |\ppstruc|}$.
    \end{itemize}
    ($\ppstruc$ is a $2$-pointed structure, because $\comnf{a}^{\ppstruc} = |\ppstruc|^2 \setminus a^{\ppstruc}$ and $\const{I}^{\ppstruc}$ is the identity relation.)
    Since $G \homo \ppstruc$ by the identity map,
    there is some $H \in \glang(\term[2])$ s.t.\ $H \homo \ppmodel$ (by the assumption).
    Because $a^{H} = \emptyset$ for every $a \in \SIG_{\const{I}}^{(-)} \setminus \SIG$ (note that $\term[2]$ is \emph{positive}) and $a^{\ppstruc} = a^{G}$ for every $a \in \SIG$,
    we have $H \homo G$ by the same homomorphism for $H \homo \ppmodel$.
    $\Longleftarrow$:
    Let $H \in \glang(\term[2])$ be s.t.\ $H \homo G$.
    Let $\ppmodel$ be any $2$-pointed structure s.t.\ $G \homo \ppmodel$.
    Then, by transitivity of $\homo$, we have $H \homo \ppmodel$.
    Hence, $\exists H \in \glang(\term[2]).\  H \homo \ppmodel$.
\end{proof}

\section{Proof of \Cref{proposition: semantics for graphs ECoR}}\label{section: proof: proposition: semantics for graphs ECoR}

\begin{proposition}[restatement of \Cref{proposition: semantics for graphs ECoR}]
    For every structure $\model$ and $\ECoRTC$ term $\term$, we have
    $\jump{\term}_{\model} = \jump{\glang(\term)}_{\model}$.
\end{proposition}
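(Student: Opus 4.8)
The plan is to run the same induction on $\term$ as in the proof of \Cref{proposition: semantics for graphs}, verifying only that the extra constructs of $\ECoRTC$ cause no trouble. The key observation is that \emph{none} of the inductive steps in that proof — for $\cdot$, $\cup$, $\cap$, $\bl^{*}$, and $\bl^{\smile}$ — uses positivity: each of them reduces to \Cref{proposition: graph operation} (which is stated for arbitrary graphs over $\SIG_{\const{I}}^{(-)}$) together with distributivity of the relational operations over $\cup$ and the definition of $\jump{\term^n}_{\model}$. Hence those steps carry over verbatim, and the only new work is in the base cases $\term = \aterm^{-}$, $\term = \const{I}^{-}$, and $\term = (\aterm^{-})^{\smile}$ (the last being a base case, rather than a $\bl^{\smile}$-step, because of the syntactic restriction on converse; but $\glang(\bl^{\smile})$ is defined uniformly, so either reading works). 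Apart from these, I would copy each inductive case from \Cref{proposition: semantics for graphs} unchanged, using $G, H$ now ranging over graphs over $\SIG_{\const{I}}^{(-)}$.

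For $\term = \aterm^{-}$ with $\aterm \in \SIG$: by \Cref{definition: graph language ECoR}, $\glang(\aterm^{-})$ is the singleton containing the single-edge graph whose unique edge, from the source to the target, is labeled $\comnf{\aterm} = \aterm^{-} \in \SIG^{(-)}$. Unfolding \Cref{definition: semantics for graphs}, $\tuple{x,y} \in \jump{\glang(\aterm^{-})}_{\model}$ iff that single-edge graph admits a homomorphism into $\model[x,y]$, iff $\tuple{x,y} \in (\aterm^{-})^{\model} = \jump{\aterm^{-}}_{\model}$. The case $\term = \const{I}^{-}$ is identical, with edge label $\comnf{\const{I}}$ and relation $(\const{I}^{-})^{\model}$.

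For $\term = (\aterm^{-})^{\smile}$: since $\glang((\aterm^{-})^{\smile}) = \set{G^{\smile} \mid G \in \glang(\aterm^{-})}$, using \Cref{proposition: graph operation: smile} and distributivity of $\bl^{\smile}$ over $\cup$ I get $\jump{\glang((\aterm^{-})^{\smile})}_{\model} = \bigcup_{G \in \glang(\aterm^{-})} \jump{G^{\smile}}_{\model} = \bigl(\bigcup_{G \in \glang(\aterm^{-})} \jump{G}_{\model}\bigr)^{\smile} = \jump{\glang(\aterm^{-})}_{\model}^{\smile}$, which by the case just settled equals $\jump{\aterm^{-}}_{\model}^{\smile} = \jump{(\aterm^{-})^{\smile}}_{\model}$ by the definition of $\jump{\bl}_{\model}$ on converse.

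I do not expect a genuine obstacle: this is routine bookkeeping. The only point requiring a little care is matching the edge labels produced by \Cref{definition: graph language ECoR} for the new atoms (namely $\comnf{\aterm}$ and $\comnf{\const{I}}$) against the relations $(\aterm^{-})^{\model}$ and $(\const{I}^{-})^{\model}$ named in \Cref{definition: semantics for graphs}, and observing that \Cref{proposition: graph operation} was deliberately stated for graphs over the full label set $\SIG_{\const{I}}^{(-)}$, so that every inductive step of \Cref{proposition: semantics for graphs} applies unchanged.
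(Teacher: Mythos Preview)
Your proposal is correct and matches the paper's own argument essentially line for line: the paper's proof is a one-paragraph ``by induction on $\term$, as with \Cref{proposition: semantics for graphs}'', spelling out only the new base case $\term = a$ for $a \in \SIG_{\const{I}}^{(-)} \setminus \SIG_{\const{I}}$ and deferring everything else to the earlier proof. Your explicit treatment of $(\aterm^{-})^{\smile}$ is a harmless extra, since as you note the uniform definition of $\glang(\bl^{\smile})$ lets it be absorbed into the generic $\smile$-step.
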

\begin{proof}
    By induction on $\term[1]$, as with \Cref{proposition: semantics for graphs}.
    
    Case $\term = a$ where $a \in \SIG_{\const{I}}^{(-)} \setminus \SIG_{\const{I}}$:
    \begin{align*}
        \tuple{x, y} \in \jump{a}_{\model} & \iff \tuple{x, y} \in a^{\model}                                                         \tag{Def.\ of $\jump{}$}                                     \\
                                           & \iff \begin{tikzpicture}[baseline = -.5ex]
                                                      \graph[grow right = 1.2cm, branch down = 6ex, nodes={mynode}]{
                                                      {0/{}[draw, circle]}-!-{1/{}[draw, circle]}
                                                      };
                                                      \node[left = .5em of 0](l){};
                                                      \node[right = .5em of 1](r){};
                                                      \path (0) edge [draw = white, opacity = 0] node[pos= 0.5, inner sep = 1.5pt, font = \scriptsize, opacity = 1](a1){$a$}(1);
                                                      \graph[use existing nodes, edges={color=black, pos = .5, earrow}, edge quotes={fill=white, inner sep=1pt,font= \scriptsize}]{
                                                          0 -- a1 -> 1;
                                                          l -> 0; 1 -> r;
                                                      };
                                                  \end{tikzpicture} \homo {\model}[x,y]                        \\
                                           & \iff \tuple{x, y} \in \jump{\begin{tikzpicture}[baseline = -.5ex]
                                                                                 \graph[grow right = 1.2cm, branch down = 6ex, nodes={mynode}]{
                                                                                 {0/{}[draw, circle]}-!-{1/{}[draw, circle]}
                                                                                 };
                                                                                 \node[left = .5em of 0](l){};
                                                                                 \node[right = .5em of 1](r){};
                                                                                 \path (0) edge [draw = white, opacity = 0] node[pos= 0.5, inner sep = 1.5pt, font = \scriptsize, opacity = 1](a1){$a$}(1);
                                                                                 \graph[use existing nodes, edges={color=black, pos = .5, earrow}, edge quotes={fill=white, inner sep=1pt,font= \scriptsize}]{
                                                                                     0 -- a1 -> 1;
                                                                                     l -> 0; 1 -> r;
                                                                                 };
                                                                             \end{tikzpicture}}_{\model} \tag{\Cref{definition: semantics for graphs}} \\
                                           & \iff \tuple{x, y} \in \jump{\glang(a)}_{\model}. \tag{Def.\ of $\glang$}
    \end{align*}
    
    For the other cases, they are in the same way as the proof of \Cref{proposition: semantics for graphs}.
\end{proof}

\section{Proof completion of \Cref{proposition: Thompson}}\label{section: proposition: Thompson}
\begin{proposition}\label{proposition: Thompson 2}
    For every $\ExKA$ term $\term[3]$ and structure $\struc$ over $\SIG$, we have
    \[\jump{\term[3]}_{\struc} = \bigcup_{w \in [\term[3]]} \jump{w}_{\struc}.\]
    Here, $[\term[3]] \subseteq (A_{\const{I}}^{(-, \smile)} \setminus \set{\const{I}})^*$ is the language of $\term[3]$ over $A_{\const{I}}^{(-, \smile)} \setminus \set{\const{I}}$ obtained by viewing $\term[3]$ as a term over $A_{\const{I}}^{(-, \smile)} \setminus \set{\const{I}}$.
\end{proposition}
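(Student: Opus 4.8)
The plan is to prove the identity by induction on the structure of the $\ExKA$ term $\term[3]$, viewed as a regular expression over the alphabet $\WordSIG = \AutSIG \setminus \set{\empt}$, with $\empt = \const{I}$ playing the role of the empty word and $\bot$ the empty language. Since $\models_{\REL} \top = a \cup \comnf{a}$ (\Cref{equation: I- top 0}), and replacing each $\top$ by $a \cup \comnf{a}$ changes neither $\jump{\bl}_{\struc}$ nor the resulting word language (indeed $\automaton_{\top}$ is defined as $\automaton_{a \cup \comnf{a}}$ in \Cref{definition: NFA def}), we may assume without loss of generality that $\top$ does not occur in $\term[3]$. Note also that in $\ExKA$ terms the converse $\bl^{\smile}$ applies only to $a$ or $a^{-}$, so $a^{\smile}$ and $(a^{-})^{\smile}$ occur only as atomic symbols of $\WordSIG$; hence there is no inductive case for $\bl^{\smile}$.

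\textbf{Key algebraic fact.} The only nontrivial ingredient is the following fact about the word interpretation: for all word languages $L, K \subseteq \WordSIG^*$,
\[
\bigcup_{u \in L \cdot K} \jump{u}_{\struc} \;=\; \Bigl(\bigcup_{w \in L} \jump{w}_{\struc}\Bigr) \cdot \Bigl(\bigcup_{v \in K} \jump{v}_{\struc}\Bigr).
\]
This follows from two observations: (i) $\jump{w v}_{\struc} = \jump{w}_{\struc} \cdot \jump{v}_{\struc}$ for words $w, v$, which is immediate since $\jump{w}_{\struc}$ is the interpretation of $w$ read as the composition term of its letters (equivalently $\jump{\graph(w)}_{\struc}$), and $\jump{\bl}_{\struc}$ sends composition to relational composition, which is associative; and (ii) relational composition distributes over arbitrary unions on both arguments.

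\textbf{The induction.} With this in hand the cases are routine. For an atom $a \in \WordSIG$ we have $\lang{a} = \set{a}$, and the right-hand side equals $\jump{a}_{\struc}$ by definition of the word interpretation; for $\const{I}$, $\lang{\const{I}}$ is the singleton of the empty word, whose interpretation is the identity relation $= \jump{\const{I}}_{\struc}$; for $\bot$, $\lang{\bot} = \emptyset$ and both sides are $\emptyset$. For $\term[1] \cup \term[2]$, combine $\jump{\term[1] \cup \term[2]}_{\struc} = \jump{\term[1]}_{\struc} \cup \jump{\term[2]}_{\struc}$, the induction hypothesis for $\term[1]$ and $\term[2]$, and $\lang{\term[1] \cup \term[2]} = \lang{\term[1]} \cup \lang{\term[2]}$. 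For $\term[1] \cdot \term[2]$, combine $\jump{\term[1] \cdot \term[2]}_{\struc} = \jump{\term[1]}_{\struc} \cdot \jump{\term[2]}_{\struc}$, the induction hypothesis, and the displayed fact with $L = \lang{\term[1]}$, $K = \lang{\term[2]}$, together with $\lang{\term[1] \cdot \term[2]} = \lang{\term[1]} \cdot \lang{\term[2]}$.

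\textbf{Kleene star and the main (mild) obstacle.} For $\term[1]^{*}$, note $\jump{\term[1]^{*}}_{\struc} = \bigcup_{n \in \nat} (\jump{\term[1]}_{\struc})^{n}$ where $(\jump{\term[1]}_{\struc})^{n}$ is the $n$-fold relational composition. Applying the induction hypothesis for $\term[1]$ and iterating the displayed algebraic fact yields $(\jump{\term[1]}_{\struc})^{n} = \bigcup_{w \in \lang{\term[1]}^{n}} \jump{w}_{\struc}$ for each $n$ (by a trivial sub-induction on $n$); hence $\jump{\term[1]^{*}}_{\struc} = \bigcup_{n \in \nat} \bigcup_{w \in \lang{\term[1]}^{n}} \jump{w}_{\struc} = \bigcup_{w \in \lang{\term[1]}^{*}} \jump{w}_{\struc} = \bigcup_{w \in \lang{\term[1]^{*}}} \jump{w}_{\struc}$, using $\lang{\term[1]^{*}} = \bigcup_{n} \lang{\term[1]}^{n}$. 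The only place any care is needed is this $n$-fold composition step and the bookkeeping treating $\const{I}$ and $\bot$ as the empty word and empty language; there is no genuine obstacle beyond this, in line with the paper's description of the argument as a straightforward induction using distributivity.
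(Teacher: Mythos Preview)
Your proof is correct and follows essentially the same route as the paper's: a structural induction on the term using distributivity of relational composition over unions, with the same case analysis (atoms, $\const{I}$, $\bot$, $\cup$, $\cdot$, ${}^{*}$). If anything you are slightly more careful, since you explicitly eliminate $\top$ via $\models_{\REL} \top = a \cup \comnf{a}$ before the induction, whereas the paper's proof of this proposition omits the $\top$ case (that elimination is performed earlier, in the proof of \Cref{proposition: Thompson}, before invoking the present result).
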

\begin{proof}
    By induction on $\term[3]$.
    
    Case $\term[3] = a$ for $a \in A_{\const{I}}^{(-, \smile)}$ (including $a = \const{I}$):
    Since $[a] = \set{a}$,
    we have $\jump{a}_{\struc} = \bigcup_{w \in [a]} \jump{w}_{\struc}$.
    
    Case $\term[3] = \bot$:
    Since $[\bot] = \emptyset$,
    we have 
    $\jump{\bot}_{\struc} = \emptyset = \bigcup_{w \in \emptyset} \jump{w}_{\struc} = \bigcup_{w \in [\bot]} \jump{w}_{\struc}$.
    
    Case $\term[3] = \term[1] \cup \term[2]$:
    \begin{align*}
        \jump{\term[1] \cup \term[2]}_{\struc}
         & = \jump{\term[1]}_{\struc} \cup \jump{\term[2]}_{\struc}                                                        \\
         & = (\bigcup_{w \in [\term[1]]} \jump{w}_{\struc}) \cup (\bigcup_{w \in [\term[2]]} \jump{w}_{\struc})   \tag{IH} \\
         & = \bigcup_{w \in [\term[1]] \cup [\term[2]]} \jump{w}_{\struc}                                                  \\
         & = \bigcup_{w \in [\term[1] \cup \term[2]]} \jump{w}_{\struc}.
    \end{align*}
    
    Case $\term[3] = \term[1] \cdot \term[2]$:
    \begin{align*}
        \jump{\term[1] \cdot \term[2]}_{\struc}
         & = \jump{\term[1]}_{\struc} \cdot \jump{\term[2]}_{\struc}                                                          \\
         & = (\bigcup_{w \in [\term[1]]}\jump{w}_{\struc}) \cdot (\bigcup_{v \in [\term[2]]}\jump{v}_{\struc}) \tag{IH}       \\
         & = \bigcup_{w \in [\term[1]], v \in [\term[2]]} (\jump{w}_{\struc} \cdot \jump{v}_{\struc}) \tag{By distributivity} \\
         & = \bigcup_{w \in [\term[1]], v \in [\term[2]]} \jump{w v}_{\struc}                                                 \\
         & = \bigcup_{w \in [\term[1] \cdot \term[2]]} \jump{w}_{\struc}.
    \end{align*}
    
    Case $\term[3] = \term[1]^*$:
    We have
    \begin{align*}
        \jump{\term[1]^*}_{\struc}
         & = \bigcup_{n \in \nat} \jump{\term[1]^n}_{\struc} = \bigcup_{n \in \nat} \jump{\term[1]}_{\struc}^n                                                  \\
         & = \bigcup_{n \in \nat} (\bigcup_{w \in [\term[1]]}\jump{w}_{\struc})^n \tag{IH}                                                                      \\
         & = \bigcup_{n \in \nat} \bigcup_{w_1, \dots, w_n \in [\term[1]]} (\jump{w_1}_{\struc} \cdot \ldots \cdot \jump{w_n}_{\struc}) \tag{By distributivity} \\
         & = \bigcup_{n \in \nat} \bigcup_{w_1, \dots, w_n \in [\term[1]]} \jump{w_1 w_2 \dots w_n}_{\struc}                                                    \\
         & = \bigcup_{n \in \nat} \bigcup_{w \in [\term[1]^n]} \jump{w}_{\struc} = \bigcup_{w \in [\term[1]^*]} \jump{w}_{\struc}. \tag*{\qedhere}
    \end{align*}
\end{proof}

\section{Proof completion of \Cref{lemma: Hintikka bound}} \label{section: lemma: Hintikka bound}
\begin{proof}[Proof completion of \Cref{lemma: Hintikka bound} ($P'$ is an saturable path)]
    Let
    \[\tuple{l_0, \dots, l_{x}, l_{x+1}, \dots, l_{n'}} = \tuple{0, \dots, x, y + 1, \dots, n}.\]
    (Here, ``$y + 1, \dots, n$'' is the empty sequence if $y = n$.)
    Then, $w' = a_{l_0} \dots a_{l_{n'}}$ holds and $P' = \tuple{G', \set{U_{i}'}_{i \in [0, n']}}$ is as follows:
    \begin{itemize}
        \item $U_{i}' = U_{l_{i}}$, for each $i \in [0, n']$;
        \item $G'$ is the $\const{I}$-saturation of $G(w')$ such that $\const{I}^{G'} = \set{\tuple{i, j} \in [0, n']^2 \mid {i = j} \lor \lnot \Con_{\comnf{\const{I}}}^{\automaton[2]}(U_{i}', U_{j}')}$.
    \end{itemize}
    If $\lnot \Con_{\comnf{\const{I}}}^{\automaton[2]}(U_{i}', U_{j}')$, then $\Con_{\const{I}}^{\automaton[2]}(U_{i}', U_{j}')$ (by \ref{definition: Hintikka: saturate} for $P$); thus $U_{i}' = U_{j}'$.
    Therefore, the binary relation $\set{\tuple{i, j} \in [0, n]^2 \mid \lnot \Con_{\comnf{\const{I}}}^{\automaton[2]}(U_{i}', U_{j}')}$ is symmetric and transitive;
    thus $\const{I}^{\graph[1]'}$ is an equivalence relation.
    $G'$ is consistent because $G$ is consistent.
    (In particular, when $\tuple{x, y+1} \in \const{I}^{G}$ and $a_{y+1} = \comnf{\const{I}}$,
    we have $U_{x} = U_{y}$ (by the definition of $x$ and $y$) and $U_{x} = U_{y+1}$ (by $\tuple{x, y+1} \in \const{I}^{G}$),
    thus $\Con_{\comnf{\const{I}}}^{\automaton[2]}(U_{x}', U_{x+1}')$ (since $a_{y+1} = \comnf{\const{I}}$ with \ref{definition: Hintikka: consistent} for $P$).
    Hence, $\tuple{x, x+1} \in \comnf{\const{I}}^{G'}$. Thus, since $a_{y+1} = \comnf{\const{I}}$, $G'$ is consistent.)
    $\graph[1]'$ is an edge-extension of $G(w)$, because
    $\const{I}^{\graph[1]'} \supseteq \const{I}^{G(w)} = \emptyset$ and $\comnf{\const{I}}^{\graph[1]'} \supseteq \comnf{\const{I}}^{G(w)}$ (by the definition of $\const{I}^{\graph[1]'}$).
    Hence, $\graph[1]'$ is indeed an $\const{I}$-saturation of $G(w)$.
    
    We show that $P'$ is an saturable path for $\not\models_{\REL} w' \le \automaton[2]$.
    \begin{itemize}
        \item For \ref{definition: Hintikka: u src tgt}:
              By $U_{0}' = U_{0}$ and $U_{n'}' = U_{n}$ with \ref{definition: Hintikka: u src tgt} for $P$.
              
        \item For \ref{definition: Hintikka: consistent}:
              \begin{itemize}
                  \item for $a = \const{I}$:
                        Let $\tuple{i, j} \in \const{I}^{\graph[1]'}$.
                        If $i = j$, then we have $\Con_{\const{I}}^{\automaton[2]}(U_{i}', U_{j}')$ by \ref{definition: Hintikka: consistent} for $P$.
                        If $\lnot \Con_{\comnf{\const{I}}}^{\automaton[2]}(U_{i}', U_{j}')$, then we have $\Con_{\const{I}}^{\automaton[2]}(U_{i}', U_{j}')$ by \ref{definition: Hintikka: saturate} for $P$.
                        
                  \item for $a = \comnf{\const{I}}$:
                        By the definition of $\const{I}^{\graph[1]'}$.
                        
                  \item For $\Con_{a_{l_{i}}}^{\automaton[2]}(U_{i-1}', U_{i}')$ where $i \in [n']$:
                        
                        \noindent Case $i \neq x + 1$:
                        By $\Con_{a_{l_{i}}}^{\automaton[2]}(U_{i-1}', U_{i}')$ (since $\tuple{l_{i-1}, l_{i}} \in a_{l_{i}}^{G}$ with \ref{definition: Hintikka: consistent} for $P$).
                        
                        \noindent Case $i = x + 1$:
                        Then, $U_{x}' = U_{l_{x}} = U_{x} = U_{y}$ and $U_{x+1}' = U_{l_{x+1}} = U_{y+1}$.
                        By $\Con_{a_{y}}^{\automaton[2]}(U_{y}, U_{y+1})$ (since $\tuple{y, y+1} \in a_{y}^{G}$ with \ref{definition: Hintikka: consistent} for $P$),
                        we have $\Con_{a_{l_{x+1}}}^{\automaton[2]}(U_{x}', U_{x+1}')$.
              \end{itemize}
              
        \item For \ref{definition: Hintikka: saturate}:
              By $\set{U_{i}' \mid i \in [0, n']} \subseteq \set{U_{i} \mid i \in [0, n]}$ with \ref{definition: Hintikka: saturate} for $P$.
    \end{itemize}
    Hence, $P'$ is an saturable path for $\not\models_{\REL} w' \le \automaton[2]$.
\end{proof} \fi
\end{document}